\documentclass[12pt, a4paper]{report}

\usepackage[a4paper]{geometry}

\setcounter{secnumdepth}{3}

\usepackage{pgfplots}
\pgfplotsset{compat=1.14,
    x tick style={color=black},
    y tick style={color=black}
}

\usepackage{url}
\usepackage{listings}

\renewcommand{\labelenumi}{(\alph{enumi})}
\renewcommand\theenumi\labelenumi

\usepackage[english]{babel}

\newcommand{\iemph}[1]{\emph{#1}\index{#1}}

\hyphenation{parameter onemax leadingones Hoeffding develop-ment}
\usepackage[utf8]{inputenc}
\usepackage{xspace}
\usepackage{amsmath,amsthm,amssymb,mathtools}
\usepackage{lmodern}

\usepackage[algo2e,ruled,vlined,linesnumbered,algochapter]{algorithm2e}
\newcommand{\assign}{\leftarrow}

\usepackage{xcolor}
\usepackage{tikz}
\usepackage{graphicx}

\usepackage{hyperref}

\allowdisplaybreaks[2]
\clubpenalty=10000
\widowpenalty=10000

\newtheorem{theorem}{Theorem}[section]
\newtheorem{lemma}[theorem]{Lemma}
\newtheorem{corollary}[theorem]{Corollary}
\newtheorem{definition}[theorem]{Definition}
\numberwithin{equation}{section}

\newcommand{\oea}{$(1 + 1)$~EA\xspace}
\newcommand{\oeamu}{$(1 + 1)$~EA$_\mu$\xspace}

\newcommand{\oplea}{$(1+\lambda)$~EA\xspace}
\newcommand{\mpoea}{$(\mu+1)$~EA\xspace}
\newcommand{\mplea}{$(\mu+\lambda)$~EA\xspace}

\newcommand{\opllga}{$(1+(\lambda,\lambda))$~GA\xspace}

\newcommand{\om}{\textsc{OneMax}\xspace}
\newcommand{\onemax}{\om}
\newcommand{\lo}{\textsc{LeadingOnes}\xspace}
\newcommand{\leadingones}{\lo}

\newcommand{\R}{\ensuremath{\mathbb{R}}}

\newcommand{\N}{\ensuremath{\mathbb{N}}} 
\newcommand{\Z}{\ensuremath{\mathbb{Z}}}

\newcommand{\calA}{\ensuremath{\mathcal{A}}} 
	
\DeclareMathOperator{\Bin}{Bin}
\DeclareMathOperator{\Geom}{Geom}
\DeclareMathOperator{\arcsinh}{arcsinh}
\newcommand{\pmin}{p_{\mathrm{min}}}

\newcommand{\Var}{\mathrm{Var}\xspace} 
\newcommand{\Cov}{\mathrm{Cov}\xspace} 
\newcommand{\eps}{\varepsilon}

\begin{document}
{\sloppy

\title{Probabilistic Tools for the Analysis of Randomized Optimization Heuristics\thanks{This is the author-created version of the book chapter~\cite{Doerr20bookchapter}. From arxiv version~3 on, the arxiv and book versions are identical in content, only the type-setting is different. In the fourth version of the arxiv version, the numbering of theorems, lemmas, equations etc.\ has been adjusted to the one of the book version by adding the chapter number~1. E.g., equation (10.35) became (1.10.35). In the fifth version, a mistake in the numbering (thanks to an anonymous GECCO 2021 reviewer for pointing it out!) has been corrected so that now the numberings of the arxiv and book version are really identical. In the sixth version, very minor typos have been corrected. The arxiv version is compiled with the alpha-bibstyle, which the author finds more practical to use.}}

\author{Benjamin Doerr\\ Laboratoire d'Informatique (LIX)\\ CNRS\\ \'Ecole Polytechnique\\ Institut Polytechnique de Paris\\ Palaiseau\\ France}

\maketitle

\begin{abstract}
  This chapter collects several probabilistic tools that proved to be useful in the analysis of randomized search heuristics. This includes classic material like Markov, Chebyshev and Chernoff inequalities, but also lesser known topics like stochastic domination and coupling or Chernoff bounds for geometrically distributed random variables and for negatively correlated random variables. 
  Most of the results presented here have appeared previously, some, however, only in recent conference publications. While the focus is on collecting tools for the analysis of randomized search heuristics, many of these may be useful as well in the analysis of classic randomized algorithms or discrete random structures.
\end{abstract}

%

\setcounter{chapter}{1}
\section{Introduction}

Unlike in the field of classic randomized algorithms for discrete optimization problems, where theory has always supported (and, in fact, often led) the development and understanding of new algorithms, the theoretical analysis of nature-inspired search heuristics is much younger than the use of these heuristics. The use of nature-inspired heuristics can easily be traced back to the 1960s, their rigorous analysis with proven performance guarantees only started in the late 1990s. Propelled by impressive results, most notably from the German computer scientist Ingo Wegener (*1950--$\dagger$2008) and his students, theoretical works became quickly accepted in the nature-inspired algorithms field and now form an integral part of it. They help to understand these algorithms, guide the choice of their parameters, and even (as in the classic algorithms field) suggest new promising algorithms. It is safe to say that Wegener's vision that nature-inspired heuristics are nothing more than a particular class of randomized algorithms, which therefore should be analyzed with the same rigor as other randomized algorithms, has come true. 

After around 20 years of theoretical analysis of nature-inspired algorithms, however, we have to note that the methods used here are different from those in the analysis of classic randomized algorithms. This is most visible for particular methods like the fitness level method or drift analysis, but applies even to the elementary probabilistic tools employed throughout the field. 

The aim of this chapter is to collect those elementary tools which often have been used in the past 20 years. This includes classic material like expectations, variances, the coupon collector process, Markov's inequality, Chebyshev's inequality and Chernoff-Hoeffding bounds for sums of independent random variables, but also topics that are used rarely outside the analysis of nature-inspired heuristics like stochastic domination, Chernoff-Hoeffding bounds for sums of independent geometrically distributed random variables, and Chernoff-Hoeffding bounds for sums of random variables which are not fully independent. For many results, we also sketch a typical application or refer to applications in the literature. 

The large majority of the results and applications presented in this work have appeared previously, some in textbooks, some in recent conference publications. The following results, while not necessarily very deep, to the best of our knowledge are original.
\begin{itemize}
\item The result that all known Chernoff bounds, when applied to binary random variables, hold as well for negatively correlated random variables. More precisely, for bounds on the upper tail, we only need $1$-negative correlation and for bounds on the lower tail, we only need $0$-negative correlation (Section~\ref{secprobnegcor}).
\item The insight that all commonly known Chernoff bounds can be deduced from only two bounds (Section~\ref{secprobrelation}).
\item A version of the method of bounded differences which only requires that the $t$-th random variable has a bounded influence on the expected result stemming from variable $t+1$ to $n$. This appears to be an interesting compromise between the classic method of bounded differences, which is hard to use for iterative algorithms, and martingale methods, which require the familiarity with martingales (Theorem~\ref{tprobboundedexp}).
\item Via an elementary two-stage rounding trick, we give simple proofs for the facts that (i)~a sum $X$ of independent binary random variables with $\Var[X] \ge 1$ exceeds its expectation with constant probability by at least $\Omega(\sqrt{\Var[X]})$ and (ii)~it attains a particular value at most with probability $2 / \sqrt{\Var[X]}$ (Lemmas~\ref{lprobsqrtn} and~\ref{lprobhit}). Both results have been proven before with deeper methods, e.g., an approximation via the normal distributions.
\end{itemize}

This chapter aims to serve both as introduction for newcomers to the field and as reference book for regular users of these methods. With both addressees in mind, we did not shy away from stating also elementary reformulations of the results or formulating explicitly statements that only rely on elementary mathematics like
\begin{itemize}
\item how to choose the deviation parameter $\delta$ in the strong multiplicative Chernoff bound so that the tail probability $(e / \delta)^\delta$ is below a desired value (Lemma~\ref{lprobsuperexp}), and
\item how to translate a tail bound into an expectation (Corollary~\ref{corprobtaile}).
\end{itemize}
We hope that this saves all users of this chapter some time, which is better spent on understanding the challenging random processes that arise in the analysis of nature-inspired heuristics.

\section{Notation}

All notation in this chapter is standard and should need not much additional explanation. We use $\N := \{1, 2, \dots\}$ to denote the positive integers. We write $\N_0 := \N \cup \{0\}$. For intervals of integers, we write $[a..b] := \{x \in \Z \mid a \le x \le b\}$. We use the standard definition $0^0 := 1$ (and not $0^0 = 0$).

\section{Elementary Probability Theory}\label{secprobprereq}

We shall assume that the reader has some basic understanding of the concepts of \emph{probability spaces}, \emph{events} and \emph{random variables}. As usual in probability theory and very convenient in analysis of algorithms, we shall almost never explicitly state the probability space we are working in. Hence an intuitive understanding of the notion of a random variable should be enough to follow this exposition. 

While many results presented in the following naturally extend to continuous probability spaces, in the interest of simplicity and accessibility for a discrete optimization audience, we shall assume that all random variables in this book will be \emph{discrete}, that is, they take at most a countable number of values. As a simple example, consider the random experiment of independently rolling two distinguishable dice. Let $X_1$ denote the outcome of the first roll, that is, the number between $1$ and $6$ which the first die displays. Likewise, let $X_2$ denote the outcome of the second roll. These are already two random variables. We formalize the statement that with probability $\frac 16$ the first die shows a one by saying $\Pr[X_1 = 1] = \frac 16$. Also, the probability that both dice show the same number is $\Pr[X_1 = X_2] = \frac 16$. The \emph{complementary event} that they show different numbers, naturally has a probability of $\Pr[X_1 \neq X_2] = 1 - \Pr[X_1 = X_2] = \frac 56$. 

We can add random variables (defined over the same probability space), e.g., $X := X_1 + X_2$ is the sum of the numbers shown by the two dice, and we can multiply a random variable by a number, e.g., $X := 2 X_1$ is twice the number shown by the first die. 

The most common type of random variable we shall encounter in this book is an extremely simple one called \emph{binary random variable}\index{random variable!binary} or \emph{Bernoulli random variable}\index{random variable!Bernoulli}. It takes the values $0$ and $1$ only. In consequence, the probability distribution of a binary random variable $X$ is fully described by its probability $\Pr[X = 1]$ of being one, since $\Pr[X=0] = 1 - \Pr[X=1]$. 

Binary random variables often show up as \emph{indicator random variables}\index{indicator random variable} for random events. For example, if the random experiment is a simple roll of a die, we may define a random variable $X$ by setting $X = 1$, if the die shows a 6, and $X=0$ otherwise. We say that $X$ is the indicator random variable for the event ``die shows a 6.'' 

Indicator random variables are useful for counting. If we roll a die $n$ times and $X_1, \ldots, X_n$ are the indicator random variables for the events that the corresponding roll showed a 6 (considered as a \emph{success}), then $\sum_{i = 1}^n X_i$ is a random variable describing the number of times we saw a 6 in these $n$ rolls. In general, a random variable $X$ that is the sum of $n$ independent binary random variables being one all with equal probability $p$, is called a \emph{binomial random variable}\index{random variable!binomial} (with success probability $p$). We denote this distribution by $\Bin(n,p)$ and write $X \sim \Bin(n,p)$ to denote that $X$ has this distribution. We have \[\Pr[X = k] = \binom{n}{k}p^k (1-p)^{n-k}\] for all $k \in [0..n]$. See Section~\ref{secprobstirling} for the definition of the binomial coefficient.

A different question is how long we have to wait until we roll a 6. Assume that we have an infinite sequence of die rolls and $X_1, X_2, \ldots$ are the indicator random variables for the event that the corresponding roll showed a 6 (\emph{success}). Then we are interested in the random variable $Y = \min\{k \in \N \mid X_k = 1\}$. Again for the general case of all $X_i$ being one independently with probability $p >0$, this random variable $Y$ is called \emph{geometric random variable}\index{random variable!geometric} (with success probability $p$). We denote this distribution by $\Geom(p)$ and write $Y \sim \Geom(p)$ to indicate that $Y$ is geometrically distributed (with parameter $p$). We have \[\Pr[Y = k] = (1-p)^{k-1} p\] for all $k \in \N$. We note that an equally established definition is to count only the failures, that is, to regard the random variable $Y -1$. So some care is necessary when comparing results from different sources.

\section{Useful Inequalities}

Before starting our presentation of probabilistic tools useful in the analysis of randomized search heuristics, let us brief{}ly mention a few inequalities that are often needed to estimate probabilities arising naturally in this area.

\subsection{Switching Between Exponential and Polynomial Terms}

When dealing with events occurring with small probability $\eps > 0$, we often encounter expressions like $(1-\eps)^n$. Such a mix of a polynomial term ($1-\eps$) with an exponentiation is often hard to work with. It is therefore very convenient that $1-\eps \approx e^{-\eps}$, so that the above expression becomes approximately the purely exponential term $e^{-\eps n}$. In this section, we collect a few estimates of this flavor. With the exception of the second inequality in~\eqref{eqprobweierminus}, a sharper version of a Weierstrass product inequality, all are well-known and can be derived via elementary arguments.

\begin{lemma}\label{lprobelower}
  For all $x \in \R$, 
  \begin{equation*}
    1+x \le e^x.
  \end{equation*}
\end{lemma}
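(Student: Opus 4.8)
The statement to prove is the elementary inequality $1 + x \le e^x$ for all $x \in \R$.

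\bigskip

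The plan is to use a standard convexity or calculus argument. First I would define $f(x) := e^x - 1 - x$ and show that $f(x) \ge 0$ for all real $x$. Differentiating gives $f'(x) = e^x - 1$, which is negative for $x < 0$, zero at $x = 0$, and positive for $x > 0$. Hence $f$ is strictly decreasing on $(-\infty, 0]$ and strictly increasing on $[0, \infty)$, so $f$ attains its global minimum at $x = 0$, where $f(0) = e^0 - 1 - 0 = 0$. Therefore $f(x) \ge 0$ everywhere, which is exactly the claim.

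\bigskip

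An alternative I would mention is to invoke convexity of the exponential function directly: since $t \mapsto e^t$ is convex, it lies above each of its tangent lines, and the tangent at $t = 0$ is precisely the line $t \mapsto 1 + t$. This immediately yields $e^x \ge 1 + x$ for all $x$. One could also give a proof via the power series $e^x = \sum_{k \ge 0} x^k / k!$ for $x \ge 0$ (where dropping all terms beyond $k = 1$ is valid since they are nonnegative) and then handle $x < 0$ separately, but the calculus argument is cleaner and uniform in the sign of $x$.

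\bigskip

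There is essentially no main obstacle here; the only thing requiring the slightest care is to treat all of $\R$ rather than just $x \ge 0$, which the monotonicity-of-$f'$ argument handles automatically by locating the global minimum at the origin. I expect the author's proof to be one or two lines along exactly these lines.
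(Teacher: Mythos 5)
Your proposal is correct and uses essentially the same argument as the paper: define $f(x) = e^x - 1 - x$ and show via its derivative that $f$ has a global minimum value of $0$ at $x=0$ (the paper concludes this from $f''>0$, you from the sign change of $f'$, which is an immaterial difference). Nothing further is needed.
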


We give a canonic proof as an example for a proof method that is often useful for such estimates.

\begin{proof}
  Define a function $f: \R \to \R$ by $f(x) = e^x - 1 - x$ for all $x \in \R$. Since $f'(x) = e^x - 1$, we have $f'(x) = 0$ if and only if $x=0$. Since $f''(x) = e^x > 0$ for all $x$, we see that $x=0$ is the unique minimum of $f$. Since $f(0) = 0$, we have $f(x) \ge 0$ for all $x$, which is equivalent to the claim of the lemma.
\end{proof} 

Applying Lemma~\ref{lprobelower} to $-x$ and taking reciprocals, we immediately derive the first of the following two upper bounds for the exponential function. The second bound again follows from elementary calculus. Obviously, the first estimate is better for $x < 0$, the second one is better for $x > 0$. 

\begin{lemma}\label{lprobeupper}
\begin{enumerate}
\item  For all $x < 1$, 
  \begin{equation}
    e^x \le \frac{1}{1-x} = 1 + \frac{x}{1-x} = 1 + x + \frac{x^2}{1-x}.\label{eqprobeupper1}
  \end{equation}
  In particular, for $0 \le x \le 1$, we have $e^{-x} \le 1 - \frac x2$.
\item For all $x < 1.79$, 
  \begin{equation}
    e^x \le 1 + x + x^2.
  \end{equation}
\end{enumerate}
\end{lemma}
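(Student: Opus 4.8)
The plan is to treat the two parts separately, since part~(a) follows by algebraic manipulation from the previously established Lemma~\ref{lprobelower}, while part~(b) requires a genuine (if routine) calculus argument.

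For part~(a), I would start from Lemma~\ref{lprobelower} applied to $-x$, which gives $1-x \le e^{-x}$ for all $x \in \R$. For $x < 1$ we have $1-x > 0$, so we may divide by $1-x$ and by $e^{-x}$ (both positive) to obtain $e^x \le \frac{1}{1-x}$. The two further expressions in~\eqref{eqprobeupper1} are just algebraic rewritings: $\frac{1}{1-x} - 1 = \frac{x}{1-x}$, and $\frac{x}{1-x} - x = \frac{x - x(1-x)}{1-x} = \frac{x^2}{1-x}$. For the ``in particular'' claim, apply the bound with $-x$ in place of $x$: for $0 \le x \le 1$ (so that $-x < 1$ and $1+x > 0$), we get $e^{-x} \le \frac{1}{1+x}$, and it remains to check $\frac{1}{1+x} \le 1 - \frac x2$ on $[0,1]$, i.e.\ $1 \le (1-\frac x2)(1+x) = 1 + \frac x2 - \frac{x^2}{2}$, i.e.\ $0 \le \frac x2 - \frac{x^2}{2} = \frac{x(1-x)}{2}$, which holds for $x \in [0,1]$.

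For part~(b), I would mimic the proof of Lemma~\ref{lprobelower}: define $g(x) = 1 + x + x^2 - e^x$ and show $g(x) \ge 0$ for $x < 1.79$. Here $g'(x) = 1 + 2x - e^x$ and $g''(x) = 2 - e^x$, so $g''(x) > 0$ for $x < \ln 2$ and $g''(x) < 0$ for $x > \ln 2$; thus $g'$ is unimodal, increasing then decreasing. Since $g'(0) = 0$ and $g'$ increases on $(-\infty, \ln 2)$, $g'$ is negative on $(-\infty,0)$ and positive on $(0,\ln 2]$; it then decreases on $[\ln 2, \infty)$ and has a second zero at some $x_0 > \ln 2$. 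Hence $g$ is decreasing on $(-\infty,0]$, increasing on $[0,x_0]$, and decreasing on $[x_0,\infty)$. So on any interval $(-\infty, b]$ the minimum of $g$ is $\min\{g(x \to -\infty), g(b)\}$ if $b \le x_0$, and in fact since $g(x) \to -\infty$ is \emph{not} what happens ($1 + x + x^2 - e^x \to +\infty$ as $x \to -\infty$ because the polynomial dominates), the only candidate minima on $(-\infty, b]$ are at $x = 0$ (where $g(0) = 0$) or at $x = b$. Since $g(0) = 0$, it suffices to verify $g(1.79) \ge 0$, i.e.\ $e^{1.79} \le 1 + 1.79 + 1.79^2 = 5.9941$; numerically $e^{1.79} \approx 5.989$, so this holds, and hence $g(x) \ge 0$ for all $x \le 1.79$.

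The main obstacle is the last numerical check in part~(b): the bound $e^x \le 1 + x + x^2$ is tight enough that it fails not far beyond $1.79$ (the true crossover is near $1.7933$), so one cannot be cavalier about the estimate $e^{1.79} \le 5.9941$ and must justify it with a sufficiently accurate bound on $e^{1.79}$ (e.g.\ via a truncated Taylor series with explicit remainder, or by noting $e^{1.79} = e \cdot e^{0.79}$ and bounding $e^{0.79}$ from above). Everything else is elementary: part~(a) is pure algebra on top of Lemma~\ref{lprobelower}, and the shape analysis of $g$ in part~(b) is a textbook application of the first and second derivative tests.
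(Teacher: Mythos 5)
Your proposal is correct and follows the same route the paper indicates: part~(a) by applying Lemma~\ref{lprobelower} to $-x$ and taking reciprocals (plus routine algebra for the rewritings and the $e^{-x}\le 1-\frac x2$ consequence), and part~(b) by the elementary calculus analysis of $g(x)=1+x+x^2-e^x$ that the paper leaves as an exercise. Your observation that the constant $1.79$ sits close to the true crossover near $1.7933$, so the final numerical check $e^{1.79}\le 5.9941$ needs a genuinely accurate upper bound on $e^{1.79}$, is a valid and worthwhile caveat.
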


\begin{figure}
\begin{tikzpicture}
\begin{axis}[
    axis lines = left,
    xlabel = $x$,
    ylabel = {$f(x)$},
    legend style={
      cells={anchor=east},
      legend pos=outer north east,},
    ymin=-0.25,
    ymax=3.5
]

\addplot [densely dashed,
    domain=-1.25:1.25, 
    samples=100, 
    color=black,
    ]
    {1+x+x^2};
\addlegendentry[right]{$f(x) = 1+x+x^2$}
 
%
%
\addplot [densely dotted,
    domain=-1.25:0.9, 
    samples=100, 
    color=black,
    ]
    {1/(1-x)};
\addlegendentry[right]{$f(x) = 1+x+ \frac{x^2}{1-x}$}
 
\addplot [
    domain=-1.25:1.25, 
    samples=100, 
    color=black,
]
{exp(x)};
\addlegendentry[right]{$f(x) = e^x$}

\addplot [densely dotted,
    domain=-1.25:1.25, 
    samples=100, 
    color=black,
    ]
    {1+x};
\addlegendentry[right]{$f(x) = 1+x$}
 
\end{axis}
\end{tikzpicture}
\caption{Plot of the estimates of Lemma~\ref{lprobelower} and~\ref{lprobeupper}.}\label{figprobe}
\end{figure}
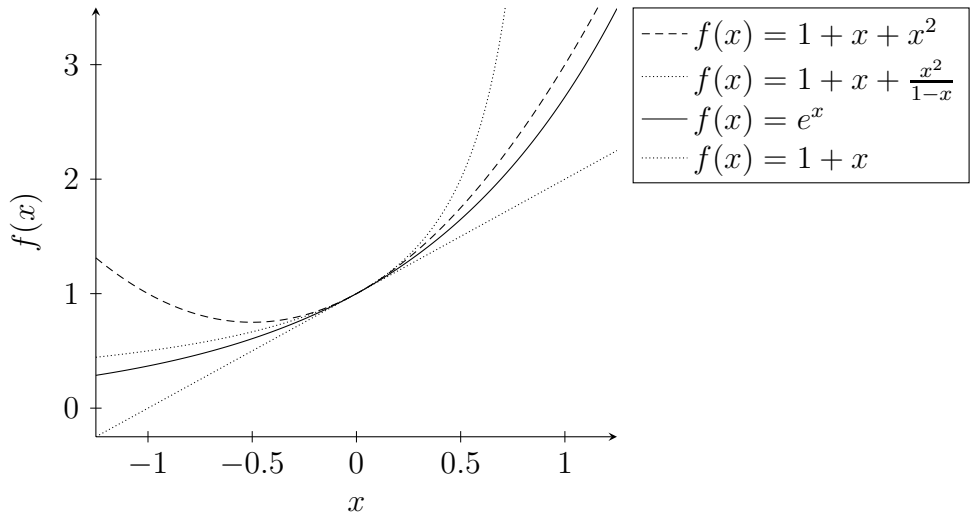

As visible also from Figure~\ref{figprobe}, these estimates are strongest for $x$ close to zero. 
Combining Lemma~\ref{lprobelower} and Lemma~\ref{lprobeupper}, the following useful estimate was obtained in~\cite[Lemma~8]{RoweS14}.

\begin{corollary}
  For all $x \in [0,1]$ and $y > 0$, $(1-x)^y \le \frac{1}{1+xy}$.
\end{corollary}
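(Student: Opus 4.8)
The plan is to derive the bound $(1-x)^y \le \frac{1}{1+xy}$ by taking reciprocals and reducing to an exponential estimate. Writing the claim equivalently as $1 + xy \le (1-x)^{-y}$, the natural route is to apply Lemma~\ref{lprobelower} in the form $1 + t \le e^t$ with $t = xy$, which gives $1 + xy \le e^{xy}$. So it suffices to show $e^{xy} \le (1-x)^{-y}$, i.e.\ $e^{x} \le (1-x)^{-1} = \frac{1}{1-x}$ after taking $y$-th roots (legitimate since $y > 0$ and both sides are positive; note $1-x \ge 0$ for $x \in [0,1]$, with the boundary case $x=1$ handled separately since there the right-hand side $\frac{1}{1+y}$ is positive while $(1-x)^y = 0$, so the inequality is trivial there). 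But $e^x \le \frac{1}{1-x}$ for $x < 1$ is exactly the first bound in~\eqref{eqprobeupper1} of Lemma~\ref{lprobeupper}, so we are done.

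Concretely, I would carry out the steps in this order. First, dispose of the edge case $x = 1$: the right-hand side equals $\frac{1}{1+y} > 0$ and the left-hand side is $0^y = 0$ (here $y>0$, so no $0^0$ issue), hence the inequality holds. Second, for $x \in [0,1)$, invoke Lemma~\ref{lprobeupper}(a) to get $e^x \le \frac{1}{1-x}$. Third, raise both sides to the power $y > 0$ — the function $t \mapsto t^y$ is monotone increasing on the nonnegative reals — to obtain $e^{xy} \le (1-x)^{-y}$. Fourth, apply Lemma~\ref{lprobelower} with argument $xy \ge 0$ to get $1 + xy \le e^{xy}$. Chaining the last two displays yields $1 + xy \le (1-x)^{-y}$, and taking reciprocals (both sides strictly positive) gives $(1-x)^y \le \frac{1}{1+xy}$, as claimed.

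There is essentially no hard part here; the result is a two-line corollary of the two preceding lemmas, and the only things to be slightly careful about are (i) the degenerate case $x=1$, where $(1-x)^y$ vanishes, and (ii) making sure the monotonicity of $t \mapsto t^y$ and the reciprocal step are applied only to positive quantities, which holds throughout since $1-x \in (0,1]$ for $x \in [0,1)$ and $1 + xy \ge 1 > 0$. One could alternatively combine the two exponential estimates into the single chain $1 + xy \le e^{xy} = (e^x)^y \le \left(\frac{1}{1-x}\right)^y$ and read off the result, which is perhaps the cleanest presentation.
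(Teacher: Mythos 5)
Your proof is correct and follows exactly the route the paper indicates: the corollary is obtained by combining Lemma~\ref{lprobelower} ($1+t\le e^t$ with $t=xy$) with Lemma~\ref{lprobeupper} ($e^x\le \frac{1}{1-x}$), raised to the power $y$, and taking reciprocals. Your handling of the boundary case $x=1$ is a careful touch the paper does not spell out, but the argument is the same.
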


Replacing $x$ with $\frac{x}{1+x}$ in the first inequality of Lemma~\ref{lprobeupper} gives the following bounds.

\begin{corollary}\label{corprobe}
  For all $x > -1$, we have 
  \begin{equation}
  e^{\frac{x}{1+x}} \le 1+x \le e^x.\label{eqprobe1}
  \end{equation}
  For all $x, y > 0$, 
  \begin{equation}
  e^{\frac{xy}{x+y}} \le (1+\tfrac xy)^y \le e^x.
  \end{equation}
\end{corollary}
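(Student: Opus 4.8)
The plan is to derive Corollary~\ref{corprobe} directly from the first inequality of Lemma~\ref{lprobeupper}, namely $e^x \le \frac{1}{1-x}$ for all $x < 1$, exactly as the sentence preceding the statement suggests. First I would substitute $x \mapsto \frac{x}{1+x}$ into that inequality. For this substitution to be legal we need $\frac{x}{1+x} < 1$; I would check that this holds precisely when $x > -1$ (for $x > -1$ we have $1+x > 0$, and $\frac{x}{1+x} < 1 \iff x < 1+x \iff 0 < 1$, which is always true). Computing the right-hand side, $1 - \frac{x}{1+x} = \frac{1}{1+x}$, so the substitution yields $e^{\frac{x}{1+x}} \le 1+x$ for all $x > -1$. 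Combining this with the bound $1+x \le e^x$ from Lemma~\ref{lprobelower} gives the chain $e^{\frac{x}{1+x}} \le 1+x \le e^x$, which is~\eqref{eqprobe1}.

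For the second pair of inequalities I would reduce to the first by a scaling argument. Given $x, y > 0$, apply~\eqref{eqprobe1} with $x$ replaced by $\frac{x}{y}$ (which is $> -1$ since it is positive). This gives $e^{\frac{x/y}{1 + x/y}} \le 1 + \frac{x}{y} \le e^{x/y}$. Now raise everything to the power $y > 0$, which preserves the inequalities; the middle term becomes $(1+\tfrac{x}{y})^y$ and the right term becomes $e^x$. For the left term, the exponent becomes $y \cdot \frac{x/y}{1+x/y} = \frac{x}{1 + x/y} = \frac{xy}{y+x}$, so the left term is $e^{\frac{xy}{x+y}}$. This establishes $e^{\frac{xy}{x+y}} \le (1+\tfrac xy)^y \le e^x$.

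There is essentially no hard part here; the only thing requiring a moment of care is verifying the domain condition $\frac{x}{1+x} < 1$ when making the first substitution, and noting that for $-1 < x < 0$ this quantity can be negative (so we are genuinely using Lemma~\ref{lprobeupper} on negative arguments as well, which is fine since that bound holds for all $x < 1$). One should also remember that raising an inequality between positive reals to a positive power $y$ preserves the inequality — all three quantities $e^{(x/y)/(1+x/y)}$, $1+x/y$, $e^{x/y}$ are positive, so this is unproblematic. I would present the argument in two short displayed computations, one for each pair of inequalities, with a single parenthetical remark about the domain.
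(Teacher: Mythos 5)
Your proposal is correct and is exactly the argument the paper intends: the sentence preceding the corollary says the bounds follow by replacing $x$ with $\frac{x}{1+x}$ in the first inequality of Lemma~\ref{lprobeupper}, which is precisely your first step, and the second pair of inequalities follows by the substitution $x \mapsto x/y$ and raising to the power $y$ as you describe. No gaps; the domain check $\frac{x}{1+x} < 1$ for $x > -1$ is the only point requiring care and you handle it correctly.
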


The first bound of~\eqref{eqprobe1} can, with different arguments and for a smaller range of $x$, be sharpened to the following estimate from~\cite[Lemma 8(c) of the arxiv version]{DoerrGWY17}.

\begin{lemma}
  For all $x \in [0,\frac 23]$, $e^{-x-x^2} \le 1-x$.
\end{lemma}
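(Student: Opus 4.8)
The plan is to imitate the canonic calculus argument used above for Lemma~\ref{lprobelower}: reduce the inequality $e^{-x-x^2} \le 1-x$ (valid only for $x$ where $1-x > 0$, which holds on $[0,\tfrac23]$) to a statement about the sign of a single function, and then control that sign by examining derivatives on the relevant interval. Taking logarithms is the natural first move: since both sides are positive on $[0,\tfrac23]$, the claim is equivalent to $-x-x^2 \le \ln(1-x)$, i.e.\ to $g(x) := \ln(1-x) + x + x^2 \ge 0$ for all $x \in [0,\tfrac23]$.

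Next I would study $g$ on $[0,\tfrac23]$. We have $g(0) = 0$, so it suffices to show $g$ is nondecreasing on this interval, or at least that $g$ does not dip below $0$. Differentiating, $g'(x) = -\frac{1}{1-x} + 1 + 2x = \frac{-1 + (1+2x)(1-x)}{1-x} = \frac{-1 + 1 + x - 2x^2}{1-x} = \frac{x(1-2x)}{1-x}$. On $[0,\tfrac23]$ the denominator $1-x$ is positive, and the numerator $x(1-2x)$ is $\ge 0$ for $x \in [0,\tfrac12]$ and $\le 0$ for $x \in [\tfrac12,\tfrac23]$. Hence $g$ increases on $[0,\tfrac12]$ and decreases on $[\tfrac12,\tfrac23]$, so the minimum of $g$ on $[0,\tfrac23]$ is attained at one of the endpoints $x=0$ or $x=\tfrac23$. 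Since $g(0)=0$, it remains to check the single numerical inequality $g(\tfrac23) = \ln(\tfrac13) + \tfrac23 + \tfrac49 = \tfrac{10}{9} - \ln 3 \ge 0$, i.e.\ $\ln 3 \le \tfrac{10}{9}$, which holds since $\ln 3 \approx 1.0986 < 1.111\ldots$. This establishes $g(x) \ge 0$ on $[0,\tfrac23]$ and hence the lemma.

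The only genuinely delicate point is the boundary: the constant $\tfrac23$ is essentially the largest value for which the inequality holds, so the proof must come down to checking $g(\tfrac23) \ge 0$ with a quantitative estimate on $\ln 3$ rather than a soft argument — one cannot simply claim $g$ is monotone on the whole interval. Everything else (the logarithm transformation, the computation of $g'$, and the sign analysis of $x(1-2x)/(1-x)$) is routine. An alternative that avoids logarithms entirely is to set $h(x) := 1-x - e^{-x-x^2}$ and show $h \ge 0$ on $[0,\tfrac23]$ directly; but then $h'$ involves the exponential factor and the sign analysis is messier, so the logarithmic route is cleaner, and I would go with that.
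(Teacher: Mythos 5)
Your proof is correct: the reduction to $g(x)=\ln(1-x)+x+x^2\ge 0$, the computation $g'(x)=\frac{x(1-2x)}{1-x}$, the resulting increase--decrease structure on $[0,\frac12]$ and $[\frac12,\frac23]$, and the endpoint check $g(\frac23)=\frac{10}{9}-\ln 3>0$ are all sound, and you correctly identify that the only quantitatively delicate point is the inequality $\ln 3\le\frac{10}{9}$. The chapter itself does not prove this lemma but only cites it from the literature, so there is no in-text argument to compare against; your elementary calculus proof is a perfectly adequate self-contained substitute.
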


A reformulation of~\eqref{eqprobe1} often useful on the context of standard-bit mutation\index{standard-bit mutation} (mutating a bit-string by flipping each bit independently with a small probability like $\frac 1n$) is the following. Note that the first bound holds for all $r \ge 1$, while it is often only stated for $r \in \N$. For the (not so interesting) boundary case $r=1$, recall that we use the common convention $0^0 := 1$.
\begin{corollary}\label{corprobesbm}
  For all $r \ge 1$ and $0 \le s \le r$,
  \begin{align}
    (1 - \tfrac 1r)^r &\le \tfrac 1e \le (1 - \tfrac 1r)^{r-1},\\
    (1 - \tfrac sr)^{r} &\le e^{-s} \le (1 - \tfrac sr)^{r-s}.
  \end{align}
\end{corollary}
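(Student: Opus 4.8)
The plan is to notice that the first pair of inequalities is the special case $s=1$ of the second (legitimate since $r \ge 1$ forces $0 \le 1 \le r$), so it suffices to prove the second pair for arbitrary $r \ge 1$ and $0 \le s \le r$. The engine will be the substitution $x = -\tfrac sr$ in the estimate~\eqref{eqprobe1} of Corollary~\ref{corprobe}. Before invoking it I would dispose of the degenerate case $s = r$ directly: there $(1-\tfrac sr)^r = 0^r = 0 \le e^{-s}$ (using $r \ge 1$), and $e^{-s} = e^{-r} \le 1 = 0^0 = (1-\tfrac sr)^{r-s}$ by the convention $0^0 := 1$.

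For $0 \le s < r$ the value $x = -\tfrac sr$ lies in $(-1,0]$, so~\eqref{eqprobe1} applies and, after simplifying $\tfrac{-s/r}{1-s/r} = \tfrac{-s}{r-s}$, gives the chain
\[
  e^{-\frac{s}{r-s}} \;\le\; 1 - \frac sr \;\le\; e^{-s/r}.
\]
I would then raise each half to an appropriate positive power, using that $t \mapsto t^{\alpha}$ is nondecreasing on $[0,\infty)$ for every $\alpha > 0$. Raising $1 - \tfrac sr \le e^{-s/r}$ to the power $r \ge 1$ (both sides nonnegative) yields $(1-\tfrac sr)^r \le e^{-s}$. Raising $e^{-s/(r-s)} \le 1 - \tfrac sr$ to the power $r - s > 0$ (both sides positive) yields $e^{-s} \le (1-\tfrac sr)^{r-s}$. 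Together these are exactly the claimed bounds.

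I do not expect a genuine obstacle here; the only points needing care are that $r - s > 0$ in the non-degenerate case (so the bases stay positive and no $0^0$ ambiguity arises) and that the boundary $s = r$ is treated separately, both of which are immediate. If one prefers to avoid Corollary~\ref{corprobe}, the same two bounds also follow from $1+x \le e^x$ (Lemma~\ref{lprobelower}) alone: apply it with $x = -\tfrac sr$ and take $r$-th powers for the lower bounds, and apply it with $x = \tfrac{s}{r-s}$ after rewriting $(1-\tfrac sr)^{-(r-s)} = (1+\tfrac{s}{r-s})^{r-s}$ and take $(r-s)$-th powers for the upper bounds.
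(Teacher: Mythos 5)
Your proof is correct and follows exactly the route the paper intends: the corollary is stated there as a reformulation of~\eqref{eqprobe1}, obtained by substituting $x=-\tfrac sr$ and raising the resulting chain $e^{-s/(r-s)} \le 1-\tfrac sr \le e^{-s/r}$ to the powers $r-s$ and $r$, with the first display being the case $s=1$. Your explicit treatment of the boundary case $s=r$ via the convention $0^0:=1$ matches the paper's own remark on that point.
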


Occasionally, it is useful to know that $(1 - \tfrac 1r)^r$ is monotonically increasing and that $(1 - \tfrac 1r)^{r-1}$ is monotonically decreasing in $r$ (and thus both converge to $\frac 1e$).
\begin{lemma}\label{lprobemonoconv}
  For all $1 \le s \le r$, we have 
  \begin{align}
  (1 - \tfrac 1s)^s &\le (1 - \tfrac 1r)^r,\\
  (1 - \tfrac 1s)^{s-1} &\ge (1 - \tfrac 1r)^{r-1}.
  \end{align}
\end{lemma}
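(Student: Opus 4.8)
The plan is to show monotonicity by examining the derivative (or a discrete difference) of the logarithm of the relevant expression. Since both inequalities concern functions of a single real variable, I would extend the statement to all real $r \ge 1$ and prove that the map $r \mapsto r \ln(1 - \tfrac 1r)$ is monotonically increasing and that $r \mapsto (r-1)\ln(1-\tfrac 1r)$ is monotonically decreasing. Then the claimed inequalities follow immediately by applying $\exp$ and plugging in $s$ and $r$.

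First I would substitute $t = \tfrac 1r \in (0,1]$ to simplify. For the first claim, define $g(t) = \tfrac 1t \ln(1-t)$; I want to show $g$ is monotonically \emph{decreasing} in $t$ on $(0,1)$ (since $t$ decreases as $r$ increases). Computing $g'(t) = -\tfrac{1}{t^2}\ln(1-t) - \tfrac{1}{t(1-t)}$, it suffices to show $-(1-t)\ln(1-t) \le t$, i.e. $(1-t)\ln(1-t) \ge -t$. Writing $u = 1-t \in (0,1)$, this is $u \ln u \ge u - 1$, which follows from Lemma~\ref{lprobelower} applied to $x = \ln u$ (namely $1 + \ln u \le u$, hence $\ln u \le u - 1 \le (u-1)/u$, and multiplying by $u > 0$ gives the claim — one has to be mildly careful with signs, but the inequality $u\ln u \ge u-1$ is exactly the standard bound). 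For the second claim, define $h(t) = \tfrac{1-t}{t}\ln(1-t)$ and show it is monotonically \emph{increasing} in $t$; the derivative works out to $h'(t) = -\tfrac{1}{t^2}\ln(1-t) - \tfrac{1}{t}$, and $h'(t) \ge 0$ reduces to $-\ln(1-t) \ge t$, i.e. $\ln(1-t) \le -t$, which is exactly Lemma~\ref{lprobelower} applied to $x = -t$.

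Alternatively, and perhaps more in the spirit of the surrounding ``elementary arguments'' remark, one can avoid calculus entirely. For the first inequality $(1-\tfrac 1s)^s \le (1-\tfrac 1r)^r$ with integer-flavored reasoning one could use the AM-GM or weighted-power-mean trick: $(1-\tfrac 1r)^r = \bigl((1-\tfrac 1r)^{r/s}\bigr)^s$, so it suffices to show $(1-\tfrac 1r)^{r/s} \ge 1 - \tfrac 1s$, which by Corollary~\ref{corprobesbm} or Corollary~\ref{corprobe} follows from $(1-\tfrac 1r)^{r/s} \ge e^{-1/s} \cdot(\text{correction}) \ge 1 - \tfrac 1s$ — but chasing the constants here is fiddly, so I would fall back on the derivative computation, which is cleaner. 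For the second inequality a symmetric manipulation applies to $(1-\tfrac 1r)^{r-1}$.

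The main obstacle is making sure the reduction of each derivative-sign condition lands exactly on an inequality already available in the excerpt (Lemma~\ref{lprobelower} or Corollary~\ref{corprobe}), rather than on something that itself needs a separate argument; the sign bookkeeping when dividing by $\ln(1-t) < 0$ is the place where an error could creep in. Once the two one-line inequalities $\ln(1-t) \le -t$ and $u\ln u \ge u - 1$ are identified as the crux, everything else is routine. I expect the cleanest writeup to define $\phi(r) := r\ln(1-\tfrac 1r)$ and $\psi(r) := (r-1)\ln(1-\tfrac 1r)$ for real $r > 1$, differentiate, and reduce each to one of these standard bounds, handling $r = 1$ (where $(1-\tfrac 11)^1 = 0$ and $(1-\tfrac 11)^0 = 1$ by the $0^0 := 1$ convention) separately as a trivial boundary case.
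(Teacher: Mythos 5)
The paper states this lemma without proof, so there is no reference argument to compare against; your calculus approach (take logarithms, substitute $t = 1/r$, check the sign of the derivative) is the natural one, and both of your derivative computations are correct. The reduction of the first claim to $-(1-t)\ln(1-t)\le t$, i.e.\ $u\ln u \ge u-1$ with $u=1-t$, and of the second claim to $\ln(1-t)\le -t$, is also correct, and the second claim is then fully proved by Lemma~\ref{lprobelower}.

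The one genuine flaw is your justification of $u\ln u\ge u-1$. The chain ``$\ln u \le u-1 \le (u-1)/u$'' is false for $u\in(0,1)$: there $u-1<0$ and $1/u>1$, so $(u-1)/u < u-1$, i.e.\ the second inequality points the other way. Worse, even if that chain held, multiplying $\ln u \le (u-1)/u$ by $u>0$ would yield $u\ln u\le u-1$, the reverse of what you need. The correct one-line repair uses the \emph{lower} bound on the logarithm rather than the upper bound: by the first inequality of Corollary~\ref{corprobe} (equivalently, Lemma~\ref{lprobelower} applied to $x = \tfrac1u-1$), we have $\ln u \ge \tfrac{u-1}{u}$ for all $u>0$, and multiplying by $u>0$ gives $u\ln u\ge u-1$ directly. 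With that substitution, and the trivial boundary case $s=1$ handled via $0^0:=1$ as you note, your proof is complete.
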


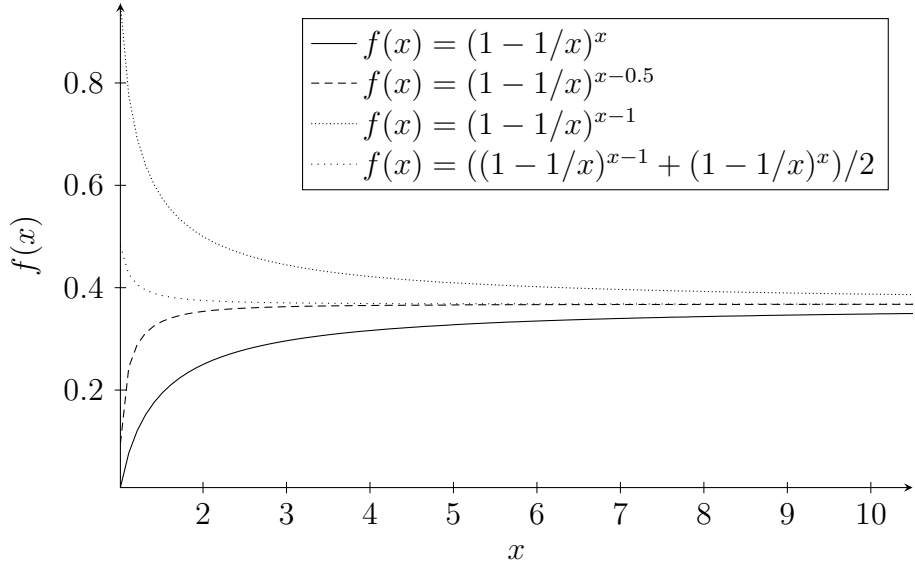
\begin{figure}
\begin{tikzpicture}
\begin{axis}[
    width=12cm,
    height=8cm,
    axis lines = left,
    xlabel = $x$,
    ylabel = {$f(x)$},
    legend style={
      cells={anchor=east},
      legend pos=north east,},
    xmin=1.01,
    xmax=10.5,
    domain=1.01:11.0
]

\addplot [solid,
    samples=100, 
    color=black,
    ]
    {(1-1/x)^x};
\addlegendentry[right]{$f(x) = (1-1/x)^x$}
 
\addplot [densely dashed,
    samples=100, 
    color=black,
    ]
    {(1-1/x)^(x-0.5)};
\addlegendentry[right]{$f(x) = (1-1/x)^{x-0.5}$}
 
%
\addplot [densely dotted,
    samples=100, 
    color=black,
    ]
    {(1-1/x)^(x-1)};
\addlegendentry[right]{$f(x) = (1-1/x)^{x-1}$}
 
\addplot [dotted,
    samples=100, 
    color=black,
    ]
    {((1-1/x)^(x-1)+(1-1/x)^(x))/2};
\addlegendentry[right]{$f(x) = ((1-1/x)^{x-1}+(1-1/x)^{x})/2$}
 
 
\end{axis}
\end{tikzpicture}
\caption{Plots related to Corollary~\ref{corprobesbm}.}\label{figprobesbm}
\end{figure}

Finally, we mention Bernoulli's inequality and a related result. Lemma~\ref{lprobbernoulli}~\ref{lprobbernoulliW} below will be proven at the end of the Section~\ref{secprobbonferroni}, both to show how probabilistic arguments can be used to prove non-probabilistic results and because we have not found a proof for the upper bound in the literature.

\begin{lemma}\label{lprobbernoulli}
\begin{enumerate}
	\item Bernoulli's inequality: Let $x \ge -1$ and $r \in \{0\} \cup [1,\infty)$. Then $(1+x)^r \ge 1+rx$.
	\item\label{lprobbernoulliW} Weierstrass product inequalities: Let $p_1, \dots, p_n \in [0,1]$. Let $P := \sum_{i=1}^n p_i$. Then 
	\begin{equation}
	1 - P \le \prod_{i=1}^n (1-p_i) \le 1 - P + \sum_{i < j} p_i p_j \le 1 - P + \tfrac 12 P^2.\label{eqprobweierminus}
	\end{equation}
	If in addition $P < 1$, then   
  \begin{equation}
	1 + P \le \prod_{i=1}^n (1+p_i) \le \frac{1}{1-P}.\label{eqprobweierplus}
	\end{equation}
	\end{enumerate}
\end{lemma}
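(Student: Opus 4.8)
The plan is to handle the two parts by different elementary means: single-variable calculus for Bernoulli's inequality, and induction on $n$ for the Weierstrass product inequalities. For part~(a) I would imitate the proof of Lemma~\ref{lprobelower}. The case $r = 0$ is trivial since then both sides equal $1$ (using $0^0 := 1$ also when $x = -1$), so assume $r \ge 1$ and define $f \colon [-1,\infty) \to \R$ by $f(x) = (1+x)^r - 1 - rx$. On $(-1,\infty)$ we have $f'(x) = r\big((1+x)^{r-1} - 1\big)$, and since $r - 1 \ge 0$ the map $x \mapsto (1+x)^{r-1}$ is non-decreasing with value $1$ at $x = 0$. Hence $f' \le 0$ on $(-1,0]$ and $f' \ge 0$ on $[0,\infty)$, so $x = 0$ is a global minimum of $f$ on $[-1,\infty)$ (the endpoint $x = -1$ included by continuity). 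As $f(0) = 0$, we conclude $f \ge 0$, which is the claim.

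For part~(b) I would prove the four inequalities in~\eqref{eqprobweierminus} and~\eqref{eqprobweierplus} by induction on $n$, with the trivial base case $n = 1$ (or the empty product for $n = 0$). Throughout write $P_k := \sum_{i=1}^k p_i$ and $S_k := \sum_{1 \le i < j \le k} p_i p_j$, and use repeatedly that the partial products lie in $[0,1]$, so multiplying an inequality by $1 - p_n \ge 0$ (resp.\ $1 + p_n \ge 0$) is harmless. For the left bound of~\eqref{eqprobweierminus}: $\prod_{i=1}^n(1-p_i) \ge (1-p_n)(1 - P_{n-1}) = 1 - P_n + p_n P_{n-1} \ge 1 - P_n$. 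For the middle bound: $\prod_{i=1}^n(1-p_i) \le (1 - p_n)(1 - P_{n-1} + S_{n-1})$, and expanding the right-hand side and using the identity $S_{n-1} + p_n P_{n-1} = S_n$ gives $1 - P_n + S_n - p_n S_{n-1} \le 1 - P_n + S_n$. The last bound $S_n \le \tfrac12 P_n^2$ is not inductive at all: it is immediate from $P_n^2 = \sum_{i} p_i^2 + 2 S_n \ge 2 S_n$. The left bound of~\eqref{eqprobweierplus} follows analogously, $\prod_{i=1}^n(1+p_i) \ge (1+p_n)(1+P_{n-1}) = 1 + P_n + p_n P_{n-1} \ge 1 + P_n$. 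For its right bound I would not induct but instead combine the pointwise estimate $1 + p_i \le \tfrac{1}{1-p_i}$ — equivalently $1 - p_i^2 \le 1$, also a consequence of Lemmas~\ref{lprobelower} and~\ref{lprobeupper} — with the already-proven left bound of~\eqref{eqprobweierminus}, obtaining $\prod_{i=1}^n(1+p_i) \le \prod_{i=1}^n \tfrac{1}{1-p_i} = \big(\prod_{i=1}^n(1-p_i)\big)^{-1} \le \tfrac{1}{1-P}$; this is precisely where the hypothesis $P < 1$ enters, guaranteeing both $1 - P > 0$ and $p_i < 1$ for every $i$.

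I do not expect any step to be a genuine obstacle; the only care needed is bookkeeping with the index sets of $S_n$ in the middle-inequality induction, together with a few harmless edge cases ($r = 1$ or $x = -1$ in part~(a); some $p_i \in \{0,1\}$, or $P \ge 1$ making the left bound of~\eqref{eqprobweierminus} vacuous, in part~(b)). I note that, as the authors announce, the middle inequality also admits a clean probabilistic proof: realising the $p_i$ as the probabilities of \emph{independent} events $A_i$ gives $\prod_{i=1}^n(1-p_i) = \Pr[\bigcap_i \overline{A_i}] = 1 - \Pr[\bigcup_i A_i]$, and the second Bonferroni inequality $\Pr[\bigcup_i A_i] \ge P - \sum_{i<j}\Pr[A_i \cap A_j] = P - S_n$ then yields the bound. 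This is presumably the route taken in Section~\ref{secprobbonferroni}; I would keep the induction above as the self-contained alternative.
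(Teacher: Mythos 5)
Your proof is correct, but for the one part the paper actually proves in detail, namely~\eqref{eqprobweierminus}, it takes a different route. The paper deliberately proves~\eqref{eqprobweierminus} by a probabilistic argument at the end of Section~\ref{secprobbonferroni}: it realises the $p_i$ as probabilities of \emph{independent} events $E_i$, writes $\prod_{i=1}^n(1-p_i) = 1 - \Pr[\exists i : E_i]$, and then obtains the lower bound from the union bound and the upper bound from the second Bonferroni inequality (Lemma~\ref{lem:bonferroni2}) together with $\sum_{i<j} p_i p_j \le \tfrac12 P^2$ — exactly the alternative you sketch in your last paragraph, and the author's stated purpose is precisely to showcase how probabilistic tools prove non-probabilistic facts. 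Your induction on $n$ is a valid, fully self-contained substitute: the multiplication by $1-p_n \ge 0$ is legitimate in every step (including when $1-P_{n-1}$ is negative), the identity $S_n = S_{n-1} + p_n P_{n-1}$ does the bookkeeping for the middle inequality, and the final estimate $2S_n \le P_n^2$ is immediate. What the induction buys is independence from the Bonferroni machinery; what it loses is the didactic point the paper is making. For part~(a) and for~\eqref{eqprobweierplus} the paper gives no proof at all, so your calculus argument for Bernoulli's inequality and your derivation $\prod_i(1+p_i) \le \prod_i \tfrac{1}{1-p_i} \le \tfrac{1}{1-P}$ (correctly using $P<1$ to ensure $p_i<1$ and $1-P>0$) fill in material the paper leaves to the reader, and both are sound.
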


The term Weierstrass product inequality is sometimes only attributed to the lower bounds in Lemma~\ref{lprobbernoulli}~\ref{lprobbernoulliW}. For the upper bound in~\eqref{eqprobweierminus}, the estimate 
\begin{equation}
  \prod_{i=1}^n (1-p_i) \le \frac{1}{1 + P}
\end{equation}
is well-known. It is stronger than our bound if and only if $P > 1$. Since for $P \ge 1$ the lower bound is trivial, this might be the less interesting case.

\subsection{Harmonic Number}\label{secprobharmonic}

Quite frequently in the analysis of randomized search heuristics we will encounter the \emph{harmonic number}\index{harmonic number} $H_n$. For all $n \in \N$, it is defined by $H_n = \sum_{k = 1}^n \frac 1k$. Approximating this sum via integrals, namely by \[\int_1^{n+1} \frac 1x dx \le H_n \le 1 + \int_1^n \frac 1x dx,\] we obtain the estimate
\begin{equation}
  \ln n < H_n \le 1 + \ln n\label{eqprobhn}
\end{equation}
valid for all $n \ge 1$. Sharper estimates involving the Euler-Mascheroni constant $\gamma \approx 0.5772156649$ are known, e.g.,
\begin{align*}
  H_n & = \ln n + \gamma \pm O(\tfrac 1n),\\
  H_n & = \ln n + \gamma + \tfrac 1 {2n} \pm O(\tfrac 1 {n^2}).
\end{align*}
For non-asymptotic statements, it is helpful to know that $H_n - \ln n$ is monotonically decreasing (with limit $\gamma$ obviously). In most cases, however, the simple estimate~\eqref{eqprobhn} will be sufficient.

\subsection{Binomial Coefficients and Stirling's Formula}\label{secprobstirling}

Since discrete probability is strongly related to counting, we often encounter the binomial coefficients defined by \[\binom nk := \frac{n!}{k! \, (n-k)!}\] for all $n \in \N_0$, $k \in [0..n]$. The binomial coefficient $\binom nk$ equals the number of $k$-element subsets of a given $n$-element set. For this reason, the above definition is often extended to $\binom nk :=0$ for $k > n$. 

In this section, we give several useful estimates for binomial coefficients. We start by remarking that, while very precise estimates are available, in the analysis of randomized search heuristics often crude estimates are sufficient. 

The following lemma lists some estimates which all can be proven by elementary means. To prove the second inequality of~\eqref{eqprobbinom2}, note that $e^k = \sum_{i = 0}^\infty \frac{k^i}{i!} \ge \frac{k^k}{k!}$ gives the elementary estimate 
\begin{equation}
  \bigg(\frac{k}{e}\bigg)^k \le k! \le k^k \label{eqprobfact}.
\end{equation}
To prove~\eqref{eqprobbinom3}, note that for even $n$ we have $\binom{n}{n/2} = \frac{n!}{(n/2)! (n/2)!} = \prod_{i=1}^{n/2} \frac{2i (2i-1)}{i^2} = 2^n \prod_{i=1}^{n/2} (1 - \frac{1}{2i}) \le 2^n \exp(-\frac 12 \sum_{i=1}^{n/2} \frac 1i) \le 2^n \exp(-\frac 12 \ln \frac n2) = 2^n \sqrt{\frac{2}{n}}$, see Lemma~\ref{lprobelower} and~\eqref{eqprobhn}, while for odd $n$ we have $\binom{n}{\lfloor n/2 \rfloor} = \frac 12 \binom{n+1}{(n+1)/2} \le 2^n \sqrt{\frac{2}{n+1}}$.

\begin{lemma}
  For all $n \in \N$ and $k \in [1..n]$, we have
  \begin{align}
	&\binom nk \le 2^n,\\
	\bigg(\frac nk\bigg)^k \le &\binom nk \le n^k,\\
	&\binom nk \le \frac{n^k}{k!} \le \bigg(\frac{ne}{k}\bigg)^k,\label{eqprobbinom2}\\
	&\binom nk \le \binom{n}{\lfloor n/2 \rfloor} \le 2^n \sqrt{\frac{2}{n}}.\label{eqprobbinom3}
\end{align}
\end{lemma}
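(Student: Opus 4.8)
The plan is to prove the four displayed chains of inequalities in order, using only the elementary facts already established plus the hint given in the text. For the first line, $\binom nk \le 2^n$, I would observe that $\sum_{k=0}^n \binom nk = 2^n$ (the binomial theorem applied to $(1+1)^n$, or a counting argument: the number of all subsets of an $n$-element set), and every individual summand is nonnegative, so each $\binom nk$ is bounded by the whole sum. For the second line, $(n/k)^k \le \binom nk \le n^k$, the upper bound is immediate from $\binom nk = \frac{n(n-1)\cdots(n-k+1)}{k!} \le \frac{n^k}{k!} \le n^k$ since $k! \ge 1$; the lower bound follows from rewriting $\binom nk = \prod_{i=0}^{k-1} \frac{n-i}{k-i}$ and noting that each factor $\frac{n-i}{k-i} \ge \frac nk$ because $n \ge k$ implies $\frac{n-i}{k-i}$ is nondecreasing in $i$ (cross-multiplying, $(n-i)k \ge (k-i)n \iff nk - ik \ge nk - in \iff in \ge ik$, true since $n \ge k$).

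For the third line, the first inequality $\binom nk \le \frac{n^k}{k!}$ was already obtained in the previous step. The second inequality $\frac{n^k}{k!} \le (ne/k)^k$ is exactly equivalent, after dividing by $n^k$ and taking reciprocals, to the left-hand estimate $(k/e)^k \le k!$ in~\eqref{eqprobfact}, which the text derives from the series expansion $e^k = \sum_{i\ge 0} k^i/i! \ge k^k/k!$; I would simply reproduce that one-line argument. For the fourth line, I would follow the hint verbatim: the first inequality $\binom nk \le \binom{n}{\lfloor n/2\rfloor}$ is the unimodality of the binomial coefficients (the ratio $\binom{n}{k+1}/\binom nk = \frac{n-k}{k+1}$ is $\ge 1$ iff $k \le \frac{n-1}{2}$, so the sequence increases then decreases, peaking at $\lfloor n/2\rfloor$). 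For the central-coefficient bound, when $n$ is even write $\binom{n}{n/2} = 2^n \prod_{i=1}^{n/2}(1-\frac{1}{2i})$, apply $1-x \le e^{-x}$ (Lemma~\ref{lprobelower}) to each factor to get $2^n \exp(-\frac12 H_{n/2}) \le 2^n\exp(-\frac12 \ln\frac n2) = 2^n\sqrt{2/n}$ via~\eqref{eqprobhn}; when $n$ is odd, use $\binom{n}{\lfloor n/2\rfloor} = \frac12\binom{n+1}{(n+1)/2}$ and apply the even case to $n+1$, obtaining $\le 2^n\sqrt{2/(n+1)} \le 2^n\sqrt{2/n}$.

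None of the steps is genuinely hard; the only place requiring a little care is the factorization $\binom{n}{n/2} = 2^n\prod_{i=1}^{n/2}(1-\frac1{2i})$ in the fourth line, which I would verify by matching it against $\binom{n}{n/2} = \prod_{i=1}^{n/2}\frac{2i(2i-1)}{i^2}$ as in the hint — pairing the numerator $n! = \prod_{i=1}^{n/2}(2i)(2i-1)$ with the denominator $(n/2)!^2 = \prod_{i=1}^{n/2} i^2$. The main (modest) obstacle is thus just bookkeeping in that telescoping/regrouping identity; everything else is a direct application of the monotonicity-of-ratios trick, the binomial theorem, and the already-proven bounds $1-x\le e^x$ and $\ln n < H_n$.
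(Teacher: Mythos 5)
Your proof is correct and follows essentially the same route the paper sketches: the binomial-theorem bound for the first line, the product/ratio form for the second, the estimate $(k/e)^k\le k!$ from the exponential series for the third, and the factorization $\binom{n}{n/2}=2^n\prod_{i=1}^{n/2}(1-\tfrac1{2i})$ combined with $1+x\le e^x$ and $H_m>\ln m$ for the fourth. The only blemish is the closing remark citing ``$1-x\le e^x$,'' which should read $1-x\le e^{-x}$ (as you in fact use it in the body of the argument).
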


Stronger estimates, giving also the well-known version 
\begin{equation}
\binom{n}{k} \le \binom{n}{\lfloor n/2 \rfloor} \le 2^n \sqrt{\frac{2}{\pi n}}\label{eqprobmiddlesharp}
\end{equation}
of~\eqref{eqprobbinom3}, can be obtained from the following estimate known as Stirling's formula. 
	
\begin{theorem}[Robbins~\cite{Robbins55}]\label{tprobstirling}
  For all $n \in \N$, \[n! = \sqrt{2\pi n} (\tfrac ne)^{n} R_n,\] where $1 < \exp(\frac{1}{12n+1}) < R_n < \exp(\frac{1}{12n}) < 1.08690405$.
\end{theorem}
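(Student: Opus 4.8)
The plan is to follow Robbins and reduce everything to the monotonicity of two explicit auxiliary sequences. Put
\[
  t_n := \ln(n!) - \bigl(n+\tfrac12\bigr)\ln n + n ,
\]
so that $n! = \sqrt{n}\,(n/e)^n e^{t_n}$, and hence $R_n = n!\big/\bigl(\sqrt{2\pi n}\,(n/e)^n\bigr) = e^{t_n}/\sqrt{2\pi}$. Thus the theorem is equivalent to showing that $t_n$ converges to $c := \tfrac12\ln(2\pi)$ and that $c + \tfrac{1}{12n+1} < t_n < c + \tfrac{1}{12n}$ for all $n \in \N$, the numerical bound $e^{1/(12n)}\le e^{1/12} < 1.08690405$ being a trivial check.

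First I would compute the increments. A short calculation gives $t_n - t_{n+1} = \bigl(n+\tfrac12\bigr)\ln\tfrac{n+1}{n} - 1$, and writing $\tfrac{n+1}{n} = \tfrac{1+u}{1-u}$ with $u = \tfrac{1}{2n+1}$ and using $\ln\tfrac{1+u}{1-u} = 2\sum_{k\ge 0}\tfrac{u^{2k+1}}{2k+1}$ turns this into the clean identity
\[
  t_n - t_{n+1} = \sum_{k\ge 1}\frac{1}{(2k+1)(2n+1)^{2k}} > 0 ,
\]
so $(t_n)$ is strictly decreasing. For the upper envelope, $\tfrac{1}{2k+1}\le\tfrac13$ bounds the right-hand side by the geometric series $\tfrac13\sum_{k\ge1}(2n+1)^{-2k} = \tfrac{1}{12n(n+1)} = \tfrac{1}{12n}-\tfrac{1}{12(n+1)}$, so $\bigl(t_n - \tfrac{1}{12n}\bigr)$ is strictly increasing. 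For the lower envelope, $2k+1 \le 3^k$ bounds it from below by $\sum_{k\ge1}\bigl(3(2n+1)^2\bigr)^{-k} = \tfrac{1}{3(2n+1)^2-1}$, and clearing denominators shows $\tfrac{1}{3(2n+1)^2-1}\ge\tfrac{1}{12n+1}-\tfrac{1}{12n+13}$ for all $n\ge1$, so $\bigl(t_n - \tfrac{1}{12n+1}\bigr)$ is strictly decreasing. Since $(t_n)$ decreases and $t_n > t_n - \tfrac1{12n} \ge t_1 - \tfrac1{12}$ is bounded below, it converges to some $c \in \R$, and the two monotonicities immediately yield $c + \tfrac{1}{12n+1} < t_n < c + \tfrac{1}{12n}$ for every $n$.

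It remains to identify $c$, and here I would bring in Wallis' product $\tfrac{\pi}{2} = \lim_{n\to\infty} \frac{(2^n n!)^4}{((2n)!)^2 (2n+1)}$ (itself provable elementarily from the recursion for $\int_0^{\pi/2}\sin^k x\,dx$). Substituting $n! = \sqrt{n}(n/e)^n e^{t_n}$ and $(2n)! = \sqrt{2n}(2n/e)^{2n} e^{t_{2n}}$, all powers of $2$, $e$ and $n$ cancel and the quotient reduces to $\tfrac{n}{2(2n+1)}\,e^{4t_n - 2t_{2n}}$, which tends to $\tfrac14 e^{2c}$. Equating with $\tfrac{\pi}{2}$ gives $e^{2c} = 2\pi$, i.e.\ $c = \tfrac12\ln(2\pi)$. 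Finally $R_n = e^{t_n}/\sqrt{2\pi} = e^{t_n - c}$, so the sandwich on $t_n$ translates verbatim into $1 < e^{1/(12n+1)} < R_n < e^{1/(12n)}$, completing the proof.

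The main obstacle is conceptual rather than computational: the monotone-sequence argument determines $t_n$ only up to an additive constant, and pinning that constant to $\tfrac12\ln(2\pi)$ requires an external ingredient, Wallis' product being the cheapest. The only genuinely delicate estimate is the reverse tail bound $\sum_{k\ge1}\frac{1}{(2k+1)(2n+1)^{2k}} \ge \tfrac{1}{3(2n+1)^2-1}$ used for the sharp lower bound $\tfrac{1}{12n+1}$; the elementary inequality $2k+1\le 3^k$ suffices, but only just, and one should double-check that the resulting constant really beats $\tfrac1{12n+1}-\tfrac1{12n+13}$ for every $n\ge1$.
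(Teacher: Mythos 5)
The paper does not prove this theorem itself but cites Robbins~\cite{Robbins55}, and your argument is a correct and complete reconstruction of exactly that proof: the telescoping identity $t_n - t_{n+1} = \sum_{k\ge 1}\tfrac{1}{(2k+1)(2n+1)^{2k}}$, the two monotone envelopes $t_n - \tfrac{1}{12n}$ (increasing) and $t_n - \tfrac{1}{12n+1}$ (decreasing), and Wallis' product to pin the limit at $\tfrac12\ln(2\pi)$. I checked the two delicate estimates — $\tfrac13\sum_{k\ge1}(2n+1)^{-2k}=\tfrac{1}{12n(n+1)}$ and $(12n+1)(12n+13)\ge 12(12n^2+12n+2)\iff 24n\ge 11$ — and both hold for all $n\ge 1$, so the proof is sound.
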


\begin{corollary}\label{corprobstirling}
  For all $n \in \N$ and $k \in [1..n-1]$, \[\binom{n}{k} = \frac{1}{\sqrt{2\pi}} \, \sqrt{\frac{n}{k(n-k)}} \, \bigg(\frac{n}{k}\bigg)^k \bigg(\frac{n}{n-k}\bigg)^{n-k} R_{nk},\] where $0.88102729... = \exp(-\frac 16 + \frac{1}{25}) \le \exp(-\tfrac{1}{12k} - \tfrac{1}{12(n-k)} + \tfrac{1}{12n+1}) < R_{nk} < \exp(-\tfrac{1}{12k+1} - \tfrac{1}{12(n-k)+1} + \tfrac{1}{12n}) < 1$.
\end{corollary}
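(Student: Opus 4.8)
The plan is to derive Corollary~\ref{corprobstirling} directly from Robbins' form of Stirling's formula (Theorem~\ref{tprobstirling}) by substituting the expression $m! = \sqrt{2\pi m}\,(m/e)^m R_m$ for each of $n!$, $k!$ and $(n-k)!$ in the definition $\binom nk = \frac{n!}{k!\,(n-k)!}$, and then tidying up.

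First I would write
\[
\binom nk = \frac{\sqrt{2\pi n}\,(n/e)^n R_n}{\sqrt{2\pi k}\,(k/e)^k R_k \cdot \sqrt{2\pi (n-k)}\,((n-k)/e)^{n-k} R_{n-k}}.
\]
The powers of $e$ cancel, since $e^{-n} / (e^{-k} e^{-(n-k)}) = 1$. The constant $\sqrt{2\pi}$ factors combine into $\frac{1}{\sqrt{2\pi}}$ (one in the numerator, two in the denominator), and the remaining square roots give $\sqrt{\frac{n}{k(n-k)}}$. The powers of the integers rearrange as $n^n / (k^k (n-k)^{n-k}) = (n/k)^k (n/(n-k))^{n-k}$, because $n^n = n^k n^{n-k}$. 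Collecting these pieces yields exactly the stated closed form with the error factor
\[
R_{nk} := \frac{R_n}{R_k R_{n-k}}.
\]

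It remains to bound $R_{nk}$. From Theorem~\ref{tprobstirling} we have $\exp(\frac{1}{12m+1}) < R_m < \exp(\frac{1}{12m})$ for every $m$, so
\[
R_{nk} = \frac{R_n}{R_k R_{n-k}} < \frac{\exp(\frac{1}{12n})}{\exp(\frac{1}{12k+1})\exp(\frac{1}{12(n-k)+1})} = \exp\!\Bigl(\tfrac{1}{12n} - \tfrac{1}{12k+1} - \tfrac{1}{12(n-k)+1}\Bigr),
\]
and symmetrically
\[
R_{nk} > \exp\!\Bigl(\tfrac{1}{12n+1} - \tfrac{1}{12k} - \tfrac{1}{12(n-k)}\Bigr).
\]
This establishes the two-sided bound on $R_{nk}$ claimed in the corollary. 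For the concrete numerical bounds $R_{nk} < 1$ and $R_{nk} \ge \exp(-\tfrac16 + \tfrac1{25})$, I would argue that the exponent in the upper bound is always negative (since $\tfrac1{12n} < \tfrac1{12k+1} + \tfrac1{12(n-k)+1}$ for $k\in[1..n-1]$, as each of the two terms on the right already has denominator smaller than or at most comparable to $12n$ once $k,n-k\ge 1$), and that the exponent in the lower bound is minimized over the admissible range at $k=1$, $n-k=1$, i.e.\ $n=2$, giving $\tfrac{1}{25} - \tfrac1{12} - \tfrac1{12} = \tfrac1{25} - \tfrac16$.

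The main obstacle is the last step: showing that $k=n-k=1$ really is the worst case for the lower bound, and that the upper-bound exponent is negative throughout $k\in[1..n-1]$. The function $g(k) = \tfrac{1}{12n+1} - \tfrac{1}{12k} - \tfrac{1}{12(n-k)}$ is symmetric about $k=n/2$ and, since $-\tfrac1{12k}-\tfrac1{12(n-k)}$ is concave-up in $k$ on $(0,n)$ with a maximum at $k=n/2$, it is minimized at the endpoints $k=1$ and $k=n-1$; plugging in $n=2$ (the smallest $n$ for which $[1..n-1]$ is nonempty) then gives the stated constant, and one checks that larger $n$ only increases $g(1)$. The negativity of the upper-bound exponent follows similarly by noting it is largest (least negative) at the endpoints and checking $k=1$: $\tfrac1{12n} - \tfrac1{13} - \tfrac1{12(n-1)+1} < 0$ for all $n\ge 2$. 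These are all elementary monotonicity arguments, so no deep input beyond Theorem~\ref{tprobstirling} is needed.
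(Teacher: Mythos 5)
Your derivation is correct and is exactly the intended one: the paper states this as an immediate corollary of Robbins' form of Stirling's formula, obtained by substituting $m!=\sqrt{2\pi m}(m/e)^m R_m$ for $n!$, $k!$, and $(n-k)!$ and setting $R_{nk}=R_n/(R_kR_{n-k})$, which is precisely what you do. Your endpoint analysis for the numerical bounds (minimum of the lower exponent at $n=2$, $k=1$; negativity of the upper exponent since $12k+1<12n$ for $k\le n-1$) is also sound.
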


We refer to~\cite{HwangPRTC18} for an analysis of randomized search heuristics which clearly required Stirling's formula. Stirling's formula was also used in~\cite[proof of Lemma~8]{DoerrW14ranking} to compute another useful fact, namely that all binomial coefficients that are $O(\sqrt n)$ away from the middle one have the same asymptotic order of magnitude of $\Theta(2^n n^{-1/2})$. Here the upper bound is simply~\eqref{eqprobbinom3}.

\begin{corollary}
  Let $\gamma \ge 0$. Let $n \in \N$ and $\ell = \frac n2 \pm \gamma \sqrt n$. Then $\binom n\ell \ge (1-o(1))\frac{2^n}{2\sqrt{\pi n}} e^{-4\gamma^2}$.
\end{corollary}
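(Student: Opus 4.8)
The plan is to apply the sharp binomial estimate of Corollary~\ref{corprobstirling} (with $k=\ell$) and bound each of its three factors from below. Since $\binom n\ell = \binom n{n-\ell}$ and $n - (\tfrac n2 + \gamma\sqrt n) = \tfrac n2 - \gamma\sqrt n$, it suffices to handle $\ell = \tfrac n2 + \gamma\sqrt n$; I also assume $\ell$ is an integer in $[1..n-1]$, the general case following by rounding $\ell$ to the nearest integer, which changes $\gamma$ by $O(1/\sqrt n)$, hence $\gamma^2$ by $o(1)$ and $e^{-4\gamma^2}$ by a factor $1-o(1)$. Put $\delta := \gamma/\sqrt n$, so that $\ell = n(\tfrac 12 + \delta)$ and $n - \ell = n(\tfrac 12 - \delta)$; throughout, $\gamma$ is fixed (more generally $\gamma = o(\sqrt n)$ works verbatim), so $\ell$ and $n-\ell$ both tend to infinity, and $0 < \delta < \tfrac12$.

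The first factor, the Robbins remainder, satisfies $R_{n\ell} > \exp(-\tfrac1{12\ell} - \tfrac1{12(n-\ell)}) = 1 - o(1)$. The second factor is a genuine inequality: $\ell(n-\ell) = \tfrac{n^2}{4} - \gamma^2 n \le \tfrac{n^2}{4}$ gives $\sqrt{\tfrac n{\ell(n-\ell)}} \ge \tfrac 2{\sqrt n}$. The real work is the ``entropy'' factor $M := \bigl(\tfrac n\ell\bigr)^{\ell} \bigl(\tfrac n{n-\ell}\bigr)^{n-\ell}$. Here I rewrite $\tfrac n\ell = \tfrac 2{1+2\delta}$ and $\tfrac n{n-\ell} = \tfrac 2{1-2\delta}$, so that $M = 2^n (1+2\delta)^{-\ell} (1-2\delta)^{-(n-\ell)}$, and then apply Lemma~\ref{lprobelower} to $x = 2\delta$ and to $x = -2\delta$ (on the reciprocals) to get $(1+2\delta)^{-\ell} \ge e^{-2\delta\ell}$ and $(1-2\delta)^{-(n-\ell)} \ge e^{2\delta(n-\ell)}$, whence $M \ge 2^n e^{2\delta(n - 2\ell)} = 2^n e^{-4\gamma^2}$, using $n - 2\ell = -2\gamma\sqrt n$ and $n\delta^2 = \gamma^2$.

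Multiplying the three bounds yields $\binom n\ell \ge (1-o(1))\, \tfrac1{\sqrt{2\pi}} \cdot \tfrac2{\sqrt n} \cdot 2^n e^{-4\gamma^2} = (1-o(1)) \sqrt{\tfrac2{\pi n}}\, 2^n e^{-4\gamma^2}$, and since $\sqrt{2/(\pi n)} = 2\sqrt 2 \cdot \tfrac1{2\sqrt{\pi n}} \ge \tfrac1{2\sqrt{\pi n}}$, the claim follows, with a factor $2\sqrt 2$ to spare. That slack is precisely what makes the crude use of $1+x \le e^x$ in the bound on $M$ harmless: a second-order expansion of the binary-entropy terms would give the true asymptotics $M \sim 2^n e^{-2\gamma^2}$, i.e.\ a factor $e^{2\gamma^2}$ larger, but this is not needed for the stated bound.

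The step needing the most care is the estimate of $M$: one has to extract the $2^n$ cleanly and then apply $1+x \le e^x$ in the \emph{correct direction} for each of the two factors — once with $x = 2\delta > 0$ and once with $x = -2\delta < 0$, in both cases after passing to the reciprocal (which is legitimate because $1-2\delta>0$, automatic from $\ell \le n-1$), so that both applications push $M$ downward. Everything else is routine: the Robbins correction tends to $1$, the middle factor is bounded exactly, and the final step is just pulling out constants; the only asymptotics enter through $R_{n\ell}$ (and, if one insists, the rounding of $\ell$), while $2^n e^{-4\gamma^2}$ and $2/\sqrt n$ are honest inequalities.
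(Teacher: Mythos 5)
Your proof is correct and takes the route the paper intends: it instantiates Corollary~\ref{corprobstirling} at $k=\ell$ and bounds the three factors separately, using $1+x\le e^x$ (Lemma~\ref{lprobelower}) in the right direction on both pieces of the entropy term to extract $2^n e^{-4\gamma^2}$. As you note, your bound is in fact stronger than the stated one by a factor of $2\sqrt 2$, which is precisely the slack that makes the crude first-order estimate of the entropy factor sufficient.
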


When working with mutation rates different from classic choice of $\frac 1n$, the following estimates can be useful.
\begin{lemma}\label{lprobmaxbinom}
 Let $n \in \N$, $k \in [0..n]$, and $p \in [0,1]$. Let $X \sim \Bin(n,p)$.
\begin{enumerate}
	\item Let $Y \sim \Bin(n,\frac kn)$. Then $\Pr[X = k] \le \Pr[Y = k]$. This inequality is strict except for the trivial case $p=\frac kn$.
	\item For $k \in [1..n-1]$, $\Pr[X = k] \le \frac{1}{\sqrt{2\pi}} \, \sqrt{\frac{n}{k(n-k)}}$.
\end{enumerate}
\end{lemma}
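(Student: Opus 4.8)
The plan is to prove both parts essentially as consequences of the estimates for binomial coefficients already established. For part~(a), I would write out $\Pr[X=k] = \binom nk p^k (1-p)^{n-k}$ and treat it as a function of $p$ on $[0,1]$. Call this $g(p)$. The natural approach is ordinary calculus: compute $g'(p) = \binom nk\bigl(k p^{k-1}(1-p)^{n-k} - (n-k) p^k (1-p)^{n-k-1}\bigr) = \binom nk p^{k-1}(1-p)^{n-k-1}\bigl(k(1-p) - (n-k)p\bigr) = \binom nk p^{k-1}(1-p)^{n-k-1}(k - np)$. Hence $g'(p) > 0$ for $p < k/n$ and $g'(p) < 0$ for $p > k/n$, so $g$ is strictly unimodal with its unique maximum at $p = k/n$, which is exactly $\Pr[Y=k]$. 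This gives the inequality and its strictness for $p \neq k/n$. The boundary cases $k = 0$ and $k = n$ (where the derivative argument degenerates) are trivial to check directly, since then $g(p) = (1-p)^n$ or $g(p) = p^n$, maximized at $p = 0 = k/n$ or $p = 1 = k/n$ respectively.

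For part~(b), the idea is to combine part~(a) with the Stirling-based estimate of Corollary~\ref{corprobstirling}. By part~(a), for any $p$ we have $\Pr[X=k] \le \Pr[Y=k]$ where $Y \sim \Bin(n,\tfrac kn)$, so it suffices to bound $\Pr[Y = k] = \binom nk (\tfrac kn)^k (1 - \tfrac kn)^{n-k} = \binom nk (\tfrac kn)^k (\tfrac{n-k}{n})^{n-k}$. Now apply Corollary~\ref{corprobstirling}, which for $k \in [1..n-1]$ gives $\binom nk = \frac{1}{\sqrt{2\pi}} \sqrt{\frac{n}{k(n-k)}} (\tfrac nk)^k (\tfrac{n}{n-k})^{n-k} R_{nk}$ with $R_{nk} < 1$. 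Multiplying, the factors $(\tfrac nk)^k (\tfrac kn)^k = 1$ and $(\tfrac{n}{n-k})^{n-k}(\tfrac{n-k}{n})^{n-k} = 1$ cancel exactly, leaving $\Pr[Y=k] = \frac{1}{\sqrt{2\pi}} \sqrt{\frac{n}{k(n-k)}} R_{nk} < \frac{1}{\sqrt{2\pi}} \sqrt{\frac{n}{k(n-k)}}$, which is the claimed bound.

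I do not expect any real obstacle here; both parts are short once one sees the right reduction. The only points requiring a little care are the strictness claim in part~(a)—which needs the observation that $g'$ has a genuine sign change and that $g$ is not locally constant, handled by noting $g'(p) \neq 0$ for $p \in (0,1) \setminus \{k/n\}$ together with the degenerate $k \in \{0,n\}$ cases—and making sure in part~(b) that the hypotheses of Corollary~\ref{corprobstirling} (namely $k \in [1..n-1]$) match exactly the range stated in part~(b), which they do. An alternative, self-contained route for part~(b) that avoids Robbins' precise form is to use~\eqref{eqprobbinom3}-style middle-binomial reasoning, but invoking Corollary~\ref{corprobstirling} directly is cleanest and gives the sharp constant $1/\sqrt{2\pi}$ for free.
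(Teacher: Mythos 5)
Your proposal is correct and follows essentially the same route as the paper: part~(a) by observing that $p \mapsto p^k(1-p)^{n-k}$ has its unique maximum on $[0,1]$ at $p = \tfrac kn$, and part~(b) by combining part~(a) with Corollary~\ref{corprobstirling}, where the power terms cancel and $R_{nk}<1$ yields the stated constant. The added care about the degenerate cases $k \in \{0,n\}$ and the strictness claim is a welcome bit of extra detail but does not change the argument.
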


\begin{proof}
  The first part follows from $\Pr[X = k] = \binom{n}{k} p^k (1-p)^{n-k}$ and noting that $p \mapsto p^k (1-p)^{n-k}$ has a unique maximum in the interval $[0,1]$, namely at $p=\frac kn$. The second part follows from the first and using Corollary~\ref{corprobstirling} to estimate the binomial coefficient in the expression $\Pr[Y=k] = \binom{n}{k}(\frac 1k)^k (1-\frac nk)^{n-k}$.
\end{proof}

For the special case that $np=k$, the second part of the lemma above was already shown in~\cite[Lemma~10 of the arxiv version]{SudholtW16}. For $k \in \{\lfloor np \rfloor, \lceil np \rceil\}$ but $np \neq k$, a bound larger than ours by a factor of $e$ was shown there as well.

Finally, we note that to estimate sums of binomial coefficients, large deviations bounds (to be discussed in Section~\ref{secproblargedev}) can be an elegant tool. Imagine we need an upper bound for $S = \sum_{k=a}^n \binom{n}{k}$, where $a > \frac n2$. Let $X$ be a random variable with distribution $\Bin(n,\frac 12)$. Then $\Pr[X \ge a] = 2^{-n} S$. Using the additive Chernoff bound of Theorem~\ref{tprobchernoffadditive01}, we also see $\Pr[X \ge a] = \Pr[X \ge E[X] + (a-\frac n2)] \le \exp(-\frac{2(a-\frac n2)^2}{n})$. Consequently, $S \le 2^n \exp(-\frac{2(a-\frac n2)^2}{n})$. 

The same argument can even be used to estimate single binomial coefficients, in particular, those not to close to the middle one. Note that by Lemma~\ref{lprobklar}, $S = \sum_{k=a}^n \binom{n}{k}$ and $\binom{n}{a}$ are quite close when $a$ is not too close to $\frac n2$. Hence \begin{equation}
\binom na \le 2^n \exp\left(-\frac{2(a-\frac n2)^2}{n}\right)
\end{equation}
is a good estimate in this case.

\section{Union Bound}

The \emph{union bound}\index{Union Bound}, sometimes called Boole's inequality, is a very elementary consequence of the axioms of a probability space, in particular, the $\sigma$-additivity of the probability measure. 

\begin{lemma}[Union bound]\label{lprobunionbound}\index{Union Bound}
  Let $E_1, \ldots, E_n$ be arbitrary events in some probability space. Then \[\Pr\bigg[\bigcup_{i = 1}^n E_i\bigg] \le \sum_{i = 1}^n \Pr[E_i].\]
\end{lemma}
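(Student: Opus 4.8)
The plan is to prove the union bound by the standard trick of \emph{disjointifying} the given events. The obstacle is not depth but merely setting things up correctly: the $E_i$ may overlap arbitrarily, so $\sigma$-additivity does not apply to them directly. First I would define the disjoint events $F_1 := E_1$ and, for $i \ge 2$, $F_i := E_i \setminus (E_1 \cup \dots \cup E_{i-1})$. Two elementary observations then do all the work: (i) the $F_i$ are pairwise disjoint, since any outcome lies in $F_i$ only for the smallest index $i$ with that outcome in $E_i$; and (ii) $\bigcup_{i=1}^n F_i = \bigcup_{i=1}^n E_i$, because we only removed from each $E_i$ points already covered by earlier events. Both can be checked by a one-line element-chasing argument, or simply asserted as immediate.

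With this in hand, the computation is a short chain. By observation (ii) and then $\sigma$-additivity (here just finite additivity) applied to the pairwise disjoint $F_i$,
\begin{equation*}
  \Pr\bigg[\bigcup_{i=1}^n E_i\bigg] = \Pr\bigg[\bigcup_{i=1}^n F_i\bigg] = \sum_{i=1}^n \Pr[F_i].
\end{equation*}
Since $F_i \subseteq E_i$, monotonicity of the probability measure gives $\Pr[F_i] \le \Pr[E_i]$ for each $i$, and summing these inequalities yields $\sum_{i=1}^n \Pr[F_i] \le \sum_{i=1}^n \Pr[E_i]$, which completes the proof.

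I would keep the exposition brief, since this is flagged in the text as ``a very elementary consequence of the axioms.'' The only point worth a word of care is that monotonicity itself follows from additivity ($\Pr[E_i] = \Pr[F_i] + \Pr[E_i \setminus F_i] \ge \Pr[F_i]$), so no property beyond the axioms of a probability space is used. If desired, one could instead give the even shorter induction on $n$: the case $n=1$ is trivial, and the step uses $\Pr[A \cup B] \le \Pr[A] + \Pr[B]$ for two events, which is the $n=2$ instance of the disjointification above. Either route is routine; the disjointification argument is the cleaner one to write down.
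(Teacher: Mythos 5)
Your proof is correct and is precisely the standard disjointification argument; the paper omits a proof entirely, remarking only that the lemma is ``a very elementary consequence of the axioms of a probability space, in particular, the $\sigma$-additivity of the probability measure,'' which is exactly the route you take. Nothing is missing.
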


Despite its simplicity, the union bound is a surprisingly powerful tool in the analysis of randomized algorithms. It draws its strength from the fact that does not need any additional assumptions. In particular, the events $E_i$ are not required to be independent. Here is an example of such an application of the union bound.

\subsection{Example: The \texorpdfstring{\oea}{(1+1) EA} Solving the Needle Problem} 

The \emph{needle function}\index{needle function} is the fitness function $f : \{0,1\}^n \to \Z$ defined by $f(x) = 0$ for all $x \in \{0,1\}^n \setminus \{(1,\dots,1)\}$ and $f((1,\dots,1)) = 1$. It is neither surprising nor difficult to prove that all reasonable randomized search heuristics need time exponential in $n$ to find the maximum of the needle function. To give a simple example for the use of the simplified drift theorem, it was shown in~\cite{OlivetoW11} that the classic \oea within a sufficiently small exponential time does not even get close to the optimum of the needle function (see Theorem~\ref{tprobneedle} below). We now show that the same result (and in fact a stronger one) can be shown via the union bound. 

The \oea is the simple randomized search heuristic that starts with a random search point $x \in \{0,1\}^n$. Then, in each iteration, it generates from $x$ a new search point $y$ by copying $x$ into $y$ and flipping each bit independently with probability $\frac 1n$. If the new search point (``offspring'') $y$ is at least as good as the parent $x$, that is, if $f(y) \ge f(x)$ for an objective function to be maximized, then $x$ is replaced by $y$, that is, we set $x := y$. Otherwise, $y$ is discarded. 

\begin{algorithm2e}%
	Choose $x \in \{0,1\}^n$ uniformly at random\;
  \For{$t=1,2,3,\ldots$}{
    $y \assign x$\;
    \For{$i \in [1..n]$}{with probability~$\frac 1n$ do $y_i \assign 1 - y_i$\;}
    \lIf{$f(y)\geq f(x)$}{$x \assign y$}
  }
\caption{The \oea for maximizing~$f\colon\{0,1\}^n\to\mathbb{R}$.}
\label{alg:oea}
\end{algorithm2e}

The precise result of~\cite[Theorem~5]{OlivetoW11} is the following.

\begin{theorem}\label{tprobneedle}
  For all $\eta>0$ there are $c_1, c_2 >0$ such that with probability $1 - 2^{c_1 n}$ the first $2^{c_2 n}$ search points generated in a run of the \oea on the needle function all have a Hamming distance of more than $(\frac 12 - \eta) n$ from the optimum. 
\end{theorem}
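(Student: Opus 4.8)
The plan is to use the union bound over the first $2^{c_2 n}$ search points, showing that each individual search point is unlikely to land close to the optimum $(1,\dots,1)$. The key observation is that on the needle function, every search point has fitness $0$ except the optimum itself, so the selection step of the \oea never rejects an offspring (as long as the optimum has not been found): the \oea on the needle function behaves exactly like an unbiased random walk where in each iteration the current search point $x$ is replaced by $y$ obtained from standard-bit mutation with rate $\frac 1n$. In particular, conditional on not yet having hit the optimum, the distribution of the $t$-th search point is simply the $t$-fold iteration of standard-bit mutation started from a uniformly random point. But iterating standard-bit mutation from \emph{any} fixed starting point, the resulting bit string has each bit independently equal to $1$ with some probability, and starting from the uniform distribution keeps every bit independently uniform: so the $t$-th search point, conditioned on the walk not having hit the optimum before step $t$, is stochastically close to a uniformly random point of $\{0,1\}^n$. (More carefully: unconditionally, the $t$-th search point is exactly uniform, since the uniform distribution is stationary under standard-bit mutation; conditioning on ``optimum not yet hit'' only removes mass near the optimum, which can only help the bound.)

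First I would make precise that unconditionally each search point $x^{(t)}$ is uniformly distributed on $\{0,1\}^n$ (stationarity of the uniform distribution under bitwise mutation). Second, for a fixed $t$, the Hamming distance $D_t$ of a uniform random point to $(1,\dots,1)$ is distributed as $\Bin(n,\frac 12)$, so $E[D_t] = \frac n2$ and by the additive Chernoff bound (Theorem~\ref{tprobchernoffadditive01}),
\[
\Pr[D_t \le (\tfrac 12 - \eta)n] = \Pr[D_t \le E[D_t] - \eta n] \le \exp(-2\eta^2 n).
\]
Third, I would apply the union bound over $t \in [1..2^{c_2 n}]$: the probability that \emph{some} search point among the first $2^{c_2 n}$ has Hamming distance at most $(\frac 12 - \eta)n$ from the optimum is at most $2^{c_2 n} \exp(-2\eta^2 n) = \exp((c_2 \ln 2 - 2\eta^2) n)$. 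Choosing $c_2 > 0$ small enough that $c_2 \ln 2 < \eta^2$, say, this is at most $\exp(-\eta^2 n) \le 2^{-c_1 n}$ for a suitable $c_1 > 0$ (e.g. $c_1 = \eta^2 / \ln 2$). The (trivial) point that distance ``at most $(\frac12-\eta)n$'' is the complement of distance ``more than $(\frac12-\eta)n$'' then gives the claimed probability bound, where the desired event (all search points far from the optimum) in particular includes the case that the optimum is never reached, so the conditioning issue above is a non-issue.

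The only mildly delicate step is the first one: justifying that each search point is (at least stochastically no closer to the optimum than) a uniform random point, despite the dependence between consecutive search points and despite the conditioning introduced by the selection step. The clean way is to note that the random walk without selection has the uniform distribution as a fixed point, so $x^{(t)}$ is exactly uniform for every $t$; and since on the needle function the \oea coincides with this walk up until the first time the optimum is sampled, the event ``$x^{(t)}$ has distance $\le (\frac12-\eta)n$'' is contained in ``the unselected walk's $t$-th point has distance $\le (\frac12-\eta)n$'' (the two processes agree unless the optimum was already hit, and if the optimum was hit then it was certainly hit as the $t'$-th point of the walk for some $t' \le t$, an event the Chernoff-plus-union estimate also covers). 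Everything else is a one-line Chernoff bound plus a one-line union bound, so I expect no real obstacle beyond phrasing this reduction carefully.
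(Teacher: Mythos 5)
Your proposal is correct and follows essentially the same route as the paper: couple the \oea on the needle function with the selection-free mutation walk (which agrees with it until the optimum is hit and whose iterates are exactly uniform by stationarity), apply the additive Chernoff bound of Theorem~\ref{tprobchernoffadditive01} to each point, and finish with a union bound, handling the post-hitting-time issue by the same set-inclusion argument the paper uses. The only difference is cosmetic: the paper fixes $c_2 = c$ arbitrary and absorbs the trade-off into the failure probability $2^{(c-2\ln(2)\eta^2)n}$, whereas you choose $c_2$ small relative to $\eta^2$ to extract the stated constants $c_1, c_2$.
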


The proof of this theorem in~\cite{OlivetoW11} argues as follows. Denote by $x^{(0)}, x^{(1)}, \dots$ the search points generated in a run of the \oea. Denote by $x^*$ the optimum of the needle function. For all $i \ge 0$, let $X_i := H(x^{(i)},x^*) := |\{j \in [1..n] \mid x^{(i)}_j \neq x^*_j\}|$ be the Hamming distance of $x^{(i)}$ from the optimum. The random initial search point $x^{(0)}$ has an expected Hamming distance of $\frac n2$ from the optimum. By a simple Chernoff bound argument (Theorem~\ref{tprobchernoffadditive01}), we see that with probability $1 - \exp(-2 \eta^2 n)$, we have $X_0 = H(x^{(0)},x^*) > (\frac 12 - \eta) n$. Now a careful analysis of the random process $(X_i)_{i \ge 0}$ via a new ``simplified drift theorem'' gives the claim. 

We now show that the Chernoff bound argument plus a simple union bound are sufficient to prove the theorem. We show the following more explicit bound, which also applies to all other unbiased algorithms in the sense of Lehre and Witt~\cite{LehreW12} (roughly speaking, all algorithms which treat the bit-positions $[1..n]$ and the bit-values $\{0,1\}$ in a symmetric fashion).

\begin{theorem}\label{tprobneedle2}
  For all $\eta>0$ and $c >0$ we have that with probability at least $1 - 2^{(c - 2\ln(2)\eta^2)n}$ the first $L := 2^{c n}$ search points generated in a run of the \oea (or any other unbiased black-box optimization algorithm) on the needle function all have a Hamming distance of more than $(\frac 12 - \eta) n$ from the optimum. 
\end{theorem}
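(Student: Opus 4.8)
The plan is to bound, for each of the $L = 2^{cn}$ search points generated, the probability that it lies within Hamming distance $(\tfrac12 - \eta)n$ of the optimum, and then apply the union bound (Lemma~\ref{lprobunionbound}) over all $L$ of them. The key observation is that on the needle function, every search point except the optimum itself has fitness $0$, so the algorithm receives no useful information: for an unbiased black-box algorithm, as long as the all-ones string has not yet been sampled, each new search point is distributed in a way that is invariant under the group of bit-flips and coordinate permutations. In particular, each generated search point $x^{(i)}$, conditioned on the optimum not having been found so far, is \emph{uniformly distributed} on $\{0,1\}^n$ (or at worst on $\{0,1\}^n$ minus a set of already-excluded points, which only helps). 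Hence $H(x^{(i)}, x^*) \sim \Bin(n, \tfrac12)$.

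First I would make the unbiasedness argument precise enough: invoke the Lehre–Witt framework to say that the distribution from which $x^{(i)}$ is drawn is, before the needle is hit, symmetric under flipping all bits in any fixed coordinate set, which forces the marginal distribution of $x^{(i)}$ to be uniform on $\{0,1\}^n$. For the \oea specifically this is immediate and self-contained: $x^{(0)}$ is uniform, and since no offspring is ever strictly better (all fitnesses are $0$ until the needle), the accepted search point only ever changes when $f(y) \ge f(x)$ holds with equality, and standard-bit mutation applied to a uniform string yields a uniform string; so every $x^{(i)}$ is uniform. Then $\Pr[H(x^{(i)}, x^*) \le (\tfrac12 - \eta) n] = \Pr[\Bin(n,\tfrac12) \le (\tfrac12 - \eta)n]$, and by the additive Chernoff bound (Theorem~\ref{tprobchernoffadditive01}) this is at most $\exp(-2\eta^2 n) = 2^{-2\ln(2)\eta^2 n}$, since the deviation from the mean $n/2$ is $\eta n$.

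Next I would assemble the union bound: the probability that \emph{some} one of the first $L = 2^{cn}$ search points is within distance $(\tfrac12 - \eta)n$ of the optimum is at most $L \cdot 2^{-2\ln(2)\eta^2 n} = 2^{cn} \cdot 2^{-2\ln(2)\eta^2 n} = 2^{(c - 2\ln(2)\eta^2)n}$. Taking complements gives exactly the claimed bound of $1 - 2^{(c - 2\ln(2)\eta^2)n}$ on the probability that all $L$ search points stay at Hamming distance more than $(\tfrac12 - \eta)n$ from the optimum. One small bookkeeping point is that "being within distance $(\tfrac12-\eta)n$" and "having distance more than $(\tfrac12-\eta)n$" should be complementary as stated, so I would phrase the bad event as $H \le (\tfrac12 - \eta)n$ and note that the good event is its negation; this is consistent with the theorem statement.

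\textbf{Main obstacle.} The only genuinely delicate step is justifying that each $x^{(i)}$ is uniformly distributed (equivalently, that conditioning on not having hit the optimum does not bias the distribution toward the optimum), and doing so cleanly for the full class of unbiased algorithms rather than just the \oea. For the \oea this is essentially trivial once one notes that acceptance never depends on anything but a constant fitness; for general unbiased algorithms one has to argue that the symmetry of the algorithm, combined with the flatness of the fitness landscape away from $x^*$, forces uniformity of each sampled point, and then handle the conditioning on the event "$x^*$ not yet sampled" — which can only decrease the probability of landing close to $x^*$, so the Chernoff estimate still applies. Everything else is a one-line Chernoff bound and a one-line union bound.
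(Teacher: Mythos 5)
Your overall route is the same as the paper's: argue that each generated search point is uniform on $\{0,1\}^n$, apply the additive Chernoff bound (Theorem~\ref{tprobchernoffadditive01}) to get $\exp(-2\eta^2 n)$ per point, and finish with a union bound over $L = 2^{cn}$ points. The arithmetic and the union bound are fine. The gap is in the uniformity claim. Your statement that ``every $x^{(i)}$ is uniform'' is literally false for the \oea on the needle function: once the optimum is sampled, all further offspring have strictly smaller fitness and are rejected, so the process is absorbed at $x^*$ and the unconditional distribution of $x^{(i)}$ for $i \ge 1$ carries extra mass at $x^*$, hence is not uniform. Your argument that ``the accepted search point only ever changes when $f(y)\ge f(x)$ holds with equality'' silently assumes the needle has not yet been hit, which is precisely the event you are trying to control. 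You do flag this in your final paragraph, but the patch you propose --- that conditioning on ``$x^*$ not yet sampled'' can only decrease the probability of landing close to $x^*$ --- is exactly the delicate point and is asserted without proof; the conditional law of $x^{(t)}$ given that the optimum was not hit in steps $0,\dots,t$ is not uniform on $\{0,1\}^n\setminus\{x^*\}$ either, so this step needs an actual argument rather than an appeal to intuition.

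The paper closes this gap with a small coupling/stopping trick that you could adopt verbatim: let $T$ be the first time the optimum is generated, set $y^{(t)} := x^{(t)}$ for $t \le T$, and for $t > T$ let $y^{(t)}$ be obtained from $y^{(t-1)}$ by standard-bit mutation regardless of fitness. Each $y^{(t)}$ is then \emph{exactly} uniform (mutation preserves uniformity and no conditioning is ever introduced), and $\{x^{(t)} \mid t \in [0..L-1]\} \subseteq \{y^{(t)} \mid t \in [0..L-1]\}$, so the union bound applied to the $y^{(t)}$ bounds the probability of the bad event for the $x^{(t)}$. With that modification your proof is complete and coincides with the paper's.
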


\begin{proof}
  The key observation is that as long as the \oea has not found the optimum, any search point $x$ generated by the \oea is uniformly distributed in $\{0,1\}^n$. Hence $\Pr[H(x,x^*) \le (\frac 12 - \eta)n] \le \exp(-2\eta^2 n)$ by Theorem~\ref{tprobchernoffadditive01}. By the union bound, the probability that one of the first $L := 2^{c n}$ search points generated by the \oea has a distance $H(x,x^*)$ of at most $(\frac 12 - \eta)n$, is at most $L \exp(-2 \eta^2 n) = 2^{(c - 2\ln(2)\eta^2)n}$.
  
  To be more formal, let $x^{(0)}, x^{(1)}, \dots$ be the search points generated in a run of the \oea. Let $T = \min\{t \in \N_0 \mid x^{(t)} = x^*\}$. Define a sequence $y^{(0)}, y^{(1)}, \dots$ of search points by setting $y^{(t)} := x^{(t)}$ for all $t \le T$. For all $t > T$, let $y^{(t)}$ be obtained from $y^{(t-1)}$ by flipping each bit independently with probability $\frac 1n$. With this definition, and since $x^{(t)} = x^*$ for all $t \ge T$, we have 
\begin{align*}
  \{x^{(t)} \mid t \in [0..L-1]\} &= \{x^{(t)} \mid t \in [0..\min\{T,L-1\}]\} \\
  &= \{y^{(t)} \mid t \in [0..\min\{T,L-1\}]\} \subseteq \{y^{(t)} \mid t \in [0..L-1]\}.
\end{align*}
Consequently, \[\Pr[\exists t \in [0..L-1] : H(x^{(t)},x^*) \le (\tfrac 12 - \eta)n] \le \Pr[\exists t \in [0..L-1] : H(y^{(t)},x^*) \le (\tfrac 12 - \eta)n].\] By the union bound, \[\Pr[\exists t \in [0..L-1] : H(y^{(t)},x^*) \le (\tfrac 12 - \eta)n] \le \sum_{t = 0}^{L-1} \Pr[H(y^{(t)},x^*) \le (\tfrac 12 - \eta)n].\] Note that when $y^{(t)}$ is a search point uniformly distributed in $\{0,1\}^n$, then so is $y^{(t+1)}$. Since $y^{(0)}$ is uniformly distributed, all $y^{(t)}$ are. Hence by Theorem~\ref{tprobchernoffadditive01} we have $\Pr[H(y^{(t)},x^*) \le (\frac 12 - \eta)n] \le \exp(-2 \eta^2 n)$ for all $t$ and thus \[\sum_{t = 0}^{L-1} \Pr[H(y^{(t)},x^*) \le (\tfrac 12 - \eta)n] \le L \exp(-2\eta^2 n) = 2^{(c - 2\ln(2)\eta^2)n}.\]  
  
  This proof immediately extends to all algorithms which, when optimizing the needle function, generate uniformly distributed search points until the optimum is found. These are, in particular, all unbiased algorithms in the sense of Lehre and Witt~\cite{LehreW12}.
\end{proof}

Note that the $y_t$ in the proof above are heavily correlated. For all $t$, the search points $y_t$ and $y_{t+1}$ have an expected Hamming distance of exactly one. Nevertheless, we could apply the union bound to the events ``$H(y_t,x^*) < (\frac 12 - \eta)n$'' and from this obtain a very elementary proof of Theorem~\ref{tprobneedle2}.

\subsection{Lower Bounds, Bonferroni Inequalities}\label{secprobbonferroni}

The union bound is tight, that is, holds with equality, when the events $E_i$ are disjoint. In this case, the union bound simply reverts to the $\sigma$-additivity of the probability measure. The \emph{second Bonferroni inequality} gives a lower bound for the probability of a union of events also when they are not disjoint. 

\begin{lemma}\label{lem:bonferroni2}
  Let $E_1, \ldots, E_n$ be arbitrary events in some probability space. Then \[\Pr\bigg[\bigcup_{i = 1}^n E_i\bigg] \ge \sum_{i = 1}^n \Pr[E_i] - \sum_{i=1}^{n-1}\sum_{{j=i+1}}^n \Pr[E_i \cap E_j].\]
\end{lemma}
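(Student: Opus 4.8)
The plan is to prove the second Bonferroni inequality by induction on $n$, which reduces the problem to the base case $n = 2$ plus a clean inductive step. For $n = 1$ the statement is the trivial equality $\Pr[E_1] = \Pr[E_1]$. For $n = 2$, I would start from the inclusion-exclusion identity $\Pr[E_1 \cup E_2] = \Pr[E_1] + \Pr[E_2] - \Pr[E_1 \cap E_2]$, which holds with equality and in particular gives the claimed lower bound. Alternatively, one can avoid even citing inclusion-exclusion for two sets by writing $E_1 \cup E_2$ as the disjoint union of $E_1$ and $E_2 \setminus E_1$, using $\sigma$-additivity, and noting $\Pr[E_2 \setminus E_1] = \Pr[E_2] - \Pr[E_1 \cap E_2]$ since $E_1 \cap E_2 \subseteq E_2$.

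For the inductive step, assume the inequality holds for $n-1$ events and consider $E_1, \dots, E_n$. Write $\bigcup_{i=1}^n E_i = \left(\bigcup_{i=1}^{n-1} E_i\right) \cup E_n$ and apply the $n=2$ case to the two events $A := \bigcup_{i=1}^{n-1} E_i$ and $E_n$:
\[
\Pr\bigg[\bigcup_{i=1}^n E_i\bigg] \ge \Pr[A] + \Pr[E_n] - \Pr[A \cap E_n].
\]
Now I would lower-bound $\Pr[A]$ using the induction hypothesis, and upper-bound $\Pr[A \cap E_n]$ by observing that $A \cap E_n = \bigcup_{i=1}^{n-1}(E_i \cap E_n)$, so the ordinary union bound (Lemma~\ref{lprobunionbound}) gives $\Pr[A \cap E_n] \le \sum_{i=1}^{n-1} \Pr[E_i \cap E_n]$. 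Substituting both estimates yields
\[
\Pr\bigg[\bigcup_{i=1}^n E_i\bigg] \ge \bigg(\sum_{i=1}^{n-1}\Pr[E_i] - \sum_{i=1}^{n-2}\sum_{j=i+1}^{n-1}\Pr[E_i\cap E_j]\bigg) + \Pr[E_n] - \sum_{i=1}^{n-1}\Pr[E_i\cap E_n],
\]
and collecting terms, the single sum becomes $\sum_{i=1}^n \Pr[E_i]$ and the two double sums combine into $\sum_{i=1}^{n-1}\sum_{j=i+1}^{n}\Pr[E_i\cap E_j]$, which is exactly the claimed bound.

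I do not expect a genuine obstacle here; the only thing to be careful about is the bookkeeping of index ranges when merging the double sum from the induction hypothesis (pairs with $j \le n-1$) with the extra terms $\Pr[E_i \cap E_n]$ (pairs with $j = n$), to confirm they assemble precisely into the full set of pairs $i < j \le n$. The mild subtlety worth a remark is that the inductive step uses the already-established ordinary union bound in the ``wrong direction'' (as an upper bound on $\Pr[A\cap E_n]$) together with the $n=2$ Bonferroni identity, so the proof is really a short combination of two facts proved earlier rather than anything requiring new machinery.
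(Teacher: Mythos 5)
Your proof is correct: the base case $n=2$ is exact inclusion--exclusion, and in the inductive step both estimates point the right way (the induction hypothesis lower-bounds $\Pr[A]$, and the union bound applied to $A \cap E_n = \bigcup_{i=1}^{n-1}(E_i \cap E_n)$ upper-bounds the subtracted term), so the index bookkeeping assembles exactly into $\sum_{i<j\le n}\Pr[E_i\cap E_j]$. The paper states this lemma as a classical fact without giving a proof, so there is nothing to compare against; your argument is a clean, self-contained justification using only $\sigma$-additivity and the already-stated union bound.
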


As an illustration, let us regard the performance of \emph{blind random search}\index{blind random search} on the needle function, that is, we let $x^{(1)}, x^{(2)}, \dots$ be independent random search points from $\{0,1\}^n$ and ask ourselves what is the first hitting time $T = \min\{t \in \N \mid x^{(t)} = (1,\dots,1)\}$ of the maximum $x^* = (1,\dots,1)$ of the needle function (any other function $f : \{0,1\}^n \to \R$ with unique global optimum would do as well). This is easy to compute directly. We see that $T$ has geometric distribution with success probability $2^{-n}$, so the probability that $L$ iterations do not suffice to find the optimum is $\Pr[T > L] = (1-2^{-n})^L$. 

Let us nevertheless see what we can derive from union bound and second Bonferroni inequality. Let $E_t$ be the event $x^{(t)} = x^*$. Then the union bound gives \[\Pr[T \le L] = \Pr\bigg[\bigcup_{t=1}^L E_t\bigg] \le L 2^{-n},\] the second Bonferroni inequality yields \[\Pr[T \le L] = \Pr\bigg[\bigcup_{t=1}^L E_t\bigg] \ge L 2^{-n} - \frac{L(L-1)}{2} 2^{-2n}.\] Hence if $L = o(2^{n})$, that is, $L$ is of smaller asymptotic order than $2^n$, then $\Pr[T \le L] = (1-o(1)) L 2^{-n}$, that is, the union bound estimate is asymptotically tight. 

For reasons of completeness, we state the full set of Bonferroni inequalities. Note that the case $k=1$ is the union bound and the case $k=2$ is the lemma above.
 
\begin{lemma}\label{lem:bonferronik}
  Let $E_1, \ldots, E_n$ be arbitrary events in some probability space. For all $k \in [1..n]$, let \[S_k :=\sum _{1 \le i_1 < \cdots < i_k \leq n} \Pr[A_{i_{1}}\cap \cdots \cap A_{i_{k}}].\]
  Then for all $k \in [1..n]$ we have
  \begin{itemize}
	  \item $\displaystyle{\Pr\bigg[\bigcup_{i = 1}^n E_i\bigg] \le \sum_{j=1}^k (-1)^{j-1} S_j}$ for $k \in [1..n]$ odd,
	  \item $\displaystyle{\Pr\bigg[\bigcup_{i = 1}^n E_i\bigg] \ge \sum_{j=1}^k (-1)^{j-1} S_j}$ for $k \in [1..n]$ even.
  \end{itemize}
\end{lemma}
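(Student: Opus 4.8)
The plan is to prove both families of inequalities \emph{pointwise}, that is, to show that for every outcome $\omega$ of the underlying probability space the corresponding inequality holds between the indicator $\mathbf{1}_{\bigcup_{i} E_i}(\omega)$ and the truncated inclusion--exclusion expression
$U_k(\omega) := \sum_{j=1}^k (-1)^{j-1}\sum_{1\le i_1<\cdots<i_j\le n}\mathbf{1}_{E_{i_1}\cap\cdots\cap E_{i_j}}(\omega)$,
with ``$\le$'' for odd $k$ and ``$\ge$'' for even $k$. Since expectation is monotone and linear, and $E[\mathbf{1}_{E_{i_1}\cap\cdots\cap E_{i_j}}] = \Pr[E_{i_1}\cap\cdots\cap E_{i_j}]$, taking expectations then turns $\mathbf{1}_{\bigcup_i E_i}$ into $\Pr[\bigcup_i E_i]$ and $U_k$ into $\sum_{j=1}^k(-1)^{j-1}S_j$, which gives exactly the two stated bounds. (All sums involved are finite, so there is no convergence issue even if the probability space is countably infinite.)

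First I would fix $\omega$ and set $m := |\{i \in [1..n] : \omega \in E_i\}|$, the number of events containing $\omega$. If $m = 0$ then every indicator on both sides vanishes and the two sides coincide, so the non-strict inequality holds trivially. If $m \ge 1$ then $\mathbf{1}_{\bigcup_i E_i}(\omega) = 1$, while a $j$-fold intersection $E_{i_1}\cap\cdots\cap E_{i_j}$ contains $\omega$ precisely when all of $i_1,\dots,i_j$ belong to the $m$-element index set of the events through $\omega$; there are $\binom{m}{j}$ such index tuples. Hence $U_k(\omega) = T_k := \sum_{j=1}^k (-1)^{j-1}\binom{m}{j}$, and it remains to show that $T_k \ge 1$ for odd $k$ and $T_k \le 1$ for even $k$, for every $m \ge 1$ and every $k \in [1..n]$.

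For this I would invoke the elementary identity $\sum_{j=0}^k (-1)^j \binom{m}{j} = (-1)^k \binom{m-1}{k}$, valid for all $m \ge 1$ and $k \ge 0$ with the convention $\binom{m-1}{k} = 0$ for $k \ge m$; it follows by a one-line induction on $k$ using Pascal's rule $\binom{m}{j} = \binom{m-1}{j} + \binom{m-1}{j-1}$ and telescoping. This yields $1 - T_k = \sum_{j=0}^k (-1)^j \binom{m}{j} = (-1)^k \binom{m-1}{k}$, so $1 - T_k = \binom{m-1}{k} \ge 0$ when $k$ is even and $1 - T_k = -\binom{m-1}{k} \le 0$ when $k$ is odd, exactly as required. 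As a remark, in the range $k \ge m$ one has $\binom{m-1}{k} = 0$, so $T_k = 1$ and the truncated formula already equals $\mathbf{1}_{\bigcup_i E_i}(\omega)$; taking expectations in that regime recovers the full inclusion--exclusion identity $\Pr[\bigcup_i E_i] = \sum_{j=1}^n (-1)^{j-1} S_j$, which one can also obtain directly by expanding $\mathbf{1}_{\bigcup_i E_i} = 1 - \prod_{i=1}^n (1 - \mathbf{1}_{E_i})$ and taking expectations.

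I do not expect a real obstacle: the argument is entirely elementary. The two points that need a little care are the combinatorial bookkeeping that reduces the pointwise statement to the single binomial sum $T_k$ (in particular the degenerate case $m = 0$ and the convention for $\binom{m-1}{k}$ once $k \ge m$), and the proof of the binomial identity itself, which is the only place where an actual induction is carried out.
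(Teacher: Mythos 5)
Your proof is correct and complete: the reduction to a pointwise inequality between indicators, the count of $\binom{m}{j}$ surviving $j$-fold intersections, and the identity $\sum_{j=0}^{k}(-1)^j\binom{m}{j}=(-1)^k\binom{m-1}{k}$ together give exactly the claimed alternating bounds, and taking expectations is legitimate since all sums are finite. The paper states this lemma without proof (offering only the informal remark that the first omitted term dominates the error), so there is nothing to compare against; your argument is the standard one and fills that gap cleanly, with the degenerate cases $m=0$ and $k\ge m$ handled correctly.
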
  
  
In simple terms, the Bonferroni inequalities state that when we omit the terms for $j > k$ in the inclusion-exclusion formula \[\Pr\bigg[\bigcup_{i = 1}^n E_i\bigg] = \sum_{j=1}^n (-1)^{j-1} S_j,\] then first of the omitted terms (that is, the one for $j=k+1$) dominates the error. So if $k$ is odd and thus the first omitted term is negative, then we obtain a $\le$ inequality, and inversely for $k$ even.

We now use the Bonferroni inequalities to prove two of the inequalities given in Lemma~\ref{lprobbernoulli}~\ref{lprobbernoulliW}.

\begin{proof}[Proof of~\eqref{eqprobweierminus}]
  Consider some probability space with independent events $E_1, \dots, E_n$ having $\Pr[E_i] = p_i$. Due to the independence, 
\begin{equation}
  \prod_{i=1}^n (1-p_i) = \Pr[\forall i \in [1..n] : \neg E_i] = 1 - \Pr[\exists i \in [1..n] : E_i]. \label{eqprobxyz}
\end{equation}
  By the union bound, the right-hand side of~\eqref{eqprobxyz} is at least $1 - \sum_{i=1}^n p_i = 1 - P$. By the Bonferroni inequality for $k=2$ and again the independence, the right-hand side of~\eqref{eqprobxyz} is at most \[1 - P + \sum_{i < j} \Pr[E_i \cap E_j] = 1 - P + \sum_{i < j} p_i p_j \le 1 - P + \tfrac 12 \sum_{i=1}^n \sum_{j=1}^n p_i p_j = 1 - P + \tfrac 12 P^2.\]
\end{proof}
Note that the slack in the last inequality is only the term $ \tfrac 12 \sum_{i=1}^n p_i^2$, so there is not much reason to prefer the stronger upper bound $1 - P + \sum_{i < j} p_i p_j$ over the bound $1 - P + \tfrac 12 P^2$.

\section{Expectation and Variance}\label{secprobexp}

Expectation and variance are two key characteristic numbers of a random variable.

\subsection{Expectation}

The \emph{expectation}\index{expectation} (or mean) of a random variable $X$ taking values in some set $\Omega \subseteq \R$ is defined by $E[X] = \sum_{\omega \in \Omega} \omega \Pr[X = \omega]$, where we shall always assume that the sum exists and is finite. As a trivial example, we immediately see that if $X$ is a binary random variable, then $E[X] = \Pr[X=1]$.

For non-negative integral random variables, the expectation can also be computed by the following formula (which is valid also when $E[X]$ is not finite). 

\begin{lemma}\label{lprobnonnegexp}
  Let $X$ be a random variable taking values in the non-negative integers. Then \[E[X] = \sum_{i=1}^\infty \Pr[X \ge i].\] Is $X$ takes values in $(-\infty,0] \cup \N$, then $E[X] \le \sum_{i=1}^\infty \Pr[X \ge i]$ still holds.
\end{lemma}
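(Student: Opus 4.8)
The plan is to start from the right-hand side and rewrite each tail probability as a sum of point probabilities, then interchange the order of the two (countable) sums. Concretely, for $X$ taking values in $\N_0$ we have, for every $i \in \N$, the identity $\Pr[X \ge i] = \sum_{k=i}^\infty \Pr[X = k]$, which is just $\sigma$-additivity applied to the disjoint decomposition of the event $\{X \ge i\}$ into the events $\{X = k\}$ for $k \ge i$. Substituting this in gives
\[
\sum_{i=1}^\infty \Pr[X \ge i] = \sum_{i=1}^\infty \sum_{k=i}^\infty \Pr[X = k].
\]
Since all summands $\Pr[X=k]$ are non-negative, the double sum may be reordered freely (a non-negative double series has a well-defined value in $[0,\infty]$ independent of the summation order); summing first over $i$ for fixed $k$, the index $i$ ranges over $\{1, \dots, k\}$, contributing a factor of exactly $k$. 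Hence the double sum equals $\sum_{k=1}^\infty k \, \Pr[X=k]$, which is $\sum_{k=0}^\infty k \,\Pr[X=k] = E[X]$ (the $k=0$ term being zero). This also shows the identity is valid even when $E[X] = \infty$, since both sides are then $+\infty$.

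For the second statement, where $X$ takes values in $(-\infty,0] \cup \N$, the key observation is that every tail event $\{X \ge i\}$ with $i \ge 1$ only ``sees'' the positive integer values of $X$: the computation above goes through verbatim and yields $\sum_{i=1}^\infty \Pr[X\ge i] = \sum_{k=1}^\infty k\,\Pr[X=k]$, a sum over the strictly positive values only. On the other hand $E[X] = \sum_{k \le 0} k \,\Pr[X=k] + \sum_{k=1}^\infty k\,\Pr[X=k]$, and the first of these two sums is a sum of non-positive terms, hence at most $0$. Therefore $E[X] \le \sum_{k=1}^\infty k\,\Pr[X=k] = \sum_{i=1}^\infty \Pr[X\ge i]$, as claimed.

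The only genuinely delicate point is the interchange of the two infinite sums, but because every term is non-negative this needs no integrability hypothesis — it is the elementary fact that a double series with non-negative entries can be summed in any order (equivalently, Tonelli's theorem for counting measure), and this is also exactly what makes the formula meaningful in the $E[X]=\infty$ case. Everything else is bookkeeping: counting how many pairs $(i,k)$ with $1 \le i \le k$ contribute each $\Pr[X=k]$, and noting that negative values of $X$ never enter the tails $\Pr[X \ge i]$ for $i \ge 1$.
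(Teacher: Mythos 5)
Your proof is correct. The paper states this lemma without proof, so there is nothing to compare against; your argument — writing each tail $\Pr[X\ge i]$ as $\sum_{k\ge i}\Pr[X=k]$, interchanging the two non-negative sums (Tonelli for counting measure), and counting that each $\Pr[X=k]$ appears exactly $k$ times — is the standard and complete derivation, and you correctly handle both the $E[X]=\infty$ case and the inequality for variables with non-positive values by discarding the non-positive part of the expectation.
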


This lemma, among others, allows to conveniently transform information about the tail bound of a distribution into a bound on its expectation. This was done, e.g., in~\cite[proof of Lemma~10]{DrosteJW02} for lower bounds, in~\cite[proof of Theorem~2]{DoerrSW13foga} in a classic runtime analysis, and in~\cite[proof of Theorem~5]{DoerrG13algo} in the simplified proof of the multiplicative drift theorem. 

Lemma~\ref{lprobnonnegexp} also allows to conveniently derive from information on the upper tail of a random variable an estimate for its expectation, as done in the following elementary result.
  
\begin{corollary}[Expectations from exponential tail bounds]\label{corprobtaile}
  Let $\alpha, \beta > 0$ and $T \ge 0$. Let $X$ be an integer random variable and $Y$ be a non-negative integer random variable.
  \begin{enumerate}
	\item If $\Pr[X \ge T + \lambda] \le \alpha \exp(-\tfrac{\lambda}{\beta})$ for all $\lambda \in \N$, then $E[X] \le T+\alpha\beta$.
	\item If $\Pr[Y \le T - \lambda] \le \alpha \exp(-\tfrac{\lambda}{\beta})$ for all $\lambda \in [1..T]$, then $E[Y] \ge T-\alpha\beta$.
	\item If $\Pr[X \ge (1+\eps) T] \le \alpha \exp(-\tfrac{\eps}{\beta})$ for all $\eps > 0$, then $E[X] \le (1+\alpha\beta) T$.
	\item If $\Pr[X \le (1-\eps) T] \le \alpha \exp(-\tfrac{\eps}{\beta})$ for all $\eps \in (0,1]$, then $E[X] \ge (1-\alpha\beta) T$.
  \end{enumerate}
\end{corollary}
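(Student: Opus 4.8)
The plan is to deduce everything from Lemma~\ref{lprobnonnegexp}, which expresses the expectation of a non-negative integer random variable as $\sum_{i \ge 1} \Pr[X \ge i]$. The key technical device is to shift the random variable so that it becomes non-negative, apply the lemma, split the resulting sum at the threshold $T$, and bound the tail part by a geometric series. I expect no real obstacle; the only mild subtlety is bookkeeping with the floor/ceiling when $T$ is not an integer and remembering that $\sum_{\lambda \ge 1} e^{-\lambda/\beta} = \frac{1}{e^{1/\beta}-1} \le \beta$ by Lemma~\ref{lprobelower} (since $e^{1/\beta} \ge 1 + \tfrac1\beta$).

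For part~(a): assume first $T \in \N_0$ and that $X$ only takes values $\ge$ some lower bound; more cleanly, work with $Z := \max\{X - T, 0\}$ if $X$ may be negative, or directly with $X$ if we may assume $X \ge 0$. Actually the cleanest route is this: we show $E[X] \le T + \alpha\beta$ by noting $E[X] \le E[\max\{X,0\}] \le E[\lceil T \rceil] + E[\max\{X - \lceil T\rceil, 0\}]$ wait — simpler yet — since the statement as written is for integer $X$ and the hypothesis only constrains $\Pr[X \ge T+\lambda]$ for $\lambda \in \N$, assume WLOG $T$ is a non-negative integer (if not, replace $T$ by $\lceil T \rceil$, which only weakens the hypothesis by a constant factor absorbed in the final estimate — or just note the hypothesis with $T' = \lceil T \rceil$ and $\lambda' = \lambda$ follows and gives $E[X] \le \lceil T\rceil + \alpha\beta$; to get the clean bound one keeps $T$ real and uses $\Pr[X \ge \lceil T \rceil + j] \le \Pr[X \ge T + j] \le \alpha e^{-j/\beta}$). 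Set $Y := \max\{X - \lceil T\rceil, 0\}$, a non-negative integer random variable with $\Pr[Y \ge j] = \Pr[X \ge \lceil T\rceil + j] \le \alpha e^{-j/\beta}$ for $j \ge 1$. By Lemma~\ref{lprobnonnegexp} (and its second clause handling the fact that $X - \lceil T\rceil$ may be negative, so that $E[X - \lceil T\rceil] \le E[Y] \le \sum_{j\ge1}\Pr[Y\ge j]$), we get $E[X] \le \lceil T\rceil + \sum_{j\ge1}\alpha e^{-j/\beta} \le \lceil T\rceil + \alpha\beta$; keeping $T$ real throughout gives $E[X] \le T + \alpha\beta$.

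For part~(b): apply part~(a) to $X := T - Y$ wait, one must be careful since $Y$ is non-negative so $X = T - Y$ satisfies $X \le T$ and is an integer random variable; the hypothesis $\Pr[Y \le T - \lambda] \le \alpha e^{-\lambda/\beta}$ reads $\Pr[X \ge \lambda] \le \alpha e^{-\lambda/\beta}$, i.e.\ part~(a) with threshold $0$, giving $E[X] \le \alpha\beta$, hence $E[Y] = T - E[X] \ge T - \alpha\beta$. (The restriction $\lambda \in [1..T]$ is harmless: for $\lambda > T$ the event $Y \le T - \lambda$ is empty since $Y \ge 0$, so the bound holds trivially there too.) Parts~(c) and~(d) are immediate specializations: in~(c) put $\lambda = \eps T$ — more precisely, for integer $\lambda \ge 1$ we have $\Pr[X \ge T + \lambda] = \Pr[X \ge (1 + \tfrac{\lambda}{T})T] \le \alpha e^{-\lambda/(\beta T)}$, which is the hypothesis of part~(a) with $\beta$ replaced by $\beta T$, yielding $E[X] \le T + \alpha\beta T = (1+\alpha\beta)T$; and~(d) follows from~(b) the same way with $\beta \mapsto \beta T$. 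The whole argument is short and elementary once Lemma~\ref{lprobnonnegexp} is in hand; the only place to take care is the reduction to integer (or the handling of real) $T$ and the use of the second half of Lemma~\ref{lprobnonnegexp} to cover possibly-negative $X$.
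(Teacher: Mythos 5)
Your proof is correct and follows essentially the same route as the paper's: both rest on Lemma~\ref{lprobnonnegexp}, split the tail sum at $T$, and bound the remaining geometric series by $\alpha\beta$ using $e^{1/\beta}\ge 1+\tfrac{1}{\beta}$, with parts~(c) and~(d) obtained as reformulations via $\beta\mapsto\beta T$. The only cosmetic differences are that you derive part~(b) from part~(a) by the substitution $X:=T-Y$ where the paper redoes the computation directly, and that you handle non-integer $T$ a bit more explicitly.
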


\begin{proof}
  By Lemma~\ref{lprobnonnegexp}, we compute 
  \begin{align*}
  E[X] &\le \sum_{i=1}^\infty \Pr[X \ge i] \le T + \sum_{i=T+1}^\infty \alpha \exp\left(-\frac{i-T}{\beta}\right) \\
  & = T - \alpha + \alpha \frac{1}{1-\exp(-1/\beta)} \le T + \alpha\beta, 
  \end{align*}
  where the last estimate uses~\eqref{eqprobeupper1}. 
  
  Similarly, we compute 
  \begin{align*}
  E[Y] &= \sum_{i=1}^T \Pr[Y \ge i] \ge \sum_{i=1}^T (1 - \Pr[Y \le i-1]) \\
  & \ge \sum_{\lambda=1}^T (1 - \alpha\exp(-\tfrac{\lambda}{\beta})) \ge T - \sum_{\lambda=1}^\infty \alpha \exp(-\tfrac{\lambda}{\beta}) \ge T - \alpha\beta.
  \end{align*}
  The last two claims are simple reformulations of the first two.
\end{proof}

In a similar vein, Lemma~\ref{lprobnonnegexp} yields an elegant analysis of the expectation of a geometric random variable\index{random variable!geometric}. Let $X$ be a geometric random variable with success probability~$p$. Intuitively, we feel that the expected waiting time for a success is~$\frac 1p$. This intuition is guided by the fact that after~$\frac 1p$ repetitions of the underlying binary random experiment, the expected number of successes is exactly one. This intuition led to the right result, the ``proof'' however is not correct. The correct proof either uses standard results in Markov chain theory, or elementary but  non-trivial calculations, or (as done below) the same reasoning as in the lemma above. 

\begin{lemma}[Waiting time argument]\label{lprobwaitingtime}\index{random variable!geometric}
  Let $X$ be a geometric random variable with success probability $p>0$. Then $E[X] = \frac 1p$.
\end{lemma}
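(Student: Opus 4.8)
The plan is to use Lemma~\ref{lprobnonnegexp}, which expresses the expectation of a non-negative integer random variable as $E[X] = \sum_{i=1}^\infty \Pr[X \ge i]$. Since $X \sim \Geom(p)$ takes values in $\N \subseteq \N_0$, this formula applies directly, and it reduces the computation of $E[X]$ to computing the tail probabilities $\Pr[X \ge i]$ and summing a geometric series.

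First I would compute $\Pr[X \ge i]$ for $i \in \N$. By definition of the geometric distribution, $X \ge i$ means the first $i-1$ trials were all failures, which (by independence of the underlying Bernoulli trials) happens with probability $(1-p)^{i-1}$. One can see this either from the formula $\Pr[X=k] = (1-p)^{k-1}p$ by summing $\sum_{k=i}^\infty (1-p)^{k-1}p = (1-p)^{i-1}$, or more directly from the interpretation of $X$ as a waiting time. Then I would substitute into Lemma~\ref{lprobnonnegexp}:
\[
E[X] = \sum_{i=1}^\infty \Pr[X \ge i] = \sum_{i=1}^\infty (1-p)^{i-1} = \frac{1}{1-(1-p)} = \frac 1p,
\]
using the standard geometric series formula $\sum_{j=0}^\infty q^j = \frac{1}{1-q}$ for $q = 1-p \in [0,1)$ (valid since $p > 0$).

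There is essentially no serious obstacle here; the only point requiring a modicum of care is justifying $\Pr[X \ge i] = (1-p)^{i-1}$ rigorously rather than by the circular "expected number of successes" heuristic that the surrounding text explicitly warns against. I would make sure to ground it in the event description "the first $i-1$ trials are failures" together with independence, so that the argument is genuinely non-circular. Everything else — interchanging the order of summation implicitly, summing the geometric series — is routine and covered by the fact that all terms are non-negative, so no convergence subtleties arise. This matches the style of the preceding Corollary~\ref{corprobtaile}, where the same Lemma~\ref{lprobnonnegexp} plus a geometric-series estimate does all the work.
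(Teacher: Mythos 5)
Your proposal is correct and follows exactly the paper's own route: identify $\Pr[X \ge i] = (1-p)^{i-1}$ as the probability of $i-1$ initial failures, then apply Lemma~\ref{lprobnonnegexp} and sum the geometric series. No gaps; the extra care you take to justify the tail probability non-circularly is consistent with the paper's remark preceding the lemma.
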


\begin{proof}
  We have $\Pr[X \ge i] = (1-p)^{i-1}$, since $X \ge i$ is the event of having no success in the first $i-1$ rounds of the random experiment. Now Lemma~\ref{lprobnonnegexp} gives \[E[X] = \sum_{i=1}^\infty \Pr[X \ge i] = \sum_{i=1}^{\infty} (1-p)^{i-1} = \frac{1}{1-(1-p)} = \frac 1p.\] 
\end{proof}

  
An elementary, but very useful property is that expectation is linear.

\begin{lemma}[Linearity of expectation]\label{lproblinearity}
  Let $X_1, \ldots, X_n$ be arbitrary random variables and $a_1, \ldots, a_n \in \R$. Then \[E\bigg[\sum_{i = 1}^n a_i X_i\bigg] = \sum_{i = 1}^n a_i E[X_i].\]
\end{lemma}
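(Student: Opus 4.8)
The plan is to reduce the general statement to the case $n = 2$ together with the trivial scalar-multiplication rule $E[aX] = a\,E[X]$, and then induct on $n$. The scalar rule is immediate from the definition: if $X$ takes values in $\Omega \subseteq \R$, then $aX$ takes values in $a\Omega$, and $E[aX] = \sum_{\omega} a\omega \Pr[X = \omega] = a \sum_{\omega} \omega \Pr[X=\omega] = a\,E[X]$ (the rearrangement of the series is justified because we assume throughout that all these expectation sums converge absolutely; for $a = 0$ recall $0 \cdot X$ is the constant $0$, so there is nothing to check). Hence it suffices to prove $E[X + Y] = E[X] + E[Y]$ for two random variables $X, Y$ on the same (discrete) probability space.

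For the two-variable case, the key step is to compute the expectation by summing over the underlying sample space rather than over the values. Writing the probability space as $(\Omega', \Pr)$ with $\Omega'$ countable, one has for any discrete random variable $Z$ that $E[Z] = \sum_{\omega' \in \Omega'} Z(\omega') \Pr[\{\omega'\}]$; this follows by grouping the outcomes $\omega'$ according to the value $Z(\omega')$ and using $\sigma$-additivity, i.e. $\sum_{v} v \Pr[Z = v] = \sum_v v \sum_{\omega' : Z(\omega') = v} \Pr[\{\omega'\}] = \sum_{\omega'} Z(\omega')\Pr[\{\omega'\}]$. Applying this to $Z = X + Y$ and using that $(X+Y)(\omega') = X(\omega') + Y(\omega')$ pointwise, we split the sum:
\[
E[X+Y] = \sum_{\omega' \in \Omega'} \big(X(\omega') + Y(\omega')\big)\Pr[\{\omega'\}] = \sum_{\omega'} X(\omega')\Pr[\{\omega'\}] + \sum_{\omega'} Y(\omega')\Pr[\{\omega'\}] = E[X] + E[Y].
\]
The splitting of the series is legitimate because both $\sum_{\omega'} |X(\omega')|\Pr[\{\omega'\}] = E[|X|]$ and the analogous sum for $Y$ are finite under our standing assumption, so the combined series converges absolutely and may be rearranged freely.

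Finally, I would finish by induction on $n$: the base case $n = 1$ is the scalar rule, and for the inductive step write $\sum_{i=1}^n a_i X_i = \big(\sum_{i=1}^{n-1} a_i X_i\big) + a_n X_n$, apply the two-variable additivity to these two random variables, then the induction hypothesis to the first summand and the scalar rule to the second, obtaining $\sum_{i=1}^{n-1} a_i E[X_i] + a_n E[X_n] = \sum_{i=1}^n a_i E[X_i]$. The main (indeed only) subtlety is the interchange of summation order / rearrangement of infinite series, which is why one wants the absolute convergence guaranteed by the assumption that all expectations involved are finite; once that is granted, the proof is purely bookkeeping. Note in particular that no independence of the $X_i$ is needed anywhere — this is exactly what makes linearity of expectation such a flexible tool.
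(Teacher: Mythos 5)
Your proof is correct, and the paper itself states this lemma without proof as an elementary standard fact, so there is no argument in the paper to compare against. The reduction to the two-variable case via the pointwise sum over the sample space, with absolute convergence justifying the rearrangement, is the standard textbook argument; the only (harmless) technicality is your assumption that the underlying sample space is countable, which for discrete random variables can always be arranged by passing to the countable partition induced by the joint values of $X_1,\dots,X_n$.
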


This fact is very convenient when we can write a complicated random variable as sum of simpler ones. For example, let $X$ be a binomial random variable with parameters $n$ and $p$, that is, we have $\Pr[X = k] = \binom{n}{k} p^k (1-p)^{n-k}$. Since $X$ counts the number of successes in $n$ (independent) trials, we can write $X = \sum_{i=1}^n X_i$ as the sum of (independent) binary random variables $X_1, \ldots, X_n$, each with $\Pr[X_i = 1] = p$. Here $X_i$ is the indicator random variable for the event that the $i$-th trial is a success. Using linearity of expectation, we compute \[E[X] = E\bigg[\sum_{i=1}^n X_i\bigg] = \sum_{i=1}^n E[X_i] = n p.\] Note that we did not need that the $X_i$ are independent. 
We just proved the following.

\begin{lemma}[Expectation of binomial random variables]\label{lprobexpectationbinomial}\index{random variable!binomial}
  Let $X$ be a binomial random variable with parameters $n$ and $p$. Then $E[X] = pn$.
\end{lemma}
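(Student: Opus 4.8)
The statement to prove is that a binomial random variable $X$ with parameters $n$ and $p$ has $E[X] = pn$. The plan is to exploit the fact, already emphasized in the surrounding text, that $X$ decomposes as a sum of indicator random variables, and then invoke linearity of expectation (Lemma~\ref{lproblinearity}). Indeed, the argument has essentially been carried out in the paragraph preceding the statement, so the ``proof'' is really just the act of recording that computation formally.

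First I would recall the definition: $X \sim \Bin(n,p)$ means $X = \sum_{i=1}^n X_i$, where $X_1, \dots, X_n$ are binary random variables each with $\Pr[X_i = 1] = p$ (and in the genuine binomial case they are independent, though this will play no role). Next I would note that for a binary random variable, $E[X_i] = 0 \cdot \Pr[X_i = 0] + 1 \cdot \Pr[X_i = 1] = \Pr[X_i = 1] = p$, as observed in Section~\ref{secprobexp}. Then, applying Lemma~\ref{lproblinearity} with all $a_i = 1$, I would compute
\[
E[X] = E\bigg[\sum_{i=1}^n X_i\bigg] = \sum_{i=1}^n E[X_i] = \sum_{i=1}^n p = pn,
\]
which is the claim.

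Alternatively — and this is worth mentioning only as a sanity check, not as the route I would take — one could compute $E[X]$ directly from the probability mass function $\Pr[X=k] = \binom{n}{k} p^k (1-p)^{n-k}$ by the identity $k\binom{n}{k} = n\binom{n-1}{k-1}$ and re-indexing the sum so that it reduces (via the binomial theorem) to $np(p + (1-p))^{n-1} = np$. There is no real obstacle in either approach; the only ``difficulty'' is purely expository, namely deciding how much of the decomposition to re-derive versus cite. Since the excerpt has already spelled out the indicator decomposition and has Lemma~\ref{lproblinearity} in hand, the cleanest presentation is the three-line linearity argument above, and I would present exactly that, remarking as the text does that independence of the $X_i$ is not needed.
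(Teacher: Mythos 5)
Your proof is correct and is essentially identical to the paper's: the text immediately preceding the lemma writes $X = \sum_{i=1}^n X_i$ as a sum of indicator variables with $E[X_i]=p$ and applies linearity of expectation (Lemma~\ref{lproblinearity}), noting as you do that independence is not needed. The alternative computation via the probability mass function is a fine sanity check but is not used in the paper.
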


In the same fashion, we can compute the following elementary facts. 

\begin{lemma}\label{lprobhamming}
Let $x, y, x^* \in \{0,1\}^n$. Denote by $H(x,y) := |\{i \in [1..n] \mid x_i \neq y_i\}|$ the Hamming distance of $x$ and $y$.
\begin{enumerate}
	\item Let $z$ be obtained from $x$ via \emph{standard-bit mutation}\index{standard-bit mutation} with rate $p \in [0,1]$, that is, by flipping each bit of $x$ independently with probability $p$. Then $E[H(x,z)] = pn$ and $E[H(z,x^*)] = H(x,x^*) + p (n - 2H(x,x^*))$.
	\item Let $z$ be obtained from $x$ and $y$ via uniform crossover, that is, for each $i \in [1..n]$ independently, we set $z_i = x_i$ or $z_i = y_i$ each with probability $\frac 12$. Then $E[H(x,z)] = \frac 12 H(x,y)$ and $E[H(z,x^*)] = \frac 12 (H(x,x^*) + H(y,x^*))$.
	\item Let $z$ be obtained from the unordered pair $\{x,y\}$ via $1$-point crossover, that is, we choose $r$ uniformly at random from $[0..n]$ and then with probability $\frac 12$ each 
	\begin{itemize}
	\item define $z$ by $z_i = x_i$ for $i \le r$ and $z_i = y_i$ for $i > r$, or
	\item define $z$ by $z_i = y_i$ for $i \le r$ and $z_i = x_i$ for $i > r$.
	\end{itemize}
	Then $E[H(x,z)] = \frac 12 H(x,y)$ and $E[H(z,x^*)] = \frac 12 (H(x,x^*) + H(y,x^*))$.
\end{enumerate}
\end{lemma}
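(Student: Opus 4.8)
The plan is to write every Hamming distance appearing in the statement as a sum of indicator random variables, one per bit position, and then apply linearity of expectation (Lemma~\ref{lproblinearity}), which is the key point since it does not require the indicators to be independent. For part~(a) I would write $H(x,z) = \sum_{i=1}^n \mathbf 1[z_i \neq x_i]$; since $z_i \neq x_i$ exactly when bit $i$ is flipped, which happens with probability $p$, summing gives $E[H(x,z)] = pn$. For $E[H(z,x^*)]$ I would split the $n$ positions according to whether $x_i$ already agrees with $x^*_i$: on the $n - H(x,x^*)$ agreeing positions we have $z_i \neq x^*_i$ iff the bit flips (probability $p$), while on the $H(x,x^*)$ disagreeing positions we have $z_i \neq x^*_i$ iff the bit does not flip (probability $1-p$). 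Adding these contributions and simplifying yields $H(x,x^*) + p(n - 2H(x,x^*))$.

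For part~(b), the cleanest route is to observe that for every position $i$ we have $\Pr[z_i \neq x^*_i] = \tfrac12 \mathbf 1[x_i \neq x^*_i] + \tfrac12 \mathbf 1[y_i \neq x^*_i]$: this is immediate from the fact that $z_i$ is set to $x_i$ or to $y_i$ each with probability $\tfrac12$, and it is valid whether or not $x_i = y_i$. Summing over $i$ and using linearity gives $E[H(z,x^*)] = \tfrac12(H(x,x^*) + H(y,x^*))$; specialising $x^*$ to $x$ (and using $H(z,x) = H(x,z)$) gives $E[H(x,z)] = \tfrac12 H(x,y)$.

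For part~(c), the observation to make is that, although $1$-point crossover heavily correlates the bits of $z$, the symmetric $\tfrac12$-$\tfrac12$ choice between the two assignments guarantees that, conditioned on the cut point $r$, each individual bit $z_i$ equals $x_i$ with probability $\tfrac12$ and $y_i$ with probability $\tfrac12$ (for $i \le r$ the first assignment contributes $x_i$ and the second $y_i$, and for $i > r$ it is the other way around), hence unconditionally $z_i$ is $x_i$ or $y_i$ each with probability $\tfrac12$. This is precisely the per-position property used in part~(b), so the same linearity computation applies verbatim and yields the same two formulas. There is no genuine obstacle here; the only point worth stressing — and the reason these first-moment statements are clean — is that linearity of expectation is insensitive to the dependence among the bits of $z$ in parts~(b) and~(c), so no independence argument enters at any stage.
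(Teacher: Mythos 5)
Your proof is correct and follows exactly the route the paper intends: the paper states this lemma as a consequence of writing each Hamming distance as a sum of per-bit indicators and applying linearity of expectation, and its remark after the lemma makes precisely your observation that $\Pr[z_i = x_i] = \frac 12$ for both crossover operators while the dependence structure is irrelevant. Nothing is missing.
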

The fact that the results for the two crossover operators  are identical shows again that linearity of expectation does not care about possible dependencies. We have $\Pr[z_i = x_i] = \frac 12$ in both cases, and this is what is important for the result, whereas the fact that the events ``$z_i = x_i$'' are independent for uniform crossover and strongly dependent for $1$-point crossover has no influence on the result.

\subsection{Markov's Inequality}

\emph{Markov's inequality}\index{Markov's inequality} is an elementary large deviation bound valid for \emph{all} non-negative random variables.

\begin{lemma}[Markov's inequality]\label{lprobmarkov}
  Let $X$ be a non-negative random variable with $E[X] > 0$. Then for all $\lambda >0$, 
  \begin{align}
  &\Pr[X \ge \lambda E[X]] \le \tfrac 1 \lambda,\\
  &\Pr[X \ge \lambda] \le \tfrac {E[X]} \lambda.\label{eqprobmarkov2}
  \end{align}  
\end{lemma}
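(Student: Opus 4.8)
The plan is to prove the two displayed inequalities in sequence, noting that they are equivalent via the substitution $\lambda \mapsto \lambda E[X]$, so it suffices to establish one of them — say the second, \eqref{eqprobmarkov2} — and deduce the first immediately. First I would fix $\lambda > 0$ and write the expectation as a sum over the (countably many, by our discreteness convention) values $\omega$ in the range $\Omega \subseteq [0,\infty)$ of $X$:
\[
E[X] = \sum_{\omega \in \Omega} \omega \Pr[X = \omega].
\]
Since every term with $\omega < \lambda$ contributes a non-negative amount (here non-negativity of $X$ is exactly what is used — $\omega \ge 0$), we may drop those terms to get $E[X] \ge \sum_{\omega \ge \lambda} \omega \Pr[X=\omega]$. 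Then on the remaining terms we bound $\omega \ge \lambda$, giving
\[
E[X] \ge \lambda \sum_{\omega \ge \lambda} \Pr[X = \omega] = \lambda \Pr[X \ge \lambda].
\]
Dividing by $\lambda > 0$ yields $\Pr[X \ge \lambda] \le E[X]/\lambda$, which is \eqref{eqprobmarkov2}.

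For the first inequality, I would simply apply \eqref{eqprobmarkov2} with the threshold $\lambda E[X]$ in place of $\lambda$; this is legitimate since $E[X] > 0$ by hypothesis and $\lambda > 0$, so $\lambda E[X] > 0$. This gives $\Pr[X \ge \lambda E[X]] \le E[X]/(\lambda E[X]) = 1/\lambda$. An equivalent and perhaps cleaner packaging is to introduce the indicator random variable $\mathbf{1}_{X \ge \lambda}$ and observe the pointwise inequality $X \ge \lambda \cdot \mathbf{1}_{X \ge \lambda}$ (valid because $X \ge 0$ everywhere and $X \ge \lambda$ on the event in question); taking expectations and using linearity plus $E[\mathbf{1}_{X \ge \lambda}] = \Pr[X \ge \lambda]$ gives the same conclusion. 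I would probably present the indicator-variable version as the main argument since it is the most transparent and reusable.

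There is no real obstacle here — the only point requiring the slightest care is making explicit where non-negativity of $X$ enters (in discarding the terms with $0 \le \omega < \lambda$, one needs those to be $\ge 0$) and noting that the hypothesis $E[X] > 0$ is only needed so that the statement $\Pr[X \ge \lambda E[X]] \le 1/\lambda$ is non-vacuous and the division is valid; inequality \eqref{eqprobmarkov2} itself holds for any non-negative $X$ regardless. One should also remark that the bound is only informative when $\lambda > E[X]$ (resp. $\lambda > 1$), but that is a remark rather than part of the proof.
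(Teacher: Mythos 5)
Your proof is correct and follows essentially the same route as the paper: expand $E[X]$ as a sum over the values of $X$, discard the non-negative terms below $\lambda$, bound the remaining ones from below by $\lambda$, and divide; the first inequality then follows by substituting $\lambda E[X]$ for $\lambda$. The indicator-variable packaging you mention is a standard equivalent phrasing and adds nothing problematic.
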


\begin{proof}
We have \[\displaystyle{E[X] = \sum\limits_\omega \omega \Pr[X = \omega] \ge \sum\limits_{\omega \ge \lambda} \lambda \Pr[X=\omega] = \lambda \Pr[X \ge \lambda]},\] proving the second formulation of Markov's inequality.
\end{proof}

We note that~\eqref{eqprobmarkov2} also holds without the assumption $E[X] > 0$. More interestingly, the proof above shows that Markov's inequality is always strict (that is, holds with ``$<$'' instead of ``$\le$'') when $X$ takes at least three different values with positive probability.

It is important to note that Markov's inequality, without further assumptions, only gives information about deviations above the expectation. If $X$ is a (not necessarily non-negative) random variable taking only values not larger than some $u \in \R$, then the random variable $u-X$ is non-negative and Markov's inequality gives the bound 
\begin{equation}
\Pr[X \le \lambda] \le \frac{u - E[X]}{u - \lambda},
\end{equation}
which is sometimes called \iemph{reverse Markov's inequality}. An equivalent formulation of this bound is \
\begin{equation}
\Pr[X > \lambda] \ge \frac{E[X] - \lambda}{u - \lambda}.
\end{equation}

Markov's inequality is useful if not much information is available about the random variable under consideration. Also, when the expectation of $X$ is very small, then the following elementary corollary is convenient and, in fact, often quite tight.

\begin{corollary}[First moment method]\label{corprobmarkov}
  If $X$ is a non-negative random variable, then $\Pr[X \ge 1] \le E[X]$.
\end{corollary}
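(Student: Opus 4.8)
The statement to prove is the First Moment Method: if $X$ is a non-negative random variable, then $\Pr[X \ge 1] \le E[X]$.

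\textbf{Approach.} This is an immediate special case of Markov's inequality (Lemma~\ref{lprobmarkov}), specifically of the formulation~\eqref{eqprobmarkov2} with $\lambda = 1$. The plan is simply to invoke that inequality. First I would note that $X$ being non-negative is exactly the hypothesis needed for Markov's inequality to apply. Then, substituting $\lambda = 1$ into $\Pr[X \ge \lambda] \le E[X]/\lambda$ yields $\Pr[X \ge 1] \le E[X]$ directly.

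\textbf{Handling the edge case.} One small wrinkle: Lemma~\ref{lprobmarkov} as stated carries the side assumption $E[X] > 0$, whereas the corollary here does not. However, the remark immediately following the lemma notes that~\eqref{eqprobmarkov2} holds without that assumption (and in any case, if $E[X] = 0$ then $X = 0$ almost surely, so $\Pr[X \ge 1] = 0 = E[X]$, and if $E[X] \ge 1$ the bound $\Pr[X\ge 1] \le 1 \le E[X]$ is trivial). So there is nothing to worry about; I would either cite that remark or just let the substitution stand, since the proof of Markov's inequality reproduced in the excerpt in fact never uses $E[X] > 0$.

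\textbf{Main obstacle.} There is essentially no obstacle — this is a one-line deduction. The only ``work'' is recognizing that the corollary is genuinely weaker than what has already been established and that the extra hypothesis in the lemma statement is not actually binding. A complete proof is: \emph{Apply the second formulation of Markov's inequality (Lemma~\ref{lprobmarkov}, inequality~\eqref{eqprobmarkov2}), which holds for all non-negative $X$, with $\lambda = 1$.}
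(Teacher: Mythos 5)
Your proposal is correct and matches the paper's intent exactly: the corollary is stated immediately after Markov's inequality precisely because it is inequality~\eqref{eqprobmarkov2} with $\lambda = 1$, and your observation that the $E[X]>0$ hypothesis is not needed is also made explicitly in the remark following Lemma~\ref{lprobmarkov}. Nothing further is required.
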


Corollary~\ref{corprobmarkov} together with linearity of expectation often gives the same results as the union bound. For an example, recall that in Section~\ref{secprobbonferroni} we observed that in a run of the blind random search heuristic, the probability that the $t$-th search point $x_t$ is the unique optimum of a given function $f : \{0,1\}^n \to \R$, is $2^{-n}$. Denote this event by $E_t$ and let $X_t$ be the indicator random variable for this event. Then the probability that one of the first $L$ search points is the optimum, can be estimated equally well via the union bound or the above corollary and linearity of expectation:
\begin{align*}
  &\Pr\bigg[\bigcup_{t=1}^L E_t\bigg] \le \sum_{t=1}^L \Pr[E_t] = L 2^{-n},\\
  &\Pr\bigg[\sum_{t=1}^L X_t \ge 1\bigg] \le E\bigg[\sum_{t=1}^L X_t\bigg] = \sum_{t=1}^L E[X_t] = L 2^{-n}.
\end{align*}

\subsection{Chebyshev's Inequality}\index{Chebyshev's inequality}

The second elementary large deviation bound is Chebyshev's inequality, sometimes called Bienaym\'e-Chebyshev inequality as it was first stated in Bienaym\'e~\cite{Bienayme53} and later proven in Chebyshev~\cite{Tchebichef67}. It seems less often used in the theory of randomized search heuristics (exceptions being~\cite{NeumannSW10,DoerrJWZ13}). 

Recall that the \emph{variance} of a discrete random variable $X$  is 
\begin{equation}
  \Var[X] = E[(X - E[X])^2] = E[X^2] - E[X]^2.\label{eqprobdefvar}
\end{equation}
Just by definition, the variance already is a measure of how well $X$ is concentrated around its mean. 

From the variance, we also obtain a bound on the expected (absolute) deviation from the mean. Applying the well-known estimate $E[X]^2 \le E[X^2]$, which follows from the second equality in~\eqref{eqprobdefvar}, to the random variable $|X - E[X]|$, we obtain
\begin{equation}
  E[|X - E[X]|] \le \sqrt{E[(X - E[X])^2]} = \sqrt{\Var[X]}.
\end{equation}

More often, we use the variance to bound the probability of deviating from the expectation by a certain amount. Applying Markov's inequality to the random variable $(X - E[X])^2$ easily yields the following very useful inequality.

\begin{lemma}[Chebyshev's inequality]\label{lprobchebyshev}
  Let $X$ be a random variable with $\Var[X] > 0$. Then for all $\lambda > 0$, 
  \begin{align}
  &\Pr\big[|X - E[X]| \ge \lambda \sqrt{\Var[X]}\,\big] \le \tfrac 1 {\lambda^2},\\
  &\Pr\big[|X - E[X]| \ge \lambda\big] \le \tfrac {\Var[X]} {\lambda^2}.\label{eqprobcheby2}
  \end{align}
\end{lemma}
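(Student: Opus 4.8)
The plan is to derive Chebyshev's inequality as an immediate consequence of Markov's inequality (Lemma~\ref{lprobmarkov}), exactly as the paragraph preceding the statement already anticipates. First I would introduce the auxiliary random variable $Y := (X - E[X])^2$. This $Y$ is non-negative, so Markov's inequality applies to it, and by the definition of variance in~\eqref{eqprobdefvar} we have $E[Y] = \Var[X]$, which is positive by hypothesis.

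Next I would record the elementary event identity $\{|X - E[X]| \ge \lambda\} = \{Y \ge \lambda^2\}$, valid for every $\lambda > 0$ because $t \mapsto t^2$ is monotone increasing on the non-negative reals and both $|X - E[X]|$ and $\lambda$ are non-negative. Applying the second form of Markov's inequality~\eqref{eqprobmarkov2} to $Y$ with threshold $\lambda^2$ then yields
\[
\Pr\big[|X - E[X]| \ge \lambda\big] = \Pr[Y \ge \lambda^2] \le \frac{E[Y]}{\lambda^2} = \frac{\Var[X]}{\lambda^2},
\]
which is precisely~\eqref{eqprobcheby2}. Finally, to obtain the first inequality I would simply substitute $\lambda \sqrt{\Var[X]}$ for $\lambda$ in the bound just proved; since $\Var[X] > 0$ this is again a positive real, and the right-hand side becomes $\Var[X] / (\lambda^2 \Var[X]) = 1/\lambda^2$.

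There is essentially no hard part here: the proof is a two-line reduction. The only point that warrants a word of care is the event-identity step, i.e.\ making sure the squaring is performed on non-negative quantities so that the inequality direction is preserved; everything else is bookkeeping with the definition of variance and a change of the deviation parameter.
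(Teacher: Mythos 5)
Your proof is correct and follows exactly the route the paper indicates (applying Markov's inequality to the non-negative random variable $(X-E[X])^2$ and then rescaling the deviation parameter). No issues.
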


Similar to Markov's inequality, the second estimate is valid also without the assumption $\Var[X] > 0$. Note that Chebyshev's inequality automatically yields a two-sided tail bound (that is, for both cases that the random variable is larger and smaller than its expectation), as opposed to Markov's inequality (giving just a bound for exceeding the expectation). There is a one-sided version of Chebyshev's inequality that is often attributed to Cantelli, though Hoeffding~\cite{Hoeffding63} sees Chebyshev~\cite{Tchebichef74} as its inventor.

\begin{lemma}[Cantelli's inequality]\label{lprobcantelli}
  Let $X$ be a random variable with $\Var[X] > 0$. Then for all $\lambda > 0$, 
  \begin{align}
  &\Pr\big[X \ge E[X] + \lambda \sqrt{\Var[X]}\,\big] \le \tfrac 1 {\lambda^2+1},\\
  &\Pr\big[X \le E[X] - \lambda \sqrt{\Var[X]}\,\big] \le \tfrac 1 {\lambda^2+1}.
  \end{align}
\end{lemma}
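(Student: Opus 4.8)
The plan is to reduce to the standard one-sided deviation setup and then apply Markov's inequality to a cleverly shifted square. First I would center the variable: set $Y := X - E[X]$, so that $E[Y] = 0$ and $\Var[Y] = \Var[X] =: \sigma^2 > 0$. The claim for the upper tail then becomes $\Pr[Y \ge \lambda \sigma] \le \tfrac{1}{\lambda^2+1}$. The lower-tail statement follows by applying the upper-tail statement to $-X$ (whose variance is again $\sigma^2$), so it suffices to treat the upper tail.

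The key trick is that for every real $t$ we have the event inclusion $\{Y \ge \lambda\sigma\} \subseteq \{Y + t \ge \lambda\sigma + t\}$, and if we choose $t \ge 0$ (so that $\lambda\sigma + t > 0$), this is in turn contained in $\{(Y+t)^2 \ge (\lambda\sigma+t)^2\}$. Since $(Y+t)^2$ is non-negative, Markov's inequality (Lemma~\ref{lprobmarkov}, in the form~\eqref{eqprobmarkov2}) gives
\[
\Pr[Y \ge \lambda\sigma] \le \Pr\big[(Y+t)^2 \ge (\lambda\sigma+t)^2\big] \le \frac{E[(Y+t)^2]}{(\lambda\sigma+t)^2} = \frac{\sigma^2 + t^2}{(\lambda\sigma + t)^2},
\]
using $E[(Y+t)^2] = E[Y^2] + 2tE[Y] + t^2 = \sigma^2 + t^2$ because $E[Y]=0$.

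It remains to optimize the right-hand side over $t \ge 0$. Differentiating $g(t) = (\sigma^2+t^2)/(\lambda\sigma+t)^2$ and simplifying, the numerator of $g'(t)$ is proportional to $t\lambda\sigma - \sigma^2$, so the minimizer is $t = \sigma/\lambda$, which indeed satisfies $t > 0$. Substituting $t = \sigma/\lambda$ and simplifying (the $\sigma^2$ cancels, and $\lambda^2 + 2 + \lambda^{-2} = (\lambda^2+1)^2/\lambda^2$ while $1 + \lambda^{-2} = (\lambda^2+1)/\lambda^2$) yields exactly $g(\sigma/\lambda) = \tfrac{1}{\lambda^2+1}$, which proves the bound.

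The only real subtlety — and thus the main thing to be careful about — is the step where we pass from $Y+t \ge \lambda\sigma+t$ to $(Y+t)^2 \ge (\lambda\sigma+t)^2$: this requires $\lambda\sigma + t \ge 0$, which is why the shift $t$ must be taken non-negative; fortunately the optimal value $t = \sigma/\lambda$ is positive, so no case distinction is needed. Everything else is routine: a one-variable minimization and algebraic simplification.
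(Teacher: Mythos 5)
Your proof is correct. The paper states Cantelli's inequality without giving a proof, so there is nothing to compare against; your argument is the classical one (center the variable, bound $\Pr[Y \ge \lambda\sigma]$ by $\Pr[(Y+t)^2 \ge (\lambda\sigma+t)^2]$ via Markov applied to the non-negative variable $(Y+t)^2$, and optimize the shift to $t = \sigma/\lambda$), and all the steps — the event inclusion requiring $\lambda\sigma+t>0$, the computation $E[(Y+t)^2]=\sigma^2+t^2$, the minimization, and the reduction of the lower tail to the upper tail via $-X$ — check out.
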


In many applications, the slightly better bound of Cantelli's inequality is not very interesting. Cantelli's inequality has, however, the charm that the right-hand side is always less than one, hence one can obtain non-trivial probabilities also for deviations smaller than $\sqrt{\Var[X]}$. We shall exploit this in the proof of Lemma~\ref{lprobsqrtn}. 

While Markov's inequality can be used to show that a non-negative random variable $X$ rarely is positive (first moment method), Chebyshev's inequality can serve the opposite purpose, namely showing that $X$ is positive with good probability. By taking $\lambda = E[X]$ in~\eqref{eqprobcheby2}, we obtain the first estimate of the following lemma. Using the Cauchy-Schwarz inequality and computing 
\[E[X]^2 = E[X {\bf 1}_{X \neq 0}]^2 \le E[X^2] E[{\bf 1}_{X \neq 0}] = E[X^2] \Pr[X \neq 0],\] 
we obtain the second estimate, which has the nice equivalent formulation 
\begin{equation}
\Pr[X \neq 0] \ge \frac{E[X]^2}{E[X^2]}.
\end{equation} 
Since $E[X^2] \ge E[X]^2$, the second estimate gives a stronger bound for $\Pr[X=0]$ than the first. While the lemma below does not require that $X$ is non-negative, the typical application of showing that $X$ is positive requires that $X$ is non-negative in the second bound, so that $\Pr[X \neq 0] = \Pr[X > 0]$.
\begin{lemma}[Second moment method]
  For a  random variable $X$ with $E[X] \neq 0$, 
  \begin{align}
  &\Pr[X = 0] \le \Pr[X \le 0] \le \frac{\Var[X]}{E[X]^2},\\
  &\Pr[X = 0] \le \frac{\Var[X]}{E[X^2]}.
  \end{align}
\end{lemma}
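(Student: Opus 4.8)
The plan is to derive both inequalities as consequences of results already established in the excerpt, namely Chebyshev's inequality (Lemma~\ref{lprobchebyshev}) in the form~\eqref{eqprobcheby2} and the Cauchy--Schwarz inequality; the structure follows the discussion immediately preceding the statement.

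For the first bound, I would observe that the event $\{X \le 0\}$ is contained in the event $\{|X - E[X]| \ge |E[X]|\}$, since if $X \le 0$ then $X - E[X] \le -E[X]$ when $E[X] > 0$, and symmetrically $X - E[X] \ge -E[X] > 0$ when $E[X] < 0$; in either case $|X - E[X]| \ge |E[X]|$. Hence $\Pr[X = 0] \le \Pr[X \le 0] \le \Pr[|X - E[X]| \ge |E[X]|]$, and applying~\eqref{eqprobcheby2} with $\lambda = |E[X]|$ gives $\Pr[|X - E[X]| \ge |E[X]|] \le \Var[X]/E[X]^2$. This is the first chain of inequalities.

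For the second bound, I would write $X = X \cdot \mathbf{1}_{X \neq 0}$ pointwise (which holds since $X$ vanishes on the complementary event) and apply Cauchy--Schwarz to the product of $X$ and $\mathbf{1}_{X \neq 0}$:
\[
E[X]^2 = E[X \cdot \mathbf{1}_{X \neq 0}]^2 \le E[X^2]\, E[\mathbf{1}_{X \neq 0}^2] = E[X^2]\, \Pr[X \neq 0].
\]
Since $E[X] \neq 0$ forces $E[X^2] > 0$, I can divide to get $\Pr[X \neq 0] \ge E[X]^2/E[X^2]$, equivalently $\Pr[X = 0] = 1 - \Pr[X \neq 0] \le 1 - E[X]^2/E[X^2] = (E[X^2] - E[X]^2)/E[X^2] = \Var[X]/E[X^2]$, using~\eqref{eqprobdefvar}. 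That the second bound is at least as strong as the first is immediate from $E[X^2] \ge E[X]^2$, which is the same fact noted in the text.

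There is no real obstacle here: the only point requiring a moment's care is the sign bookkeeping in the first step (the containment $\{X \le 0\} \subseteq \{|X - E[X]| \ge |E[X]|\}$ when $E[X]$ may be negative), and checking that $E[X] \neq 0$ legitimately allows the division by $E[X^2]$ in the second step. Both are routine, so the proof will be short.
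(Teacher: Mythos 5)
Your derivation is the paper's own: the first bound is Chebyshev's inequality \eqref{eqprobcheby2} applied with $\lambda = E[X]$, and the second is exactly the Cauchy--Schwarz computation $E[X]^2 = E[X\,\mathbf{1}_{X\neq 0}]^2 \le E[X^2]\,\Pr[X\neq 0]$ carried out in the text immediately before the lemma, so in substance there is nothing to add. One step of yours does fail, though: the ``symmetric'' containment for $E[X]<0$. From $X\le 0$ and $E[X]<0$ you cannot conclude $X - E[X] \ge -E[X]$; that inequality is equivalent to $X\ge 0$, so it holds on $\{X\le 0\}$ only at $X=0$. In fact the first inequality of the lemma is simply false for $E[X]<0$: take $X\equiv -1$, so that $\Pr[X\le 0]=1$ while $\Var[X]/E[X]^2=0$. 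Both the statement and the paper's one-line justification (``take $\lambda=E[X]$'' in \eqref{eqprobcheby2}, which already presupposes $\lambda>0$) tacitly assume $E[X]>0$, as is the case in the intended application to non-negative counting variables; the correct fix is to state that assumption explicitly rather than to invent a symmetric case. Your treatment of the second bound, including the division by $E[X^2]>0$ and the comparison via $E[X^2]\ge E[X]^2$ from \eqref{eqprobdefvar}, is correct as written.
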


In the (purely academic) example of finding a unique global optimum via blind random search\index{blind random search} (see Section~\ref{secprobbonferroni}), let $X_t$ be indicator random variable for the event that the $t$-th search point is the optimum. Let $X = \sum_{t=1}^L X_t$. Then the probability that the optimum is found within the first $L$ iterations is \[\Pr[X > 0] = 1 - \Pr[X=0] \ge 1 - \frac{\Var[X]}{E[X]^2}.\]
The variance for a sum of binary random variables is \[\Var[X] = \sum_{t=1}^L \Var[X_t] + \sum_{s < t} \Cov[X_s,X_t] \le E[X] + \sum_{s < t} \Cov[X_s,X_t],\] where we recall the definition of the \emph{covariance}\index{covariance} 
\[\Cov[U,V] := E[UV] - E[U]E[V]\] 
of two arbitrary random variables $U$ and $V$\/. Here we have $\Cov[X_s,X_t] = 0$, since the $X_t$ are independent. Consequently, \[\Pr[X > 0] \ge 1 - \frac{1}{E[X]}.\] Hence the probability to find the optimum within $L$ iterations is $\Pr[T \le L] = \Pr[X > 0] \ge 1 - \frac 1{L 2^{-n}}$. Note that this estimate is, for interesting case that $E[X]$ is large, much better than the bound $\Pr[T \le L] \ge L 2^{-n} - \frac{L(L-1)}{2} 2^{-2n}$ which we obtained from the second Bonferroni inequality. 

\section{Conditioning}

In the analysis of randomized heuristics, we often want to argue that a certain desired event $C$ already holds and the continue arguing under this condition. Formally, this gives rise to a new probability space where each of the original events $A$ now has a probability of \[\Pr[A \mid C] := \frac{\Pr[A \cap C]}{\Pr[C]}\,.\] Obviously, this only makes sense for events $C$ with $\Pr[C] > 0$. In an analogous fashion, we define the expectation of a random variable $X$ conditional on $C$ by $E[X \mid C] = \sum_{\omega \in C} X(\omega) \Pr[\omega \mid C]$. The random variable behind this definition, which takes a value $x$ with probability $\Pr[X=x]/\Pr[C]$, is sometimes denoted by $(X \mid C)$. 

While we shall not use this notation, we still feel the need to warn the reader that there is the related notion of the conditional expectation with respect to a random variable, which sometimes creates confusion. If $X$ and $Y$ are two random variables defined on the same probability space, then $E[X \mid Y]$ is a function (that is, a random variable) defined on the range of $Y$ by $E[X \mid Y](y) = E[X \mid Y=y]$.

Conditioning as a proof technique has many faces, among them the following.

\subsection{Decomposing Events} 

If we can write some event $A$ as the intersection of two events $A_1$ and $A_2$, then it can be useful to first compute the probability of $A_1$ and then the probability of $A_2$ conditional on $A_1$. Right from the definition, we have $\Pr[A_1 \cap A_2] = \Pr[A_1] \Pr[A_2 \mid A_1]$. Of course, this requires that we have some direct way of computing $\Pr[A_1 \mid A_2]$.

\subsection{Case Distinctions} 

Let $C_1, \dots, C_k$ be a partition of our probability space. If it is easy to analyze our problem conditional on each of these events (``in the that case $C_i$ holds''), then the following \emph{law of total probability} and \emph{law of total expectation} are useful.

\begin{lemma}[Laws of total probability and total expectation]
  Let $C_1, \dots, C_k$ be a partition of our probability space. Let $A$ be some event and $X$ be some random variable. Then
  \begin{align*}
  \Pr[A] &= \sum_{i=1}^k \Pr[A \mid C_i] \, \Pr[C_i],\\*
  E[X] &= \sum_{i=1}^k E[X \mid C_i] \, \Pr[C_i].
  \end{align*}
\end{lemma}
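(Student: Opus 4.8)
The plan is to reduce both identities to finite additivity of the probability measure together with the mere definition of conditional probability. First I would note that since $C_1, \dots, C_k$ is a partition of the probability space, the events $A \cap C_1, \dots, A \cap C_k$ are pairwise disjoint and their union is exactly $A$. Hence finite additivity of $\Pr$ yields $\Pr[A] = \sum_{i=1}^k \Pr[A \cap C_i]$. Substituting $\Pr[A \cap C_i] = \Pr[A \mid C_i]\,\Pr[C_i]$, which is just the defining relation of conditional probability rearranged, gives the law of total probability.

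For the law of total expectation I would expand the defining sum $E[X] = \sum_{\omega} X(\omega)\,\Pr[\omega]$ and group the atoms $\omega$ according to which cell $C_i$ contains them (each $\omega$ lies in exactly one $C_i$ because the $C_i$ partition the space). The inner sum then satisfies $\sum_{\omega \in C_i} X(\omega)\,\Pr[\omega] = \Pr[C_i] \sum_{\omega \in C_i} X(\omega)\,\Pr[\omega \mid C_i] = \Pr[C_i]\, E[X \mid C_i]$, and summing over $i$ delivers the claim. (Alternatively, one can apply the already-established law of total probability to each event $\{X = x\}$ and sum $x$ times both sides over the range of $X$, but the direct atom-counting argument is cleaner and avoids an interchange of sums.)

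The only subtlety — hardly an obstacle — is the treatment of partition cells with $\Pr[C_i] = 0$, on which $\Pr[A \mid C_i]$ and $E[X \mid C_i]$ are undefined: one adopts the harmless convention that such terms are omitted (equivalently, count as $0$), which is consistent since $\Pr[A \cap C_i] = 0$ and $\sum_{\omega \in C_i} X(\omega)\,\Pr[\omega] = 0$ in that case. For a countably infinite sample space one additionally invokes the standing assumption that $E[X]$ exists and is finite (hence the series converges absolutely) to justify the regrouping of the sum defining $E[X]$; for the finite partition size $k$ considered here no further care is needed.
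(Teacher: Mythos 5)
Your proof is correct and is the canonical elementary argument: decompose $A$ into the disjoint pieces $A \cap C_i$, apply finite additivity and the definition of conditional probability, and for the expectation regroup the defining sum by partition cell (using the paper's definition $E[X \mid C] = \sum_{\omega \in C} X(\omega)\Pr[\omega \mid C]$). The paper states this lemma without proof, so there is nothing to compare against; your handling of the cells with $\Pr[C_i]=0$ and of absolute convergence in the countable case is a welcome extra degree of care.
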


\subsection{Excluding Rare Events} 

Quite often in the analysis of nature-inspired search heuristics, we would like to exclude some rare unwanted event. For example, assume that we analyze an evolutionary algorithm using standard-bit mutation with mutation rate $\frac 1n$. Then it is very unlikely that in an application of this mutation operator more than $n^{1/4}$ bits are flipped. So it could be convenient to exclude this rare event, say by stating that ``with probability $1 - 2^{-\Omega(n^{1/4})}$, in none of the first $n^2$ applications of the mutation operator more than $n^{1/4}$ bits are flipped; let us in the following condition on this event''. See the proofs of Theorem~7 and~8 in~\cite{DrosteJW02} for examples where such an argument is useful. 

What could be a problem with this approach is that as soon as we condition on such an event, we change the probability space and thus arguments valid in the unconditional setting are not valid anymore. As a simple example, note that once we condition on that we flip at most $n^{1/4}$ bits, the events $E_i$ that the $i$-th bit flips are not independent anymore. Fortunately, we can safely ignore this in most cases (and many authors do so without saying a word on this affair). The reason is that when conditioning on an almost sure event, then the probabilities of all events change only very little (see the lemma below for this statement made precise). Hence in our example, we can compute the probability of some event assuming that the bit flips are independent and then correct this probability by a minor amount. 

\begin{lemma}
  Let $C$ be some event with probability $1-p$. Let $A$ be any event. Then \[\frac{\Pr[A]-p}{1-p} \le \Pr[A \mid C] \le \frac{\Pr[A]}{1-p}.\] In particular, for $p \le \tfrac 12$, we have $\Pr[A] - p \le \Pr[A \mid C] \le \Pr[A] + 2p$.
\end{lemma}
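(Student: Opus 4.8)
The plan is to prove the two-sided bound $\frac{\Pr[A]-p}{1-p} \le \Pr[A \mid C] \le \frac{\Pr[A]}{1-p}$ directly from the definition of conditional probability, and then derive the additive ``in particular'' statement by elementary estimates on the fractions. First I would write $\Pr[A \mid C] = \frac{\Pr[A \cap C]}{\Pr[C]} = \frac{\Pr[A \cap C]}{1-p}$, using $\Pr[C] = 1-p$. For the upper bound, I would use the monotonicity of probability, $\Pr[A \cap C] \le \Pr[A]$, which gives $\Pr[A \mid C] \le \frac{\Pr[A]}{1-p}$ immediately. For the lower bound, I would write $\Pr[A] = \Pr[A \cap C] + \Pr[A \cap \bar C] \le \Pr[A \cap C] + \Pr[\bar C] = \Pr[A \cap C] + p$, so that $\Pr[A \cap C] \ge \Pr[A] - p$, and dividing by $1-p$ yields $\Pr[A \mid C] \ge \frac{\Pr[A]-p}{1-p}$.

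Next I would handle the ``in particular'' claim under the assumption $p \le \tfrac 12$. For the upper side, $\frac{\Pr[A]}{1-p} = \Pr[A] + \Pr[A]\frac{p}{1-p} \le \Pr[A] + \frac{p}{1-p} \le \Pr[A] + 2p$, where the last step uses $\frac{1}{1-p} \le 2$ for $p \le \tfrac 12$. For the lower side, $\frac{\Pr[A]-p}{1-p} \ge \Pr[A] - p$ provided $\Pr[A] - p \ge 0$: indeed then $\frac{\Pr[A]-p}{1-p} \ge \Pr[A]-p$ since dividing a nonnegative number by something $\le 1$ can only increase it; and if $\Pr[A] - p < 0$ the inequality $\Pr[A \mid C] \ge \Pr[A] - p$ holds trivially since $\Pr[A\mid C] \ge 0$. (Alternatively one checks $\frac{\Pr[A]-p}{1-p} - (\Pr[A]-p) = \frac{p(\Pr[A]-p)}{1-p}$ has the sign of $\Pr[A]-p$.) Either way the additive lower bound $\Pr[A \mid C] \ge \Pr[A] - p$ follows.

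There is no real obstacle here; the only point requiring a small amount of care is the case analysis in the lower bound of the ``in particular'' statement, where one must notice that $\frac{\Pr[A]-p}{1-p} \ge \Pr[A]-p$ is only obvious when the numerator is nonnegative, while the desired conclusion holds for free otherwise. I would state this explicitly so the reader is not left to wonder about the case $\Pr[A] < p$.
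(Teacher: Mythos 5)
Your proof is correct and follows essentially the same route as the paper, which derives both bounds from the definition of conditional probability together with the elementary estimate $\Pr[A]-p \le \Pr[A \cap C] \le \Pr[A]$. Your additional care with the case $\Pr[A] < p$ in the ``in particular'' statement is a harmless (and welcome) elaboration of a step the paper leaves implicit.
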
  

The proof of this lemma follows right from the definition of conditional probabilities and the elementary estimate $\Pr[A] - p \le \Pr[A \setminus \overline C] = \Pr[A \cap C] \le \Pr[A]$, where $\overline C$ denotes the complement of $C$. From this, we also observe the natural fact that when $A \subseteq C$, that is, the event $A$ implies $C$, then conditioning on $C$ rather increases the probability of $A$: 
\begin{equation}
\Pr[A \mid C] = \frac{\Pr[A \cap C]}{\Pr[C]} = \frac{\Pr[A]}{\Pr[C]} \ge \Pr[A].
\end{equation} 
Likewise, when $A \supseteq \overline C$, then 
\begin{equation}
\Pr[A \mid C] = \frac{\Pr[A \setminus \overline C]}{\Pr[C]} = \frac{\Pr[A]-p}{1-p} \le \Pr[A].
\end{equation}
For example, if $X$ is the number of bits flipped in an application of standard-bit mutation, then $\Pr[X \le 10 \mid X \le \frac n2] \ge \Pr[X \le 10]$ and $\Pr[X \ge 10 \mid X \le \frac n2] \le \Pr[X \ge 10]$.

\subsection{Conditional Binomial Random Variables:}

We occasionally need to know the expected value of a binomially distributed random variable $X \sim \Bin(n,p)$ conditional on that the variable has at least a certain value $k$. An intuitive (but wrong) argument is that  $E[X \mid X \ge k]$ should be around $k+p(n-k)$, because we know already that $k$ of the $n$ independent trials are successes and the remaining $(n-k)$ trials still have their independent success probability of $p$. While this argument is wrong (as we might need more than $k$ trials to have $k$ successes), the result is correct as an upper bound as shown in this lemma from~\cite[Lemma~1]{DoerrD18}.

\begin{lemma}\label{lprobcondbinomial}
  Let $X$ be a random variable with binomial distribution with parameters $n$ and $p \in [0,1]$. Let $k \in [0..n]$. Then \[E[X \mid X \ge k] \le k + (n-k)p \le k + E[X].\]
\end{lemma}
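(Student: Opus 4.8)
The plan is to prove the stronger inequality $E[X \mid X \ge k] \le k + (n-k)p$ by a direct coupling/stochastic-domination argument, writing $X = \sum_{i=1}^n X_i$ as a sum of independent Bernoulli($p$) variables and decomposing $X$ according to the "first $k$ successes." Concretely, when $X \ge k$, let $\tau$ be the index of the $k$-th success among $X_1, \dots, X_n$; then $X = k + \sum_{i > \tau} X_i$, since by definition exactly $k$ of the first $\tau$ variables are ones. The key observation is that, conditional on $\tau = t$ (which is an event depending only on $X_1, \dots, X_t$), the remaining variables $X_{t+1}, \dots, X_n$ are still independent Bernoulli($p$), so $E\bigl[\sum_{i>t} X_i \mid \tau = t\bigr] = (n-t)p \le (n-k)p$, because $\tau \ge k$ always. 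Summing over the possible values of $\tau$ via the law of total expectation (restricted to the event $X \ge k$, which equals $\{\tau \le n\}$) then yields $E[X \mid X \ge k] = k + E\bigl[\sum_{i>\tau} X_i \mid X \ge k\bigr] \le k + (n-k)p$.

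First I would set up notation: fix the decomposition $X = \sum X_i$, define $\tau := \min\{t \in [1..n] : \sum_{i=1}^t X_i = k\}$ on the event $X \ge k$, and note $\{X \ge k\} = \bigcup_{t=k}^n \{\tau = t\}$ is a partition of that event. Second, I would verify the crucial independence claim: the event $\{\tau = t\}$ is measurable with respect to $\sigma(X_1, \dots, X_t)$, hence independent of $(X_{t+1}, \dots, X_n)$; therefore $E[X_i \mid \tau = t] = p$ for each $i > t$. Third, I would apply the law of total expectation: conditioning further on $\tau$ inside the conditional space $(\cdot \mid X \ge k)$,
\[
E[X \mid X \ge k] = \sum_{t=k}^n \bigl(k + (n-t)p\bigr)\Pr[\tau = t \mid X \ge k] \le k + (n-k)p,
\]
using $(n-t)p \le (n-k)p$ termwise. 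Finally, the second inequality $k + (n-k)p \le k + np = k + E[X]$ is immediate from Lemma~\ref{lprobexpectationbinomial} and $(n-k)p \le np$.

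The main obstacle — really the only subtle point — is making the independence step fully rigorous: one must be careful that $\{\tau = t\}$ genuinely depends only on the first $t$ coordinates (it does: $\tau = t$ iff $\sum_{i \le t-1} X_i = k-1$ and $X_t = 1$), and that conditioning on a positive-probability event in this product space preserves the independence and marginal distribution of the untouched coordinates. An alternative route that sidesteps explicit stopping-time bookkeeping is a coupling argument: construct a random variable $X'$ that agrees with "reveal trials one at a time until $k$ successes occur, then add a fresh $\Bin(n-\tau, p)$," show $(X \mid X \ge k)$ has exactly this law, and then observe $X' \preceq k + \Bin(n-k, p)$ stochastically because $\tau \ge k$; taking expectations finishes it. Either way the computation is routine; the care lies entirely in the measurability of the stopping index.
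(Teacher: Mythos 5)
Your proof is correct and is essentially identical to the paper's: the paper also writes $X=\sum_i X_i$, defines the stopping index $\ell=\min\{i:\sum_{j\le i}X_j=k\}$ (your $\tau$), uses that $(X\mid \ell=i)=k+\sum_{j>i}X_j$ with the remaining $X_j$ unconditioned, and sums over $i\ge k$ via the law of total expectation. Even your closing remark about the stronger domination by $k+\Bin(n-k,p)$ matches the paper's comment following its proof.
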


\begin{proof}
  Let $X = \sum_{i=1}^n X_i$ with $X_1, \dots, X_n$ being independent binary random variables with $\Pr[X_i = 1] = p$ for all $i \in [1..n]$. Conditioning on $X \ge k$, let $\ell := \min\{i \in [1..n] \mid \sum_{j=1}^i X_j = k\}$. Then \[E[X \mid X \ge k] = \sum_{i = 1}^n \Pr[\ell = i \mid X \ge k ] E[X \mid \ell = i].\] Note that $\ell \ge k$ by definition. Note also that $(X \mid \ell = i) = k + \sum_{j = i+1}^n X_j$ with unconditioned $X_j$. In particular, $E[X \mid \ell = i] = k + (n-i)p$. Consequently, 
\begin{align*}
  E[X \mid X \ge k]& = \sum_{i = 1}^n \Pr[\ell = i \mid X \ge k] E[X \mid \ell = i] \\
  &\le \sum_{i=k}^n \Pr[\ell = i \mid X \ge k ] (k + (n-k)p) = k + (n-k)p.
\end{align*}
\end{proof}

We note that, in the language introduced in the following section, we have actually shown the stronger statement that $(X \mid X \ge k)$ is dominated by $k + \Bin(n-k,p)$. This stronger version can be useful to obtain tail bounds for $(X \mid X \ge k)$.

\section{Stochastic Domination and Coupling}

In this section, we discuss two concepts that are not too often used explicitly, but where we feel that mastering them can greatly help in the analysis of randomized search heuristics. The first of these is \emph{stochastic domination}\index{stochastic domination}, which is a very strong way of saying that one random variable is better than another even when they are not defined on the same probability space. The second concept is \emph{coupling}, which means defining two random variables suitably over the same probability space to facilitate comparing them. These two concepts are strongly related: If the random variable $Y$ dominates $X$, then $X$ and $Y$ can be coupled in a way that $Y$ is point-wise not smaller than $X$, and vice versa. The results of this section and some related ones have appeared, in a more condensed form, in~\cite{Doerr18evocop}.

\subsection{The Notion of Stochastic Domination}

Possibly the first to use the notion of stochastic domination was Droste, who in~\cite{Droste03,Droste04} employed it to make precise an argument often used in an informal manner, namely that some artificial random process is not faster than the process describing a run of the algorithm under investigation.  

\begin{definition}[Stochastic domination]
  Let $X$ and $Y$ be two random variables not necessarily defined on the same probability space. We say that $Y$ stochastically dominates $X$, written as $X \preceq Y$, if for all $\lambda \in \R$ we have $\Pr[X \le \lambda] \ge \Pr[Y \le \lambda]$.
\end{definition}

If $Y$ dominates $X$, then the cumulative distribution function of $Y$ is point-wise not larger than the one of $X$. The definition of domination is equivalent to \[\forall \lambda \in \R : \Pr[X \ge \lambda] \le \Pr[Y \ge \lambda],\] which is maybe a formulation making it more visible why we feel that $Y$ is at least as large as $X$. 

Concerning nomenclature, we remark that some research communities in addition require that the inequality is strict for at least one value of $\lambda$. Hence, intuitively speaking, $Y$ is strictly larger than $X$. From the mathematical perspective, this appears not very practical. Consequently, our definition above is  more common in computer science. We also note that stochastic domination is sometimes called \emph{first-order stochastic domination}. For an extensive treatment of various forms of stochastic orders, we refer to~\cite{MullerS02}.

The usual way of explaining stochastic domination is via games. Let us consider the following three games.\\

\begin{samepage}
\noindent \textbf{Game A:} With probability $\frac12$, each you win 500 and 1500.\\
\textbf{Game B:} With probability $\frac13$, you win 500, with probability $\frac16$, you win 800, and with probability $\frac12$, you win 1500.\\
\textbf{Game C:} With probability $\frac{1}{1000}$, you win 2,000,000. Otherwise, you win nothing.\\
\end{samepage}

Which of these games is best to play? It is intuitively clear that you prefer Game~B over Game~A. However, it is not clear whether you should prefer Game~C over Game~B. Clearly, the expected win in Game C is 2000 compared to only 1050 in Game~B. However, the chance of winning something at all is really small in Game~C. If you do not like to go home empty-handed, you might prefer Game~B. 

The mathematical take on these games is that the random variable $X_B$ describing the win in Game B stochastically dominates the one $X_A$ for Game A. This captures our intuitive feeling that it cannot be wrong to prefer Game B over Game A. For Games B and C, neither of $X_B$ and $X_C$ dominates the other. Consequently, it depends on the precise utility function of the player which game he prefers. This statement is made precise in the following lemma.

\begin{lemma}
  The following two conditions are equivalent.
  \begin{enumerate}
	  \item $X \preceq Y$.
	  \item For all monotonically non-decreasing functions $f : \R \to \R$, we have \[E[f(X)] \le E[f(Y)].\]
  \end{enumerate}
\end{lemma}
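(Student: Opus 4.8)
The plan is to prove the two implications separately, with the direction (b)$\Rightarrow$(a) being the quick one and (a)$\Rightarrow$(b) requiring a coupling argument (or an equivalent direct computation).

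\textbf{(b)$\Rightarrow$(a).} Fix $\lambda \in \R$ and apply the hypothesis to the indicator function $f = {\bf 1}_{[\lambda,\infty)}$, i.e. $f(t) = 1$ if $t \ge \lambda$ and $f(t) = 0$ otherwise. This $f$ is monotonically non-decreasing, so $E[f(X)] \le E[f(Y)]$; but $E[f(X)] = \Pr[X \ge \lambda]$ and $E[f(Y)] = \Pr[Y \ge \lambda]$. Hence $\Pr[X \ge \lambda] \le \Pr[Y \ge \lambda]$ for all $\lambda$, which is the (equivalent) formulation of $X \preceq Y$ noted right after the definition.

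\textbf{(a)$\Rightarrow$(b).} The clean approach is a coupling: construct, on a common probability space, random variables $X'$ and $Y'$ with $X' \sim X$, $Y' \sim Y$, and $X' \le Y'$ pointwise. Then for any non-decreasing $f$ we get $f(X') \le f(Y')$ pointwise, so $E[f(X)] = E[f(X')] \le E[f(Y')] = E[f(Y)]$ by monotonicity of expectation. To build the coupling, use the quantile (inverse-CDF) trick: let $U$ be uniform on $(0,1)$, write $F_X(t) = \Pr[X \le t]$ and $F_Y(t) = \Pr[Y \le t]$, and set $X' := F_X^{-1}(U) := \inf\{t : F_X(t) \ge U\}$ and $Y' := F_Y^{-1}(U)$. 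A standard check shows $X' \sim X$ and $Y' \sim Y$. The hypothesis $F_X(t) \ge F_Y(t)$ for all $t$ translates, under the generalized inverse, into $F_X^{-1}(u) \le F_Y^{-1}(u)$ for every $u \in (0,1)$, hence $X' \le Y'$ pointwise. (Since all variables here are discrete, one can alternatively avoid the general inverse-CDF machinery and build the coupling by hand on the at most countable joint value set, shifting probability mass "upward"; but the quantile construction is the most transparent.)

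\textbf{Main obstacle.} The only real subtlety is the rigorous verification that the quantile-coupled variable $X' = F_X^{-1}(U)$ indeed has distribution $X$, and that $F_X \ge F_Y$ everywhere implies $F_X^{-1} \le F_Y^{-1}$ everywhere; both rely on the correct handling of the generalized (left-continuous) inverse and the equivalence $\{u \le F_X(t)\} \iff \{F_X^{-1}(u) \le t\}$. There is also a minor integrability point: one should restrict attention to $f$ for which $E[f(X)]$ and $E[f(Y)]$ exist (as is implicitly assumed throughout the chapter for expectations), or note that the pointwise inequality $f(X') \le f(Y')$ already gives the monotone-expectation comparison in $[-\infty,+\infty]$ without further assumptions. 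Everything else is routine.
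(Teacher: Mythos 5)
Your proof is correct. Note that the paper states this lemma without any proof (it is presented as a known characterization of stochastic domination), so there is no argument of the author's to compare against; your write-up fills that gap. Both directions are sound: the implication (b)$\Rightarrow$(a) via the indicator $f = {\bf 1}_{[\lambda,\infty)}$ is exactly right, using the equivalent formulation $\Pr[X \ge \lambda] \le \Pr[Y \ge \lambda]$ that the paper records immediately after the definition. For (a)$\Rightarrow$(b), your quantile coupling in effect proves the harder half of the paper's Theorem~7.15 (the equivalence of $X \preceq Y$ with the existence of a pointwise-ordered coupling), which the paper also states without proof; deriving (b) from that coupling by monotonicity of expectation is then immediate. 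This is arguably the most natural route, and it has the side benefit of establishing the coupling theorem along the way. The two technical points you flag --- that $F_X \ge F_Y$ pointwise forces $F_X^{-1} \le F_Y^{-1}$ for the generalized inverse, and that $X' = F_X^{-1}(U)$ has the law of $X$ via the equivalence $u \le F_X(t) \iff F_X^{-1}(u) \le t$ for right-continuous $F_X$ --- are exactly the ones that need care, and your handling of them is correct. An alternative, coupling-free route for discrete variables would be summation by parts (writing $E[f(X)]-E[f(Y)]$ as a non-positive combination of the differences $F_X(t)-F_Y(t)$ weighted by the increments of $f$), but that buys nothing over your argument and is messier to state in full generality.
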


As a simple corollary we note the following.
\begin{corollary}\label{corprobdomexp}
  If $X \preceq Y$, then $E[X] \le E[Y]$.
\end{corollary}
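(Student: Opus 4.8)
The plan is to read off the corollary as the special case of the immediately preceding lemma in which the monotonically non-decreasing function is the identity. Concretely, take $f : \R \to \R$ defined by $f(x) = x$; this is (weakly) monotonically increasing, so the equivalence just established applies and gives $E[X] = E[f(X)] \le E[f(Y)] = E[Y]$. That is the entire argument: there is no real obstacle, the only thing being used beyond the lemma is the trivial observation that the identity function is monotonically non-decreasing, which is exactly the hypothesis the lemma is phrased to cover.

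Should one prefer a proof that does not invoke the previous lemma, I would argue directly from the tail-sum representation of the expectation. First treat the case where $X$ and $Y$ are non-negative: by Lemma~\ref{lprobnonnegexp} (in the integer case) or its continuous analogue $E[Z] = \int_0^\infty \Pr[Z \ge t]\,dt$, together with monotonicity of the sum/integral and the reformulation of domination as $\Pr[X \ge t] \le \Pr[Y \ge t]$, one gets $E[X] = \sum_{i \ge 1} \Pr[X \ge i] \le \sum_{i \ge 1} \Pr[Y \ge i] = E[Y]$. For general discrete, integrable $X$ and $Y$, decompose $X = X^+ - X^-$ with $X^+ = \max\{X,0\}$ and $X^- = \max\{-X,0\}$, and similarly for $Y$. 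From the definition one checks that $X \preceq Y$ implies both $X^+ \preceq Y^+$ and $Y^- \preceq X^-$, and then the non-negative case applied to each part yields $E[X^+] \le E[Y^+]$ and $E[X^-] \ge E[Y^-]$, so that $E[X] = E[X^+] - E[X^-] \le E[Y^+] - E[Y^-] = E[Y]$.

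The only subtlety in this self-contained route is the sign flip on the negative part: passing from $X$ to $X^-$ reflects the real line, so $X \preceq Y$ produces $Y^- \preceq X^-$ rather than $X^- \preceq Y^-$, and this reversal is precisely what makes the two contributions combine with the correct sign. Since the equivalence lemma is already available, however, I would present the one-line derivation via the identity function and relegate the direct computation to at most a remark.
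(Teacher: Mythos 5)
Your primary argument is exactly the paper's (implicit) proof: the corollary is stated as an immediate consequence of the preceding lemma, obtained by taking the monotonically non-decreasing function $f$ to be the identity. The self-contained tail-sum route you sketch, including the sign reversal for the negative part, is also correct, but it is not needed here.
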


We note another simple, but useful property.
\begin{lemma}\label{lprobdomsum}
  Let $X_1, \dots, X_n$ be independent random variables defined over some common probability space. Let $Y_1, \dots, Y_n$ be independent random variables defined over a possibly different probability space. If $X_i \preceq Y_i$ for all $i \in [1..n]$, then \[\sum_{i=1}^n X_i \preceq \sum_{i=1}^n Y_i.\]
\end{lemma}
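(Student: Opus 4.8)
The plan is to prove the statement by coupling, reducing a comparison of distributions to a pointwise inequality between concretely constructed random variables. First I would isolate the single-variable fact alluded to in the introduction to this section: whenever $U \preceq V$, there are random variables $\tilde U, \tilde V$ on one common probability space with $\tilde U$ distributed like $U$, $\tilde V$ distributed like $V$, and $\tilde U \le \tilde V$ everywhere. The clean construction is the quantile coupling: let $W$ be uniformly distributed on $[0,1]$ and set $\tilde U := \inf\{t \in \R \mid \Pr[U \le t] \ge W\}$ and $\tilde V := \inf\{t \in \R \mid \Pr[V \le t] \ge W\}$, the generalized inverses of the two distribution functions evaluated at $W$. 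A routine check (using that $t \mapsto \Pr[U \le t]$ is non-decreasing and right-continuous) shows $\tilde U$ has the distribution of $U$ and $\tilde V$ that of $V$; and since the domination hypothesis gives $\Pr[U \le t] \ge \Pr[V \le t]$ for every $t$, the set defining $\tilde U$ contains the set defining $\tilde V$, so $\tilde U \le \tilde V$ pointwise. Given the paper's standing convention that all random variables are discrete, one can alternatively build this coupling by hand on a countable grid, matching the two marginals while keeping $\tilde U \le \tilde V$; that hand construction is the one step I expect to require a little care.

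Next I would apply this pairwise. Take independent uniform random variables $W_1, \dots, W_n$ on a common (product) probability space, and for each $i$ form $\tilde X_i$ and $\tilde Y_i$ from $W_i$ exactly as above, so that $\tilde X_i$ is distributed like $X_i$, $\tilde Y_i$ like $Y_i$, and $\tilde X_i \le \tilde Y_i$ surely. Since $\tilde X_i$ is a function of $W_i$ alone and the $W_i$ are independent, the family $(\tilde X_i)_{i \in [1..n]}$ is independent; having the same marginals as the independent family $(X_i)_{i \in [1..n]}$, it therefore has the same joint distribution, and in particular $\sum_{i=1}^n \tilde X_i$ has the same distribution as $\sum_{i=1}^n X_i$. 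The analogous statement holds for $(\tilde Y_i)$ versus $(Y_i)$. This is the only place the independence assumptions are used.

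Finally, $\tilde X_i \le \tilde Y_i$ for every $i$ on the common space forces $\sum_{i=1}^n \tilde X_i \le \sum_{i=1}^n \tilde Y_i$ surely, hence $\Pr[\sum_{i=1}^n \tilde X_i \ge \lambda] \le \Pr[\sum_{i=1}^n \tilde Y_i \ge \lambda]$ for every $\lambda \in \R$. Rewriting both sides via the equalities in distribution from the previous paragraph yields $\Pr[\sum_{i=1}^n X_i \ge \lambda] \le \Pr[\sum_{i=1}^n Y_i \ge \lambda]$ for all $\lambda$, which is precisely $\sum_{i=1}^n X_i \preceq \sum_{i=1}^n Y_i$. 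The passage from the pairwise coupling to the sum is entirely routine; the main (mild) obstacle is setting up the single-pair coupling rigorously, i.e.\ verifying that the quantile construction (or its elementary discrete analogue) has the two prescribed marginals while preserving the pointwise order.
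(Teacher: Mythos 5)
Your proof is correct, but it takes a different route from the paper's. The paper does not prove this lemma directly: it notes that (for discrete random variables) it is a special case of Lemma~\ref{lprobmoderate} on unconditional sequential domination, which is proven by a hybrid argument --- one shows that replacing the $X_i$ by the $Y_i$ one index at a time can only increase $\Pr[\,\cdot \ge \lambda]$, via the telescoping quantities $P_j = \Pr[\sum_{i=1}^j X_i + \sum_{i=j+1}^n Y_i \ge \lambda]$ and a direct computation with conditional probabilities. You instead give the classic coupling proof: the quantile coupling realizes each pair $(X_i, Y_i)$ as $\tilde X_i \le \tilde Y_i$ on a common space driven by independent uniforms, the products of marginals then force $\sum_i \tilde X_i \overset{d}{=} \sum_i X_i$ and likewise for $Y$, and the pointwise inequality between the sums yields domination. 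Your argument is essentially Theorem~\ref{tprobdomcou} applied coordinatewise, is shorter and more transparent for this particular statement, and works verbatim for non-discrete random variables. What the paper's detour through Lemma~\ref{lprobmoderate} buys is generality: that lemma covers the case where the $X_i$ are \emph{not} independent but each is dominated conditionally on the past by an independent $Y_i$ --- a situation that arises constantly in the analysis of iterative heuristics and that your construction does not handle without upgrading the coupling to a sequential one. Your single-pair construction is sound (the containment of the level sets $\{t \mid \Pr[U \le t] \ge W\}$ is exactly the right observation), so I see no gap, only a narrower scope than the paper's chosen machinery.
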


For discrete random variables, this result is a special case of Lemma~\ref{lprobmoderate} stated further below.

Finally, we note two trivial facts.
\begin{lemma}\label{lprobtrivialdom}
Let $X$ and $Y$ be random variables.
\begin{enumerate}
  \item If $X$ and $Y$ are defined on the same probability space and $X \le Y$, then $X \preceq Y$.
  \item If $X$ and $Y$ are identically distributed, then $X \preceq Y$.
\end{enumerate}
\end{lemma}

\subsection{Stochastic Domination in Runtime Analysis}\label{secprobdomruntime}

From the perspective of algorithms analysis, stochastic domination allows to very clearly state that one algorithm is better than another. If the runtime distribution $X_A$ of algorithms A dominates the one $X_B$ of Algorithm B, then from the runtime perspective Algorithm~B is always preferable to Algorithm~A.

In a similar vein, we can use domination also to give more detailed descriptions of the runtime of an algorithm. For almost all algorithms, we will not be able to precisely determine the runtime distribution. However, via stochastic domination we can give a lot of useful information beyond, say, just the expectation. We demonstrate this via an extension of the classic fitness level method, which is implicit in Zhou, Luo, Lu, and Han~\cite{ZhouLLH12}.

\begin{theorem}[Domination version of the fitness level method]\label{tproblevel}
  Consider an iterative randomized search heuristic $\calA$ maximizing a function $f : \Omega \to \R$. Let $A_1, \dots, A_m$ be a partition of $\Omega$ such that for all $i, j \in [1..m]$ with $i < j$ and all $x \in A_i$, $y \in A_j$, we have $f(x) < f(y)$. Set $A_{\ge i} := A_{i} \cup \dots \cup A_{m}$. Let $p_1, \dots, p_{m-1}$ be such that for all $i \in [1..m-1]$ we have that if the best-so-far search point is in $A_i$, then regardless of the past $\calA$ has a probability of at least $p_i$ to generate a search point in $A_{\ge i+1}$ in the next iteration. 
  
  Denote by $T$ the (random) number of iterations $\calA$ takes to generate a search point in $A_m$. Then \[T \preceq \sum_{i=1}^{m-1} \Geom(p_i),\] where this sum is to be understood as a sum of independent geometric distributions. 
\end{theorem}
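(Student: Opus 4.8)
The plan is to construct a coupling between the run of $\calA$ and a sequence of independent geometric random variables so that the former is pointwise dominated by the latter. The key structural observation is that the fitness levels $A_1, \dots, A_m$ are \emph{monotonically traversed}: because $\calA$ only accepts (or at least only records as best-so-far) search points of non-decreasing fitness, once the best-so-far search point lies in $A_{\ge i+1}$ it never returns to $A_{\le i}$. Hence the time $T$ to reach $A_m$ decomposes as $T = \sum_{i=1}^{m-1} T_i$, where $T_i$ is the number of iterations spent with the best-so-far point in level $A_i$ (levels that are skipped contribute $T_i = 0$). It therefore suffices to bound each $T_i$ and then assemble the pieces.

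First I would make the domination of a single $T_i$ precise. Conditional on the entire history up to the moment the best-so-far point first enters $A_i$, the hypothesis says that in each subsequent iteration, regardless of the past, there is probability at least $p_i$ of producing a point in $A_{\ge i+1}$, which ends the sojourn in $A_i$. Thus $T_i$ is stochastically dominated by a $\Geom(p_i)$ random variable; formally this is the standard fact that a random variable whose conditional ``success'' probability in each step is at least $p$ (a stochastic lower bound on a waiting time) is dominated by $\Geom(p)$ --- one can see this directly from $\Pr[T_i > t \mid \text{history}] \le (1-p_i)^t = \Pr[\Geom(p_i) > t]$, proved by induction on $t$, and using Lemma~\ref{lprobnonnegexp}-style reasoning or just the definition of domination. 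The subtlety is that this bound holds conditionally on \emph{every} history, uniformly, which is exactly what lets us chain the levels together.

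Next I would assemble the sum. The cleanest way is an explicit coupling: run $\calA$ and, in parallel, maintain running geometric clocks. When $\calA$ is at level $i$, in each iteration draw a fresh $\Geom(p_i)$-distributed remaining time only if none is active; more carefully, one builds, on a common probability space, random variables $G_1, \dots, G_{m-1}$ with $G_i \sim \Geom(p_i)$ independent, together with the trajectory of $\calA$, such that $T_i \le G_i$ pointwise for every $i$. Summing gives $T = \sum_i T_i \le \sum_i G_i$ pointwise, and by Lemma~\ref{lprobtrivialdom}(a) this pointwise inequality yields $T \preceq \sum_{i=1}^{m-1} \Geom(p_i)$. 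Alternatively, and perhaps more transparently, one proves the sum version by induction on $m$: condition on the first time the best-so-far point leaves $A_1$, apply the single-level bound to get domination of $T_1$ by $\Geom(p_1)$, apply the induction hypothesis to the remaining levels (which behave like a fresh instance of the problem with $m-1$ levels, \emph{because the guarantees $p_i$ hold regardless of the past}), and combine using Lemma~\ref{lprobdomsum} together with a lemma allowing one to add a dominated summand to a dominated sum.

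The main obstacle is the bookkeeping around conditioning: one must be careful that ``regardless of the past'' genuinely gives a guarantee that survives conditioning on the sub-optimal behaviour within level $A_i$ (e.g.\ conditioning on $T_i$ being large does not destroy the per-step success probability for subsequent levels), and that levels skipped entirely are handled without harm (a geometric variable is always at least $1$, whereas a skipped level costs $0$ iterations, so the inequality only becomes more slack). Making the coupling fully rigorous --- rather than waving at ``resetting the process'' --- is where the real care is needed; the probabilistic content beyond that is light, resting only on the elementary domination facts (Lemmas~\ref{lprobdomsum}, \ref{lprobtrivialdom}) already established.
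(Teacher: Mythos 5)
Your proposal is correct and follows essentially the same route as the paper: decompose $T$ into the sojourn times at the individual levels and dominate each one, conditionally on an arbitrary history, by $\Geom(p_i)$. The assembly step that you rightly identify as the delicate part is exactly Lemma~\ref{lprobmoderate} (unconditional sequential domination implies domination of the sum), which the paper invokes directly instead of re-deriving it via an ad hoc coupling or an induction on $m$.
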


To prove this theorem, we need a technical lemma which we defer to the subsequent subsection to ease reading this part.

\begin{proof}
  Consider a run of the algorithm $\calA$. For all $i \in [1..m]$, let $T_i$ be the first time (iteration) when $\calA$ has generated a search point in $A_{\ge i}$. Then $T = T_m = \sum_{i=1}^{m-1} (T_{i+1} - T_{i})$. By assumption, $T_{i+1} - T_i$ is dominated by a geometric random variable with parameter $p_i$ regardless what happened before time $T_i$. Consequently, Lemma~\ref{lprobmoderate} gives the claim.
\end{proof}

Note that a result like Theorem~\ref{tproblevel} implies various statements on the runtime. By Corollary~\ref{corprobdomexp}, the expected runtime satisfies $E[T] \le \sum_{i=1}^{m-1} \frac 1{p_i}$, which is the common version of the fitness level theorem~\cite{Wegener01}. By using tail bounds for sums of independent geometric random variables (see Section~\ref{secprobcgeom}), we also obtain runtime bounds that hold with high probability. This was first proposed in~\cite{ZhouLLH12}. We defer a list of examples where previous results can profitably be turned into a domination statement to Section~\ref{secprobcgeom}, where we will also have the large deviation bounds to exploit such statements.

\subsection{Domination by Independent Random Variables}\label{secprobmoderate}

A situation often encountered in the analysis of algorithms is that a sequence of random variables is not independent, but that each member of the sequence has a good chance of having a desired property no matter what was the outcome of its predecessors. In this case, the random variables in some sense can be treated as if they were independent.

\begin{lemma}\label{lprobmoderatebinary}
Let $X_1,\dots,X_n$ be arbitrary binary random variables and let $X^*_1,\dots,X^*_n$ be independent binary random variables.
\begin{enumerate}
\item If we have \[\Pr[X_i=1 \mid X_1=x_1,\dots,X_{i-1}=x_{i-1}]\le \Pr[X_i^*=1]\] for all $i \in [1..n]$ and all $x_1,\dots, x_{i-1}\in\{0,1\}$ with $\Pr[X_1=x_1,\dots,X_{i-1}=x_{i-1}]>0$, then \[\sum_{i=1}^n X_i \preceq \sum_{i=1}^n X_i^*.\]
\item If we have \[\Pr[X_i=1 \mid X_1=x_1,\dots,X_{i-1}=x_{i-1}]\ge \Pr[X_i^*=1]\] for all $i \in [1..n]$ and all $x_1,\dots, x_{i-1}\in\{0,1\}$ with $\Pr[X_1=x_1,\dots,X_{i-1}=x_{i-1}]>0$, then \[\sum_{i=1}^n X_i^* \preceq \sum_{i=1}^n X_i.\]
\end{enumerate}
\end{lemma}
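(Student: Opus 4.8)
The plan is to prove part~(a) and then obtain part~(b) by a symmetry trick (replacing each $X_i$ by $1-X_i$, which flips the direction of both the conditional inequality and the domination). For part~(a), the natural approach is a \emph{coupling} argument: I will construct, on a common probability space, random variables $(\tilde X_i)$ distributed as $(X_i)$ and $(\tilde X_i^*)$ distributed as $(X_i^*)$, with the $(\tilde X_i^*)$ independent, such that $\tilde X_i \le \tilde X_i^*$ pointwise for every $i$. Then $\sum \tilde X_i \le \sum \tilde X_i^*$ pointwise, so by Lemma~\ref{lprobtrivialdom}(a) and the fact that domination depends only on the marginal distributions of the sums, $\sum_{i=1}^n X_i \preceq \sum_{i=1}^n X_i^*$.

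The coupling is built sequentially using a single sequence $U_1,\dots,U_n$ of independent random variables uniform on $[0,1]$. First set $\tilde X_i^* := \mathbf{1}[U_i \le \Pr[X_i^*=1]]$; these are independent and have the right marginals. Next define $\tilde X_1,\dots,\tilde X_n$ recursively: having fixed $\tilde X_1=x_1,\dots,\tilde X_{i-1}=x_{i-1}$, set $p_i(x_1,\dots,x_{i-1}) := \Pr[X_i=1\mid X_1=x_1,\dots,X_{i-1}=x_{i-1}]$ (defined arbitrarily, say as $0$, on null histories) and put $\tilde X_i := \mathbf{1}[U_i \le p_i(x_1,\dots,x_{i-1})]$. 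A short induction on $i$ shows $(\tilde X_1,\dots,\tilde X_i)$ has the same joint distribution as $(X_1,\dots,X_i)$: the conditional law of $\tilde X_i$ given the past is Bernoulli$(p_i(\cdot))$ by construction of $U_i$ and its independence from $U_1,\dots,U_{i-1}$, which matches the conditional law of $X_i$. Finally, the hypothesis $p_i(x_1,\dots,x_{i-1}) \le \Pr[X_i^*=1]$ gives $\{U_i \le p_i(\cdot)\} \subseteq \{U_i \le \Pr[X_i^*=1]\}$, hence $\tilde X_i \le \tilde X_i^*$ pointwise, as desired.

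The main obstacle is mostly bookkeeping rather than a genuine difficulty: one must be careful that the recursively defined conditional probabilities $p_i(x_1,\dots,x_{i-1})$ are only constrained on histories of positive probability, so the construction (and the induction verifying the marginals) has to handle null histories gracefully—assigning them an arbitrary value is harmless precisely because they contribute zero mass. One should also note that the $\tilde X_i^*$ must use the \emph{same} uniforms $U_i$ as the $\tilde X_i$ for the pointwise domination to hold, while remaining mutually independent among themselves, which they are since the $U_i$ are independent. Everything else — linearity of the pointwise inequality under summation, and the reduction of part~(b) to part~(a) via $X_i \mapsto 1-X_i$, $X_i^* \mapsto 1-X_i^*$ together with $n - \sum(1-X_i) = \sum X_i$ — is routine.
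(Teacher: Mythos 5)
Your proof is correct, but it takes a genuinely different route from the paper. The paper does not prove Lemma~\ref{lprobmoderatebinary} directly: it derives both parts as corollaries of the more general Lemma~\ref{lprobmoderate} (arbitrary discrete random variables, unconditional sequential domination), which is proved by a hybrid argument --- one defines $P_j = \Pr[\sum_{i=1}^j X_i + \sum_{i=j+1}^n X_i^* \ge \lambda]$ and shows $P_{j+1} \le P_j$ by conditioning on the history $(x_1,\dots,x_j)$ and swapping $X_{j+1}$ for $X_{j+1}^*$ one index at a time. You instead build an explicit sequential coupling with a single sequence of independent uniforms, thresholding $U_i$ at the conditional success probability of $X_i$ given the already-constructed history and at $\Pr[X_i^*=1]$ respectively, and conclude via the pointwise inequality $\tilde X_i \le \tilde X_i^*$. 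Your handling of the null histories and of the reduction of part~(b) to part~(a) via $X_i \mapsto 1-X_i$ is sound (domination reverses under $x \mapsto c-x$). What each approach buys: your coupling is self-contained, more intuitive, and directly exhibits the witness whose existence Theorem~\ref{tprobdomcou} guarantees; the paper's hybrid computation, while drier, applies verbatim to non-binary discrete random variables and to the stronger hypothesis of conditional \emph{domination} of distributions (not just comparison of success probabilities), which is what is actually needed for the fitness-level application with geometric random variables in Theorem~\ref{tproblevel}. Your coupling could be generalized to that setting too, but only by replacing the threshold construction with quantile (inverse-CDF) coupling, at which point the bookkeeping is comparable.
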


Note that here and in the following, we view ``$X_1=x_1,\dots,X_{i-1}=x_{i-1}$'' for $i = 1$ as an empty intersection of events, that is, an intersection over an empty index set. As most textbooks, we define this to be the whole probability space. 

Both parts of the lemma are simple corollaries from the following, slightly technical, general result, which might be of independent interest. 

For two sequences $(X_1, \dots, X_n)$ and $(X_1^*,\dots,X_n^*)$ of random variables, we say that $(X_1^*,\dots,X_n^*)$ \emph{unconditionally sequentially dominates} $(X_1, \dots, X_n)$ if for all $i \in [1..n]$ and all $x_1,\dots, x_{i-1} \in \R$ with $\Pr[X_1=x_1,\dots,X_{i-1}=x_{i-1}]>0$, we have $(X_i \mid X_1=x_1,\dots,X_{i-1}=x_{i-1}) \preceq X_i^*$. Analogously, we speak of \emph{unconditional sequential subdomination} if the last condition is replaced by $X_i^* \preceq (X_i \mid X_1=x_1,\dots,X_{i-1}=x_{i-1})$.

The following lemma shows that unconditional sequential (sub-)domination and independence of the $X_i^*$ imply (sub-)domination for the sums of these random variables. Note that unconditional sequential (sub-)domination is inherited by subsequences, so the following lemma immediately extends to sums over arbitrary subsets $I$ of the index set $[1..n]$. 

\begin{lemma}\label{lprobmoderate}
  Let $X_1,\dots,X_n$ be arbitrary discrete random variables. Let $X^*_1,\dots,X^*_n$ be independent discrete random variables. 
  \begin{enumerate}
	  \item If $(X_1^*,\dots,X_n^*)$ unconditionally sequentially dominates $(X_1, \dots, X_n)$, then $\sum_{i=1}^n X_i \preceq \sum_{i=1}^n X_i^*$.
	  \item If $(X_1^*,\dots,X_n^*)$ unconditionally sequentially subdominates $(X_1, \dots, X_n)$, then $\sum_{i=1}^n X_i^* \preceq \sum_{i=1}^n X_i$.
  \end{enumerate}
\end{lemma}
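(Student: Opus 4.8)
The plan is to prove part~(a); part~(b) follows by the obvious symmetric argument (or by applying~(a) to the negated variables). The core idea is a coupling constructed one coordinate at a time. I would build, on a common probability space, a sequence $(\tilde X_1,\dots,\tilde X_n, \tilde X_1^*,\dots,\tilde X_n^*)$ such that $(\tilde X_1,\dots,\tilde X_n)$ has the same joint distribution as $(X_1,\dots,X_n)$, the $\tilde X_i^*$ are independent with $\tilde X_i^* \sim X_i^*$, and $\tilde X_i \le \tilde X_i^*$ pointwise for every $i$. Once this is done, $\sum_i \tilde X_i \le \sum_i \tilde X_i^*$ pointwise, so by Lemma~\ref{lprobtrivialdom}(a) the sum of the tilded $X_i$ is stochastically dominated by the sum of the tilded $X_i^*$, and since stochastic domination depends only on the marginal laws of the two sums, this gives $\sum_i X_i \preceq \sum_i X_i^*$.

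The construction of the coupling is the heart of the matter and where the real work lies. I would do it inductively using the standard quantile (inverse-CDF) coupling. Let $U_1,\dots,U_n$ be i.i.d.\ uniform on $[0,1]$, independent of everything. Proceed sequentially: having realized $\tilde X_1 = x_1,\dots,\tilde X_{i-1} = x_{i-1}$, define $\tilde X_i$ as the generalized inverse of the conditional CDF $F_i(\,\cdot \mid x_1,\dots,x_{i-1})$ of $(X_i \mid X_1 = x_1,\dots,X_{i-1} = x_{i-1})$ evaluated at $U_i$, and define $\tilde X_i^*$ as the generalized inverse of the (unconditional) CDF $F_i^*$ of $X_i^*$ evaluated at the \emph{same} $U_i$. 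By a routine computation with generalized inverses, $\tilde X_i$ has the prescribed conditional law and hence $(\tilde X_1,\dots,\tilde X_n) \sim (X_1,\dots,X_n)$; and $\tilde X_i^* \sim X_i^*$. Crucially, because the $U_i$ are independent and $\tilde X_i^*$ depends only on $U_i$, the $\tilde X_i^*$ are independent — this is exactly where the hypothesis that the $X_i^*$ are independent gets used. Finally, unconditional sequential domination says precisely $(X_i \mid X_1 = x_1,\dots,X_{i-1} = x_{i-1}) \preceq X_i^*$, i.e.\ $F_i(t \mid x_1,\dots,x_{i-1}) \ge F_i^*(t)$ for all $t$; applying the same $U_i$ to both generalized inverses then yields $\tilde X_i \le \tilde X_i^*$ pointwise, which is the defining ``monotone coupling'' property of quantile transforms under CDF ordering.

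The main obstacle is purely one of rigor rather than idea: one must be careful with generalized inverses of CDFs for discrete random variables (the CDFs are step functions, not continuous), and one must verify both that the quantile transform reproduces the correct law and that the pointwise inequality $\tilde X_i \le \tilde X_i^*$ holds whenever one CDF dominates the other. These are standard facts, but they need to be invoked cleanly. A secondary subtlety is a measure-theoretic one — the conditioning $F_i(\cdot \mid x_1,\dots,x_{i-1})$ is only defined on the set where $\Pr[X_1 = x_1,\dots,X_{i-1} = x_{i-1}] > 0$, but since all variables are discrete this set has full probability and the recursion is well-defined; the $i=1$ case uses the ``empty intersection = whole space'' convention noted before the lemma. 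Once the coupling is in place, the rest (transfer of pointwise inequality to domination of sums via Lemma~\ref{lprobtrivialdom}, and the fact that $\preceq$ only sees marginals) is immediate.
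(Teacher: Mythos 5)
Your proof is correct, but it takes a genuinely different route from the one in the paper. The paper's proof is a hybrid (telescoping) argument carried out directly on tail probabilities: it assumes WLOG that the $X_i$ and $X_i^*$ are independent of each other, defines the interpolating quantities $P_j = \Pr[\sum_{i=1}^j X_i + \sum_{i=j+1}^n X_i^* \ge \lambda]$, and shows $P_{j+1} \le P_j$ by conditioning on the values of $X_1,\dots,X_j$ and using the hypothesis to replace $\Pr[X_{j+1} \ge m \mid X_1 = x_1,\dots,X_j = x_j]$ by $\Pr[X_{j+1}^* \ge m]$; no coupling is ever constructed. You instead build an explicit monotone coupling of the full vectors via sequential quantile transforms driven by shared uniforms $U_i$, obtaining $\tilde X_i \le \tilde X_i^*$ coordinatewise and then summing. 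Both arguments are sound, and the subtleties you flag (generalized inverses of step-function CDFs, the full-measure set on which the conditional laws are defined, independence of the $\tilde X_i^*$ because each depends only on its own $U_i$) are exactly the points that need care and are all handled correctly by the standard facts you invoke. What your approach buys is a strictly stronger intermediate statement: coordinatewise pointwise domination of the coupled vectors, which immediately yields $f(X_1,\dots,X_n) \preceq f(X_1^*,\dots,X_n^*)$ for \emph{any} coordinatewise non-decreasing $f$, not just sums, and makes the restriction to subsequences trivial; it is also conceptually the multivariate extension of the coupling characterization in Theorem~\ref{tprobdomcou}. What the paper's argument buys is elementarity: it needs nothing beyond rearranging finite/countable sums of probabilities and is self-contained at the level of the surrounding text, at the price of being tied to the specific functional $\sum_i X_i$ as written.
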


\begin{proof}
The two parts of the lemma imply each other (as seen by multiplying the random variables with $-1$), so it suffices to prove the first statement.

Since the statement of the theorem is independent of the correlation between the $X_i$ and the $X^*_i$, we may assume that they are independent. Let $\lambda \in \R$. Define \[P_j:=\Pr\Big[\sum_{i=1}^j X_i+\sum_{i=j+1}^n X^*_i\ge \lambda\Big]\] for $j \in [0..n]$. We show $P_{j+1} \le P_j$ for all $j \in [0..n-1]$. 

For $m \in \R$, let $\Omega_m$ denote the set of all $(x_1,\dots,x_j,x_{j+2},\dots,x_n) \in \R^{n-1}$ such that $\Pr[X_1=x_1,\dots,X_{j}=x_{j}] > 0$ and $\sum_{i\in [1..n]\setminus\{j+1\}} x_i = \lambda - m$. Let $M := \{m \in \R \mid \Omega_m \neq \emptyset\}$. Then 
 \begin{align*}
   P_{j+1} {}={}&\Pr\Big[\sum_{i=1}^{j+1} X_i+\sum_{i=j+2}^n X^*_i\ge \lambda\Big] \\
    ={}& \sum_{m \in M} \Pr\Big[\sum_{i=1}^{j} X_i+\sum_{i=j+2}^n X^*_i = \lambda-m \wedge X_{j+1}\ge m\Big]\\
    ={}& \sum_{m \in M} \sum_{(x_1,\dots,x_j,x_{j+2},\dots,x_n)\in \Omega_m}  \Pr\big[X_1=x_1,\dots,X_{j}=x_{j}\Big]\cdot \\
    {}&\Pr\big[X_{j+1} \ge m \,\big|\, X_1=x_1,\dots,X_j=x_j\big]\cdot \prod_{i=j+2}^n \Pr\big[X^*_i=x_i\big] \\
    \le{}& \sum_{m \in M} \Pr\Big[\sum_{i=1}^{j} X_i+\sum_{i=j+2}^n X^*_i = \lambda-m\Big]\cdot \Pr\big[X^*_{j+1}\ge m\big]\\
    ={}& \Pr\Big[\sum_{i=1}^{j} X_i+\sum_{i=j+1}^n X^*_i\ge \lambda\Big]\\
    ={}& P_j.
 \end{align*}
 Thus, we have 
  \[
    \Pr\Big[\sum_{i=1}^n X_i\ge \lambda\Big]=P_n \le P_{n-1} \le \dots \le P_1 \le P_0=\Pr\Big[\sum_{i=1}^n X_i^*\ge \lambda\Big].
  \]
\end{proof}

\subsection{Coupling}

Coupling is an analysis technique that consists of defining two unrelated random variables over the same probability space to ease comparing them. As an example, let us regard standard-bit mutation with rate $p$ and with rate $q$, where $p < q$. Intuitively, it seem obvious that we flip more bits when using the higher rate $q$. We could make this precise by looking at the distributions of the random variables $X_p$ and $X_q$ describing the numbers of bits that flip and computing that $X_p \preceq X_q$. For that, we would need to show that for all $k \in [0..n]$, we have \[\sum_{i = 0}^k \binom ni p^i (1-p)^{n-i} \ge \sum_{i = 0}^k \binom ni q^i (1-q)^{n-i}.\]
Coupling is a way to get the same result in a more natural manner. 

Consider the following random experiment. For each $i \in [1..n]$, let $r_i$ be a random number chosen independently and uniformly distributed in $[0,1]$. Let $\tilde X_p$ be the number of $r_i$ that are less than $p$ and let $X_q$ be the number of $r_i$ that are less than $q$. We immediately see that $\tilde X_p \sim \Bin(n,p)$ and $\tilde X_q \sim \Bin(n,q)$. However, we know more. We defined $\tilde X_p$ and $\tilde X_q$ in a common probability space in a way that we have $\tilde X_p \le \tilde X_q$ with probability one: $X_p$ and $X_q$, viewed as functions on the (hidden) probability space $\Omega = \{(r_1,\dots,r_n) \mid r_1, \dots, r_n \in [0,1]\}$ satisfy $\tilde X_p(\omega) \le \tilde X_q(\omega)$ for all $\omega \in \Omega$. Consequently, by the trivial Lemma~\ref{lprobtrivialdom}, we have $X_p \preceq \tilde X_p \preceq \tilde X_q \preceq X_q$ and hence $X_p \preceq X_q$. 

The same argument works for geometric distributions. We summarize these findings (and two more) in the following lemma. Part~\ref{it:probdomnm} follows from the obvious embedding (which is a coupling as well) of the $\Bin(n,p)$ probability space into the one of $\Bin(m,p)$. The first inequality of part~\ref{it:probdomcawi} is easily computed right from the definition of domination (and holds in fact for all random variables), the second part was proven in~\cite[Lemma~1]{KrejcaW17}.

\begin{lemma}\label{lprobdomdistr}
Let $X$ and $Y$ be two random variables. Let $p, q \in [0,1]$ with $p \le q$.
\begin{enumerate}
	\item If $X \sim \Bin(n,p)$ and $Y \sim \Bin(n,q)$, then $X \preceq Y$.
	\item\label{it:probdomnm} If $n \le m$, $X \sim \Bin(n,p)$ and $Y \sim \Bin(m,p)$, then $X \preceq Y$.
	\item\label{it:probdomcawi} If $X \sim \Bin(n,p)$ and $x \in [0..n]$, then $X \preceq (X \mid X \ge x) \preceq (X+x)$.
	\item If $p > 0$, $X \sim \Geom(p)$, and $Y \sim \Geom(q)$, then $X \preceq Y$.
\end{enumerate}
\end{lemma}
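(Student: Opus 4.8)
The plan is to derive all four parts from the coupling idea illustrated just before the lemma: in each case realize the two variables on one common probability space so that one is pointwise no larger than the other, and then finish with the trivial domination Lemma~\ref{lprobtrivialdom} (both parts) together with the immediate transitivity of $\preceq$.

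For part~1 I would draw $r_1,\dots,r_n$ independently and uniformly on $[0,1]$ and put $\tilde X:=|\{i\in[1..n]:r_i<p\}|$, $\tilde Y:=|\{i\in[1..n]:r_i<q\}|$. Then $\tilde X\sim\Bin(n,p)$, $\tilde Y\sim\Bin(n,q)$, and $p\le q$ forces $\{r_i<p\}\subseteq\{r_i<q\}$, so $\tilde X\le\tilde Y$ everywhere; hence $X\preceq\tilde X\preceq\tilde Y\preceq Y$. Equivalently, $\mathbf 1_{r_i<p}\preceq\mathbf 1_{r_i<q}$ for each $i$, so one may invoke Lemma~\ref{lprobdomsum} outright. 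Part~4 is the same device on an infinite sequence $r_1,r_2,\dots$: setting $U:=\min\{k\in\N:r_k<p\}$ and $V:=\min\{k\in\N:r_k<q\}$ we get $U\sim\Geom(p)$, $V\sim\Geom(q)$, and the inclusion $\{r_k<p\}\subseteq\{r_k<q\}$ orders the two first-success indices pointwise, which yields the stated comparison of $\Geom(p)$ and $\Geom(q)$. Part~2 is the embedding coupling: write $Y=\sum_{i=1}^m B_i$ with independent $B_i\sim\mathrm{Bernoulli}(p)$; then $X\overset{d}{=}\sum_{i=1}^n B_i\le\sum_{i=1}^m B_i=Y$ because the omitted summands are non-negative, so $X\preceq Y$ by Lemma~\ref{lprobtrivialdom}.

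For part~3, the first inequality $X\preceq(X\mid X\ge x)$ uses nothing about the binomial law. For $\lambda<x$ we have $\Pr[(X\mid X\ge x)\le\lambda]=0\le\Pr[X\le\lambda]$, and for $\lambda\ge x$ the event $\{X\le\lambda\}$ contains $\overline{\{X\ge x\}}=\{X<x\}$, so the ``excluding rare events'' observation ($A\supseteq\overline C\Rightarrow\Pr[A\mid C]\le\Pr[A]$) gives $\Pr[X\le\lambda\mid X\ge x]\le\Pr[X\le\lambda]$; in both cases this is the defining inequality of the domination. The second inequality $(X\mid X\ge x)\preceq X+x$ is the substantive one, and I would reuse the stopping-time argument from the proof of Lemma~\ref{lprobcondbinomial}: write $X=\sum_{i=1}^n X_i$ with independent $\mathrm{Bernoulli}(p)$ summands, and on $\{X\ge x\}$ let $\ell:=\min\{i:\sum_{j\le i}X_j=x\}$. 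Then $\ell\ge x$, and $(X\mid X\ge x,\ \ell=i)=x+\sum_{j=i+1}^n X_j$, which is distributed as $x+\Bin(n-i,p)\preceq x+\Bin(n-x,p)$ by part~2. Since stochastic domination is stable under mixtures (if each conditional law of $(X\mid X\ge x)$ given the value of $\ell$ is dominated by a fixed distribution, then so is $(X\mid X\ge x)$), we get $(X\mid X\ge x)\preceq x+\Bin(n-x,p)\preceq x+X$, using part~2 a second time. This is the strengthening already announced after Lemma~\ref{lprobcondbinomial} and also appears as \cite[Lemma~1]{KrejcaW17}.

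Parts~1, 2, and~4 are immediate once the right coupling is written down; the only step calling for genuine care is the second inequality of part~3, because conditioning on $X\ge x$ does not literally plant $x$ guaranteed successes, so a naive ``$X+x$'' coupling fails — the stopping index $\ell$ (with $\ell\ge x$, and the trials after position $\ell$ remaining unconditioned $\mathrm{Bernoulli}(p)$) is exactly what repairs this, and one additionally has to check that $\preceq$ survives the mixture over the possible values of $\ell$.
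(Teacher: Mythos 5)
Your argument is essentially the paper's: parts (a), (b), and (d) use exactly the couplings the paper describes (the common uniform variables $r_i$, and the embedding of the $\Bin(n,p)$ space into that of $\Bin(m,p)$), and the first inequality of (c) follows from the conditioning observation the paper invokes. The one place you go beyond the text is the second inequality of (c), which the paper merely cites from \cite[Lemma~1]{KrejcaW17}; your stopping-time-plus-mixture argument is precisely the strengthening the paper announces after Lemma~\ref{lprobcondbinomial}, and it is sound — domination does survive the mixture over $\ell$, since $\Pr[(X \mid X \ge x) \ge \lambda]$ is a convex combination of the conditional tail probabilities, each bounded by $\Pr[x + \Bin(n-x,p) \ge \lambda]$. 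One point you should make explicit in part (d): your coupling gives $V \le U$ pointwise, i.e.\ $\Geom(q) \preceq \Geom(p)$ — the variable with the \emph{larger} success probability is the stochastically \emph{smaller} one, consistent with $\Pr[\Geom(p) \ge k] = (1-p)^{k-1} \ge (1-q)^{k-1}$. Writing only that the coupling ``yields the stated comparison'' hides the fact that, as printed, part (d) of the lemma has $X$ and $Y$ in the opposite order from what the coupling actually delivers; name the direction and note the discrepancy rather than leaving it implicit.
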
  

Let us now formally define what we mean by coupling. Let $X$ and $Y$ be two random variables not necessarily defined over the same probability space. We say that $(\tilde X, \tilde Y)$ is a \emph{coupling} of $(X,Y)$ if $\tilde X$ and $\tilde Y$ are defined over a common probability space and if $X$ and $X'$ as well as $Y$ and $Y'$ are identically distributed. 

This definition itself is very weak. $(X,Y)$ have many couplings and most of them are not interesting. So the art of coupling as a proof and analysis technique is to find a coupling of $(X,Y)$ that allows to derive some useful information.

It is not a coincidence that we could use coupling to prove stochastic domination. The following theorem is well-known. 
\begin{theorem}\label{tprobdomcou}
  Let $X$ and $Y$ be two random variables. Then the following two statements are equivalent.
  \begin{enumerate}
	\item $X \preceq Y$.
	\item There is a coupling $(\tilde X, \tilde Y)$ of $(X,Y)$ such that $\tilde X \le \tilde Y$.
\end{enumerate}
\end{theorem}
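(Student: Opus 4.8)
The plan is to prove the equivalence in both directions, with the backward direction being essentially immediate and the forward direction requiring an explicit construction of the coupling.

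\textbf{The easy direction (2 $\Rightarrow$ 1).} Suppose $(\tilde X, \tilde Y)$ is a coupling of $(X,Y)$ with $\tilde X \le \tilde Y$ pointwise. Then for every $\lambda \in \R$, the event $\{\tilde Y \le \lambda\}$ is contained in the event $\{\tilde X \le \lambda\}$, so $\Pr[\tilde Y \le \lambda] \le \Pr[\tilde X \le \lambda]$. Since $\tilde X$ and $X$ are identically distributed, and $\tilde Y$ and $Y$ are identically distributed, this reads $\Pr[Y \le \lambda] \le \Pr[X \le \lambda]$, which is exactly $X \preceq Y$. (This is really just Lemma~\ref{lprobtrivialdom} applied to $\tilde X, \tilde Y$ together with the fact that domination only depends on the distributions.)

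\textbf{The main direction (1 $\Rightarrow$ 2).} Here I would use the classical \emph{quantile coupling} (inverse transform). Let $F_X(\lambda) := \Pr[X \le \lambda]$ and $F_Y(\lambda) := \Pr[Y \le \lambda]$ be the cumulative distribution functions, and define the generalized inverses (quantile functions) $F_X^{-1}(u) := \inf\{\lambda \in \R \mid F_X(\lambda) \ge u\}$ and similarly $F_Y^{-1}(u)$, for $u \in (0,1)$. Now take $U$ uniformly distributed on $(0,1)$ as the common probability space, and set $\tilde X := F_X^{-1}(U)$ and $\tilde Y := F_Y^{-1}(U)$. The standard fact about quantile functions is that $F_X^{-1}(U)$ has cumulative distribution function $F_X$ (one checks $F_X^{-1}(u) \le \lambda \iff u \le F_X(\lambda)$, using right-continuity of $F_X$), so $\tilde X$ is distributed like $X$ and $\tilde Y$ like $Y$; hence $(\tilde X, \tilde Y)$ is a coupling of $(X,Y)$. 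The hypothesis $X \preceq Y$ says $F_X(\lambda) \ge F_Y(\lambda)$ for all $\lambda$, which translates directly into a pointwise inequality of the inverses: $F_X^{-1}(u) \le F_Y^{-1}(u)$ for every $u \in (0,1)$, since a larger CDF means each quantile is reached no later. Therefore $\tilde X \le \tilde Y$ with probability one, as desired. Since the paper restricts to discrete random variables, one may alternatively give a hands-on construction: enumerate the joint ``staircase'', i.e. on $U \in (0,1)$ let $\tilde X$ jump through the support values of $X$ at the breakpoints $F_X$ and $\tilde Y$ through those of $Y$ at the breakpoints $F_Y$; the inequality $F_X \ge F_Y$ guarantees $\tilde X \le \tilde Y$ on every subinterval.

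\textbf{Expected obstacle.} The only delicate point is handling the boundary behavior and ties in the generalized inverse correctly — verifying $\{u : F_X^{-1}(u) \le \lambda\} = \{u : u \le F_X(\lambda)\}$ requires care with whether inequalities are strict, and with the right-continuity of $F_X$; for discrete variables this amounts to bookkeeping at the finitely or countably many jump points. Once that identity is in hand, both ``$(\tilde X, \tilde Y)$ is a coupling'' and ``$\tilde X \le \tilde Y$'' follow with no further work, so I expect the construction itself, not any estimate, to be the crux.
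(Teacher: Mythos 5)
Your proof is correct: the backward direction is the trivial observation you describe, and the forward direction via the quantile (inverse-transform) coupling is the standard argument for this classical fact, including the key monotonicity step that $F_X \ge F_Y$ pointwise forces $F_X^{-1} \le F_Y^{-1}$ pointwise. The paper itself states Theorem~\ref{tprobdomcou} as well-known and gives no proof, so there is nothing to compare against; your write-up supplies exactly the canonical argument one would expect, and the delicate point you flag (the identity $F_X^{-1}(u) \le \lambda \iff u \le F_X(\lambda)$ via right-continuity) is indeed the only place requiring care.
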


We remark without giving much detail that coupling as a proof technique found numerous powerful applications beyond its connection to stochastic domination. In the analysis of population-based evolutionary algorithms, a powerful strategy to prove lower bounds is to couple the true population of the algorithm with the population of an artificial process without selection and by this overcome the difficult dependencies introduced by the variation-selection cycle of the algorithm. This was first done in~\cite{Witt06} and~\cite{Witt08} for the analysis of the \mpoea and an elitist steady-state GA. This technique then found applications for memetic algorithms~\cite{Sudholt09}, aging-mechanisms~\cite{JansenZ11tcs}, non-elitist algorithms~\cite{LehreY12}, multi-objective evolutionary algorithms~\cite{DoerrKV13}, and the \mplea~\cite{AntipovDFH18}.

\subsection{Domination in Fitness or Distance}

So far we have used stochastic domination to compare runtime distributions. We now show that stochastic domination is a powerful proof tool also when applied to other distributions. To do so, we give a short and elegant proof for the result of Witt~\cite{Witt13} that compares the runtimes of mutation-based algorithms. The main reason why our proof is significantly shorter than the one of Witt is that we use the notion of stochastic domination also for the distance from the optimum. This will also be an example where we heavily exploit the connection between coupling and stochastic domination (Theorem~\ref{tprobdomcou}).

To state this result, we need the notion of a \emph{$(\mu,p)$ mutation-based algorithm} introduced in~\cite{Sudholt13}. This class of algorithms is  called only \emph{mutation-based} in~\cite{Sudholt13}, but since (i)~it does not include all adaptive algorithms using mutation only, e.g., those regarded in~\cite{JansenW06,OlivetoLN09,BottcherDN10,BadkobehLS14,DangL16self,DoerrGWY17,DoerrWY18}, (ii)~it does not include all algorithms using a different mutation operator than standard-bit mutation, e.g., those in~\cite{DoerrDY16PPSN,DoerrDY16gecco,LissovoiOW17,DoerrLMN17}, and (iii)~this notion collides with the notion of unary unbiased black-box complexity algorithms (see~\cite{LehreW12}), which without greater justification could also be called the class of mutation-based algorithms, we feel that a notion making these restrictions precise is more appropriate.

The class of $(\mu,p)$ mutation-based algorithms comprises all algorithms which first generate a set of $\mu$ search points uniformly and independently at random from $\{0,1\}^n$ and then repeat generating new search points from any of the previous ones via standard-bit mutation with probability~$p$. This class includes all $(\mu+\lambda)$ and $(\mu,\lambda)$ EAs which only use standard-bit mutation with static mutation rate~$p$.

Denote by \oeamu the following algorithm in this class. It first generates $\mu$ random search points. From these, it selects uniformly at random one with highest fitness and then continues from this search point as a \oea, that is, repeatedly generates a new search point from the current one via standard-bit mutation with rate $p$ and replaces the previous one by the new one if the new one is not worse (in terms of the fitness). This algorithm was called \oea with BestOf($\mu$) initialization in~\cite{LaillevaultDD15}.

For any algorithm $\calA$ from the class of $(\mu,p)$ mutation-based algorithms and any fitness function $f : \{0,1\}^n \to \R$, let us denote by $T(\calA,f)$ the runtime of the algorithm $\calA$ on the fitness function $f$, that is, the number of the first individual generated that is an optimal solution. Usually, this will be $\mu$ plus the number of the iteration in which the optimum was generated. To cover also the case that one of the random initial individuals is optimal, let us assume that these initial individuals are generated sequentially. As a final technicality, for reasons of convenience, let us assume that the \oeamu in iteration $\mu+1$ does not choose as parent a random one of the previous search points with maximal fitness, but the last one with maximal fitness. Since the first $\mu$ individuals are generated independently, this modification does not change the distribution of this parent.

In this language, Witt~\cite[Theorem~6.2]{Witt13} shows the following remarkable result.
\begin{theorem}\label{tprobdom}
  For any $(\mu,p)$ mutation-based algorithm $\calA$ and any $f : \{0,1\}^n \to \R$ with unique global optimum,
  \[T(\mbox{\oeamu},\onemax) \preceq T(\calA,f).\]
\end{theorem}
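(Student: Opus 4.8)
The plan is to prove the theorem by a coupling argument that tracks the Hamming distance to the optimum, establishing stochastic domination fitness-level by fitness-level. First I would reduce to the case $f = \onemax$: since $\calA$ is a $(\mu,p)$ mutation-based algorithm and standard-bit mutation is unbiased, the behaviour of $\calA$ on a function $f$ with unique optimum $x^*$ can, after relabelling bits via the bijection $z \mapsto z \oplus x^* \oplus (1,\dots,1)$, be compared to its behaviour where the target is $(1,\dots,1)$; the key point is that the distribution of the Hamming distance of a mutated offspring from $x^*$ depends only on the Hamming distance of the parent from $x^*$ (Lemma~\ref{lprobhamming}), not on the actual search points. So it suffices to compare the $\onemax$-distance process of $\calA$ to that of \oeamu.

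Next I would set up the central domination-in-distance claim. For $\calA$ running on (the relabelled) $f$, let $D_t$ be the Hamming distance from $x^*$ of a current-best individual after $t$ generated search points; for \oeamu on $\onemax$, let $D_t'$ be the analogous distance. I would show by induction on $t$ that $D_t' \preceq D_t$, and in fact that the whole future of the \oeamu-process, conditioned on its current state, is dominated (in the sense of distance) by that of $\calA$. The base case is the initialization: \oeamu keeps the \emph{best} of $\mu$ independent uniform search points, so its initial distance is the minimum of $\mu$ i.i.d.\ $\Bin(n,\tfrac12)$-type distances, whereas $\calA$'s current-best after $\mu$ random points has at best the same distribution (it could be worse if $\calA$ does not pick a fitness-optimal one, but even picking the best gives exactly the same law, so $D_\mu' \preceq D_\mu$). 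For the induction step I would use Theorem~\ref{tprobdomcou}: given a coupling with $D_t' \le D_t$, I need to couple one step of each process so that the new distances still satisfy $D_{t+1}' \le D_{t+1}$. Here \oeamu mutates its unique current individual (distance $D_t'$) and keeps the offspring iff it is no worse, i.e.\ the new distance is $\min(D_t', Z')$ where $Z'$ is the offspring distance; $\calA$ mutates \emph{some} previously generated individual, whose distance is $\ge D_t \ge D_t'$, and its new current-best distance is at most $\min(D_t, Z)$ where $Z$ is that offspring's distance. Since the offspring-distance distribution is stochastically monotone in the parent distance (mutating a closer point gives a stochastically closer offspring — this is the monotonicity one extracts from $\Bin$-coupling as in Lemma~\ref{lprobdomdistr} together with the structure in Lemma~\ref{lprobhamming}), I can couple $Z' \le Z$, hence $\min(D_t', Z') \le \min(D_t, Z)$, giving the step. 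Summing up (formally, appealing to the coupling characterization and the fact that $T$ is the first hitting time of distance $0$), $D_t' \le D_t$ for all $t$ under the coupling implies $T(\oeamu,\onemax) \preceq T(\calA,f)$.

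The main obstacle I anticipate is making precise the claim that the offspring-distance distribution under standard-bit mutation is \emph{stochastically monotone in the parent's distance to the optimum}, uniformly enough to survive one step of the coupling. Concretely: if $a \le b$ are Hamming distances, one must couple a standard-bit mutation of a point at distance $a$ with one of a point at distance $b$ so that the resulting distances $Z_a \le Z_b$ almost surely. This is plausible — flipping an incorrect bit decreases the distance, flipping a correct bit increases it, and the point at distance $b$ has more incorrect bits — but a clean coupling has to match up the "correct" and "incorrect" coordinates of the two points carefully (flip the same set of correct bits, and handle the extra incorrect bits of the distance-$b$ point so they only ever help the distance-$a$ side), and must be compatible with the fact that in \oeamu the mutated parent is the \emph{same} individual across iterations while in $\calA$ it may change. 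I would isolate this as a lemma (monotone coupling of standard-bit mutation in the distance parameter) and prove it by the obvious bit-by-bit coupling, then feed it into the induction above. The remaining steps — the relabelling reduction, the initialization comparison, and translating "$D_t' \le D_t$ always" into the runtime domination — are routine given Lemmas~\ref{lprobtrivialdom}, \ref{lprobhamming} and Theorem~\ref{tprobdomcou}.
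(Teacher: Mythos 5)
Your proposal is correct and follows essentially the same route as the paper: reduce by symmetry to the all-ones optimum, then build an inductive coupling in which the \oeamu's distance to the optimum is pointwise at most that of $\calA$, using the monotone-in-parent-distance coupling of standard-bit mutation (this is exactly Lemma~\ref{lprobdommut} combined with Theorem~\ref{tprobdomcou}) and reading off domination of the hitting times. The only cosmetic difference is that you track the running minimum distance while the paper maintains the invariant $|x^{(t)}|_1 \le |\tilde y^{(t)}|_1$ for every generated individual; both versions of the invariant close the induction.
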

This result significantly extends results of a similar flavor in~\cite{BorisovskyE08,DoerrJW12algo,Sudholt13}. The importance of such types of results is that they allow to prove lower bounds for the performance of many algorithm on essentially arbitrary fitness functions by just regarding the performance of the \oeamu on \onemax. 

Let us denote by $|x|_1$ the number of ones in the bit string $x \in \{0,1\}^n$. In other words, $|x|_1 = \|x\|_1$, but the former is nicer to read. Then Witt~\cite[Lemma~6.1]{Witt13} has shown the following natural domination relation between offspring generated via standard-bit mutation.
\begin{lemma}\label{lprobdommut}
  Let $x, y \in \{0,1\}^n$. Let $p \in [0,\frac 12]$. Let $x', y'$ be obtained from $x, y$ via standard-bit mutation with rate $p$. If $|x|_1 \le |y|_1$, then $|x'|_1 \preceq |y'|_1$.
\end{lemma}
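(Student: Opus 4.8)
The plan is to reduce the statement to the case $|y|_1 = |x|_1 + 1$, since the general case then follows by transitivity of $\preceq$: if $|x|_1 \le |y|_1$, insert a chain of intermediate bit strings $x = z^{(0)}, z^{(1)}, \dots, z^{(k)} = $ (something with $|y|_1$ ones), each obtained from the previous by flipping a single $0$ into a $1$, apply the one-step result along the chain, and use that for two strings with the same number of ones the distributions of $|z'|_1$ after standard-bit mutation are identical (so $\preceq$ holds by Lemma~\ref{lprobtrivialdom}(b)). Thus it suffices to handle $y$ obtained from $x$ by flipping exactly one $0$-bit, say position $j$, to a $1$.

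For this one-step case I would use a coupling argument, invoking Theorem~\ref{tprobdomcou}. Couple the mutations of $x$ and $y$ bitwise: for every position $i \ne j$ use the same randomness, so $x$ and $y$ flip exactly the same bits outside position $j$; at position $j$ let the two processes flip independently, each with probability $p$. Write $|x|_1 = a$. Outside position $j$, $x$ has $a$ ones and $y$ also has $a$ ones (they agree there), so after the common mutation both produce the same number $S$ of ones among positions $\ne j$. At position $j$, $x$ has a $0$, contributing $B_x \sim \mathrm{Bernoulli}(p)$ ones after mutation; $y$ has a $1$, contributing $B_y \sim \mathrm{Bernoulli}(1-p)$ ones. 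Hence $|x'|_1 = S + B_x$ and $|y'|_1 = S + B_y$ on the common space. Since $p \le \tfrac12 \le 1-p$, we have $\Pr[B_x = 1] \le \Pr[B_y = 1]$, so $B_x \preceq B_y$; re-coupling $B_x$ and $B_y$ monotonically (both driven by the same uniform $r \in [0,1]$, $B_x = \mathbf{1}_{r < p}$, $B_y = \mathbf{1}_{r < 1-p}$) gives $B_x \le B_y$ pointwise, hence $S + B_x \le S + B_y$ pointwise, and therefore $|x'|_1 \preceq |y'|_1$ by Lemma~\ref{lprobtrivialdom}(a).

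The one genuine subtlety — and the step I would be most careful about — is the reduction to a single bit flip: one must pick the chain so that consecutive strings differ by turning a $0$ into a $1$ (never the reverse), which is possible precisely because $|x|_1 \le |y|_1$; one routes from $x$ up to a string with $|y|_1$ ones, not to $y$ itself, and closes the gap using that equal weight implies equal distribution of $|\cdot'|_1$. Everything else is the elementary bitwise coupling above, and the hypothesis $p \le \tfrac12$ is used exactly once, to ensure $p \le 1-p$.
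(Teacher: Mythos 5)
Your argument is correct. Note that the paper itself does not prove this lemma; it is quoted from Witt~\cite[Lemma~6.1]{Witt13}, so there is no in-paper proof to compare against, but your proof is a sound self-contained derivation in the spirit of the coupling techniques the surrounding section advocates. The reduction to a single $0\to 1$ flip is legitimate: the distribution of $|z'|_1$ depends only on $|z|_1$ by permutation symmetry of standard-bit mutation, the required chain of intermediate strings exists since $|y|_1\le n$, and stochastic domination is transitive directly from its definition. The one-step coupling is also valid: the flip indicator at position $j$ has the correct marginal $\mathrm{Bernoulli}(p)$ for both strings (for $y$ it is $\mathbf{1}_{r\ge 1-p}$), the contributions of the remaining positions coincide pointwise, and $p\le 1-p$ gives the pointwise inequality at position $j$; Lemma~\ref{lprobtrivialdom} and Theorem~\ref{tprobdomcou} then yield the domination.
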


We are now ready to give our alternate proof for Theorem~\ref{tprobdom}. While it is clearly shorter that the original one in~\cite{Witt13}, we also feel that it is more natural. In very simple words, it shows that $T(\calA,f)$ dominates $T(\mbox{\oeamu},\onemax)$ because the search points generated in the run of the \oeamu on $\onemax$ always are at least as close to the optimum (in the domination or coupling sense) as the ones in the run of $\calA$ on $f$.

\begin{proof}
  Since $\calA$ treats bit-positions and bit-values in a symmetric fashion, we may without loss of generality assume that the unique optimum of $f$ is $(1,\dots,1)$. 
  
  Let $x^{(1)}, x^{(2)}, \dots$ be the sequence of search points generated in a run of $\calA$ on the fitness function $f$. Hence $x^{(1)}, \dots, x^{(\mu)}$ are independently and uniformly distributed in $\{0,1\}^n$ and all subsequent search points are generated from suitably chosen previous ones via standard-bit mutation with rate $p$. Let $y^{(1)}, y^{(2)}, \dots$ be the sequence of search points generated in a run of the \oeamu on the fitness function $\onemax$. 
  
  We show how to couple these random sequences of search points in a way that $|\tilde x^{(t)}|_1 \le |\tilde y^{(t)}|_1$ for all $t \in \N$. We take as common probability space $\Omega$ simply the space that $(x^{(t)})_{t \in \N}$ is defined on and let $\tilde x^{(t)} = x^{(t)}$ for all $t \in \N$. 
  
  We define the $\tilde y^{(t)}$ inductively as follows. For $t \in [1..\mu]$, let $\tilde y^{(t)} = x^{(t)}$. Note that this trivially implies $|\tilde x^{(t)}|_1 \le |\tilde y^{(t)}|_1$ for these search points. Let $t > \mu$ and assume that $|\tilde x^{(t')}|_1 \le |\tilde y^{(t')}|_1$ for all $t' < t$. Let $s \in [1..t-1]$ be maximal such that $\tilde y^{(s)}$ has maximal $\onemax$-fitness among $\tilde y^{(1)}, \dots, \tilde y^{(t-1)}$. Let $r \in [1..t-1]$ be such that $x^{(t)}$ was generated from $x^{(r)}$ in the run of $\calA$ on $f$. By induction, we have $|x^{(r)}|_1 \le |\tilde y^{(r)}|_1$. By the choice of $s$ we have $|\tilde y^{(r)}|_1 \le |\tilde y^{(s)}|_1$. Consequently, we have $|x^{(r)}|_1 \le |\tilde y^{(s)}|_1$. By Lemma~\ref{lprobdommut} and Theorem~\ref{tprobdomcou}, there is a random $\tilde y^{(t)}$ (defined on $\Omega$) such that $\tilde y^{(t)}$ has the distribution of being obtained from $\tilde y^{(s)}$ via standard-bit mutation with rate $p$ and such that $|x^{(t)}|_1 \le |\tilde y^{(t)}|_1$. 
  
  With this construction, the sequence $(\tilde y^{(t)})_{t \in \N}$ has the same distribution as $(y^{(t)})_{t \in \N}$. This is because the first $\mu$ elements are random and then each subsequent one is generated via standard-bit mutation from the current-best one, which is just the way the \oeamu is defined. At the same time, we have $|\tilde x^{(t)}|_1 \le |\tilde y^{(t)}|_1$ for all $t \in \N$. Consequently, we have $\min\{t \in \N \mid |\tilde y^{(t)}|_1 = n\} \le \min\{t \in \N \mid |x^{(t)}|_1 = n\}$. Since  $T(\mbox{\oeamu},\onemax)$ and $\min\{t \in \N \mid |\tilde y^{(t)}|_1 = n\}$ are identically distributed and also $T(\calA,f)$ and $\min\{t \in \N \mid |x^{(t)}|_1 = n\}$ are identically distributed, we have $T(\mbox{\oeamu},\onemax) \preceq T(\calA,f)$. 
\end{proof}

While not explicitly using the notion of stochastic domination, the result and proof in~\cite{BorisovskyE08} bear some similarity to those above. In very simple words and omitting many details, the result~\cite[Theorem~1]{BorisovskyE08} states the following. Assume that you run the \oea and some other algorithm $A$ (from a relatively large class of algorithms) to maximize a function $f$. Denote by $x^{(t)}$ and $y^{(t)}$ the best individual produced by the \oea and $A$ up to iteration $t$. Assume that for all $t$ and all possible runs of the algorithms up to iteration $t$ we have that $f(x^{(t)}) \ge f(y^{(t)})$ implies $f(x^{(t+1)}) \succeq f(y^{(t+1)})$. Assume further that the random initial individual of the \oea is at least as good (in terms of $f$) as all initial individuals of algorithm $A$. Then $f(x^{(t)}) \succeq f(y^{(t)})$ for all $t$.

The proof of this result (like the one of the fitness domination statement in our proof of Theorem~\ref{tprobdom}) uses induction over the time $t$. Since~\cite{BorisovskyE08} do not use the notion of stochastic domination explicitly, they cannot simply couple the two processes, but have to compare the two distributions manually using an argument they call Abel transform. 

\section{The Coupon Collector Process}\label{secprobcoupon}\index{Coupon Collector}

The coupon collector process is one of the central building blocks in the analysis of randomized algorithms. It is particularly important in the theory of randomized search heuristics, where it often appears as a subprocess. 

The \emph{coupon collector process} is the following simple randomized process. Assume that there are $n$ types of coupons available. Whenever you buy a certain product, you get one coupon with a type chosen uniformly at random from the $n$~types. How long does it take until you have a coupon of each type? We denote, in this section,  the random variable describing the first round after which we have all types by $T_n$ and call it \emph{coupon collecting time}. Hence in simple words, this is the number of rounds it takes to obtain all types.

As an easy example showing the coupon collector problem arising in the theory of randomized search heuristics, let us regard how the \emph{randomized local search (RLS)}\index{randomized local search (RLS)} heuristic optimizes strictly monotonically increasing functions. The RLS heuristic when maximizing a given function $f : \{0,1\}^n \to \R$ starts with a random search point. Then, in each iteration of the process, a single random bit is flipped in the current solution. If this gives a solution worse than the current one (in terms of~$f$), then the new solution is discarded. Otherwise, the process is continued from this new solution. 

Assume that $f$ is \emph{strictly monotonically increasing}\index{strictly monotonically increasing}, that is, flipping any $0$-bit to $1$ increases the function value. Then the optimization process of RLS on $f$ strongly resembles a coupon collector process. In each round, we flip a random bit. If this bit was $1$ in our current solution, then nothing changes (we discard the new solution as it has a smaller $f$-value). If this bit was $0$, then we keep the new solution, which now has one extra $1$. Hence taking the $1$- bits as coupons, in each round we obtain a random coupon. This has no effect if we had this coupon already, but is good if not. 

We observe that the optimization time (number of solutions evaluated until the optimal solution is found) of RLS on strictly monotonic functions is exactly the coupon collecting time when we start with an initial stake of coupons that follows a $\Bin(n,\tfrac 12)$ distribution. This shows that the optimization time is at most the classic coupon collector time (where we start with no coupons). See~\cite{DoerrD16} for a very precise analysis of this process.

The expectation of the coupon collecting time is easy to determine. Recall from Section~\ref{secprobharmonic} the definition of the harmonic number $H_n := \sum_{k=1}^n \frac 1k$\index{harmonic number}.

\begin{theorem}[Coupon collector, expectation]\label{tprobcouponE}\index{Coupon Collector}
  The expected time to collect all $n$ coupons is $E[T_n] = n H_n = (1 + o(1)) n \ln n$.
\end{theorem}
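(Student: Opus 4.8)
The plan is to decompose the coupon collecting time $T_n$ into a sum of independent geometric random variables, one for each ``new coupon'' milestone. Concretely, for $i \in [1..n]$ let $T_n^{(i)}$ denote the number of rounds needed to obtain the $i$-th distinct coupon type, counted from the moment the $(i-1)$-st distinct type was first collected. Then $T_n = \sum_{i=1}^n T_n^{(i)}$ by definition of $T_n$ as the first round after which all types are present.

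The key observation is that when exactly $i-1$ distinct types have already been collected, each subsequent purchase yields a type not yet seen with probability exactly $\frac{n-(i-1)}{n}$, independently of the past. Hence $T_n^{(i)} \sim \Geom\!\left(\frac{n-i+1}{n}\right)$, and moreover these random variables are mutually independent because each $T_n^{(i)}$ depends only on the fresh purchases made after the $(i-1)$-st milestone. By the waiting time argument (Lemma~\ref{lprobwaitingtime}), $E[T_n^{(i)}] = \frac{n}{n-i+1}$. Applying linearity of expectation (Lemma~\ref{lproblinearity}),
\begin{equation*}
  E[T_n] = \sum_{i=1}^n E[T_n^{(i)}] = \sum_{i=1}^n \frac{n}{n-i+1} = n \sum_{j=1}^n \frac 1j = n H_n,
\end{equation*}
where the substitution $j = n-i+1$ reindexes the sum. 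Finally, the estimate $\ln n < H_n \le 1 + \ln n$ from~\eqref{eqprobhn} gives $H_n = (1+o(1)) \ln n$, hence $E[T_n] = (1+o(1)) n \ln n$.

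The only point requiring a little care — and the main (mild) obstacle — is justifying rigorously that the $T_n^{(i)}$ are genuinely independent geometric variables rather than merely having the claimed marginal expectations. This follows from the memorylessness of the process: conditioned on any history in which the $(i-1)$-st distinct coupon has just been obtained, the number of further rounds until a fresh type appears is geometric with the stated parameter regardless of which specific $i-1$ types were collected, so the conditional distribution does not depend on the history, which yields independence. Since only the expectation is claimed here, one could alternatively bypass independence entirely and invoke linearity of expectation directly on the decomposition $T_n = \sum_i T_n^{(i)}$, using only that each summand has expectation $\frac{n}{n-i+1}$ — linearity does not require independence. Either route completes the proof.
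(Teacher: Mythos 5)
Your proof is correct and follows essentially the same route as the paper: decompose $T_n$ into the waiting times between successive new-coupon milestones, observe each has expectation $\frac{n}{n-k}$ by the waiting time argument, and sum via linearity of expectation (which, as you correctly note, does not require independence of the summands). The discussion of independence is a harmless extra; the paper omits it for exactly the reason you give.
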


\begin{proof}
  Given that we already have $k$ different coupons for some $k \in [0..n-1]$, the probability that the next coupon is one that we do not already have, is $\frac{n-k}{n}$. By the waiting time argument (Lemma~\ref{lprobwaitingtime}), we see that the time $T_{n,k}$ needed to obtain a new coupon given that we have exactly $k$ different ones, satisfies $E[T_{n,k}] = \frac{n}{n-k}$. Clearly, the total time $T_n$ needed to obtain all coupons is $\sum_{k = 0}^{n-1} T_{n,k}$. Hence, by linearity of expectation (Lemma~\ref{lproblinearity}), $E[T_n] = \sum_{k = 0}^{n-1} E[T_{n,k}] = n H_n$.  
\end{proof}

We proceed by trying to gain more information on $T_n$ than just the expectation. The tools discussed so far (and one to come in a later section) lead to the following results.
\begin{itemize}
	\item Markov's inequality (Lemma~\ref{lprobmarkov}) gives $\Pr[T_n \ge \lambda n H_n] \le \frac 1 {\lambda}$ for all $\lambda \ge 1$.
	\item Chebyshev's inequality (Lemma~\ref{lprobchebyshev}) can be used to prove $\Pr[|T_n - n H_n| \ge \eps n] \le \frac{\pi^2}{6\eps^2}$ for all $\eps \ge \frac{6}{\pi^2} \approx 0.6079$. This builds on the fact (implicit in the proof above) that the coupon collector time is the sum of independent geometric random variables $T_n = \sum_{k=0}^{n-1} \Geom(\frac{n-k}{n})$. Hence the variance is $\Var[T_n] = \frac{\pi^2 n^2}{6}$. 
	\item Again exploiting $T_n = \sum_{k=0}^{n-1} \Geom(\frac{n-k}{n})$, Witt's Chernoff bound for geometric random variables (Theorem~\ref{tprobchernoffgeomwitt}) gives 
	\begin{align*}
	&\Pr[T_n \ge E[T_n] + \eps n] \le \begin{cases} \exp(- \tfrac{3\eps^2}{\pi^2}) &\mbox{if } \eps \le \tfrac{\pi^2}{6} \\
\exp(-\tfrac{\eps}{4}) & \mbox{if } \eps > \tfrac{\pi^2}{6} \end{cases} 	\\
	&\Pr[T_n \le E[T_n] - \eps n] \le \exp(- \tfrac{3 \eps^2}{\pi^2})
  \end{align*}
  for all $\eps \ge 0$. See~\cite{Witt14} for the details.
\end{itemize}

Interestingly, asymptotically stronger tail bounds for $T_n$ can be derived by fairly elementary means. The key idea is to not regard how the number of coupons increases over time, but instead to regard the event that we miss a particular coupon for some period of time. Note that the probability that a particular coupon is not obtained for $t$ rounds is $(1 - \frac 1n)^t$. By a union bound\index{Union Bound} argument (see Lemma~\ref{lprobunionbound}), the probability that there is a coupon not obtained within $t$ rounds, and equivalently, that $T_n > t$, satisfies \[\Pr[T_n > t] \le n (1 - \tfrac 1n)^t.\] Using the simple estimate of Lemma~\ref{lprobelower}, we obtain the following (equivalent) bounds.

\begin{theorem}[Coupon collector, upper tail]\label{tprobcouponU}
  For all $\eps \ge 0$, 
  \begin{align}
  &\Pr[T_n \ge (1+\eps)n \ln n] \le n^{-\varepsilon},\\
  &\Pr[T_n \ge n \ln n + \eps n] \le \exp(-\varepsilon).
  \end{align}
\end{theorem}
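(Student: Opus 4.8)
The plan is to follow the reframing suggested in the paragraph preceding the statement: rather than tracking how the number of distinct coupons grows, I would fix a single coupon type and bound the probability that it stays \emph{missing} for a while, and then take a union bound over the $n$ types.

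First I would record the basic tail estimate. Fix a type $i \in [1..n]$. In each round the coupon obtained has type $i$ with probability $\tfrac 1n$ and a type different from $i$ with probability $1-\tfrac 1n$, and the rounds are independent, so the probability that type $i$ is absent from the first $t$ rounds is exactly $(1-\tfrac 1n)^t$. Since $\{T_n > t\}$ is precisely the event that after $t$ rounds some type is still missing, it is contained in $\bigcup_{i=1}^n \{\text{type $i$ missing from rounds }1,\dots,t\}$; hence Lemma~\ref{lprobunionbound} gives $\Pr[T_n > t] \le n(1-\tfrac 1n)^t$ for every $t \in \N_0$. No independence among these $n$ events is needed — that is exactly what makes the union bound convenient here. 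Applying Lemma~\ref{lprobelower} with $x = -\tfrac 1n$ turns this into $\Pr[T_n > t] \le n e^{-t/n}$.

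Next I would substitute the two thresholds. For the second inequality, take $t = n\ln n + \eps n$: then $n e^{-t/n} = n\exp(-\ln n - \eps) = e^{-\eps}$. For the first, take $t = (1+\eps)n\ln n$: then $n e^{-t/n} = n\exp(-(1+\eps)\ln n) = n^{-\eps}$. In fact the two statements are equivalent, since $(1+\eps)n\ln n = n\ln n + (\eps\ln n)n$ and $n^{-\eps} = e^{-\eps\ln n}$ (for $n \ge 2$), so it suffices to prove either and rescale $\eps$; I would prove, say, the second and deduce the first. The only bookkeeping point is that the statement is phrased with ``$\ge$'' and a possibly non-integer threshold, whereas the union bound above is cleanest for $\{T_n > t\}$ with integer $t$; one passes between them using that $T_n$ is integer-valued, e.g.\ $\Pr[T_n \ge s] \le \Pr[T_n > \lceil s\rceil - 1]$, which is routine.

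There is no genuine obstacle here; if anything the ``hard part'' is psychological, namely noticing that such an elementary argument — union bound plus $1+x\le e^x$ — yields tail bounds for $T_n$ that are asymptotically sharper than what Markov, Chebyshev, or even the dedicated Chernoff bound for sums of geometric random variables deliver. The one real idea, already spelled out in the text, is to analyse the bad event per coupon instead of the aggregate count; once that viewpoint is taken, the rest is a two-line computation, and I would spend essentially no effort on the surrounding calculations.
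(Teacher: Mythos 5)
Your proposal is correct and is essentially identical to the paper's own argument: the paper proves this theorem in the paragraph immediately preceding it, via the same per-coupon union bound $\Pr[T_n > t] \le n(1-\tfrac 1n)^t$ followed by Lemma~\ref{lprobelower} and substitution of the two thresholds. The only difference is that you are slightly more explicit about the integrality/rounding of $t$, which the paper glosses over.
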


Surprisingly, prior to the following result from~\cite{Doerr11bookchapter}, no good lower bound for the coupon collecting time was published. 

\begin{theorem}[Coupon collector, lower tail]\label{tprobcouponL}\index{Coupon Collector}
  For all $\varepsilon \ge 0$, 
  \begin{align}
  &\Pr[T_n \le (1 - \varepsilon) (n-1) \ln n] \le \exp(-n^\varepsilon),\\
  &\Pr[T_n \le (n-1) \ln n - \eps(n-1)] \le \exp(-e^\eps).
  \end{align}
\end{theorem}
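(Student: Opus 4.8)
The plan is to mimic the idea behind the upper‑tail bound (Theorem~\ref{tprobcouponU}): instead of tracking how many coupons have been seen, I track, for each individual coupon $i$, the event $\overline{A_i}$ that coupon $i$ has appeared among the first $t$ rounds. By exactly the computation used for the upper tail, $\Pr[\overline{A_i}] = 1-(1-\tfrac1n)^t$, and clearly $\{T_n \le t\} = \bigcap_{i=1}^n \overline{A_i}$. The whole proof then rests on the single inequality
\[
\Pr\Big[\bigcap_{i=1}^n \overline{A_i}\Big]\ \le\ \prod_{i=1}^n \Pr[\overline{A_i}]\ =\ \Big(1-(1-\tfrac1n)^t\Big)^n,
\]
i.e.\ that collecting \emph{all} coupons within $t$ rounds is no more likely than it would be if the events ``coupon $i$ collected'' were independent. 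Granting this, the two claimed bounds are one substitution away.

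To prove the product bound I would expand $\Pr[\bigcap_i \overline{A_i}]$ by the chain rule as $\prod_{i=1}^n \Pr[\overline{A_i}\mid \overline{A_1},\dots,\overline{A_{i-1}}]$ and show each factor is at most $\Pr[\overline{A_i}]$; equivalently, $\Pr[A_i\mid \overline{A_1},\dots,\overline{A_{i-1}}]\ge \Pr[A_i]$, i.e.\ learning that coupons $1,\dots,i-1$ have \emph{all} been collected can only make coupon $i$ \emph{more} likely to still be missing. In the balls‑into‑bins picture ($t$ balls, $n$ bins), I would condition on the number $m$ of balls landing in bins $\{i,\dots,n\}$: given $m$, bin $i$ is empty with probability $\big(\tfrac{n-i}{n-i+1}\big)^m$, a non‑increasing function of $m$; and conditioning on bins $1,\dots,i-1$ all being non‑empty reweights the law of $m$ by the probability that the remaining $t-m$ balls cover bins $1,\dots,i-1$, which is non‑increasing in $m$, hence pushes $m$ down in the likelihood‑ratio order and therefore stochastically. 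Averaging a non‑increasing function of $m$ against a stochastically smaller $m$ can only raise its value, which is exactly the desired inequality. (Alternatively one may just invoke the known negative association of occupancy indicators, which gives the product bound at once.) This negative‑dependence step is the one point of real substance; everything else is bookkeeping, and I expect it to be the main obstacle to a fully self‑contained write‑up.

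Finally I would combine $1-x\le e^{-x}$ (Lemma~\ref{lprobelower}) with $(1-\tfrac1n)^{n-1}\ge \tfrac1e$, i.e.\ $(1-\tfrac1n)^t\ge e^{-t/(n-1)}$ (Corollary~\ref{corprobesbm}), to turn the product bound into the clean estimate
\[
\Pr[T_n\le t]\ \le\ \Big(1-(1-\tfrac1n)^t\Big)^n\ \le\ \exp\!\big(-n(1-\tfrac1n)^t\big)\ \le\ \exp\!\big(-n\,e^{-t/(n-1)}\big),
\]
valid for every real $t\ge 0$ (since $T_n$ is integer‑valued one may replace $t$ by $\lfloor t\rfloor$, which only makes the exponent more negative). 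Substituting $t=(1-\eps)(n-1)\ln n$ gives $n\,e^{-t/(n-1)}=n\cdot n^{-(1-\eps)}=n^{\eps}$, hence the first claimed bound; substituting $t=(n-1)\ln n-\eps(n-1)=(n-1)(\ln n-\eps)$ gives $n\,e^{-t/(n-1)}=n\cdot e^{\eps}/n=e^{\eps}$, hence the second. As in Theorem~\ref{tprobcouponU}, the two displayed inequalities in the statement are merely two parametrisations of this single estimate.
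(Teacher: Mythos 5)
Your proposal is correct and follows essentially the same route as the paper: both reduce the claim to the product bound $\Pr[\bigcap_{i}\overline{A_i}]\le\prod_{i}\Pr[\overline{A_i}]$ (i.e.\ the $1$-negative correlation of the ``coupon $i$ collected by time $t$'' indicators) and then apply the identical elementary estimates $(1-\tfrac1n)^t\ge e^{-t/(n-1)}$ and $1-x\le e^{-x}$. The only difference is that the paper obtains the negative correlation by invoking Lemma~\ref{lprobmaxhyper}, whereas you prove it directly via the chain rule and a stochastic-monotonicity argument on the number of balls falling into the relevant bins, which is a valid self-contained substitute for that citation.
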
  

Theorem~\ref{tprobcouponL} was proven in~\cite{Doerr11bookchapter} by showing that the events of having a coupon after a certain time are $1$-negatively correlated. The following proof defers this work to Lemma~\ref{lprobmaxhyper}.

\begin{proof}
  Let $t = (1 - \varepsilon) (n-1) \ln n$. For $i \in [1..n]$, let $X_i$ be the indicator random variable for the event that a coupon of type $i$ is obtained within the first $t$ rounds. Then $\Pr[X_i = 1] = 1 - (1 - \frac 1n)^t \le 1 - \exp(-(1-\varepsilon)\ln n) = 1 - n^{-1+\varepsilon}$, where the estimate follows from Corollary~\ref{corprobesbm}.
  
  Since in the coupon collector process in each round $j$ we choose a random set $S_j$ of cardinality $1$, by Lemma~\ref{lprobmaxhyper} the $X_i$ are $1$-negatively correlated. Consequently,
  \begin{align*}
    \Pr[T_n \le (1 - \varepsilon) (n-1) \ln n] &= \Pr[\forall i \in [1..n] : X_i = 1] \\
    &\le \prod_{i = 1}^n \Pr[X_i = 1]\\
    &\le (1 - n^{-1+\varepsilon})^n \le \exp(-n^\varepsilon)
  \end{align*}
by Lemma~\ref{lprobelower}.  
\end{proof}

We may remark that good mathematical understanding of the coupon collector process not only is important because such processes directly show up in some randomized algorithms, but also because it might give us the right intuitive understanding of other processes. Consider, for example, a run of the \oea on some pseudo-Boolean function $f : \{0,1\}^n \to \R$ with a unique global maximum. 

The following intuitive consideration leads us to believe that the \oea with high probability needs at least roughly $n \ln \tfrac n2$ iterations to find the optimum of~$f$: By the strong concentration of the binomial distribution, the initial search point differs in at least roughly $\tfrac n2$ bits from the global optimum. To find the global optimum, it is necessary (but clearly not sufficient) that each of these missing bits is flipped at least once in some mutation step. Now that the \oea in average flips one bit per iteration, this looks like a coupon collector process started with an initial stake of $\tfrac n2$ coupons, so we expect to need at least roughly $n \ln \tfrac n2$ iterations to perform the $n \ln \tfrac n2$ bit-flips necessary to have each missing bit flipped at least once. Clearly, this argumentation is not rigorous, but it suggested to us the right answer. 

\begin{theorem}\label{tproboealower}
  The optimization time $T$ of the \oea on any function  $f : \{0,1\}^n \to \R$ with unique global maximum satisfies \[\Pr[T \le (1-\eps)(n-1)\ln \tfrac n2] \le \exp(-n^{\eps}).\]
\end{theorem}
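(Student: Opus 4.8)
The plan is to extract from ``the \oea finds the optimum quickly'' a \emph{necessary} condition that is entirely oblivious to which offspring are accepted or rejected -- and hence holds for every $f$ with unique optimum -- and then to bound the probability of this condition by a short independence argument. Since the \oea is unbiased, assume without loss of generality that the unique optimum is $(1,\dots,1)$, and put $t := (1-\eps)(n-1)\ln\tfrac n2$ (which we may take to be an integer). The key observation is: if the \oea generates the optimum within its first $t$ iterations, then every bit that is $0$ in the random initial search point must be flipped in at least one of these $t$ iterations, because a bit that is never touched retains its initial value regardless of the selection behaviour of the algorithm and regardless of $f$. This is exactly the ``coupon collector with an initial $\Bin(n,\tfrac12)$ stake'' picture sketched before the theorem, the $\ln\tfrac n2$ accounting for the expected $\tfrac n2$ missing coupons.

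To make this quantitative, for $i\in[1..n]$ let $A_i$ denote the event ``bit $i$ is $1$ in the initial search point, or bit $i$ is flipped in at least one of the first $t$ iterations''. By the observation, $\{T\le t\}\subseteq\bigcap_{i=1}^n A_i$. The events $A_1,\dots,A_n$ are independent: $A_i$ depends only on the initial value of bit $i$ and on the per-iteration decisions to flip bit $i$, and the initialization and all mutation masks form one big product measure, so these ``blocks'' of randomness are mutually independent over $i$. Moreover $\Pr[\overline{A_i}] = \tfrac12(1-\tfrac1n)^{t}$, so, using $1+x\le e^x$ (Lemma~\ref{lprobelower}),
\[
\Pr[T\le t]\;\le\;\prod_{i=1}^n\Pr[A_i]\;=\;\Bigl(1-\tfrac12(1-\tfrac1n)^{t}\Bigr)^{n}\;\le\;\exp\!\Bigl(-\tfrac n2\,(1-\tfrac1n)^{t}\Bigr).
\]
It then remains to substitute $t=(1-\eps)(n-1)\ln\tfrac n2$ and to invoke $(1-\tfrac1n)^{n-1}\ge\tfrac1e$ from Corollary~\ref{corprobesbm}, which gives $(1-\tfrac1n)^{t}\ge e^{-t/(n-1)}=(\tfrac n2)^{-(1-\eps)}$ and hence $\tfrac n2(1-\tfrac1n)^t\ge(\tfrac n2)^{\eps}$, yielding the claimed tail bound.

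The whole argument is short, and there is no serious technical hurdle once the setup is right; the only point that needs care is the bookkeeping behind the independence claim -- one must keep the randomness of the uniform initialization cleanly separated from that of the mutation steps so that the $A_i$ are genuinely independent -- together with the easy but worth-stating fact that the implication ``fast success $\Rightarrow$ every initial $0$-bit is flipped at least once'' uses no information about which offspring were accepted and is unaffected by several bits flipping in one iteration. This robustness is precisely what makes the bound hold uniformly over all $f$ with unique optimum. Everything after that -- the two elementary estimates from Lemma~\ref{lprobelower} and Corollary~\ref{corprobesbm} -- is routine.
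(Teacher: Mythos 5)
Your proof is correct and takes essentially the same route as the paper's: reduce by symmetry to the all-ones optimum, observe that reaching it within $t$ iterations forces every initially-zero bit to be flipped at least once, exploit the independence of these per-bit events across $i$, and finish with $1+x\le e^x$ and $(1-\tfrac1n)^{n-1}\ge\tfrac1e$. One small remark: your last step honestly yields $\exp\bigl(-(\tfrac n2)^{\eps}\bigr)$ rather than the stated $\exp(-n^{\eps})$, since $\tfrac n2(1-\tfrac1n)^t\ge(\tfrac n2)^{\eps}$ and not $n^\eps$; the paper's own computation carries exactly the same $2^{-\eps}$ slack (it bounds $1-\tfrac12(1-\tfrac1n)^t$ by $1-n^{-1+\eps}$), so this is a cosmetic discrepancy inherited from the statement rather than a flaw specific to your argument.
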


\begin{proof}
  By symmetry, we may assume that the unique global optimum of $f$ is $(1,\dots,1)$. Let $t = (1-\eps)(n-1)\ln \tfrac n2$. For all $i \in [1..n]$, let $Y_i$ denote the event that the $i$-th bit was zero in the initial search point and was not flipped in any application of the mutation operator in the first $t$ iterations. Let $X_i = 1 - Y_i$. Then $\Pr[X_i=1] = 1 - \frac 12 (1-\frac 1n)^t \le 1 - n^{-1+\eps}$. The events $X_i$ are independent, so we compute
  \begin{align*}
  \Pr[T \le t] &\le \Pr[\forall i \in [1..n] : X_i = 1]\\
  & = \prod_{i=1}^n \Pr[X_i = 1]\\
  & = (1 - n^{-1+\eps})^n = \exp(-n^\eps).
  \end{align*} 
\end{proof}

We stated the above theorem to give a simple example how understanding the coupon collector process can help understanding also randomized search heuristics that do not directly simulate a coupon collecting process. We remark that the theorem above is not best possible, in particular, it does not outrule an expected optimization time of $n \ln \tfrac n2$. In contrast it is known that the optimization time of the \oea on the \onemax function is $E[T] \ge e n \ln n - O(n)$, see~\cite{DoerrFW11}, improving over the minimally weaker bound $E[T] \ge en \ln n - O(n \log\log n)$ from, independently, \cite{DoerrFW10} and \cite{Sudholt13}. By Theorem~\ref{tprobdom}, this lower bound holds for the performance of the \oea on any function $f : \{0,1\}^n \to \R$ with unique optimum.

\section{Large Deviation Bounds}\label{secproblargedev}

Often, we are not only interested in the expectation\index{expectation} of some random variable, but we need a bound that holds with high probability. We have seen in the proof of Theorem~\ref{tprobneedle2} that such high-probability statements can be very useful: If a certain bad event occurs in each iteration with a very small probability only, then a simple union bound is enough to argue that this event is unlikely to occur even over a large number of iterations. The better the original high-probability statement is, the more iterations we can cover. For this reason, the tools discussed in this chapter are among the most employed in the theory of randomized search heuristics. 

Since computing the expectation often is easy, a very common approach is to first compute the expectation of a random variable and then bound the probability that the random variable deviates from this expectation by a too large amount. The tools for this second step are called \emph{tail inequalities} or \emph{large deviation inequalities}, and this is the topic of this section. In a sense, Markov's and Chebyshev's inequality discussed in Section~\ref{secprobexp} can be seen as large deviation inequalities as well, but usually the term is reserved for exponential tail bounds.

A large number of large deviation bounds have been developed in the past. They differ in the situations they are applicable to, but also in their sharpness. Often, the sharpest bounds give expressions for the tail probability that are very difficult to work with. Hence some experience is needed to choose a tail bound that is not overly complicated, but sharp enough to give the desired result. 

To give the novice to this topic some orientation, here is a short list of results that are particularly useful and which are sufficient in many situations.
\begin{enumerate}
	\item The simple multiplicative Chernoff bounds~\eqref{eqprobCMUeasy} and~\eqref{eqprobCMLeasy} showing that for sums of independent $[0,1]$ random variables, a constant factor deviation from the expectation occurs only with probability negatively exponential in the expectation.
	\item The additive Chernoff bound of Theorem~\ref{tprobchernoffadditive01} showing that a sum of $n$~independent $[0,1]$ random variables deviates from the expectation by more than an additive term of $\lambda$ only with probability $\exp(-\frac{2\lambda^2}{n})$.
	\item The fact that essentially all large deviation bounds can be used also with a pessimistic estimate for the expectation instead of the precise expectation (Section~\ref{secprobEE}).
	\item The method of bounded differences (Theorem~\ref{tprobboundeddiff}), which states that the additive Chernoff bounds remain valid if $X$ is functionally dependent of independent random variables each having a small influence on $X$.
\end{enumerate}

For the experienced reader, the following results may be interesting as they go beyond what most introductions to tail bounds cover.
\begin{enumerate}
	\item In Section~\ref{secprobnegcor} we show that essentially all large deviation bounds usually stated for sums of independent random variables are also valid for negatively correlated random variables. An important application of this result are distributions arising from sampling without replacement or with partial replacement.
  \item In Section~\ref{secprobcgeom}, we present a number of large deviation bounds for sums of independent geometrically distributed random variables. These seem to be particularly useful in the analysis of randomized search heuristics, whereas they are rarely used in classic randomized algorithms.
  \item In Theorem~\ref{tprobboundedexp}, we present a version of the bounded differences method which only requires that the $t$-th random variable has a bounded influence on the \emph{expected} outcome resulting from variables $t+1$ to $n$. This is much weaker than the common bounded differences assumption that each random variable, regardless how we condition on the remaining variables, has a bounded influence on the result. We feel that this new version (which is an easy consequence of known results) may be very useful in the analysis of iterative improvement heuristics. In particular, it may lead to elementary proofs for results which so far could only be proven via tail bounds for martingales.
\end{enumerate}

\subsection{Chernoff Bounds for Sums of Independent Bounded Random Variables}\label{secprobchernoffindependent}

In this longer subsection, we assume that our random variable of interest is the sum of $n$~independent random variables, each taking values in some bounded range, often $[0,1]$. While some textbooks present these bounds for discrete random variables, e.g., taking the values $0$ and $1$ only, all the results are true without this restriction.

The bounds presented below are all known under names like \emph{Chernoff} or \emph{Hoeffding} bounds, referring to the seminal papers by Chernoff~\cite{Chernoff52} and Hoeffding~\cite{Hoeffding63}. Since the first bounds of this type were proven by Bernstein~\cite{Bernstein24}---via the so-called exponential moments method that is used in essentially all proofs of such results, see Section~\ref{secprobcproof}---the name Bernstein inequalities would be more appropriate. We shall not be that precise and instead use the most common name \emph{Chernoff inequalities} for all such bounds. 

For the readers' convenience, as in the remainder of this chapter, we shall not be shy to write out also minor reformulations of some results. We believe that it helps a lot to have seen them and we think that it is convenient, both for using the bounds or for referring to them, if all natural version are visible in the text.

\subsubsection{Multiplicative Chernoff Bounds for the Upper Tail}

The multiplicative Chernoff bounds presented in this and the next section bound the probability to deviate from the expectation by at least a given factor. Since in many algorithmic analyses we are interested only in the asymptotic order of magnitude of some quantity, a constant-factor deviation can be easily tolerated and knowing that larger deviations are very unlikely is just what we want to know. For this reason, the multiplicative Chernoff bounds are often the right tool. 

The following theorem collects a number of bounds for the upper tail, that is, for deviations above the expectation.

\begin{theorem}\label{tprobCMU}
  Let $X_1, \ldots, X_n$ be independent random variables taking values in $[0,1]$. Let $X = \sum_{i = 1}^n X_i$. Let $\delta \ge 0$. Then 
  \begin{align}
  \Pr[X \ge (&1+\delta) E[X]] \nonumber\\
  &\le \bigg(\frac{1}{1+\delta}\bigg)^{(1+\delta)E[X]} \bigg(\frac{n-E[X]}{n-(1+\delta)E[X]}\bigg)^{n-(1+\delta)E[X]}\label{eqprobCMUstrongest}\\ 
  &\le \bigg(\frac{e^\delta}{(1+\delta)^{1+\delta}}\bigg)^{E[X]} = \exp(-((1+\delta)\ln(1+\delta)-\delta)E[X])\label{eqprobCMUstrong}\\ 
  &\le \exp\bigg(-\frac{\delta^2 E[X]}{2 + \frac 23 \delta}\bigg)\label{eqprobCMUlin1}\\
  &\le \exp\bigg(-\frac{\min\{\delta^2,\delta\} E[X]}{3}\bigg),\label{eqprobCMUlin2}
  \end{align}
  where the bound in~\eqref{eqprobCMUstrongest} is read as $0$ for $\delta > \frac{n-E[X]}{E[X]}$ and as $(\frac{E[X]}{n})^n$ for $\delta = \frac{n-E[X]}{E[X]}$. For $\delta \le 1$, equation~\eqref{eqprobCMUlin2} simplifies to 
  \begin{equation}
  \Pr[X \ge (1+\delta) E[X]] \le \exp\bigg(-\frac{\delta^2 E[X]}{3}\bigg).\label{eqprobCMUeasy}
  \end{equation}
\end{theorem}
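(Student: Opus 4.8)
The plan is to use the exponential moments method (Chernoff's method). For any $t>0$, Markov's inequality applied to $e^{tX}$ together with the independence of the $X_i$ gives
\[\Pr[X \ge (1+\delta)E[X]] = \Pr\big[e^{tX} \ge e^{t(1+\delta)E[X]}\big] \le e^{-t(1+\delta)E[X]} \prod_{i=1}^n E[e^{tX_i}].\]
The one elementary input is that on $[0,1]$ the convex function $x \mapsto e^{tx}$ lies below its chord through $(0,1)$ and $(1,e^t)$, so $e^{tX_i} \le 1 + X_i(e^t-1)$ pointwise and hence $E[e^{tX_i}] \le 1 + p_i(e^t-1)$, where I write $p_i := E[X_i]$ and $\mu := E[X] = \sum_{i=1}^n p_i$.

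For the strongest bound~\eqref{eqprobCMUstrongest} I would \emph{not} yet invoke $1+x\le e^x$, but instead control the product $\prod_i (1+p_i(e^t-1))$ by the AM--GM inequality, getting $\prod_i (1+p_i(e^t-1)) \le \big(1 + \tfrac{\mu}{n}(e^t-1)\big)^n$. This leaves $\Pr[X \ge (1+\delta)\mu] \le e^{-t(1+\delta)\mu}\big(1 + \tfrac{\mu}{n}(e^t-1)\big)^n$, and a one-variable minimization over $s := e^t$ (the stationarity condition solves to $s = \frac{(1+\delta)(n-\mu)}{n-(1+\delta)\mu}$, which is $\ge 1$ precisely because $\delta \ge 0$) yields, after simplification, exactly the right-hand side of~\eqref{eqprobCMUstrongest}. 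The degenerate cases are covered by the stated reading: if $(1+\delta)\mu > n$ the probability is $0$ since $X \le n$, while if $(1+\delta)\mu = n$ one has $\Pr[X=n] = \prod_i \Pr[X_i=1] \le \prod_i p_i \le (\mu/n)^n$, again by AM--GM and the fact that $\Pr[X_i=1] \le E[X_i]$ for an $X_i$ taking values in $[0,1]$.

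From~\eqref{eqprobCMUstrongest} the remaining inequalities are pure calculus. Writing $b := n-(1+\delta)\mu$ we have $\tfrac{n-\mu}{b} = 1 + \tfrac{\delta\mu}{b}$, so $\big(\tfrac{n-\mu}{b}\big)^{b} = \big(1+\tfrac{\delta\mu}{b}\big)^{b} \le e^{\delta\mu}$ by Lemma~\ref{lprobelower}, which turns~\eqref{eqprobCMUstrongest} into~\eqref{eqprobCMUstrong}; the displayed identity $(1+\delta)^{-(1+\delta)}e^{\delta} = \exp(-((1+\delta)\ln(1+\delta)-\delta))$ is immediate. Then~\eqref{eqprobCMUlin1} reduces to the scalar inequality $(1+\delta)\ln(1+\delta)-\delta \ge \frac{\delta^2}{2+\frac{2}{3}\delta}$ for all $\delta \ge 0$, and~\eqref{eqprobCMUlin2} follows from $\frac{\delta^2}{2+\frac{2}{3}\delta} \ge \frac{1}{3}\min\{\delta^2,\delta\}$, which is checked separately for $\delta \le 1$ (where the bound equals $\frac13\delta^2$, also giving~\eqref{eqprobCMUeasy}) and for $\delta \ge 1$. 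As an aside, one can instead obtain~\eqref{eqprobCMUstrong} directly by applying $1+x\le e^x$ before AM--GM and optimizing at $t=\ln(1+\delta)$, and only derive~\eqref{eqprobCMUstrongest} by retaining the AM--GM step; either order works.

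I expect the two routine-but-fiddly analytic steps to be the real work: carrying out the optimization that produces the exact form~\eqref{eqprobCMUstrongest} while correctly treating the boundary $\delta = (n-\mu)/\mu$, and verifying the transcendental inequality $(1+\delta)\ln(1+\delta)-\delta \ge \delta^2/(2+\frac{2}{3}\delta)$. Both are best handled by reducing to a single-variable function, matching its value and first derivative at $\delta = 0$, and then controlling the sign of a remaining derivative.
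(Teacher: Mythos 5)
Your proposal is correct and follows essentially the same route as the paper's own sketch in the section on proofs of Chernoff bounds: Bernstein's exponential moments method, the convexity/chord bound $E[e^{tX_i}] \le 1 + E[X_i](e^t-1)$, the AM--GM step to reduce to the case of equal means, and the optimization at $e^t = \frac{(1+\delta)(n-E[X])}{n-(1+\delta)E[X]}$, followed by the same elementary estimates (the Weierstrass-type bound for~\eqref{eqprobCMUstrong} and the scalar inequality $(1+\delta)\ln(1+\delta)-\delta \ge \frac{3\delta^2}{6+2\delta}$ for~\eqref{eqprobCMUlin1}). Your treatment of the boundary cases $(1+\delta)E[X] \ge n$ is a small but welcome addition of detail that the paper leaves implicit.
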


\begin{figure}
\begin{tikzpicture}
\newcommand{\probCMUxmax}{2.5}
\begin{axis}[
    axis lines = left,
    xlabel = $x$,
    ylabel = {$f(x)$},
    ymin = 0,
    legend style={
      cells={anchor=east},
      legend pos=outer north east,},
    ]

\addplot [
    domain=0.1:\probCMUxmax, 
    samples=100, 
    color=black,
]
{(1+x)*ln(1+x)-x};
\addlegendentry[right]{$f(x) = (1+x) \ln(1+x) - x$}
 
\addplot [densely dotted,
    domain=0.1:\probCMUxmax, 
    samples=100, 
    color=black,
    ]
    {x^2 / (2 + 2*x/3)};
\addlegendentry[right]{$f(x) = \frac{x^2}{2+2x/3}$}

\addplot [dotted,
    domain=0.1:\probCMUxmax, 
    samples=100, 
    color=black,
    ]
    {min(x^2,x) / 3};
\addlegendentry[right]{$f(x) =  \min\{x^2,x\} / 3$}
 
%

 
\end{axis}
\end{tikzpicture}
\caption{Visual comparison of the bounds~\eqref{eqprobCMUstrong},~\eqref{eqprobCMUlin1},~\eqref{eqprobCMUlin2}. Depicted is the term $f(x)$ leading to the bound $\Pr[X \ge (1+x) E[X]] \le \exp(-f(x) E[X])$.}\label{figprobCMU}
\end{figure}

The first and strongest bound~\eqref{eqprobCMUstrongest} was first stated explicitly by Hoeffding~\cite{Hoeffding63}. It improves over Chernoff's~\cite{Chernoff52} tail bounds in particular by not requiring that the $X_i$ are identically distributed. Hoeffding also shows that~\eqref{eqprobCMUstrongest} is the best bound that can be shown via the exponential moments methods under the assumptions of Theorem~\ref{tprobCMU}. 

For $E[X]$ small, say $E[X] = o(n)$ when taking a view asymptotic in~$n \to \infty$, the second bound~\eqref{eqprobCMUstrong} is easier to use, but essentially as strong as~\eqref{eqprobCMUstrongest}. More precisely, it is larger only by a factor of $(1+o(1))^{E[X]}$, since we estimated 
\begin{equation}
\bigg(\frac{n-E[X]}{n-(1+\delta)E[X]}\bigg)^{n-(1+\delta)E[X]} = \bigg(1 + \frac{\delta E[X]}{n-(1+\delta)E[X]}\bigg)^{n-(1+\delta)E[X]} \le e^{\delta E[X]} \label{eqprobeweg}
\end{equation}
using Lemma~\ref{lprobelower}.

Equation~\eqref{eqprobCMUlin1} is derived from~\eqref{eqprobCMUstrong} by noting that $(1+\delta)\ln(1+\delta)-\delta \ge \frac{3\delta^2}{6+2\delta}$ holds for all $\delta \ge 0$, see Theorem 2.3 and Lemma~2.4 in McDiarmid~\cite{McDiarmid98}. Equations~\eqref{eqprobCMUlin2} and~\eqref{eqprobCMUeasy} are trivial simplifications of~\eqref{eqprobCMUlin1}.

In general, to successfully use Chernoff bound in one's research, it greatly helps to look a little behind the formulas and understand their meaning. Very roughly speaking, we can distinguish three different regimes relative to $\delta$, namely that the tail probability is of order $\exp(-\Theta(\delta \log(\delta) E[X]))$, $\exp(-\Theta(\delta E[X]))$, and $\exp(-\Theta(\delta^2 E[X]))$. Here, in principle, the middle regime, referring to the case of $\delta$ constant, could be seen as a subcase of either of the two other regimes. Since this case of constant-factor deviations from the expectation occurs very frequently, we discuss it separately.

\paragraph{Superexponential regime:} Equation~\eqref{eqprobCMUstrong} shows a tail bound of order $\delta^{-\Theta(\delta E[X])} = \exp(-\Theta(\delta \log(\delta) E[X]))$, where the asymptotics are for $\delta \to \infty$. In this regime, the deviation $\delta E[X]$ from the expectation $E[X]$ is much larger than the expectation itself. It is not very often that we need to analyze such large deviations, so this Chernoff bound is rarely used. It can be useful in the analysis of evolutionary algorithms with larger offspring populations, where the most extreme behavior among the offspring can deviate significantly from the expected behavior. See~\cite{DoerrK15,DoerrD18} for examples how to use Chernoff bounds for large deviations occurring in the analysis of the \oplea. Note that in~\cite{JansenJW05}, the first theoretical work on the \oplea, and in~\cite{GiessenW17} such Chernoff bounds could have been used as well, but the authors found it easier to directly estimate the tail probability by estimating binomial coefficients.

Weaker forms of~\eqref{eqprobCMUstrong} are
  \begin{align}
  \Pr[X \ge (1+\delta) E[X]] &\le \bigg(\frac{e}{(1+\delta)}\bigg)^{(1+\delta)E[X]}\label{eqprobCMUstrongA},\\ 
  \Pr[X \ge (1+\delta) E[X]] &\le \bigg(\frac{e}{\delta}\bigg)^{\delta E[X]}\label{eqprobCMUstrongB},
  \end{align}
where the first one is stronger for those values of $\delta$ where the tail probability is less than one (that is, $\delta > e -1$).

It is not totally obvious how to find a value for $\delta$ ensuring that $(\frac{e}{\delta})^{\delta}$ is less than a desired bound. The following lemma solves this problem.

\begin{lemma}\label{lprobsuperexp}
  Let $t \ge e^{e^{1/e}} \approx 4.24044349...$ . Let $\delta = \frac{\ln t}{\ln(\frac{\ln t}{e \ln\ln t})}$. Then $(\frac{e}{\delta})^\delta \le \frac 1t$.
\end{lemma}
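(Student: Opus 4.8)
The plan is to verify the claimed inequality $(e/\delta)^\delta \le 1/t$ directly by taking logarithms: it is equivalent to $\delta\bigl(\ln\delta - 1\bigr) \ge \ln t$, i.e.\ to showing $\delta \ln \delta \ge \ln t + \delta$. So I would substitute the prescribed value $\delta = \frac{\ln t}{\ln\bigl(\frac{\ln t}{e\ln\ln t}\bigr)}$ and estimate $\delta\ln\delta$ from below. Write $L := \ln t$ and $D := \ln\bigl(\frac{L}{e\ln L}\bigr) = L' - 1 - \ln L'$ where $L' := \ln L$; note $\delta = L/D$. The condition $t \ge e^{e^{1/e}}$ is exactly what guarantees $L \ge e^{1/e}$, hence $L' = \ln L \ge 1/e$, and in fact a little more room: one should check that $D > 0$ in this range (equivalently $L > e\ln L$, which holds for $L \ge e^{1/e}$ since $L = e\ln L$ has its relevant crossing below this threshold), so that $\delta$ is well-defined and positive.

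Next I would compute $\ln\delta = \ln L - \ln D = L' - \ln D$. The target inequality $\delta\ln\delta \ge L + \delta$ becomes, after multiplying by $D/L$ (positive),
\[
L' - \ln D \;\ge\; D + 1,
\]
i.e.\ $L' - 1 \ge D + \ln D$. Since by definition $D = L' - 1 - \ln L'$, this is
\[
L' - 1 \;\ge\; (L' - 1 - \ln L') + \ln D,
\]
that is, $\ln L' \ge \ln D$, i.e.\ $D \le L'$. But $D = L' - 1 - \ln L' \le L'$ is immediate because $1 + \ln L' \ge 0$, which holds precisely when $L' \ge 1/e$ — exactly the hypothesis $t \ge e^{e^{1/e}}$. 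So the whole statement reduces to the elementary fact $1 + \ln x \ge 0$ for $x \ge 1/e$, applied to $x = \ln\ln t$.

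The only real care needed — and the step I expect to be the mild obstacle — is bookkeeping: making sure the chain of ``equivalent'' manipulations is genuinely reversible (each multiplication is by a positive quantity, so this is fine once $D > 0$ is established) and handling the boundary behavior cleanly. At $t = e^{e^{1/e}}$ one has $L' = 1/e$, hence $\ln L' = -1$, $D = L' - 1 - \ln L' = L' = 1/e$, and all the inequalities above hold with equality, so the bound is tight at the endpoint; this is a useful sanity check that the constant $e^{e^{1/e}}$ in the hypothesis is the right one and cannot be lowered. For larger $t$ the inequality $1 + \ln L' \ge 0$ is strict, giving $(e/\delta)^\delta < 1/t$ with room to spare. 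I would present the argument in the forward direction — define $L, L', D$, note $D \in (0, L']$ from the hypothesis, then simply verify $\delta(\ln\delta - 1) = \frac{L}{D}(L' - \ln D - 1) \ge \frac{L}{D}\cdot D = L = \ln t$ using $L' - 1 - \ln D \ge L' - 1 - \ln L' = D$ — wait, that last estimate needs $\ln D \le \ln L'$, i.e.\ $D \le L'$, which is what we have; so $L' - \ln D - 1 \ge L' - \ln L' - 1 = D$, completing the proof in two lines once the setup is in place.
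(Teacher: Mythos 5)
Your argument is correct and is in substance the same as the paper's one-line proof: both reduce the claim to the single inequality $\ln\bigl(\tfrac{\ln t}{e \ln\ln t}\bigr) \le \ln\ln t$ (your $D \le L'$), which is precisely the hypothesis $\ln\ln t \ge 1/e$, and your endpoint check confirming equality at $t = e^{e^{1/e}}$ is a nice addition. One small slip: your justification of $D>0$ via ``$L = e\ln L$ has its relevant crossing below $e^{1/e}$'' is inaccurate --- in fact $L - e\ln L \ge 0$ for all $L>0$ with a tangency (double zero) at $L = e$, so $D$ vanishes at $t = e^e \approx 15.15$, where $\delta$ itself is undefined; this is a blemish of the lemma's statement shared with the paper's proof rather than a flaw in your reasoning, and $D>0$ does hold for every other $t$ in the stated range.
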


\begin{proof}
  We compute $\delta \ln \frac{\delta}{e} = \delta \ln\big(\frac{\ln t}{e \ln(\frac{\ln t}{e \ln\ln t})}\big) \ge \delta \ln(\frac{\ln t}{e\ln\ln t}) = \ln t$.
\end{proof}

We use this estimate to bound the number of bits flipped in an application of the standard-bit mutation operator\index{standard-bit mutation} defined  in Lemma~\ref{lprobhamming}. 
By linearity of expectation, it is clear that the expected Hamming distance $H(x,y)$ of parent $x$ and offspring $y$ is $E[H(x,y)] = \alpha$ when the mutation rate is $\frac \alpha n$, see Lemma~\ref{lprobhamming}. Using Chernoff bounds, we now give an upper bound on how far we can exceed this value. Such arguments are often useful in the analysis of evolutionary algorithms, see, e.g., Lemma~26 in~\cite{DoerrK15} for an example. 

\begin{lemma}\label{lprobSBM}
  \begin{enumerate}
	  \item Let $x \in \{0,1\}^n$ and $y$ be obtained from $x$ via standard-bit mutation\index{standard-bit mutation} with mutation rate $\frac \alpha n$. Then $\Pr[H(x,y) \ge k] \le (\frac{e \alpha}{k})^k$.
	  \item Let $0 < p \le \exp(-\alpha \exp(\frac 1e))$. Let $k \ge k_p := \frac{\ln(1/p)}{\ln\left(\frac{\ln(1/p^{1/\alpha})}{e \ln\ln(1/p^{1/\alpha})}\right)}$. Then $\Pr[H(x,y) \ge k] \le p$. 
	  \item Let $T \in \N$ and $0 < p \le \frac 1T \exp(-\alpha \exp(\frac 1e))$. Let $y_1, \dots, y_T$ be obtained from $x_1, \dots, x_T$, respectively, via standard-bit mutation. Let $k \ge \frac{\ln(T/p)}{\ln\left(\frac{\ln((T/p)^{1/\alpha})}{e \ln\ln((T/p)^{1/\alpha})}\right)}$. Then \[\Pr[\exists i \in [1..T] : H(x_i,y_i) \ge k] \le p.\] 
  \end{enumerate}
\end{lemma}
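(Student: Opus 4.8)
The plan is to obtain all three statements from the multiplicative Chernoff bound~\eqref{eqprobCMUstrongA}, from Lemma~\ref{lprobsuperexp}, from the union bound (Lemma~\ref{lprobunionbound}), and from the trivial monotonicity of the map $k \mapsto \Pr[H(x,y) \ge k]$. For part~(a), recall from Lemma~\ref{lprobhamming} that $H(x,y)$ is the number of bits flipped, so $H(x,y) \sim \Bin(n,\tfrac\alpha n)$ and $E[H(x,y)] = \alpha$. If $k \le \alpha$ there is nothing to prove, since then $(e\alpha/k)^k \ge e^k \ge 1 \ge \Pr[H(x,y)\ge k]$. If $k > \alpha$, set $1+\delta := k/\alpha > 1$; then~\eqref{eqprobCMUstrongA} applied to $X = H(x,y)$ gives $\Pr[H(x,y)\ge k] = \Pr[H(x,y)\ge(1+\delta)E[H(x,y)]] \le \bigl(\tfrac{e}{1+\delta}\bigr)^{(1+\delta)E[H(x,y)]} = \bigl(\tfrac{e\alpha}{k}\bigr)^k$.

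For part~(b), the idea is to evaluate part~(a) at the threshold $k_p$ and to recognise this as exactly the instance of Lemma~\ref{lprobsuperexp} obtained by setting $t := (1/p)^{1/\alpha}$. With this $t$ one has $\ln t = \tfrac1\alpha\ln(1/p) = \ln(1/p^{1/\alpha})$, so $\delta := k_p/\alpha = \tfrac{\ln t}{\ln(\ln t/(e\ln\ln t))}$ is precisely the value chosen in Lemma~\ref{lprobsuperexp}, and the hypothesis $p \le \exp(-\alpha\exp(\tfrac1e))$ is equivalent to $t \ge e^{e^{1/e}}$. Hence Lemma~\ref{lprobsuperexp} yields $(e/\delta)^\delta \le 1/t$, and raising to the power $\alpha$ and using $k_p = \alpha\delta$ gives $(e\alpha/k_p)^{k_p} = \bigl((e/\delta)^\delta\bigr)^\alpha \le t^{-\alpha} = p$, so $\Pr[H(x,y)\ge k_p] \le p$ by part~(a). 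Since $\Pr[H(x,y)\ge k]$ is non-increasing in $k$, the same bound holds for every $k \ge k_p$.

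For part~(c), I would apply the union bound and reduce to part~(b) with the smaller failure probability $p' := p/T$. Each $H(x_i,y_i) \sim \Bin(n,\tfrac\alpha n)$, so $\Pr[\exists i \in [1..T] : H(x_i,y_i)\ge k] \le \sum_{i=1}^T \Pr[H(x_i,y_i)\ge k] = T\,\Pr[H(x_1,y_1)\ge k]$. The assumption $p \le \tfrac1T\exp(-\alpha\exp(\tfrac1e))$ implies $p' \le \exp(-\alpha\exp(\tfrac1e))$, so part~(b) applied with $p$ replaced by $p'$ gives $\Pr[H(x_1,y_1)\ge k] \le p'$ as soon as $k \ge k_{p'}$; substituting $p' = p/T$ into the formula defining $k_{p'}$ produces exactly the threshold stated in part~(c). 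Multiplying by $T$ then yields $\Pr[\exists i : H(x_i,y_i)\ge k] \le Tp' = p$.

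I expect no genuine difficulty here; the only thing requiring care is the algebraic bookkeeping matching the abstract parameter $\delta$ of Lemma~\ref{lprobsuperexp} with the explicit expressions for $k_p$ and $k_{p'}$ under the substitutions $t = (1/p)^{1/\alpha}$ and $t = (T/p)^{1/\alpha}$, and the check that the two hypotheses on $p$ translate precisely into the validity condition $t \ge e^{e^{1/e}}$ of that lemma. I would also be explicit that it is the monotonicity of $k \mapsto \Pr[H(x,y)\ge k]$ — not of the Chernoff bound $(e\alpha/k)^k$, which is only eventually decreasing — that licenses passing from $k = k_p$ to arbitrary $k \ge k_p$.
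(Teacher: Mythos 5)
Your proof is correct and follows essentially the same route as the paper: part~(a) via equation~\eqref{eqprobCMUstrongA} with $1+\delta = k/\alpha$, part~(b) via Lemma~\ref{lprobsuperexp} under the substitution $t = (1/p)^{1/\alpha}$ together with monotonicity of the tail, and part~(c) via the union bound with $p' = p/T$. Your explicit treatment of the boundary case $k \le \alpha$ and of the monotonicity step is slightly more careful than the paper's terse write-up, but the argument is identical in substance.
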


\begin{proof}
  Note that $H(x,y) \sim \Bin(n,\frac \alpha n)$, hence $H(x,y)$ can be written as a sum of $n$ independent random variables $X_1,\dots,X_n$ with $\Pr[X_i = 1] = \frac \alpha n$ and $\Pr[X_i=0] = 1 - \frac \alpha n$ for all $i \in [1..n]$. Since $E[H(x,y)]=\alpha$, we can apply equation~\eqref{eqprobCMUstrongA} with $(\delta+1) = \frac k \alpha$. This proves (a). 
  
  For part (b), we use part~(a) and Lemma~\ref{lprobsuperexp} and compute $\Pr[H(x,y) \ge k] \le \Pr[H(x,y) \ge k_p] \le ((\frac{e}{k_p/\alpha})^{k_p/\alpha})^\alpha \le (p^{1/\alpha})^\alpha = p$. Similarly, for (c) we obtain $\Pr[H(x_i,y_i) \ge k] \le \frac{p}{T}$ and use the union bound (Lemma~\ref{lprobunionbound}).
\end{proof}

Observe that the bounds in Lemma~\ref{lprobSBM} are independent of $n$. Also, the bounds in parts~(b) and~(c) depend only mildly on $\alpha$. By applying part (c) with $p=n^{-c_1}$ and $T=n^{c_2}$, we see that the probability that an evolutionary algorithm using standard-bit mutation with rate $\frac \alpha n$, where $\alpha$ is a constant, flips more than $(c_1+c_2+o(1))\frac{\ln n}{\ln\ln n}$ bits in any of the first $n^{c_2}$ applications of the mutation operator, is at most $n^{-c_1}$. 

We gave the results above to demonstrate the use of Chernoff bounds for sums of independent bounded random variables. Since the number of bits flipping in standard-bit mutation follows a binomial distribution, similar bounds can also (and by more elementary arguments) be obtained from analyzing the binomial distribution. See Lemma~\ref{lprobbino} for an example.

\paragraph{Exponential regime:} When $\delta = \Theta(1)$, then all bounds give a tail probability of order $\exp(-\Theta(\delta E[X]))$. Note that the difference between these bounds often is not very large. For $\delta=1$, the bounds in~\eqref{eqprobCMUstrong}, \eqref{eqprobCMUlin1}, and~\eqref{eqprobCMUlin2} become $(0.67957...)^{E[X]}$, $(0.68728...)^{E[X]}$, and $(0.71653...)^{E[X]}$, respectively. So there is often no reason to use the unwieldy equation~\eqref{eqprobCMUstrong}.

We remark that also for large $\delta$, where the bound~\eqref{eqprobCMUstrong} gives the better asymptotics $\exp(-\Theta(\delta \log(\delta) E[X]))$, one can, with the help of Section~\ref{secprobEE} resort to the easier-to-use bounds \eqref{eqprobCMUlin1} and~\eqref{eqprobCMUlin2} when the additional logarithmic term is not needed. For example, when $X$ is again the number of bits that flip in an application of the standard-bit mutation operator with mutation rate $p=\frac \alpha n$, then for all $c > 0$ and $n \in \N$ with $c \ln n \ge \alpha$ equation~\eqref{eqprobCMUlin2} with $E[X] \le \mu^+ := c \ln n$ and the argument of Section~\ref{secprobEE} gives $\Pr[X \ge 2c\ln n] = \Pr[X \ge (1+1) \mu^+] \le \exp(-\frac 13 \mu^+) = n^{-c/3}$, which in many applications is fully sufficient. 

A different way of stating an $\exp(-\Theta(\delta E[X]))$ tail bound, following directly from applying~\eqref{eqprobCMUstrongA} for $\delta \ge 2e-1$, is the following.

\begin{corollary}\label{corprob2hoch}
  Under the assumptions of Theorem~\ref{tprobCMU}, we have 
  \begin{equation}
  \Pr[X \ge k] \le 2^{-k} \label{eqprob2hoch}
  \end{equation} 
  for all $k \ge 2eE[X]$.
\end{corollary}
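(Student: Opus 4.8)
The plan is to derive the bound directly from the weak multiplicative Chernoff bound~\eqref{eqprobCMUstrongA}, as the surrounding text already hints. First I would dispose of the degenerate case $E[X] = 0$: then $X = 0$ almost surely, so $\Pr[X \ge k] = 0$ for every $k > 0$, and for $k = 0$ both sides of the claimed inequality equal $1$. Hence from now on I may assume $E[X] > 0$.

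Next, fix any $k \ge 2eE[X]$ and set $\delta := \frac{k}{E[X]} - 1$, so that $(1+\delta)E[X] = k$ and $1+\delta = k/E[X] \ge 2e$. In particular $\delta \ge 2e-1 \ge 0$, so~\eqref{eqprobCMUstrongA} is applicable and yields
\[
\Pr[X \ge k] = \Pr[X \ge (1+\delta)E[X]] \le \left(\frac{e}{1+\delta}\right)^{(1+\delta)E[X]} = \left(\frac{eE[X]}{k}\right)^{k}.
\]
The hypothesis $k \ge 2eE[X]$ is precisely the statement $\frac{eE[X]}{k} \le \tfrac12$, and since the exponent $k$ is non-negative, raising both sides to the $k$-th power preserves the inequality and gives $\left(\frac{eE[X]}{k}\right)^{k} \le 2^{-k}$, which is the claim. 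Note that the argument, and the statement itself, make sense for real $k$, not only integer $k$.

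There is essentially no obstacle here: the entire content is the substitution $1+\delta = k/E[X]$ into~\eqref{eqprobCMUstrongA}, together with the observation that the constant $2e$ in the hypothesis is exactly what forces the base $eE[X]/k$ of the resulting exponential down to at most $\tfrac12$. The only points deserving a word of care are the trivial $E[X]=0$ case noted above and the requirement $\delta \ge 0$ needed to invoke~\eqref{eqprobCMUstrongA}, which holds automatically since $2e-1 > 0$.
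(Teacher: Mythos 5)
Your proof is correct and follows exactly the route the paper indicates: the corollary is stated as a direct consequence of applying~\eqref{eqprobCMUstrongA} with $(1+\delta)E[X]=k$, so that the bound becomes $(eE[X]/k)^k \le 2^{-k}$ precisely because $k \ge 2eE[X]$. Your additional care with the degenerate case $E[X]=0$ and the sign of $\delta$ is harmless and sound.
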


\paragraph{Sub-exponential regime:} Since Chernoff bounds give very low probabilities for the tail events, we can often work with $\delta = o(1)$ and still obtain sufficiently low probabilities for the deviations. Therefore, this regime occurs frequently in the analysis of randomized search heuristics. Since the tail probability is of order $\exp(-\Theta(\delta^2 E[X]))$, we need $\delta$ to be at least of order $(E[X])^{-1/2}$ to obtain useful statements. Note that for $E[X]$ close to $\frac n 2$, Theorem~\ref{tprobchernoffadditive01} below gives slightly stronger bounds. A typical application in this regime is showing that the random initial search points of an algorithms with high probability all have a Hamming distance of at least $\frac n2 (1-o(1))$ from the optimum. See Lemma~\ref{lprobinitial} below for further details.

\subsubsection{Multiplicative Chernoff Bounds for the Lower Tail}\label{secprobCML}

In principle, of course, there is no difference between bounds for the upper and lower tail. If in the situation of Theorem~\ref{tprobCMU} we set $Y_i := 1 - X_i$, then the $Y_i$ are independent random variables taking values in $[0,1]$ and any upper tail bound for $X$ turns into a lower tail bound for $Y := \sum_{i=1}^n Y_i$ via $\Pr[Y \le t] = \Pr[X \ge n - t]$. However, since this transformation also changes the expectation, that is, $E[Y] = n - E[X]$, a convenient bound like~\eqref{eqprobCMUeasy} becomes the cumbersome estimate $\Pr[Y \le (1-\delta)E[Y]] \le \exp(-\frac 13 (1+\delta\frac{E[Y]}{n-E[Y]})^2(n-E[Y]))$. 

For this reason, usually the tail bounds for the lower tail are either proven completely separately (however, using similar ideas) or are derived from significantly simplifying the results stemming from applying the above symmetry argument to~\eqref{eqprobCMUstrongest}. Either approach can be used to show the following bounds. As a visible result of the asymmetry of the situation for upper and lower bounds, note the better constant of $\frac 12$ in the exponent of~\eqref{eqprobCMLeasy} as compared to the $\tfrac 13$ in~\eqref{eqprobCMUeasy}.

\begin{theorem}\label{tprobCML}
  Let $X_1, \ldots, X_n$ be independent random variables taking values in $[0,1]$. Let $X = \sum_{i = 1}^n X_i$. Let $\delta \in [0,1]$. Then 
  \begin{align}
  \Pr[X \le (1-\delta) E[X]] 
  &\le \bigg(\frac{1}{1-\delta}\bigg)^{(1-\delta)E[X]} \bigg(\frac{n-E[X]}{n-(1-\delta)E[X]}\bigg)^{n-(1-\delta)E[X]}\label{eqprobCMLstrongest}\\ 
  &\le \bigg(\frac{e^{-\delta}}{(1-\delta)^{1-\delta}}\bigg)^{E[X]}\label{eqprobCMLstrong}\\ 
  &\le \exp\bigg(-\frac{\delta^2 E[X]}{2}\bigg),\label{eqprobCMLeasy}
  \end{align}
  where the first bound reads as $(1 - \frac{E[X]}{n})^n$ for $\delta = 1$.
\end{theorem}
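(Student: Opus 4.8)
The plan is to obtain all three inequalities from material already developed in the chapter: the strongest bound~\eqref{eqprobCMLstrongest} comes almost for free from its upper-tail counterpart~\eqref{eqprobCMUstrongest} via a symmetry substitution, and the two weaker bounds then follow by elementary estimates. So I would not re-run the exponential-moments method, but rather exploit the fact that it has already been carried out in Theorem~\ref{tprobCMU}.

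First I would set $Y_i := 1 - X_i$, so the $Y_i$ are again independent and $[0,1]$-valued, $Y := \sum_{i=1}^n Y_i = n - X$, and $E[Y] = n - E[X]$. Then a lower-tail event for $X$ is an upper-tail event for $Y$: $\Pr[X \le (1-\delta)E[X]] = \Pr[Y \ge n-(1-\delta)E[X]] = \Pr[Y \ge (1+\delta')E[Y]]$ with $\delta' := \frac{\delta E[X]}{n-E[X]} \ge 0$. Since $\delta \le 1$ we have $\delta' \le \frac{E[X]}{n-E[X]} = \frac{n-E[Y]}{E[Y]}$, so~\eqref{eqprobCMUstrongest} applies to $Y$ in its non-degenerate regime (or exactly at the boundary when $\delta = 1$). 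The key bookkeeping step is to verify that the substitution makes every factor line up with no slack: $(1+\delta')E[Y] = n-(1-\delta)E[X]$, $\frac{1}{1+\delta'} = \frac{n-E[X]}{n-(1-\delta)E[X]}$, $n-E[Y] = E[X]$, and $n-(1+\delta')E[Y] = (1-\delta)E[X]$. Substituting these into~\eqref{eqprobCMUstrongest} gives exactly~\eqref{eqprobCMLstrongest}, and the $\delta = 1$ reading matches as well. The degenerate cases $E[X] \in \{0,n\}$ and $\delta \in \{0,1\}$ I would dispatch by hand against the stated conventions.

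Next, for~\eqref{eqprobCMLstrong} I would bound the second factor of~\eqref{eqprobCMLstrongest}: writing $a := n-(1-\delta)E[X]$ and $b := \delta E[X]$, we have $a - b = n - E[X] \ge 0$, hence $0 \le 1-\tfrac ba \le 1$, so $\bigl(\tfrac{n-E[X]}{n-(1-\delta)E[X]}\bigr)^{n-(1-\delta)E[X]} = (1-\tfrac ba)^a \le e^{-b} = e^{-\delta E[X]}$ by Lemma~\ref{lprobelower}. Multiplying by the unchanged first factor $(1-\delta)^{-(1-\delta)E[X]}$ yields~\eqref{eqprobCMLstrong}. (As a fallback, one can prove~\eqref{eqprobCMLstrong} from scratch by the exponential-moments method, bounding $E[e^{-tX}] \le \exp(-(1-e^{-t})E[X])$ via $e^{-tX_i} \le 1-(1-e^{-t})X_i$ and $1+x \le e^x$, then optimising at $t = \ln\frac{1}{1-\delta}$; but that route does not by itself deliver the sharper~\eqref{eqprobCMLstrongest}.)

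Finally, for~\eqref{eqprobCMLeasy} it suffices, after dividing exponents by $E[X]$ when $E[X]>0$, to show $-\delta - (1-\delta)\ln(1-\delta) \le -\tfrac{\delta^2}{2}$ for $\delta \in [0,1)$. I would set $g(\delta) := (1-\delta)\ln(1-\delta) - \delta + \tfrac{\delta^2}{2}$, note $g(0) = 0$, and compute $g'(\delta) = -\ln(1-\delta) - \delta \ge 0$, where the last step is again just $1-\delta \le e^{-\delta}$ from Lemma~\ref{lprobelower}; hence $g \ge 0$ on $[0,1)$, which is the claim, and $\delta = 1$ follows by continuity (with $0^0 := 1$). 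I expect no deep obstacle anywhere; the only real care is the exact algebraic matching in the symmetry step and the handful of boundary conventions.
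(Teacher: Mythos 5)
Your proof is correct and follows essentially the same route as the paper's: the substitution $Y_i = 1-X_i$ with $\delta' = \delta E[X]/E[Y]$ applied to~\eqref{eqprobCMUstrongest}, the estimate analogous to~\eqref{eqprobeweg} for~\eqref{eqprobCMLstrong}, and elementary calculus for~\eqref{eqprobCMLeasy}. One small sign slip: the function you need is $g(\delta)=(1-\delta)\ln(1-\delta)+\delta-\tfrac{\delta^2}{2}$ (the last two terms have the opposite signs from what you wrote), but the derivative you computed, $g'(\delta)=-\ln(1-\delta)-\delta\ge 0$, and the conclusion are exactly those for this corrected $g$, so nothing of substance changes.
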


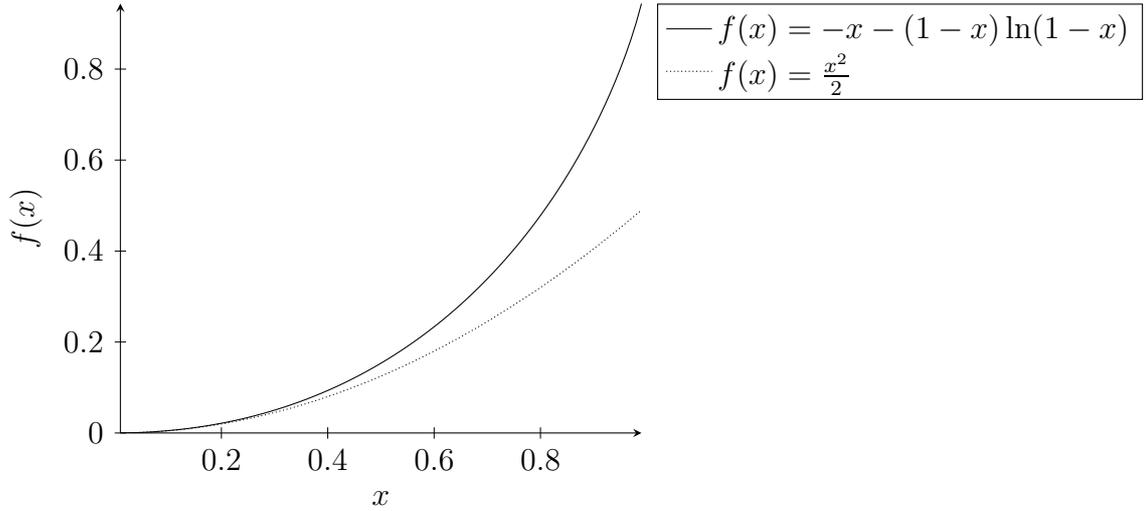
\begin{figure}
\begin{tikzpicture}

\begin{axis}[
    axis lines = left,
    xlabel = $x$,
    ylabel = {$f(x)$},
    legend style={
      cells={anchor=east},
      legend pos=outer north east,},
    ]

\addplot [
    domain=0.01:0.99, 
    samples=100, 
    color=black,
]
{+x+(1-x)*ln(1-x)};
\addlegendentry[right]{$f(x) = -x-(1-x)\ln(1-x)$}
 
\addplot [densely dotted,
    domain=0.01:0.99, 
    samples=100, 
    color=black,
    ]
    {x^2 / 2};
\addlegendentry[right]{$f(x) = \frac{x^2}{2}$}


\end{axis}
\end{tikzpicture}
\caption{Visual comparison of the bounds~\eqref{eqprobCMLstrong} and~\eqref{eqprobCMLeasy}. Depicted is the term $f(x)$ leading to the bound $\Pr[X \le (1-x) E[X]] \le \exp(-f(x) E[X])$.}\label{figprobCMlow}
\end{figure}

For the not so interesting boundary cases, recall our definition $0^0 := 1$. The first bound~\eqref{eqprobCMLstrongest} follows from~\eqref{eqprobCMUstrongest} by regarding the random variables $Y_i :=1-X_i$. Allowing the following easy derivation is maybe the main strength of~\eqref{eqprobCMUstrongest}. Setting $Y = \sum_{i=1}^n Y_i$ and $\delta' = \delta \frac{E[X]}{E[Y]}$, we compute
\begin{align*} 
  \Pr[X \le (1-\delta) E[X]] 
  & = \Pr[Y \ge (1+\delta') E[Y]] \\
  & \le \bigg(\frac{1}{1+\delta'}\bigg)^{(1+\delta')E[Y]} \bigg(\frac{n-E[Y]}{n-(1+\delta')E[Y]}\bigg)^{n-(1+\delta')E[Y]}\\ 
  & = \bigg(\frac{n-E[X]}{n-(1-\delta)E[X]}\bigg)^{n-(1-\delta)E[X]} \bigg(\frac{1}{1-\delta}\bigg)^{(1-\delta)E[X]}.
\end{align*}

Obviously, in an analoguous fashion, \eqref{eqprobCMUstrongest} can be derived from~\eqref{eqprobCMLstrongest}, so the two bounds are equivalent. Equation~\eqref{eqprobCMLstrong} follows from~\eqref{eqprobCMLstrongest} using an elementary estimate analoguous to~\eqref{eqprobeweg}. Equation~\eqref{eqprobCMLeasy} follows from~\eqref{eqprobCMLstrong} using elementary calculus, see, e.g., the proof of Theorem~4.5 in~\cite{MitzenmacherU05}. 

Theorems~\ref{tprobCMU} and~\ref{tprobCML} in particular show that constant-factor deviations from the expectation appear only with exponentially small probability.
\begin{corollary}\label{corprobCMboth}
  Let $X_1, \ldots, X_n$ be independent random variables taking values in $[0,1]$. Let $X = \sum_{i = 1}^n X_i$. Let $\delta \in [0,1]$. Then \[\Pr\big[|X - E[X]| \ge \delta E[X]\big] \le 2 \exp\bigg(-\frac{\delta^2 E[X]}{3}\bigg).\]
\end{corollary}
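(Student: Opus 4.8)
The statement is an immediate consequence of the two preceding theorems, so the plan is simply to split the two-sided deviation event into its upper and lower parts and apply~\eqref{eqprobCMUeasy} and~\eqref{eqprobCMLeasy} respectively. Concretely, I would first observe that, since $\delta \le 1$ (so that both one-sided statements are in scope), the event $\{|X - E[X]| \ge \delta E[X]\}$ is contained in the union $\{X \ge (1+\delta)E[X]\} \cup \{X \le (1-\delta)E[X]\}$, and hence by the union bound (Lemma~\ref{lprobunionbound})
\[
\Pr\big[|X - E[X]| \ge \delta E[X]\big] \le \Pr[X \ge (1+\delta)E[X]] + \Pr[X \le (1-\delta)E[X]].
\]

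\textbf{Bounding the two terms.} For the upper tail, since $X$ is a sum of independent $[0,1]$ random variables and $\delta \in [0,1]$, the simplified multiplicative bound~\eqref{eqprobCMUeasy} of Theorem~\ref{tprobCMU} gives $\Pr[X \ge (1+\delta)E[X]] \le \exp(-\delta^2 E[X]/3)$. For the lower tail, Theorem~\ref{tprobCML}, equation~\eqref{eqprobCMLeasy}, gives the even stronger $\Pr[X \le (1-\delta)E[X]] \le \exp(-\delta^2 E[X]/2)$, which I would then weaken to $\exp(-\delta^2 E[X]/3)$ to match the other term (using $\tfrac12 \ge \tfrac13$ and monotonicity of $\exp(-\,\cdot\,)$). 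Adding the two estimates yields $2\exp(-\delta^2 E[X]/3)$, which is the claim.

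\textbf{Main obstacle.} There is essentially no obstacle here: the only things to be careful about are that the hypothesis $\delta \in [0,1]$ is exactly what is needed so that both cited one-sided bounds apply in their simple forms, and that the worse of the two exponential constants ($\tfrac13$ from the upper tail) governs the final bound, so the lower-tail estimate must be deliberately loosened. No exponential-moments computation is needed, as all the work is already done in Theorems~\ref{tprobCMU} and~\ref{tprobCML}.
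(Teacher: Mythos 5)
Your proposal is correct and is exactly the argument the paper intends: the corollary is stated as an immediate consequence of Theorems~\ref{tprobCMU} and~\ref{tprobCML}, obtained by splitting the two-sided event into the upper and lower tails, applying~\eqref{eqprobCMUeasy} and~\eqref{eqprobCMLeasy}, weakening the lower-tail constant from $\tfrac12$ to $\tfrac13$, and summing via the union bound. Nothing further is needed.
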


\subsubsection{Additive Chernoff Bounds}

We now present a few bounds for the probability that a random variable deviates from its expectation by an additive term independent of the expectation. The advantage of such bounds is that they are identical for upper and lower tails and that they are invariant under additive rescalings. 

Already from Theorem~\ref{tprobCMU}, equation~\eqref{eqprobCMUstrongest}, by careful estimates (see, e.g., Hoeffding~\cite{Hoeffding63}) and exploiting the obvious symmetry, we obtain the following estimates. As mentioned earlier, when $E[X]$ is close to $\frac n2$, this additive Chernoff bound gives (slightly) stronger results than the simplified bounds of Theorems~\ref{tprobCMU} and~\ref{tprobCML}.

\begin{theorem}\label{tprobchernoffadditive01}
  Let $X_1, \ldots, X_n$ be independent random variables taking values in $[0,1]$. Let $X = \sum_{i = 1}^n X_i$. Then for all $\lambda \ge 0$,  
  \begin{align}
    \Pr[X \ge E[X] + \lambda] & \le \exp\bigg(-\frac{2\lambda^2}{n}\bigg),\label{eqprobchernoffadditive01U}\\
    \Pr[X \le E[X] - \lambda] & \le \exp\bigg(-\frac{2\lambda^2}{n}\bigg).\label{eqprobchernoffadditive01L}
  \end{align}
\end{theorem}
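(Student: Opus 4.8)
The plan is to prove this by the exponential moments method (the Chernoff–Bernstein method), which is self-contained given Markov's inequality (Lemma~\ref{lprobmarkov}). It suffices to establish the upper tail~\eqref{eqprobchernoffadditive01U}: the lower tail~\eqref{eqprobchernoffadditive01L} then follows at once by applying it to the variables $X_i' := 1 - X_i \in [0,1]$, whose sum $X' = n - X$ satisfies $E[X'] = n - E[X]$ and $\{X \le E[X] - \lambda\} = \{X' \ge E[X'] + \lambda\}$.

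For the upper tail, fix a parameter $h > 0$ and set $Y_i := X_i - E[X_i]$, so each $Y_i$ has mean zero and lies in an interval of length $1$. Markov's inequality applied to the non-negative variable $e^{h(X - E[X])}$, together with independence, gives
\begin{equation*}
\Pr[X \ge E[X] + \lambda] \le e^{-h\lambda}\, E\big[e^{h(X - E[X])}\big] = e^{-h\lambda} \prod_{i=1}^n E\big[e^{h Y_i}\big].
\end{equation*}
So the task reduces to bounding the moment generating function of a single mean-zero variable confined to an interval of length $1$. The key ingredient, and, I expect, the main obstacle, is \emph{Hoeffding's lemma}: if $Y$ has mean zero and $Y \in [a,b]$, then $E[e^{hY}] \le \exp\!\big(\tfrac18 h^2 (b-a)^2\big)$. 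I would prove it by setting $\psi(h) := \ln E[e^{hY}]$ and checking that $\psi(0) = \psi'(0) = 0$ and that $\psi''(h)$ equals the variance of $Y$ under the exponentially tilted distribution (density proportional to $e^{hy}$); since this tilted distribution is still supported on $[a,b]$, its variance is at most $(b-a)^2/4$, and Taylor's theorem then yields $\psi(h) \le \tfrac12 h^2 (b-a)^2 / 4 = h^2(b-a)^2/8$. With $b - a = 1$ this gives $E[e^{h Y_i}] \le e^{h^2/8}$ for each $i$.

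Substituting, $\Pr[X \ge E[X] + \lambda] \le \exp(-h\lambda + nh^2/8)$ for every $h > 0$; the exponent is a convex quadratic in $h$, minimized at $h = 4\lambda/n$, where it equals $-2\lambda^2/n$, which is exactly the claimed bound. As an alternative, one could instead start from the sharp estimate~\eqref{eqprobCMUstrongest} of Theorem~\ref{tprobCMU}, write $\mu := E[X]$ and $\delta := \lambda/\mu$, and show that the resulting expression is, for every $\mu \in (0,n]$, at most $\exp(-2\lambda^2/n)$ (the worst case occurring at $\mu = n/2$); this is the ``careful estimates'' route alluded to in the text, but it requires more delicate real analysis than the exponential-moments argument, so I would favor the latter.
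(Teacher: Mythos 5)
Your proof is correct, and it takes a genuinely different route from the one the paper indicates: the text obtains Theorem~\ref{tprobchernoffadditive01} from the sharpest multiplicative bound~\eqref{eqprobCMUstrongest} ``by careful estimates'' following Hoeffding~\cite{Hoeffding63}, i.e., the exponential-moment bound is first pushed to its optimum (via the reduction to two-point distributions and the arithmetic--geometric mean step sketched in Section~\ref{secprobcproof}) and the additive form is then extracted by a separate real-analysis argument showing that the optimized exponent is at most $-2\lambda^2/n$. You instead bound each factor $E[e^{hY_i}]$ directly by $e^{h^2/8}$ via Hoeffding's lemma, proved through the variance of the exponentially tilted distribution, and then minimize the convex quadratic $-h\lambda + nh^2/8$ at $h = 4\lambda/n$; the symmetry reduction of the lower tail via $X_i' = 1-X_i$ is also fine. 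Both arguments live inside Bernstein's exponential-moments framework, but they buy different things: the paper's route retains the stronger intermediate estimate~\eqref{eqprobCMUstrongest} and treats the additive bound as a corollary, at the price of the delicate ``$2\lambda^2$'' computation you rightly identify as the harder real analysis; your route gives~\eqref{eqprobchernoffadditive01U} in a few self-contained lines and, because each factor is bounded separately as $\exp(h^2(b_i-a_i)^2/8)$, it extends verbatim to variables with ranges of different lengths, i.e., it proves the more general Theorem~\ref{tprobchernoffadditive} at no extra cost --- something the arithmetic--geometric mean step of Section~\ref{secprobcproof}, which requires a common range $[0,1]$, does not directly deliver.
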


A second advantage of additive Chernoff bounds is that they are often very easy to apply. As a typical application in evolutionary computation, let us regard the Hamming distance $H(x,x^*)$ of a random search point $x \in \{0,1\}$ from a given search point $x^*$. This could be, e.g., the distance of a random initial solution from the optimum.

\begin{lemma}\label{lprobinitial}
  Let $x^* \in \{0,1\}^n$. Let $x \in \{0,1\}^n$ be chosen uniformly at random. Then for all $\lambda \ge 0$, \[\Pr\left[\,\left|H(x,x^*) - \frac n2\right| \ge \lambda\right] \le 2 \exp\left(-\frac{2\lambda^2}{n}\right).\]
\end{lemma}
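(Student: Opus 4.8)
The plan is to write $H(x,x^*)$ as a sum of $n$ independent Bernoulli random variables and then apply the additive Chernoff bound of Theorem~\ref{tprobchernoffadditive01}. Concretely, for each $i \in [1..n]$ let $X_i$ be the indicator random variable for the event $x_i \neq x^*_i$. Since $x$ is chosen uniformly at random from $\{0,1\}^n$, the coordinates $x_1, \dots, x_n$ are independent and each is uniform on $\{0,1\}$; hence the $X_i$ are independent with $\Pr[X_i = 1] = \tfrac 12$ regardless of the value of $x^*_i$. By definition $H(x,x^*) = \sum_{i=1}^n X_i$, and by linearity of expectation $E[H(x,x^*)] = \tfrac n2$.

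Next I would apply Theorem~\ref{tprobchernoffadditive01} with $X = H(x,x^*)$, which is legitimate because the $X_i$ are independent and take values in $[0,1]$. Equation~\eqref{eqprobchernoffadditive01U} gives $\Pr[H(x,x^*) \ge \tfrac n2 + \lambda] \le \exp(-2\lambda^2/n)$ and equation~\eqref{eqprobchernoffadditive01L} gives $\Pr[H(x,x^*) \le \tfrac n2 - \lambda] \le \exp(-2\lambda^2/n)$. Finally, the event $|H(x,x^*) - \tfrac n2| \ge \lambda$ is the union of these two events, so the union bound (Lemma~\ref{lprobunionbound}) yields $\Pr[|H(x,x^*) - \tfrac n2| \ge \lambda] \le 2\exp(-2\lambda^2/n)$, as claimed.

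There is no real obstacle here: the only point that deserves a sentence of care is the observation that the indicators $X_i$ are genuinely independent and each equal to $1$ with probability exactly $\tfrac 12$, which is immediate from the uniformity of $x$ and does not depend on $x^*$. Everything else is a direct invocation of results already established in the excerpt.
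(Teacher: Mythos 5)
Your proof is correct and follows the same route as the paper: write $H(x,x^*)$ as a sum of independent uniform Bernoulli indicators (noting this holds regardless of $x^*$), apply Theorem~\ref{tprobchernoffadditive01} to each tail, and combine with a union bound. Nothing is missing.
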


\begin{proof}
  Note that if $x \in \{0,1\}^n$ is uniformly distributed, then the $x_i$ are independent random variables uniformly distributed in $\{0,1\}$. Hence regardless of $x^*$, the indicator random variables $X_i$ for the event that $x_i \neq x_i^*$ are also independent random variables uniformly distributed in $\{0,1\}$. Since $H(x,x^*) = \sum_{i=1}^n X_i$, the claim follows immediately from applying Theorem~\ref{tprobchernoffadditive01} to the events ``$H(x,x^*) \ge E[H(x,x^*)] + \lambda$'' and ``$H(x,x^*) \le E[H(x,x^*)] - \lambda$''. 
\end{proof}

The lemma implies that even among a polynomial number of initial search points there is none which is closer to the optimum than $\frac n2 - O(\sqrt{n \log n})$. This arguments has been used numerous times in lower bound proofs. This argument is also the reason why the best-known black-box algorithm for the optimization of \onemax, namely repeatedly sampling random search points until the fitness values observed determine the optimum, also works well for jump functions~\cite{BuzdalovDK16}.

The following theorem, again due to Hoeffding~\cite{Hoeffding63}, non-trivially extends Theorem~\ref{tprobchernoffadditive01} by allowing the $X_i$ to take values in arbitrary intervals $[a_i,b_i]$. 

\begin{theorem}\label{tprobchernoffadditive}
  Let $X_1, \ldots, X_n$ be independent random variables. Assume that each $X_i$ takes values in a real interval $[a_i,b_i]$ of length $c_i := b_i - a_i$. Let $X = \sum_{i = 1}^n X_i$. Then for all $\lambda > 0$,  
  \begin{align}
    &\Pr[X \ge E[X] + \lambda]  \le  \exp\bigg(- \frac{2\lambda^2}{\sum_{i=1}^n c_i^2}\bigg),\label{eqprobCAU}\\
    &\Pr[X \le E[X] - \lambda]  \le  \exp\bigg(- \frac{2\lambda^2}{\sum_{i=1}^n c_i^2}\bigg).\label{eqprobCAL}
  \end{align}
\end{theorem}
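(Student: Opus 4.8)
The plan is to use the exponential moments method that underlies all the Chernoff bounds of this section (cf.\ Section~\ref{secprobcproof}), the only genuinely new ingredient being a variance-type bound on the moment generating function of a bounded random variable. First I would reduce to the centered case: setting $Y_i := X_i - E[X_i]$, each $Y_i$ has mean zero and takes values in an interval of length $c_i$, and $Y := \sum_{i=1}^n Y_i = X - E[X]$. It suffices to bound $\Pr[Y \ge \lambda]$, since the lower tail~\eqref{eqprobCAL} follows by applying~\eqref{eqprobCAU} to the variables $-X_i$, which take values in intervals of the same lengths $c_i$. For any $s > 0$, Markov's inequality (Lemma~\ref{lprobmarkov}) applied to the non-negative random variable $e^{sY}$ gives
\[
\Pr[Y \ge \lambda] = \Pr[e^{sY} \ge e^{s\lambda}] \le e^{-s\lambda} E[e^{sY}] = e^{-s\lambda} \prod_{i=1}^n E[e^{sY_i}],
\]
where the last equality uses the independence of the $Y_i$.

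The heart of the argument is \emph{Hoeffding's lemma}: if $Z$ is a mean-zero random variable taking values in $[a,b]$, then $E[e^{sZ}] \le \exp(s^2(b-a)^2/8)$ for all $s$. To prove it I would use convexity of $z \mapsto e^{sz}$ to write $e^{sz} \le \frac{b-z}{b-a}e^{sa} + \frac{z-a}{b-a}e^{sb}$ for $z \in [a,b]$, take expectations and use $E[Z]=0$ to obtain $E[e^{sZ}] \le e^{\phi(s)}$, where (in the only nontrivial case $a<0<b$) $\phi(s) := \ln\bigl(\tfrac{b}{b-a}e^{sa} - \tfrac{a}{b-a}e^{sb}\bigr)$. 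A short computation gives $\phi(0) = \phi'(0) = 0$, and $\phi''(s)$ can be recognized as the variance of a random variable supported on $\{a,b\}$, hence $\phi''(s) \le \bigl(\tfrac{b-a}{2}\bigr)^2 = \tfrac{(b-a)^2}{4}$; Taylor's theorem with remainder then yields $\phi(s) \le \tfrac{s^2(b-a)^2}{8}$. I expect this step --- in particular identifying $\phi''(s)$ with a variance and applying the bound $t(1-t) \le \tfrac14$ --- to be the main obstacle, although it is entirely elementary.

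Finally I would substitute $E[e^{sY_i}] \le \exp(s^2 c_i^2/8)$ into the product, obtaining
\[
\Pr[Y \ge \lambda] \le \exp\Bigl(-s\lambda + \tfrac{s^2}{8}\sum_{i=1}^n c_i^2\Bigr)
\]
for every $s > 0$. Minimizing the exponent over $s$ --- the optimal choice being $s = 4\lambda / \sum_{i=1}^n c_i^2$ --- gives $\Pr[Y \ge \lambda] \le \exp\bigl(-2\lambda^2 / \sum_{i=1}^n c_i^2\bigr)$, which is~\eqref{eqprobCAU}. The symmetry argument noted above then yields~\eqref{eqprobCAL}, completing the proof. (The degenerate case $\sum_i c_i^2 = 0$, where $X$ is deterministic, is trivial since then both sides of~\eqref{eqprobCAU} vanish for $\lambda > 0$.)
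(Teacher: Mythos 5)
Your proof is correct, and every step checks out: the centering reduction preserves the interval lengths $c_i$, Hoeffding's lemma is stated and proven correctly (the identification of $\phi''(s)$ with the variance of a two-point distribution on $\{a,b\}$ and the bound $\phi''(s) \le (b-a)^2/4$ are exactly right, as is the dispatch of the degenerate cases $a=0$ or $b=0$ forced by $E[Z]=0$), and the choice $s = 4\lambda/\sum_{i=1}^n c_i^2$ does minimize the exponent and yields precisely $\exp(-2\lambda^2/\sum_{i=1}^n c_i^2)$. The paper itself does not prove Theorem~\ref{tprobchernoffadditive}; it cites Hoeffding~\cite{Hoeffding63} and, in Section~\ref{secprobcproof}, sketches the exponential-moments method only for $[0,1]$-valued variables, where the moment generating function is bounded by reducing to two-point distributions on $\{0,1\}$ and applying the AM--GM inequality, which leads to the multiplicative bound~\eqref{eqprobCMUstrongest} rather than to~\eqref{eqprobCAU}. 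Your argument follows the same Bernstein/exponential-moments skeleton but supplies the one ingredient the paper leaves to the literature, namely the estimate $E[e^{sY_i}] \le \exp(s^2 c_i^2/8)$ for a centered variable with range $c_i$; this is what allows intervals of different lengths to enter only through $\sum_i c_i^2$ and is exactly why Theorem~\ref{tprobchernoffadditive} cannot be recovered from the multiplicative bounds when the ranges differ (a point the paper makes in Section~\ref{secprobrelation}). In short: same general method as the paper's proof sketch, but with the sharper per-variable MGF bound that the additive statement requires, and your write-up is a complete and valid proof.
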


For comparison, we now reformulate Theorems~\ref{tprobCMU} and~\ref{tprobCML} as additive bounds. There is no greater intellectual challenge hidden, but we feel that it helps to have seen these bounds at least once. Note that, since the resulting bounds depend on the expectation, we need that the $X_i$ take values in $[0,1]$. In other words, different from the bounds presented so far in this subsection, the following bounds are not invariant under additive rescaling and are not symmetric for upper and lower tails.

\begin{theorem}[equivalent to Theorem~\ref{tprobCMU}]\label{tprobCMUA}
  Let $X_1, \ldots, X_n$ be independent random variables taking values in $[0,1]$. Let $X = \sum_{i = 1}^n X_i$. Let $\lambda \ge 0$. Then 
  \begin{align}
  \Pr&(X \ge  E[X] + \lambda) \nonumber\\
  &\le \bigg(\frac{E[X]}{E[X]+\lambda}\bigg)^{E[X]+\lambda} \bigg(\frac{n-E[X]}{n-E[X]-\lambda}\bigg)^{n-E[X]-\lambda}\label{eqprobCMUAstrongest}\\ 
  &\le e^{\lambda} \bigg(\frac{E[X]}{E[X]+\lambda}\bigg)^{E[X]+\lambda} = \exp\bigg(-(E[X]+\lambda)\ln\bigg(1+\frac{\lambda}{E[X]}\bigg)+\lambda\bigg)\label{eqprobCMUAstrong}\\ 
  &\le \exp\bigg(- \frac{\lambda^2}{2 E[X] + \frac 23 \lambda}\bigg)\label{eqprobCMUAlin1}\\
  &\le \exp\bigg(- \frac 13 \min\bigg\{\frac{\lambda^2}{E[X]}, \lambda \bigg\}\bigg),\label{eqprobCMUAlin2}
  \end{align}
  where the bound in~\eqref{eqprobCMUAstrongest} is read as $0$ for $\lambda > n-E[X]$ and as $(\frac{E[X]}{n})^n$ for $\lambda = n-E[X]$. For $\lambda \le E[X]$, equation~\eqref{eqprobCMUAlin2} simplifies to 
  \begin{equation}
  \Pr[X \ge E[X] + \lambda] \le \exp\bigg(-\frac{\lambda^2}{3 E[X]}\bigg).\label{eqprobCMUAeasy}
  \end{equation}
\end{theorem}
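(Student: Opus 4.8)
The plan is to derive Theorem~\ref{tprobCMUA} from Theorem~\ref{tprobCMU} (and conversely) via the invertible substitution $\delta = \lambda/E[X]$, so that the two statements are literally reformulations of one another. First I would dispose of the degenerate case $E[X] = 0$: then every $X_i$ equals $0$ almost surely, hence $X = 0$ almost surely, and each claimed bound holds trivially (the left-hand sides are $0$ for $\lambda > 0$ and the right-hand sides are $\ge 0$; for $\lambda = 0$ every right-hand side equals $1$). So assume $E[X] > 0$ and put $\delta := \lambda/E[X] \ge 0$. This is a bijection between the parameter range $\lambda \ge 0$ of Theorem~\ref{tprobCMUA} and the range $\delta \ge 0$ of Theorem~\ref{tprobCMU}, and it satisfies $(1+\delta)E[X] = E[X]+\lambda$ and $n-(1+\delta)E[X] = n-E[X]-\lambda$; the hypotheses (independent $X_i \in [0,1]$) are identical, so nothing else needs translating.

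Next I would check the four bounds one at a time. For~\eqref{eqprobCMUAstrongest}: in~\eqref{eqprobCMUstrongest} we have $\tfrac{1}{1+\delta} = \tfrac{E[X]}{E[X]+\lambda}$ and $\tfrac{n-E[X]}{n-(1+\delta)E[X]} = \tfrac{n-E[X]}{n-E[X]-\lambda}$, with the exponents transforming as above, giving exactly~\eqref{eqprobCMUAstrongest}; the reading convention of~\eqref{eqprobCMUstrongest} (value $0$ for $\delta > \tfrac{n-E[X]}{E[X]}$, value $(E[X]/n)^n$ for equality) becomes precisely the stated convention for $\lambda > n-E[X]$ and $\lambda = n-E[X]$. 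For~\eqref{eqprobCMUAstrong}: raising $\tfrac{e^\delta}{(1+\delta)^{1+\delta}}$ to the power $E[X]$ gives $e^{\delta E[X]} = e^\lambda$ over $(1+\delta)^{(1+\delta)E[X]} = \left(\tfrac{E[X]+\lambda}{E[X]}\right)^{E[X]+\lambda}$, which is~\eqref{eqprobCMUAstrong}; the displayed identity with $\exp(-(E[X]+\lambda)\ln(1+\tfrac{\lambda}{E[X]})+\lambda)$ is just $(1+\delta)^{(1+\delta)E[X]} = \exp((E[X]+\lambda)\ln(1+\tfrac{\lambda}{E[X]}))$ combined with $e^\lambda$. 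For~\eqref{eqprobCMUAlin1}: $\tfrac{\delta^2 E[X]}{2+\frac23\delta} = \tfrac{\lambda^2/E[X]}{2+\frac23\lambda/E[X]} = \tfrac{\lambda^2}{2E[X]+\frac23\lambda}$. For~\eqref{eqprobCMUAlin2}: $\delta^2 E[X] = \lambda^2/E[X]$ and $\delta E[X] = \lambda$, so $\tfrac13\min\{\delta^2,\delta\}E[X] = \tfrac13\min\{\lambda^2/E[X],\lambda\}$. Finally, the side condition $\delta \le 1$ under which~\eqref{eqprobCMUstrong} simplifies to~\eqref{eqprobCMUeasy} becomes $\lambda \le E[X]$, yielding~\eqref{eqprobCMUAeasy}.

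Because the substitution $\lambda = \delta E[X]$ reverses all of this, Theorem~\ref{tprobCMU} follows from Theorem~\ref{tprobCMUA} in exactly the same fashion, which establishes the claimed equivalence. There is no genuine difficulty here beyond careful bookkeeping of exponents; the only step that will require a moment's care is matching the boundary-case reading conventions of~\eqref{eqprobCMUstrongest} with those of~\eqref{eqprobCMUAstrongest}, which is why I flag it explicitly above.
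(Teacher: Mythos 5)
Your proposal is correct and takes exactly the route the paper intends: the paper presents Theorem~\ref{tprobCMUA} as a mere reformulation of Theorem~\ref{tprobCMU} under the substitution $\lambda = \delta E[X]$ (``there is no greater intellectual challenge hidden''), and your bookkeeping of the exponents, the boundary-case conventions, and the degenerate case $E[X]=0$ is all accurate. Nothing to add.
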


\begin{theorem}[equivalent to Theorem~\ref{tprobCML}]\label{tprobCMLA}
  Let $X_1, \ldots, X_n$ be independent random variables taking values in $[0,1]$. Let $X = \sum_{i = 1}^n X_i$. Let $\lambda \ge 0$. Then 
  \begin{align}
  \Pr[X \le E[X] - \lambda] 
  &\le \bigg(\frac{E[X]}{E[X] - \lambda}\bigg)^{E[X]-\lambda} \bigg(\frac{n-E[X]}{n-E[X]+\lambda}\bigg)^{n-E[X]+\lambda} \label{eqprobCMLAstrongest}\\ 
  &\le e^{-\lambda} \bigg(\frac{E[X]}{E[X] - \lambda}\bigg)^{E[X]-\lambda} \label{eqprobCMLAstrong}\\ 
  &\le \exp\bigg(-\frac{\lambda^2}{2E[X]}\bigg).\label{eqprobCMLAeasy}
  \end{align}
\end{theorem}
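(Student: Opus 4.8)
The plan is to derive Theorem~\ref{tprobCMLA} as a pure reparametrization of Theorem~\ref{tprobCML}, via the substitution $\delta := \lambda / E[X]$; no new probabilistic content is needed. First I would dispose of the degenerate cases. If $E[X]=0$, then every $X_i$ is $0$ almost surely, so $X=0$ and the claim is trivial. If $\lambda > E[X]$, then $\{X \le E[X]-\lambda\}$ is empty because $X = \sum_i X_i \ge 0$, so there is nothing to prove (and, in view of the stated equivalence with Theorem~\ref{tprobCML}, the interesting range is anyway $\lambda \in [0, E[X]]$). Hence I may assume $E[X]>0$ and $\lambda \in [0,E[X]]$, which is exactly the range in which $\delta = \lambda/E[X]$ lies in $[0,1]$, so Theorem~\ref{tprobCML} applies, with the boundary value $\delta = 1$ (i.e.\ $\lambda = E[X]$) covered by the ``reads as $(1-\tfrac{E[X]}{n})^n$'' convention noted there.

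Next I would push the substitution through the three inequalities of Theorem~\ref{tprobCML} one at a time. The identities $(1-\delta)E[X] = E[X]-\lambda$, $\tfrac{1}{1-\delta} = \tfrac{E[X]}{E[X]-\lambda}$, and $n-(1-\delta)E[X] = n-E[X]+\lambda$ turn \eqref{eqprobCMLstrongest} verbatim into \eqref{eqprobCMLAstrongest}, which gives the first inequality of the theorem applied to $\Pr[X \le E[X]-\lambda] = \Pr[X \le (1-\delta)E[X]]$. For the second, I would expand
\[
\bigg(\frac{e^{-\delta}}{(1-\delta)^{1-\delta}}\bigg)^{E[X]} = e^{-\delta E[X]}\,(1-\delta)^{-(1-\delta)E[X]} = e^{-\lambda}\bigg(\frac{E[X]}{E[X]-\lambda}\bigg)^{E[X]-\lambda},
\]
which is exactly \eqref{eqprobCMLAstrong}; the already-established chain \eqref{eqprobCMLstrongest}~$\le$~\eqref{eqprobCMLstrong} then yields \eqref{eqprobCMLAstrongest}~$\le$~\eqref{eqprobCMLAstrong}. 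Finally, \eqref{eqprobCMLeasy} reads $\exp(-\tfrac{\delta^2 E[X]}{2}) = \exp(-\tfrac{\lambda^2}{2E[X]})$, which is \eqref{eqprobCMLAeasy}, and the inequality \eqref{eqprobCMLstrong}~$\le$~\eqref{eqprobCMLeasy} transfers accordingly. This completes all three bounds.

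Since every step is an exact algebraic substitution into inequalities already proven in Theorem~\ref{tprobCML}, there is no analytic obstacle here: the genuine work (the exponential-moments estimate, or the symmetry reduction of \eqref{eqprobCMLstrongest} from \eqref{eqprobCMUstrongest}) lives in Theorem~\ref{tprobCML} and is assumed. The only thing requiring a little care is the bookkeeping of the boundary regime $\lambda = E[X]$ (equivalently $\delta = 1$) so that the ``reads as'' conventions in the two theorems line up, and ruling out $\lambda > E[X]$ at the outset so that all the displayed right-hand sides remain meaningful real quantities.
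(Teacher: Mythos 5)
Your proposal is correct and matches the paper's intent exactly: Theorem~\ref{tprobCMLA} is presented there explicitly as a reformulation of Theorem~\ref{tprobCML} obtained by the substitution $\lambda = \delta E[X]$, with no further argument given. Your careful handling of the degenerate cases $E[X]=0$ and $\lambda > E[X]$ and of the boundary $\delta=1$ is a welcome addition but does not change the route.
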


\subsubsection{Chernoff Bounds Using the Variance}

There are several versions of Chernoff bounds that take into account the variance\index{variance}. In certain situations, they can give significantly stronger bounds than the estimates discussed so far. Hoeffding~\cite{Hoeffding63} proves essentially the following result.

\begin{theorem}\label{tprobcvar}
  Let $X_1, \ldots, X_n$ be independent random variables such that $X_i \le E[X_i] + 1$ for all $i = 1, \ldots, n$. Let $X = \sum_{i = 1}^n X_i$. Let $\sigma^2 = \sum_{i=1}^n \Var[X_i] = \Var[X]$. Then for all $\lambda \ge 0$, 
\begin{align}
	\Pr[&X \ge E[X] + \lambda]\nonumber\\ 
	&\le \bigg(\bigg(1 + \frac{\lambda}{\sigma^2}\bigg)^{-\left(1+\frac{\lambda}{\sigma^2}\right)\frac{\sigma^2}{n+\sigma^2}} \bigg(1-\frac{\lambda}{n}\bigg)^{-\left(1-\frac{\lambda}{n}\right)\frac{n}{n+\sigma^2}}\bigg)^n \label{eqprobcvarstrongest}\\
	\begin{split}
	&\le \exp\bigg(-\lambda\bigg(\bigg(1+\frac{\sigma^2}{\lambda}\bigg) \ln\bigg(1+\frac{\lambda}{\sigma^2}\bigg)-1\bigg)\bigg)\\ 
	&= \exp\bigg(-\sigma^2 \bigg(\bigg(1+\frac{\lambda}{\sigma^2}\bigg) \ln\bigg(1+\frac{\lambda}{\sigma^2}\bigg)-\frac{\lambda}{\sigma^2}\bigg)\bigg)
	\end{split}\label{eqprobcvarstrong}\\
	&\le \exp\bigg(-\frac{\lambda^2}{2 \sigma^2 + \frac 23 \lambda}\bigg)\label{eqprobcvarlin1}\\
	&\le \exp\bigg(-\frac 13 \min\bigg\{\frac{\lambda^2}{\sigma^2}, \lambda\bigg\}\bigg)\label{eqprobcvarlin2},
\end{align}
	where~\eqref{eqprobcvarstrongest} is understood to mean $0$ when $\lambda > n$ and  $(\frac{\sigma^2}{n+\sigma^2})^n$ when $\lambda = n$.
\end{theorem}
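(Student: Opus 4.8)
The plan is to use the exponential moments method (the Chernoff--Bernstein machinery behind every bound of Theorem~\ref{tprobCMU}). First I would reduce to centered variables: setting $Y_i := X_i - E[X_i]$, we have $Y_i \le 1$, $E[Y_i] = 0$ and $\sum_{i=1}^n \Var[Y_i] = \sigma^2$, and the claim becomes a tail bound on $\sum_i Y_i$. For any $h > 0$, Markov's inequality (Lemma~\ref{lprobmarkov}) applied to $\exp(h\sum_i Y_i)$ together with independence gives $\Pr[\sum_i Y_i \ge \lambda] \le e^{-h\lambda}\prod_{i=1}^n E[e^{hY_i}]$. Everything then hinges on bounding the single-variable exponential moment $E[e^{hY_i}]$ using only $Y_i \le 1$, $E[Y_i]=0$ and $\Var[Y_i]$.

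For the bounds~\eqref{eqprobcvarstrong}--\eqref{eqprobcvarlin2} the clean route is the elementary observation that $y \mapsto (e^{hy}-1-hy)/y^2$ (set to $h^2/2$ at $y=0$) is non-decreasing on $\R$; this follows from the sign pattern of its derivative, verified exactly as in the proof of Lemma~\ref{lprobelower}. Consequently $e^{hy} \le 1 + hy + (e^h-1-h)y^2$ for all $y \le 1$, so taking expectations and using $E[Y_i]=0$ and then Lemma~\ref{lprobelower} yields $E[e^{hY_i}] \le 1 + (e^h-1-h)\Var[Y_i] \le \exp((e^h-1-h)\Var[Y_i])$. Multiplying over $i$ gives $\Pr[\sum_i Y_i \ge \lambda] \le \exp(-h\lambda + (e^h-1-h)\sigma^2)$, and the exponent is minimized at $h = \ln(1+\lambda/\sigma^2)$; substituting produces exactly $\exp(-(\sigma^2+\lambda)\ln(1+\lambda/\sigma^2)+\lambda)$, which is~\eqref{eqprobcvarstrong} in both of its displayed forms. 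The simplification~\eqref{eqprobcvarlin1} then follows from $(1+t)\ln(1+t)-t \ge t^2/(2+\tfrac23 t)$ with $t = \lambda/\sigma^2$ (multiplying through by $\sigma^2$), the same inequality already used to pass from~\eqref{eqprobCMUstrong} to~\eqref{eqprobCMUlin1}, and~\eqref{eqprobcvarlin2} is the further trivial estimate $t^2/(2+\tfrac23 t) \ge \tfrac13\min\{t^2,t\}$, exactly as~\eqref{eqprobCMUlin2} was obtained from~\eqref{eqprobCMUlin1}.

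The strongest bound~\eqref{eqprobcvarstrongest} requires the sharp, extremal-distribution version of the moment estimate due to Hoeffding~\cite{Hoeffding63}: among all distributions supported on $(-\infty,1]$ with mean $0$ and variance $\tau^2$, the convex functional $Y \mapsto E[e^{hY}]$ is maximized by the two-point law with mass $\tfrac{\tau^2}{1+\tau^2}$ on $1$ and $\tfrac{1}{1+\tau^2}$ on $-\tau^2$, giving $E[e^{hY_i}] \le (\tau_i^2 e^h + e^{-h\tau_i^2})/(1+\tau_i^2)$ with $\tau_i^2 := \Var[Y_i]$; a convexity argument then shows the product over $i$ is largest when all $\tau_i^2$ equal $\sigma^2/n$, so that $\Pr[\sum_i Y_i \ge \lambda] \le e^{-h\lambda}\big((\tfrac{\sigma^2}{n}e^h + e^{-h\sigma^2/n})/(1+\tfrac{\sigma^2}{n})\big)^n$. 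The $h$-optimization has stationarity condition $e^{h(1+\sigma^2/n)} = (1+\lambda/\sigma^2)/(1-\lambda/n)$, and pushing this through yields precisely~\eqref{eqprobcvarstrongest}; the boundary readings at $\lambda = n$ and $\lambda > n$ drop out of the constraint $\sum_i Y_i \le n$ together with the convention $0^0:=1$. All the weaker bounds can alternatively be recovered from~\eqref{eqprobcvarstrongest} by elementary relaxations, but the self-contained derivation above is shorter.

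The main obstacle is the last step: both the claim that the two-point law is extremal and the claim that equalizing the $\tau_i^2$ increases the product are genuine (if classical) convexity facts whose verification is somewhat technical, and carrying the $h$-optimization all the way to the exact closed form~\eqref{eqprobcvarstrongest} is a delicate computation. The pragmatic plan is therefore to cite Hoeffding~\cite{Hoeffding63} for~\eqref{eqprobcvarstrongest} and to present the fully elementary argument (monotonicity of $(e^{hy}-1-hy)/y^2$, the single $h$-optimization, and the two calculus inequalities) for~\eqref{eqprobcvarstrong}--\eqref{eqprobcvarlin2}, which is where most readers will actually use the theorem.
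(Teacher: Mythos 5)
Your proposal is correct, and it is worth comparing the two routes. For the sharpest bound~\eqref{eqprobcvarstrongest} you do exactly what the paper does: reduce to mean-zero variables by centering (this is precisely how the paper removes Hoeffding's common-expectation assumption, as it explains in Section~\ref{secprobrelation}) and then cite Hoeffding~\cite{Hoeffding63} for the extremal two-point-distribution and convexity arguments; your sketch of that proof is faithful. Where you genuinely diverge is at~\eqref{eqprobcvarstrong}: the paper obtains it \emph{from}~\eqref{eqprobcvarstrongest} via an estimate it explicitly calls non-trivial and again outsources to~\cite{Hoeffding63}, whereas you derive it directly and self-containedly by the Bernstein--Bennett argument --- the monotonicity of $u \mapsto (e^{u}-1-u)/u^{2}$ (which is correct; the sign analysis of $(u-2)e^{u}+u+2$ goes through), the resulting bound $E[e^{hY_i}] \le \exp\big((e^{h}-1-h)\Var[Y_i]\big)$, and the optimization $h=\ln(1+\lambda/\sigma^{2})$, which indeed lands exactly on $\exp\big(-(\sigma^{2}+\lambda)\ln(1+\lambda/\sigma^{2})+\lambda\big)$. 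What your route buys is a proof of the Bennett-type bound~\eqref{eqprobcvarstrong} (and hence of~\eqref{eqprobcvarlin1} and~\eqref{eqprobcvarlin2}, whose derivations you handle identically to the paper's passage from~\eqref{eqprobCMUstrong} to~\eqref{eqprobCMUlin1}) that requires no black box at all; what it gives up is that~\eqref{eqprobcvarstrong} is no longer exhibited as a consequence of~\eqref{eqprobcvarstrongest}, so the chain of inequalities in the theorem statement is established as a collection of separately proved tail bounds rather than as successive relaxations of the first line --- which is logically sufficient for the theorem as stated, since each line is asserted as a bound on the same tail probability.
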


The estimate from~\eqref{eqprobcvarstrongest} to~\eqref{eqprobcvarstrong} is non-trivial and can be found, e.g., in Hoeffding~\cite{Hoeffding63}. From~\eqref{eqprobcvarstrong} we derive~\eqref{eqprobcvarlin1} in the same way that was used to derive~\eqref{eqprobCMUlin1} from~\eqref{eqprobCMUstrong}. 

By replacing $X_i$ with $-X_i$, we obtain the analoguous bounds for the lower tail.

\begin{corollary}\label{corprobcvar}
  If the condition $X_i \le E[X_i] + 1$ in Theorem~\ref{tprobcvar} is replaced by $X_i \ge E[X_i] - 1$, then $\Pr[X \le E[X] - \lambda]$ satisfies the estimates of~\eqref{eqprobcvarstrongest} to~\eqref{eqprobcvarlin2}.
\end{corollary}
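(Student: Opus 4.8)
The plan is to obtain the lower-tail statement from the already-proven upper-tail statement of Theorem~\ref{tprobcvar} by the standard sign-flip reduction. First I would set $Y_i := -X_i$ for all $i \in [1..n]$ and $Y := \sum_{i=1}^n Y_i = -X$. The new hypothesis $X_i \ge E[X_i] - 1$, multiplied by $-1$, is precisely $Y_i \le E[Y_i] + 1$, so the independent random variables $Y_1,\dots,Y_n$ satisfy the assumptions of Theorem~\ref{tprobcvar}. Crucially, the auxiliary quantities are unchanged: $n$ is the same, variance is invariant under negation so $\Var[Y_i] = \Var[X_i]$ and hence $\sigma^2 = \sum_{i=1}^n \Var[Y_i] = \Var[Y]$ is the same number, while $E[Y] = -E[X]$.

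Next I would translate the event of interest. We have $X \le E[X] - \lambda$ if and only if $-X \ge -E[X] + \lambda$, i.e.\ $Y \ge E[Y] + \lambda$. Therefore $\Pr[X \le E[X] - \lambda] = \Pr[Y \ge E[Y] + \lambda]$, and applying Theorem~\ref{tprobcvar} to the family $(Y_i)_{i \in [1..n]}$ with the same deviation parameter $\lambda$ gives exactly the chain of estimates \eqref{eqprobcvarstrongest}--\eqref{eqprobcvarlin2}: every expression on the right-hand sides there is built only from $\sigma^2$, $n$ and $\lambda$, each of which is preserved under the substitution $X_i \mapsto -X_i$. The boundary conventions for \eqref{eqprobcvarstrongest} (value $0$ when $\lambda > n$ and $(\frac{\sigma^2}{n+\sigma^2})^n$ when $\lambda = n$) carry over verbatim.

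There is essentially no obstacle here; the one point worth a sentence of care is precisely the observation that the bounds of Theorem~\ref{tprobcvar} depend on the $X_i$ only through $n$, $\sigma^2$ and the deviation $\lambda$, so no computation needs to be redone after negating. A fully self-contained alternative would be to re-run the exponential-moments argument underlying Theorem~\ref{tprobcvar} with the sign of the exponential parameter reversed, but the reduction to $-X_i$ is cleaner and is the route I would take.
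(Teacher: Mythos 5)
Your proof is correct and is exactly the argument the paper uses (the paper compresses it to the single remark ``by replacing $X_i$ with $-X_i$, we obtain the analogous bounds for the lower tail''). Your added observation that the right-hand sides of \eqref{eqprobcvarstrongest}--\eqref{eqprobcvarlin2} depend on the $X_i$ only through $n$, $\sigma^2$ and $\lambda$, all invariant under negation, is precisely the point that makes the reduction work.
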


As discussed in Hoeffding~\cite{Hoeffding63}, the bound~\eqref{eqprobcvarstrong} is the same as inequality (8b) in Bennett~\cite{Bennett62}, which is  stronger than the bound~\eqref{eqprobcvarlin1} due to Bernstein~\cite{Bernstein24} and the bound of $\exp(-\frac 12 \lambda \arcsinh(\frac{\lambda}{2\sigma^2}))$ due to Prokhorov~\cite{Prokhorov56}.

In comparison to the additive version of the usual Chernoff bounds for the upper tail (Theorem~\ref{tprobCMUA}), very roughly speaking, we see that the Chernoff bounds working with the variance allow to replace the expectation of $X$ by its variance. When the $X_i$ are binary random variables with $\Pr[X_i = 1]$ small, then $E[X] \approx \Var[X]$ and there is not much value in using Theorem~\ref{tprobcvar}. For this reason, Chernoff bounds taking into account the variance have not been used a lot in the theory of randomized search heuristics. They can, however, be convenient when we have random variables with $\Pr[X_i=1]$ close to $1$. 

For example, consider that a search point $y \in \{0,1\}^n$ is obtained from a given $x \in \{0,1\}^n$ via standard-bit mutation\index{standard-bit mutation} with mutation rate $p$. Assume for simplicity that we are interested in estimating the number of ones in $y$ (the same argument would hold for the Hamming distance of $y$ to some other search point $z \in \{0,1\}^n$, e.g., a unique optimum). Now the number of ones in $y$ is simply $X = \sum_{i=1}^n y_i$ and thus $X$ is a sum of independent binary random variables. However, different from, e.g., the situation in Lemma~\ref{lprobSBM}, the expectation of $X$ may be big. If $x_i=1$, then $E[y_i] = 1 -p$. Hence  if $x$ has many ones, then $E[Y]$ is large. However, since $\Var[y_i] = p (1-p)$ regardless of $x_i$,  the variance $\Var[X] = n p (1-p)$ is small (assuming that $p$ is small). Consequently, here the Chernoff bounds of this subsection give better estimates than, e.g., Theorem~\ref{tprobCMUA}. See, e.g.,~\cite{DoerrGWY17} for an example where this problem appeared in a recent research paper.  

When not too precise bounds are needed, looking separately at the number of zeros and ones of $x$ that flip (and bounding these via simple Chernoff bounds) is a way to circumvent the use of Chernoff bounds taking into account the variance. Several research works follow this approach despite the often more technical computations. 

Chernoff bounds using the variance can also be useful in ant colony algorithms and estimation of distribution algorithms, where again pheromone values or frequencies close to $0$ or $1$ can lead to a small variance. See~\cite{NeumannW09,Witt17} for examples.

The bounds of Theorem~\ref{tprobcvar} can be written in a multiplicative form, e.g.,
\begin{align}
\Pr[&X \ge (1+\delta) E[X]] \nonumber\\
&\le \bigg(\bigg(1 + \frac{\delta E[X]}{\sigma^2}\bigg)^{-\left(1+\frac{\delta E[X]}{\sigma^2}\right)\frac{\sigma^2}{n+\sigma^2}} \bigg(1-\frac{\delta E[X]}{n}\bigg)^{-\left(1-\frac{\delta E[X]}{n}\right)\frac{n}{n+\sigma^2}}\bigg)^n \label{eqprobcvarmult}\\
&\le \exp\bigg(- \frac{\delta^2 E[X]^2}{2 \sigma^2 + \frac 23 \delta E[X]}\bigg).\label{eqprobcvarmultlin}
\end{align}
This is useful when working with relative errors, however, it seems that unlike for some previous bounds (compare, e.g., \eqref{eqprobCMUstrong} and~\eqref{eqprobCMUAstrong}) the multiplicative forms are not much simpler here.

Obviously, the case that all $X_i$ satisfy $X_i \le E[X_i] + b$ for some number~$b$ (instead of $1$) can be reduced to the case $b=1$ by dividing all random variable by~$b$. For the reader's convenience, we here state the resulting Chernoff bounds.

\begin{theorem}[equivalent to Theorem~\ref{tprobcvar} and Corollary~\ref{corprobcvar}]\label{tprobcvara}
  Let $X_1, \ldots, X_n$ be independent random variables. Let $b$ be such that $X_i \le E[X_i]+b$ for all $i = 1, \ldots, n$. Let $X = \sum_{i = 1}^n X_i$. Let $\sigma^2 = \sum_{i=1}^n \Var[X_i] = \Var[X]$. Then for all $\lambda \ge 0$,
\begin{align}
	\Pr[X \ge E[X] + \lambda] &\le \bigg(\bigg(1 + \frac{b\lambda}{\sigma^2}\bigg)^{-\left(1+\frac{b\lambda}{\sigma^2}\right)\frac{\sigma^2}{nb^2+\sigma^2}} \bigg(1-\frac{\lambda}{nb}\bigg)^{-\left(1-\frac{\lambda}{nb}\right)\frac{nb^2}{nb^2+\sigma^2}}\bigg)^n\label{eqprobcvarstrongesta}\\
	&\le \exp\bigg(-\frac{\lambda}{b}\bigg(\bigg(1+\frac{\sigma^2}{b\lambda}\bigg) \ln\bigg(1+\frac{b\lambda}{\sigma^2}\bigg)-1\bigg)\bigg)\\
	&\le \exp\bigg(-\frac{\lambda^2}{\sigma^2(2+\frac 23 \frac{b \lambda}{\sigma^2})}\bigg)\label{eqprobcvarlin1a}\\
	&\le \exp\bigg(-\frac 13 \min\bigg\{\frac{\lambda^2}{\sigma^2}, \frac \lambda b\bigg\}\bigg)\label{eqprobcvarlin2a},
\end{align}
	where~\eqref{eqprobcvarstrongesta} is understood to mean $0$ when $\lambda > nb$ and  $(\frac{\sigma^2}{nb^2+\sigma^2})^n$ when $\lambda = nb$.
	 
	When we have $X_i \ge E[X_i]-b$ instead of $X_i \le E[X_i]+b$ for all $i = 1, \ldots, n$, then the above estimates hold for $\Pr[X \le E[X] - \lambda]$.
\end{theorem}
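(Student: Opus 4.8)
The plan is to reduce everything to Theorem~\ref{tprobcvar} and Corollary~\ref{corprobcvar} by the elementary rescaling trick announced just before the statement, namely dividing all random variables by~$b$. First I would set $Y_i := X_i/b$ for all $i \in [1..n]$. Since the $X_i$ are independent, so are the $Y_i$, and the hypothesis $X_i \le E[X_i] + b$ translates into $Y_i \le E[Y_i] + 1$, which is exactly the hypothesis of Theorem~\ref{tprobcvar}. Writing $Y := \sum_{i=1}^n Y_i = X/b$, we have $E[Y] = E[X]/b$ and, since variance scales quadratically, $\tau^2 := \sum_{i=1}^n \Var[Y_i] = \Var[Y] = \sigma^2/b^2$.

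Next I would observe that $\Pr[X \ge E[X] + \lambda] = \Pr[Y \ge E[Y] + \lambda/b]$, so I can apply each of the four bounds \eqref{eqprobcvarstrongest}--\eqref{eqprobcvarlin2} of Theorem~\ref{tprobcvar} to $Y$ with deviation parameter $\lambda/b$ and variance $\tau^2$. The remaining work is purely bookkeeping: substituting $\tau^2 = \sigma^2/b^2$ and simplifying. For instance, $\frac{\lambda/b}{\tau^2} = \frac{b\lambda}{\sigma^2}$, $\frac{\tau^2}{n+\tau^2} = \frac{\sigma^2}{nb^2+\sigma^2}$, $\frac{\lambda/b}{n} = \frac{\lambda}{nb}$, and $\frac{n}{n+\tau^2} = \frac{nb^2}{nb^2+\sigma^2}$, which turns \eqref{eqprobcvarstrongest} into \eqref{eqprobcvarstrongesta}; the analogous substitutions turn the remaining three bounds into the displayed intermediate bound, \eqref{eqprobcvarlin1a}, and \eqref{eqprobcvarlin2a}. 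The boundary-case conventions transform consistently as well: $\lambda/b > n$ is the same as $\lambda > nb$, and for $\lambda = nb$ the value $(\tau^2/(n+\tau^2))^n$ becomes $(\sigma^2/(nb^2+\sigma^2))^n$.

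Finally, for the lower-tail statement I would repeat the argument verbatim with Corollary~\ref{corprobcvar} in place of Theorem~\ref{tprobcvar}; equivalently, apply the upper-tail version to $-X_i$, which satisfies $-X_i \le E[-X_i] + b$ whenever $X_i \ge E[X_i] - b$. There is no genuine obstacle in this proof; the only thing requiring a little care is to carry the algebraic substitutions through each of the four bounds and both boundary cases correctly, so that the constants appear in exactly the stated form.
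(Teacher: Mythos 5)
Your proposal is correct and is exactly the paper's argument: the text immediately preceding the theorem states that the case of general $b$ reduces to $b=1$ by dividing all random variables by $b$, which is precisely your rescaling $Y_i = X_i/b$ with $\tau^2=\sigma^2/b^2$ and deviation $\lambda/b$, and your substitutions and boundary-case checks all come out as stated.
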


\subsubsection{Relation Between the Different Chernoff Bounds}\label{secprobrelation}

We proceed by discussing how the bounds presented so far are related. The main finding will be that the Chernoff bounds depending on the variance imply all other bounds discussed so far with the exception of the additive Chernoff bound for random variables having different ranges (Theorem~\ref{tprobchernoffadditive}).

Surprisingly, this fact is not stated in Hoeffding's paper~\cite{Hoeffding63}. More precisely, in~\cite{Hoeffding63} the analogue of Theorems~\ref{tprobcvar} and~\ref{tprobcvara} uses the additional assumption that all $X_i$ have the same expectation. Since this assumption is not made for the theorems not involving the variance, Hoeffding explicitly states that the latter are stronger in this respect (see the penultimate paragraph of Section~3 of~\cite{Hoeffding63}). 

It is however quite obvious that the common-expectation assumption can be easily removed. From random variables with arbitrary means we can obtain random variables all having mean zero by subtracting their expectation. This operation does not change the variance and does not change the distribution of $X - E[X]$. Consequently, Hoeffding's result for variables with identical expectations immediately yields our version of this result (Theorem~\ref{tprobcvar} and~\ref{tprobcvara}). Theorem~\ref{tprobcvar} implies Theorem~\ref{tprobCMU} via the equivalent version of Theorem~\ref{tprobCMUA}, see again the penultimate paragraph of Section~3 of~\cite{Hoeffding63}. 

Consequently, the first (strongest) bound in Theorem~\ref{tprobcvar} (equivalently the first bound of Theorem~\ref{tprobcvara}) implies the first (strongest) bound of Theorem~\ref{tprobCMU}, which is equivalent to the first (strongest) bound in Theorem~\ref{tprobCML}. Essentially all other bounds presented so far can be derived from these main theorems via simple, sometimes tedious, estimates. The sole exception is Theorem~\ref{tprobchernoffadditive}, which can lead to significantly stronger estimates when the random variables have ranges of different size. 

As an example, let $X_1, \dots, X_n$ be independent random variables such that $X_1, \dots, X_{n-1}$ take the values $0$ and $(n-1)^{-1/2}$ with equal probability $\frac 12$ and such that $X_n$ takes the values $0$ and $1$ with equal probability $\frac 12$. Let $X = \sum_{i=1}^n X_i$. Then $E[X] = \tfrac 12 (\sqrt{n-1} + 1)$. Theorem~\ref{tprobchernoffadditive}, taking $c_i = (n-1)^{-1/2}$ for $i \in [1..n-1]$ and $c_n = 1$, yields the estimate 
\begin{equation}
 \Pr[X \ge E[X] + \lambda] \le \exp\bigg(-\frac{2\lambda^2}{\sum_{i=1}^n c_i^2}\bigg) = \exp(-\lambda^2).\label{eqprobyy}
\end{equation}

Note that $\Var[X] = \frac 12 =: \sigma^2$. Consequently, the strongest Chernoff bound of Theorem~\ref{tprobcvar}, equation~\eqref{eqprobcvarstrongest}, gives an estimate larger than $(1+\frac{\lambda}{\sigma^2})^{-(1+\frac{\lambda}{\sigma^2})\frac{n\sigma^2}{n+\sigma}} = \exp(-\Theta(\lambda \log \lambda))$. Consequently, in this case Theorem~\ref{tprobchernoffadditive} gives a significantly stronger estimate than Theorem~\ref{tprobcvar}.

\subsubsection{Tightness of Chernoff Bounds, Lower Bounds for Deviations (Anti-Concentration)}

As a very general and not at all precise rule of thumb, we can say that often the sharpest Chernoff bounds presented so far give an estimate for the tail probability that is near-tight. This is good to know from the perspective of proof design, since it indicates that failing to prove a desired statement usually cannot be overcome by trying to invent sharper Chernoff bounds. We shall not try to make this statement precise. 

However, occasionally, we also need lower bounds for the deviation from the expectation as a crucial argument in our analysis. For example, when generating several offspring independently in parallel, as, e.g., in a \oplea, we expect the best of these to be significantly better than the expectation and the efficiency of the algorithm relies on such desired deviations from the expectation. 

Lower bounds for deviations from the expectation, occasionally called anti-concentration results, seem to be harder to work with. For this reason, we only brief{}ly give some indications how to handle them and refer the reader to the literature. We note that there is a substantial body of mathematics literature on this topic, see, e.g.,~\cite{Nagaev01} and the references therein, which however is not always easy to use for algorithmic problems. 
We also note that for binomially distributed random variables, also the estimates in Theorem~\ref{tprobpbino} can be used to derive lower bounds for tail probabilities.

\paragraph{Estimating binomial coefficients:} For binomial distributions, estimating the (weighted) sum of binomial coefficients arising in the expression of the tail probability often works well (though the calculations may become tedious). In the theory of randomized search heuristics, this approach was used, among others, in the analysis of the \oplea in~\cite{JansenJW05,DoerrK15,GiessenW17,DoerrGWY17,DoerrWY18} and the \opllga in~\cite{DoerrD18}. The following elementary bound was shown in~\cite[Lemma~3]{Doerr14}. 

\begin{lemma}\label{lprobsqrtn12}
Let $n \in \N$ and $X \sim \Bin(n,\frac12)$. Then 
\begin{align}
&\Pr\Big[X \ge E[X] + \tfrac 12 \sqrt{E[X]}\Big] \ge \tfrac 18,\\
&\Pr\Big[X \le E[X] - \tfrac 12 \sqrt{E[X]}\Big] \ge \tfrac 18.
\end{align}
\end{lemma}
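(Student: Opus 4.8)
The plan is to prove the upper-tail inequality; the lower-tail one is in fact the \emph{same} statement, since $\Bin(n,\tfrac12)$ is symmetric around $E[X]=n/2$ (i.e.\ $\Pr[X=k]=\Pr[X=n-k]$), so $\Pr[X\le n/2-t]=\Pr[X\ge n/2+t]$ for every $t$. Writing $d:=\tfrac12\sqrt{E[X]}=\tfrac12\sqrt{n/2}$, it thus suffices to show $\Pr[X\ge n/2+d]\ge\tfrac18$. The approach is an elementary one-dimensional anti-concentration argument: by symmetry at least half of the probability mass lies weakly above $n/2$; if almost none of it escaped beyond $n/2+d$ (and hence, again by symmetry, below $n/2-d$), then more than $\tfrac34$ of the mass would be crammed into the short window $(n/2-d,\,n/2+d)$, which is impossible because no single value of $X$ carries more than $O(1/\sqrt n)$ probability.

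Concretely I would argue by contradiction. Suppose $\Pr[X\ge n/2+d]<\tfrac18$; by the symmetry above also $\Pr[X\le n/2-d]<\tfrac18$, hence
\[
\Pr\!\left[n/2-d<X<n/2+d\right] \;>\; 1-\tfrac14 \;=\; \tfrac34 .
\]
The open interval $(n/2-d,\,n/2+d)$ has length $2d=\sqrt{n/2}$, so it contains at most $\sqrt{n/2}+1$ integers. For each integer $k$, the middle-binomial estimate~\eqref{eqprobmiddlesharp} (equivalently Lemma~\ref{lprobmaxbinom}) gives $\Pr[X=k]=\binom nk 2^{-n}\le\binom{n}{\lfloor n/2\rfloor}2^{-n}\le\sqrt{2/(\pi n)}$. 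Summing over the integers in the window yields
\[
\tfrac34 \;<\; \Pr\!\left[n/2-d<X<n/2+d\right] \;\le\; \Big(\sqrt{\tfrac n2}+1\Big)\sqrt{\tfrac{2}{\pi n}} \;=\; \tfrac{1}{\sqrt\pi}+\sqrt{\tfrac{2}{\pi n}} .
\]
Since $\tfrac1{\sqrt\pi}\approx0.564<\tfrac34$, the right-hand side drops below $\tfrac34$ as soon as $\sqrt{2/(\pi n)}<\tfrac34-\tfrac1{\sqrt\pi}$, i.e.\ for all $n$ past an explicit small threshold (one checks $n\ge19$ suffices), which is the desired contradiction for those $n$.

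The only real obstacle is that the estimate is tight: the margin between the ``budget'' $\tfrac34$ and the leading term $\tfrac1{\sqrt\pi}$ is small, and for tiny $n$ the stray ``$+1$'' integer in the window spoils it. I would dispose of the remaining finitely many $n$ by a direct computation from the binomial probabilities; this is painless — e.g.\ for even $n\le8$ one has $d\le1$, so $\Pr[X\ge n/2+d]=\Pr[X\ge n/2+1]=\tfrac12\big(1-\Pr[X=n/2]\big)\ge\tfrac18$, and the odd cases and the cases $9\le n\le18$ are equally immediate. (Alternatively one can run the same estimate ``forwards'': $\Pr[X\ge n/2+d]\ge\Pr[X\ge n/2]-\Pr[n/2\le X<n/2+d]\ge\tfrac12-(d+\tfrac12)\sqrt{2/(\pi n)}$, which reaches the same conclusion with the same threshold.) This computation also makes transparent why the constants $\tfrac12$ and $\tfrac18$ fit together: a deviation of $c\sqrt{E[X]}$ corresponds to roughly $\tfrac{2c}{\sqrt\pi}$ units of mass in the window, so any $c,p$ with $\tfrac{2c}{\sqrt\pi}<1-2p$ are provable this way, and $c=\tfrac12$, $p=\tfrac18$ is one convenient such pair.
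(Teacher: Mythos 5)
Your argument is correct. Note that the paper itself does not prove Lemma~\ref{lprobsqrtn12} at all --- it only cites it from~\cite[Lemma~3]{Doerr14} --- so what you have produced is a self-contained proof rather than a variant of one given here; it is, however, very much in the spirit of the surrounding text, using only the middle-binomial-coefficient bound~\eqref{eqprobmiddlesharp} and the exact symmetry of $\Bin(n,\tfrac12)$. The core computation is sound: under the contradiction hypothesis the open window of length $\sqrt{n/2}$ would carry more than $\tfrac34$ of the mass, yet it contains at most $\sqrt{n/2}+1$ integers each of probability at most $\sqrt{2/(\pi n)}$, giving the bound $\tfrac1{\sqrt\pi}+\sqrt{2/(\pi n)}$, which indeed drops below $\tfrac34$ precisely for $n\ge 19$; and the remaining cases $n\le 18$ all check out by direct computation (the constant $\tfrac18$ is in fact attained at $n=3$, where $\Pr[X=3]=\tfrac18$ exactly, so there is no slack to avoid the case distinction). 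Two small remarks. First, deferring $9\le n\le 18$ and the odd $n\le 7$ to ``equally immediate'' computations is acceptable but should really be accompanied by the (short) table, since the whole argument hinges on there being no exceptional small case. Second, your parenthetical ``forwards'' variant as stated uses the count $d+\tfrac12$ for the integers in the half-open window $[n/2,\,n/2+d)$; this is not a valid bound in general (e.g.\ for even $n$ and $d=1.2$ the window contains $2>1.7$ integers), and with the safe count $d+1$ the threshold worsens to roughly $n\ge 74$. This does not affect your main proof, which stands on its own, but the alternative should either be dropped or its counting corrected.
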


\paragraph{Two-stage rounding trick:} Estimating binomial coefficients works well for binomial distributions. However, a neat trick allows to extend such results to sums of independent, non-identically distributed binary random variables. The rough idea is that we can sample a binary random variable $X$ with $\Pr[X = 1] = p$ by first sampling the unique random variable $Y$ which takes values in $\{\frac 12, \lfloor p+\frac 12\rfloor\}$ and satisfies $E[Y] = E[X] = p$, and then, if $Y=\frac 12$, replacing $Y$ with a uniform choice in $\{0,1\}$. If we view sampling $X$ as rounding $p$ randomly to $0$ or $1$ in a way that the expectation is $p$, then this two-stage procedure consists of first rounding $p$ to $\{0,\frac 12\}$ or $\{\frac 12, 1\}$ with expectation $p$ and then (if necessary) rounding the result to $\{0,1\}$ without changing the expectation.

We use this trick below to show by elementary means two results which previously were shown only via deeper methods. We first extend Lemma~\ref{lprobsqrtn12} above from fair coin flips to sums of independent binary random variables having different distributions. A similar result was shown in~\cite[first item of Lemma~6]{OlivetoW15} for $X \sim \Bin(n,p)$, that is, for sums of identically distributed binary random variables (the result is stated without a lower bound on the variance, but by regarding, e.g., $\Bin(n,n^{-2})$, it becomes clear that a restriction like $p \in [\frac 1n, 1-\frac 1n]$ is necessary). We did not find the general result of Lemma~\ref{lprobsqrtn} in the literature, even though it is clear that such results can be shown via a normal approximation.

\begin{lemma}\label{lprobsqrtn}
  Let $v_0 > 0$. There are constants $c, C > 0$ such that the following is true. Let $n \in \N$. Let $p_1, \dots, p_n \in [0,1]$. For all $i \in [1..n]$, let $X_i$ be a binary random variable with $\Pr[X_i = 1] = p_i$. Assume that $X_1, \dots, X_n$ are independent. Let $X = \sum_{i=1}^n X_i$. Assume that $\Var[X] = \sum_{i=1}^n p_i (1-p_i) \ge v_0$. Then 
  \begin{align}
  \Pr\Big[X \ge E[X] + c\sqrt{\Var[X]}\Big] &\ge C,\label{eqprobsqrtnupper}\\
  \Pr\Big[X \le E[X] - c\sqrt{\Var[X]}\Big] &\ge C.\label{eqprobsqrtnlower}
  \end{align}
\end{lemma}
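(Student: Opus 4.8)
I would realise $X$ via the two-stage rounding described just above the lemma. For each $i$ set $v_i:=\lfloor p_i+\tfrac12\rfloor\in\{0,1\}$ and $q_i:=2\min\{p_i,1-p_i\}\in[0,1]$, let $\xi_1,\dots,\xi_n$ be independent with $\Pr[\xi_i=1]=q_i$, and given $\xi_i$ put $X_i=v_i$ if $\xi_i=0$ and let $X_i$ be a fresh fair coin flip if $\xi_i=1$; one checks $E[X_i]=p_i$, so this is a valid realisation. Conditioning on $\xi=(\xi_1,\dots,\xi_n)$ and writing $k=\sum_i\xi_i$, $c_\xi=\sum_{i:\xi_i=0}v_i$, we get $X=c_\xi+B$ with $B\mid\xi\sim\Bin(k,\tfrac12)$, so $E[X\mid\xi]=c_\xi+\tfrac k2$. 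The point is that $B$ is a genuine fair binomial, to which the already proven anti-concentration bound Lemma~\ref{lprobsqrtn12} applies, while $k$ is large enough to make the resulting deviation of order $\sqrt{\Var[X]}$.

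\textbf{Bookkeeping and the conditional step.} A one-line computation gives $\tfrac14 q_i(2-q_i)=p_i(1-p_i)$, hence $\Var[X]=\sum_i p_i(1-p_i)=\tfrac14\sum_i q_i(2-q_i)$. Since $\tfrac12 q_i=\min\{p_i,1-p_i\}\in[p_i(1-p_i),2p_i(1-p_i)]$, this yields $E[k]=\sum_i q_i\in[2\Var[X],4\Var[X]]$, and the law of total variance gives $\Var\big[E[X\mid\xi]\big]=\Var[X]-E\big[\Var[X\mid\xi]\big]=\Var[X]-\tfrac14 E[k]\le\tfrac12\Var[X]$. Now Lemma~\ref{lprobsqrtn12} applied conditionally gives $\Pr\big[B\ge\tfrac k2+\tfrac12\sqrt{k/2}\,\big|\,\xi\big]\ge\tfrac18$ and $\Pr\big[B\le\tfrac k2-\tfrac12\sqrt{k/2}\,\big|\,\xi\big]\ge\tfrac18$ (the case $k=0$ being trivial), and a multiplicative Chernoff bound (Theorem~\ref{tprobCML}) shows that on an event of probability at least some $p(v_0)>0$ one has $k$ of order $\Var[X]$, hence $\tfrac12\sqrt{k/2}\ge c_1\sqrt{\Var[X]}$ for an absolute $c_1>0$. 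Thus, conditionally on this event, $X$ lies at least $c_1\sqrt{\Var[X]}$ above (resp.\ below) its conditional mean $E[X\mid\xi]$ with probability at least $\tfrac18$.

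\textbf{From $E[X\mid\xi]$ to $\mu$: the obstacle.} The remaining and genuinely delicate point is to replace $E[X\mid\xi]$ by $\mu:=E[X]$. Writing $W:=E[X\mid\xi]-\mu=\sum_i\sigma_i(\xi_i-q_i)$ with $\sigma_i=\tfrac12-v_i\in\{\pm\tfrac12\}$, we have a centred sum of independent summands, each bounded by $\tfrac12$ in absolute value, with $\Var[W]\le\tfrac12\Var[X]$. On $\{W\ge0\}$ an upward deviation of $X$ from $E[X\mid\xi]$ is also an upward deviation from $\mu$, so \eqref{eqprobsqrtnupper} follows once $\{W\ge0\}$, intersected with the good event above, has probability bounded below. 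Since $W$ can be \emph{unbalanced} in general, I would split according to the size of $\Var[W]$: when $\Var[W]$ exceeds an absolute threshold $\sigma_0^2$, a central-limit estimate of Berry--Esseen type applies (the third absolute moment of each summand is at most half its variance, so the Berry--Esseen error is $O(\Var[W]^{-1/2})$), giving $\Pr[W\ge0]\ge\tfrac14$, and here $\Var[X]\ge2\Var[W]\ge2\sigma_0^2$ is itself large enough that the good event has probability close to $1$; when $\Var[W]<\sigma_0^2$, then $W$ hugs $0$ (Chebyshev, Lemma~\ref{lprobchebyshev}), so one works instead on $\{W\ge -t\}$ for a suitable constant $t$ and absorbs the $t$-shift into a slightly larger---but still $k$-independent---deviation of $B$ in Lemma~\ref{lprobsqrtn12}, valid once $\Var[X]$ (hence $k$) is large, the residual regime of bounded $\Var[X]$ being handled separately by the crude $\{k\ge1\}$ estimate with $c$ taken small. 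I expect the careful tracking of which constants are absolute and which must degrade as $v_0\to0$---a degradation that is genuine, since already for $n=1$ and $p_1=\Theta(v_0)$ one is forced to take $c=\Theta(\sqrt{v_0})$ and $C=\Theta(v_0)$---to be the main work. Finally, the lower tail \eqref{eqprobsqrtnlower} needs no new argument: apply \eqref{eqprobsqrtnupper} to $n-X=\sum_i(1-X_i)$, a sum of independent binary variables with success probabilities $1-p_i$ and the same variance as $X$.
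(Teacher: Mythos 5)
Your realisation of $X$ via the two-stage rounding and the conditional use of Lemma~\ref{lprobsqrtn12} are exactly the paper's starting point, and your bookkeeping ($E[k]\in[2\Var[X],4\Var[X]]$, $\Var[W]\le\tfrac12\Var[X]$) is correct. The gap sits precisely where you place ``the main work'', namely the passage from $E[X\mid\xi]$ to $\mu$, and your patch for the residual regime of bounded $\Var[X]$ does not close it. There $\Pr[k\ge1]\asymp E[k]\asymp\Var[X]$, while Chebyshev gives only $\Pr[W<-t]\le\Var[W]/t^2$ with $\Var[W]\asymp\Var[X]$, so forcing $\Pr[W<-t]$ below $\Pr[k\ge1]$ requires $t$ to be an absolute constant of order at least $\tfrac12$; but on $\{k=1\}$, which carries essentially all of $\Pr[k\ge1]$ in this regime, Lemma~\ref{lprobsqrtn12} only supplies a deviation of $\tfrac12\sqrt{1/2}\approx0.354<t$, so the shift cannot be absorbed. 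The failure is structural, not a matter of constants: take all $p_i=1-\eps$ with $n\eps=v_0$ small. Then every $\sigma_i=-\tfrac12$, so $W=-\tfrac12(k-E[k])$, and on $\{k\ge1\}$ one has $W\le-\tfrac12+v_0$; hence $\{k\ge1\}\cap\{W\ge-t\}$ is empty for every $t<\tfrac12-v_0$. Moreover the upper tail of $X$ is here carried by $\{k=0\}$ (no coin is flipped and $X=n$), an event your scheme --- condition on $k\ge1$, then push the binomial upward --- never sees.

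The paper's proof avoids the intersection problem altogether. It first reduces to $p_i\le\tfrac12$; then all $v_i=0$, the recentering variable \emph{is} $\tfrac12(k-E[k])$, and one application of Cantelli's inequality to the conditional mean $Y$, namely $\Pr[Y\ge E[Y]-\tfrac12c\sqrt{\Var[X]}]\ge c^2/(2+c^2)$, simultaneously caps the recentering deficit at $\tfrac12c\sqrt{\Var[X]}$ and guarantees $2Y\ge2(1-c/\sqrt{2v_0})E[X]>0$ trials for the binomial, which then overshoots by the full $c\sqrt{\Var[X]}$ with probability at least $\tfrac18$, for a net gain of $\tfrac12c\sqrt{\Var[X]}$; no Berry--Esseen bound and no union bound over correlated functions of $\xi$ is needed. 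The general case is then obtained by splitting $[1..n]$ into $I'=\{i:p_i\le\tfrac12\}$ and its complement, applying the one-sided result to whichever half holds at least half the variance, and exploiting the \emph{independence} of the two partial sums together with Cantelli on the other half, so that the two constant probabilities multiply rather than being union-bounded. To rescue your symmetric rounding you would have to import this splitting-and-independence step; as written, the recentering argument fails.
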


\begin{proof}
  Let us first assume that $p_i \le \frac 12$ for all $i \in [1..n]$ and show the claim under the weaker assumption $\Var[X] \ge \frac 12 v_0$. We define independent random variables $Y_i$ such that  
  \begin{align*}
  \Pr[Y_i = \tfrac 12] &= 2p_i,\\
  \Pr[Y_i = 0] &= 1 - 2p_i.
  \end{align*}
  Let $Y = \sum_{i = 1}^n Y_i$ and note that $E[Y] = E[X]$.
  
  Based on the $Y_i$, we define independent binary random variables $Z_i$ as follows. If $Y_i = 0$, then $Z_i := 0$. Otherwise, that is, if $Y_i = \frac 12$, then we let $Z_i$ be uniformly distributed in $\{0,1\}$. An elementary calculation shows that $\Pr[Z_i = 1] = p_i$, that is, the $Z_i$ have the same distribution as the $X_i$. Hence it suffices to show our claim for $Z := \sum_{i = 1}^n Z_i$.
  
  Let $c$ be a sufficiently small constant.  
  Our main argument for the lower bound on the upper tail~\eqref{eqprobsqrtnupper} shall be that with constant probability we have the event 
  \[A := \mbox{``$Y \ge E[Y] - \frac 12 c \sqrt{\Var[X]}$''}.\] 
  In this case, again with constant probability, we have $Z \ge E[Z \mid A] + c \sqrt{\Var[X]}$, which implies $Z \ge E[Y] - \frac 12 c \sqrt{\Var[X]} + c \sqrt{\Var[X]} = E[X] + \frac 12 c \sqrt{\Var[X]}$. In other words, we have 
  \begin{align*}
  \Pr\left[X \ge E[X] + \tfrac 12 c \sqrt{\Var[X]}\right]
  \ge \Pr[A] \cdot \Pr\left[(Z \mid A) \ge E[Z \mid A] + c \sqrt{\Var[X]} \right]
  \end{align*}
  and we shall argue that both factors are at least constant.

  For the first factor, we note that for all $i \in [1..n]$, we have $E[Y_i] = p_i$. An elementary calculation thus shows $\Var[Y_i] = \frac 12 p_i (1 - 2p_i) \le \frac 12 p_i (1 - p_i) = \frac 12 \Var[X_i]$ and hence $\Var[Y] \le \frac 12 \Var[X]$. 
  With Cantelli's inequality (Lemma~\ref{lprobcantelli}), we compute
  \begin{align*}
  \Pr[A] &= \Pr\left[Y \ge E[Y] - \frac 12 c \sqrt{\Var[X]}\right] \\
  &\ge \Pr\left[Y \ge E[Y] - \frac 1{\sqrt{2}}  c \sqrt{\Var[Y]}\right] \\
  &\ge 1 - \Pr\left[Y \le E[Y] - \frac 1{\sqrt{2}}  c \sqrt{\Var[Y]}\right] \\
  &\ge 1 - \frac{1}{1 + c^2/2} = \frac{c^2}{2+c^2}.
  \end{align*}
  
  For the second factor, we note that once $Y$ is determined, $Z \sim \Bin(2Y,\frac 12)$. We estimate
  \begin{align*}
  E[Y] - \tfrac 12 c \sqrt{\Var[X]}
  &\ge E[Y] - \tfrac c{\sqrt{2 v_0}} \Var[X]\\
  &\ge E[Y] - \tfrac c{\sqrt{2 v_0}} E[X]
  \ge (1 - \tfrac c{\sqrt{2 v_0}}) E[X] =: q,
  \end{align*}
  where we use that $\Var[X] \ge \frac 12 v_0$ implies $\sqrt{\Var[X]} \le \sqrt{2/v_0} \Var[X]$. Hence conditional on $A$, we have $Z \sim \Bin(2 \tilde q,\frac 12)$ for some $\tilde q \ge q$, and thus
  \begin{align*}
  \Pr\left[(Z\mid A) \ge E[Z \mid A] + c \sqrt{\Var[X]}\right] \ge \frac 18
  \end{align*}
  by Lemma~\ref{lprobsqrtn12} and $c \sqrt{\Var[X]} \le c \sqrt{E[X]} \le \frac 12 \sqrt{(1 - \tfrac c{\sqrt{2 v_0}}) E[X]} \le \frac 12 \sqrt{E[Z\mid A]}$, where the middle inequality assumes that $c$ is sufficiently small. 
  
  To prove~\eqref{eqprobsqrtnlower}, we argue as follows. Let $K = E[X] + \frac 12 c \sqrt{\Var[X]}$. Then
  \begin{align*}
  \Pr\Big[X \le E&[X] - \tfrac 12 c \sqrt{\Var[X]}\Big] \\
  & \ge \sum_{k=0}^{2K} \Pr[Y = \tfrac k2] \cdot \Pr\Big[(Z \mid Y=\tfrac k2) \le E[X] - \tfrac 12 c \sqrt{\Var[X]}\Big]
  \end{align*}
  Now $(Z \mid Y = \frac k2) \sim \Bin(k,\frac 12)$, hence Lemma~\ref{lprobdomdistr}~\ref{it:probdomnm} implies that the second factor is smallest for $k = 2K$. Consequently, 
  \[\Pr\Big[X \le E[X] - \tfrac 12 c \sqrt{\Var[X]}\Big] \ge \Pr[Y \le K] \cdot \Pr\Big[(Z \mid Y=K) \le E[X] - \tfrac 12 c \sqrt{\Var[X]}\Big].\]
  We estimate the two factors separately. For the first one, in an analogous fashion as before, we obtain $\Pr[Y \le K] = \Pr[Y \le E[Y] + \frac 12 c \sqrt{\Var[X]}] \ge \frac{c^2}{2 + c^2}$. For the second factor, we compute
  \begin{align*}
  \Pr\Big[&(Z \mid Y=K) \le E[X] - \tfrac 12 c \sqrt{\Var[X]}\Big] \\
  &= \Pr\Big[\Bin(2K,\tfrac 12) \le E[\Bin(2K,\tfrac 12)] - K + E[X] - \tfrac 12 c \sqrt{\Var[X]}\Big] \\
  &= \Pr\Big[\Bin(2K,\tfrac 12) \le E[\Bin(2K,\tfrac 12)] - c \sqrt{\Var[X]}\Big].
  \end{align*}
  For $c \le \frac 12$, we have $c \sqrt{\Var[X]} \le \frac 12 \sqrt{E[X]} \le \frac 12 \sqrt{E[\Bin(2K,\tfrac 12)]}$ and Lemma~\ref{lprobsqrtn12} yields $\Pr\Big[(Z \mid Y=K) \le E[X] - \tfrac 12 c \sqrt{\Var[X]}\Big] \ge \frac 18$.
 
 Now assume that the $p_i$ are not all in $[0,\frac 12]$. Let $I' = \{i \in [1..n] \mid p_i \le \frac 12\}$ and $I'' = [1..n] \setminus I'$. Let $X' = \sum_{i \in I'} X_i$ and $X'' = \sum_{i \in I''} X_i$. Since $\Var[X] = \Var[X'] + \Var[X'']$, by symmetry (possibly replacing the $p_i$ by $1-p_i$), we can assume that $\Var[X'] \ge \frac 12 \Var[X]$. Now $\Var[X] \ge v_0$ implies $\Var[X'] \ge \frac 12 v_0$, and by the above we have $X' \ge E[X'] + \frac 12 c \sqrt{\Var[X']} \ge E[X'] + \frac 1{2\sqrt 2} c \sqrt{\Var[X]}$ with constant probability. By Cantelli's inequality again, we have $X'' \ge E[X''] - \frac{c}{4} \sqrt{\Var[X'']} \ge E[X''] - \frac{c}{4\sqrt{2}} \sqrt{\Var[X]}$ with constant probability. Hence $X = X' + X'' \ge E[X'] + E[X''] + \frac c{4\sqrt 2} \Var[X] = E[X] + \frac c{4 \sqrt 2}  \Var[X]$ with constant probability. The proof that $X \le E[X] - \frac c{4 \sqrt 2}  \Var[X]$ with constant probability is analogous. By replacing our original $c$ by $4\sqrt{2}c$, we obtain the precise formulation of the claim.
\end{proof}

We now use the two-stage rounding trick to give an elementary proof of the following result. 
\begin{lemma}\label{lprobhit}
  Let $n \in \N$ and $p_1, \dots, p_n \in [0,1]$. For all $i \in [1..n]$, let $X_i$ be a binary random variable with $\Pr[X_i = 1] = p_i$. Assume that $X_1, \dots, X_n$ are independent. Let $X = \sum_{i = 1}^n X_i$. If $\Var[X] \ge 1$, then for all $k \in [0..n]$, 
  \[\Pr[X = k] \le \frac{2}{\sqrt{\Var[X]}}.\]
\end{lemma}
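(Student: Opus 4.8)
The plan is to reuse the two-stage rounding trick from the proof of Lemma~\ref{lprobsqrtn}. First I would replace each $X_i$ by the two-stage process described there: let $Y_i$ be the random variable supported on $\{\tfrac12,\lfloor p_i+\tfrac12\rfloor\}$ with $E[Y_i]=p_i$, so that $\Pr[Y_i=\tfrac12]=2\min\{p_i,1-p_i\}=1-|2p_i-1|$, and then let $Z_i:=Y_i$ if $Y_i\in\{0,1\}$ and let $Z_i$ be an independent fair coin flip if $Y_i=\tfrac12$. An elementary computation gives $\Pr[Z_i=1]=p_i$, so we may assume $X=\sum_{i=1}^n Z_i$, where the $Y_i$ and the extra coin flips are all independent.

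Next I would condition on the whole vector $Y=(Y_1,\dots,Y_n)$. Writing $M:=|\{i:Y_i=\tfrac12\}|$ and $c:=\sum_{i:\,Y_i\neq 1/2}Y_i$, we have $(X\mid Y)=c+\Bin(M,\tfrac12)$, hence $\Pr[X=k\mid Y]\le\binom{M}{\lfloor M/2\rfloor}2^{-M}$; this is at most $\sqrt{2/(\pi M)}$ when $M\ge1$ by the sharp central-binomial estimate~\eqref{eqprobmiddlesharp}, and at most $1$ when $M=0$. Taking expectations over $Y$, we get $\Pr[X=k]\le E[G(M)]$, where $G(m):=\sqrt{2/(\pi m)}$ for $m\ge1$ and $G(0):=1$.

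Then I would control $M$, which is a sum of $n$ independent indicators with mean $E[M]=\sum_{i=1}^n(1-|2p_i-1|)=\sum_{i=1}^n(1-\sqrt{1-4p_i(1-p_i)})\ge 2\sum_{i=1}^n p_i(1-p_i)=2\Var[X]=:2V$, using $1-\sqrt{1-4t}\ge 2t$ for $t\in[0,\tfrac14]$. Splitting $E[G(M)]$ according to whether $M\ge\tfrac12 E[M]$ (on which event $M\ge1$ and $G(M)\le 2/\sqrt{\pi E[M]}$) or $M<\tfrac12 E[M]$ (on which $G(M)\le1$), and bounding $\Pr[M<\tfrac12 E[M]]\le\exp(-\tfrac18 E[M])$ via the multiplicative Chernoff bound~\eqref{eqprobCMLeasy} with $\delta=\tfrac12$, I obtain $\Pr[X=k]\le \tfrac{2}{\sqrt{\pi E[M]}}+\exp(-\tfrac18 E[M])$. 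Since the right-hand side decreases in $E[M]$, substituting $E[M]\ge 2V$ yields $\Pr[X=k]\le\sqrt{2/(\pi V)}+e^{-V/4}$.

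It then remains to check $\sqrt{2/(\pi V)}+e^{-V/4}\le 2/\sqrt V$, equivalently $\sqrt V\,e^{-V/4}\le 2-\sqrt{2/\pi}\approx 1.20$, which holds for every $V>0$ since $\sqrt V\,e^{-V/4}$ is maximised at $V=2$ with value $\sqrt2\,e^{-1/2}\approx 0.86$ (so the hypothesis $\Var[X]\ge1$ is only needed to make the claimed bound a non-trivial one). The step I expect to be the real obstacle is making the final constant come out as exactly $2$: this forces the use of the $\sqrt{2/(\pi m)}$ estimate from~\eqref{eqprobmiddlesharp} rather than the cruder $\sqrt{2/m}$ of~\eqref{eqprobbinom3}, and the split point $\tfrac12 E[M]$ together with the Chernoff parameter $\delta=\tfrac12$ must be chosen so that the leftover exponential term stays below $(2-\sqrt{2/\pi})/\sqrt V$; the remaining computations are routine.
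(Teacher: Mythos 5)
Your proposal is correct and follows essentially the same route as the paper's proof: the two-stage rounding trick, the central-binomial estimate~\eqref{eqprobmiddlesharp} applied to $\Bin(M,\tfrac12)$ conditional on the half-mass count $M$, the lower-tail Chernoff bound~\eqref{eqprobCMLeasy} for $M$ with $E[M]\ge 2\Var[X]$, and a final numerical check. The only differences are cosmetic (you take $\delta=\tfrac12$ relative to $E[M]$ and verify $\sqrt V e^{-V/4}\le 2-\sqrt{2/\pi}$ directly, whereas the paper uses $\delta=0.75$ relative to $2\Var[X]$ and the estimate $e^x\ge ex$), so nothing further is needed.
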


This result (without making the leading constant precise) was proven in the special case that all $p_i$ are between $\frac 16$ and $\frac 56$ in~\cite[Lemma~9, arxiv version]{SudholtW16}. This proof uses several deep arguments from probability theory. In~\cite[Lemma~3]{KrejcaW17}, the result stated in~\cite{SudholtW16} was minimally extended to the case that only a linear number of the $p_i$ are between $\frac 16$ and $\frac56$.

\begin{proof}[Proof of Lemma~\ref{lprobhit}]
  In a similar fashion as in Lemma~\ref{lprobsqrtn}, we define independent random variables $Y_i$ such that  
  \begin{align*}
  \Pr[Y_i = \tfrac 12] &= 2p_i,\\
  \Pr[Y_i = 0] &= 1 - 2p_i
  \end{align*}
  when $p_i \le \frac 12$ and 
  \begin{align*}
  &\Pr[Y_i = 1] = 2(p_i-\tfrac 12) = 2p_i - 1,\\
  &\Pr[Y_i = \tfrac 12] = 1 - 2(p_i-\tfrac 12) = 2 - 2p_i
  \end{align*}
  for $p_i > \tfrac 12$. If $Y_i \in \{0,1\}$, then $Z_i := Y_i$, else (that is, when $Y_i = \frac 12$) we let $Z_i$ be uniformly distributed on $\{0,1\}$. As before, the $Z_i$ are just an alternative definition of the $X_i$. Hence $Z = \sum_{i=1}^n Z_i$ has the same distribution as $X$. 
  
  For $\ell \in \{0,\frac 12, 1\}$ denote by $I_{\ell} := |\{i \in [1..n] \mid Y_i = \ell\}|$. Since 
  \[\Pr[Y_i = \tfrac 12] = 2 \min\{p_i,1-p_i\} \ge 2 \Var[X_i],\] 
  we have $E[I_{\frac 12}] \ge 2\Var[X]$. Since the $Y_i$ are independent, we have $\Pr[I_{\frac 12} \le 2(1-\delta)\Var[X]] \le \exp(-\delta^2 \Var[X])$ for all $\delta \in [0,1]$ by~\eqref{eqprobCMLeasy}. 
  
  Finally, note that by~\eqref{eqprobmiddlesharp} we have $\Pr[\Bin(a,\frac 12) = k] \le \sqrt{\frac{2}{\pi a}}$ for all $a \in \N$ and $k \in \N_0$.
  
  Writing $a_0 = \lfloor 2(1-\delta)\Var[X] \rfloor$ and combining these arguments, we obtain
  \begin{align*}
  \Pr[X = k] = & \Pr[Z = k]\\
  = & \sum_{a=0}^n \Pr[I_{\frac 12} = a] \sum_{b = 0}^k \Pr[I_1 = b] \Pr[Z = k \mid I_{\frac 12} = a \wedge I_1 = b]\\
  = & \sum_{a=0}^n \Pr[I_{\frac 12} = a] \sum_{b = 0}^k \Pr[I_1 = b] \Pr[\Bin(a,\tfrac 12) = k-b]\\
  \le & \Pr[I_{\frac 12} \le a_0] \\
  &+ \sum_{a=a_0+1}^n \Pr[I_{\frac 12} = a] \sum_{b = 0}^k \Pr[I_1 = b] \Pr[\Bin(a,\tfrac 12) = k-b]\\
  \le & \exp(-\delta^2 \Var[X]) + \sum_{a=a_0+1}^n \Pr[I_{\frac 12} = a] \sum_{b = 0}^k \Pr[I_1 = b] \sqrt{\frac{2}{\pi (a_0+1)}}\\
  \le & \exp(-\delta^2 \Var[X]) +  \sqrt{\frac{1}{(1-\delta) \pi \Var[X]}}.
  \end{align*}
For $\Var[X] \ge 1$, by taking $\delta = 0.75$ and estimating $\exp(-\delta^2 \Var[X]) \le \frac{1}{e\delta^2 \Var[X]} \le \frac{1}{e\delta^2 \sqrt{\Var[X]}}$, where we used the estimate $e^x \ge ex$, an alternative version of Lemma~\ref{lprobelower}, we obtain the bound $2 \Var[X]^{-1/2}$.
\end{proof}

We did not aim at optimizing the implicit constants in the result above. We note that by taking $\delta = \Var[X]^{-1/4}$, the claimed probability becomes $(1+o(1)) \frac{1}{\sqrt{\pi \Var[X]}}$ for  $\Var[X] \to \infty$.

\paragraph{Approximation via the normal distribution:} The generic approach of approximating binomial distributions via normal distributions is not often used in the theory of randomized search. In~\cite{OlivetoW15}, the Berry-Esseen inequality was employed to prove a result similar to Lemma~\ref{lprobsqrtn} for the special case of binomial distributions. Unlike many other proof relying on the normal approximation, this proof is quite short and elegant.

In~\cite{LaillevaultDD15}, the normal approximation was used to show that the best of $k \in \omega(1) \cap o(\sqrt n)$ independent random initial search points in $\{0,1\}^n$ with probability $1 - o(1)$ has a distance of $\frac n2 - \sqrt{\frac n2 (\ln k - \frac 12 \ln\ln k \pm c_k)}$ from the optimum, where $c_k$ is an arbitrary sequence tending to infinity. 

In~\cite[Lemma~7, arxiv version]{SudholtW16}, a very general result on how a sum of independent random variables with bounded expectation and variance is approximated by a normal distribution was used to analyze the performance of an estimation-of-distribution algorithm. This analysis is highly technical.

\paragraph{Order statistics:} The result about the best of $k$ independent initial individuals in~\cite{LaillevaultDD15} actually says something about the maximum order statistics of $k$ independent $\Bin(n,\frac 12)$ random variables. In general, the maximum order statistics is strongly related to lower bounds for tail probabilities as the following elementary argument (more or less explicit in all works on the \oplea) shows: Let $X_1, \dots, X_\lambda$ be independent random variables following the same distribution. Let $X_{\max} = \max\{X_i \mid i \in [1..\lambda]\}$. Then 
\begin{align*}
\Pr[X^* \ge D] &\le \lambda \Pr[X_1 \ge D];\\
\Pr[X^* \ge D] &= 1 - (1 - \Pr[X_1 \ge D])^\lambda \ge 1-\exp(\lambda \Pr[X_1 \ge D]).
\end{align*}
Consequently, $\Pr[X^* \ge D]$ is constant if and only if $\Pr[X_1 \ge D] = \Theta(\frac 1 \lambda)$. 

For the maximum order statistics of binomially distributed random variables with small success probability, Gie\ss en and Witt~\cite[Lemma~4~(3)]{GiessenW17} proved the following result and used it in the analysis of the \oplea.
\begin{lemma}
  Let $\alpha \ge 0$ and $c > 0$ be constants. Let $n \in \N$ and let all of the following asymptotics be for $n \to \infty$. Let $k = n (\ln n)^{-\alpha}$ and $\lambda = \omega(1)$. Let $X_{\max}$ be the maximum of $\lambda$ independent random variables with distribution $\Bin(k,\frac cn)$. Then $E[X_{\max}] = (1 \pm o(1)) \frac{1}{1+\alpha} \frac{\ln \lambda}{\ln\ln \lambda}$.
\end{lemma}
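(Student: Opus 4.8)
The plan is to show that $X_{\max}$ is concentrated around the value $D^\ast := \frac{1}{1+\alpha}\frac{\ln\lambda}{\ln\ln\lambda}$, and then to transfer this to the expectation. Write $\mu := E[X] = ck/n = c(\ln n)^{-\alpha}$ for the common mean of the $\lambda$ copies $X \sim \Bin(k,\tfrac cn)$, and note that $\mu = \Theta((\ln n)^{-\alpha})$ stays bounded while $D^\ast = \omega(1)$, so at the scale we care about $X$ is always well above its mean. The first task is a two-sided tail estimate: for integers $D$ with $\mu \le D \le \sqrt k$ I would show $\ln\Pr[X\ge D] = -(1\pm o(1))\,D\ln(D/\mu)$, uniformly in such $D$. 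The upper bound is the multiplicative Chernoff bound~\eqref{eqprobCMUstrongA} with $1+\delta = D/\mu$, giving $\Pr[X\ge D]\le (e\mu/D)^D$. For the lower bound, $\Pr[X\ge D]\ge\Pr[X=D] = \binom{k}{D}(\tfrac cn)^D(1-\tfrac cn)^{k-D}\ge (k/D)^D(c/n)^D(1-\tfrac cn)^k = (\mu/D)^D\cdot\Theta(1)$, using $\binom{k}{D}\ge(k/D)^D$ and $(1-\tfrac cn)^k = e^{-ck/n}(1\pm o(1)) = \Theta(1)$. In both directions the correction terms are $o(D\ln(D/\mu))$ because $\ln(D/\mu)\to\infty$.

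Next I would locate the threshold: the $D$ at which $\Pr[X\ge D]$ crosses $1/\lambda$. Since $-\ln\mu = \alpha\ln\ln n - \ln c$ and $\ln D^\ast = \ln\ln\lambda - \ln\ln\ln\lambda - \ln(1+\alpha) = (1+o(1))\ln\ln\lambda$, one gets $\ln(D^\ast/\mu) = (1+o(1))\big(\ln\ln\lambda + \alpha\ln\ln n\big)$, which in the relevant parameter range (where $\ln\ln n = (1+o(1))\ln\ln\lambda$, automatic for $\alpha=0$ and valid e.g.\ whenever $\lambda$ is polynomially bounded in $n$) equals $(1+o(1))(1+\alpha)\ln\ln\lambda$. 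Hence $D^\ast\ln(D^\ast/\mu) = (1+o(1))\ln\lambda$ and $\Pr[X\ge D^\ast] = \lambda^{-1\pm o(1)}$. Shifting $D^\ast$ by a constant factor $1\pm\eps$ multiplies $D\ln(D/\mu)$ by $1\pm\eps\pm o(1)$, which beats the $o(1)$ slack, so for each fixed $\eps\in(0,1)$ and all large $n$ one has $\Pr[X\ge(1-\eps)D^\ast]\ge \lambda^{-(1-\eps/2)}$ and $\Pr[X\ge(1+\eps)D^\ast]\le\lambda^{-(1+\eps/2)}$.

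From here the expectation follows quickly. Using independence of the $\lambda$ copies, $\Pr[X_{\max}\ge(1-\eps)D^\ast] = 1-(1-\Pr[X\ge(1-\eps)D^\ast])^\lambda \ge 1-\exp(-\lambda^{\eps/2})\to 1$, so $E[X_{\max}]\ge(1-\eps)(1-o(1))D^\ast$. For the upper bound, by the union bound and Lemma~\ref{lprobnonnegexp},
\[E[X_{\max}] \le (1+\eps)D^\ast + \sum_{j>(1+\eps)D^\ast}\Pr[X_{\max}\ge j] \le (1+\eps)D^\ast + \lambda\sum_{j>(1+\eps)D^\ast}\Pr[X\ge j],\]
and by the rapid decay of the binomial tail past $(1+\eps)D^\ast$ (consecutive ratios are $O(\mu/D^\ast) = o(1)$) the last sum is $O(\Pr[X\ge(1+\eps)D^\ast]) = O(\lambda^{-(1+\eps/2)})$, so $\lambda$ times it is $O(\lambda^{-\eps/2}) = o(1)$. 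Thus $E[X_{\max}] = (1\pm\eps\pm o(1))D^\ast$ for every fixed $\eps$, and letting $\eps\to0$ slowly yields $E[X_{\max}] = (1\pm o(1))\frac{1}{1+\alpha}\frac{\ln\lambda}{\ln\ln\lambda}$.

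The conceptual ingredients (the union bound above, the complementary-probability identity below, invoking a Chernoff bound) are routine; the real work is the asymptotic bookkeeping in the first two steps. One must make the $o(1)$'s in the tail estimate uniform over the whole window $D = (1\pm\eps)D^\ast$, and check that the stray additive terms — the $+D$ in the Chernoff exponent, the $\Theta(1)$ from $(1-\tfrac cn)^k$, the Stirling/$\binom{k}{D}$ slack, and $\ln c$, $\ln(1+\alpha)$, $\ln\ln\ln\lambda$ — are all genuinely negligible against $D^\ast\ln(D^\ast/\mu)\asymp\ln\lambda$. One also has to be careful about the admissible range of $\lambda$: the clean identity $\ln(D^\ast/\mu)\sim(1+\alpha)\ln\ln\lambda$ — whose "$1$" comes from $\ln D^\ast$ and whose "$\alpha$" comes from $-\ln\mu$ — relies on $\ln\ln n\sim\ln\ln\lambda$, and this is exactly where the factor $\tfrac1{1+\alpha}$ in the statement originates.
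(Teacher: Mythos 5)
The chapter does not prove this lemma; it only quotes it from Gie{\ss}en and Witt \cite[Lemma~4]{GiessenW17}, so there is no in-paper proof to compare against. Judged on its own, your argument is the standard threshold analysis for maxima of i.i.d.\ variables and it is sound: the upper tail via~\eqref{eqprobCMUstrongA}, the matching lower bound via the single point mass $\Pr[X=D]\ge(\mu/D)^D\cdot\Theta(1)$, the identification of $D^*$ with $\Pr[X\ge D^*]=\lambda^{-1\pm o(1)}$, and the transfer to the expectation via $1-(1-p)^\lambda$ on one side and Lemma~\ref{lprobnonnegexp} plus a union bound and the geometric decay of the binomial tail (ratio $O(\mu/D^*)=o(1)$) on the other. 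The stray terms you list ($+D$ in the Chernoff exponent, the $\Theta(1)$ from $(1-\tfrac cn)^k$, $\ln c$, $\ln(1+\alpha)$, $\ln\ln\ln\lambda$) are all $o(\ln\lambda)$, respectively $o(\ln(D^*/\mu))$, uniformly over the window $D=(1\pm\eps)D^*$, so the bookkeeping goes through, and the final $\eps\to 0$ diagonalization is routine.

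The caveat you raise about the admissible range of $\lambda$ is not a defect of your proof but a genuine gap in the statement as reproduced here: for $\alpha>0$ the constant $\frac{1}{1+\alpha}$ requires $\ln\ln\lambda=(1+o(1))\ln\ln n$, which does not follow from $\lambda=\omega(1)$ alone. For instance, with $\alpha=1$ and $\lambda=\ln n$ one has $\mu=c/\ln n=\Theta(1/\lambda)$, hence $\lambda\,\Pr[X\ge 2]=O(1/\ln n)\to 0$ and $E[X_{\max}]\to 1-e^{-c}=O(1)$, whereas the lemma would predict $\tfrac 12 \ln\ln n/\ln\ln\ln n\to\infty$. (Similarly, for $\lambda$ super-exponential in $n$ the claimed value can exceed the deterministic bound $X_{\max}\le k$.) So the lemma implicitly assumes a regime such as $\lambda$ polynomially bounded in $n$, which is where \cite{GiessenW17} apply it; under the condition you isolate, your proof is correct and complete in outline.
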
  

\paragraph{Extremal situations:} Occasionally, it is desirable to understand which situation gives the smallest or the largest deviations. For example, let $X_1, \dots, X_n$ be independent binary random variables with expectations $E[X_i] = p_i$. Then it could be useful to know that $X = \sum_{i=1}^n X_i$ deviates most (in some suitable sense) from its expectation when all $p_i$ are $\frac 12$. Such statements can be made formal and can be proven with the notions of majorization and Schur-convexity. We refer to~\cite{Scheideler00} for a nice treatment of this topic. Such arguments have been used to analyze estimation-of-distribution algorithms in~\cite{SudholtW16}.

\paragraph{Staying on one side of the expectation and Feige's inequality:} When it suffices to know that with reasonable probability we stay (more or less) on one side of the expectation, then the following results can be useful. 

A very general bound is Feige's inequality~\cite[Theorem~1]{Feige06}, which has found applications in the analysis of randomized search heuristics, among others, in~\cite{SudholtT12,DangL15,LehreN17,CorusDEL18}.
\begin{lemma}[Feige's inequality]
  Let $X_1, \dots, X_n$ be independent non-negative random variables with expectations $\mu_i := E[X_i]$ satisfying $\mu_i \le 1$. Let $X = \sum_{i=1}^n X_i$. Then \[\Pr[X \le E[X] + \delta] \ge \min\{\tfrac{1}{13},\tfrac{\delta}{\delta+1}\}.\]
\end{lemma}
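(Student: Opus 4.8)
The plan is to make two reductions and then attack the one genuinely hard regime by truncation. Write $\mu := E[X]$. Since $\Pr[X \le \mu + \delta]$ is non-decreasing in $\delta$ and the right-hand side equals $\tfrac{1}{13}$ for all $\delta \ge \tfrac{1}{12}$, it suffices to prove the bound for $\delta \le 1$; for $\delta > 1$ it then follows by monotonicity from the case $\delta = 1$. Equivalently, one has to show $\Pr[X \ge \mu + \delta] \le 1 - \min\{\tfrac{1}{13}, \tfrac{\delta}{1+\delta}\}$. The first easy observation is that when $\mu \le 1$ this is immediate from Markov's inequality (Lemma~\ref{lprobmarkov}): $\Pr[X \ge \mu + \delta] \le \frac{\mu}{\mu+\delta} \le \frac{1}{1+\delta}$, the last step using $\mu \le 1$, and $1 - \frac{1}{1+\delta} = \frac{\delta}{1+\delta}$. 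So the entire content is the regime $\mu > 1$, where the obstacle is that $X$ carries no usable a priori variance bound -- it may be concentrated or wildly spread out -- so the only leverage is independence together with $E[X_i] \le 1$.

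For $\mu > 1$ the idea is to separate the event ``some summand is exceptionally large'' from the event ``$X$ is large although every summand is moderate''. Fix a truncation level $L$ (to be chosen of order $\mu$), put $\bar X_i := \min\{X_i, L\}$ and $\bar X := \sum_{i=1}^n \bar X_i$, and write $\bar\mu := E[\bar X] \le \mu$. A first-moment/union argument gives $\Pr[\exists i : X_i > L] \le \sum_i \frac{\mu_i}{L} = \frac{\mu}{L}$, and on the complementary event $X = \bar X$, so $\Pr[X \ge \mu+\delta] \le \frac{\mu}{L} + \Pr[\bar X \ge \mu+\delta]$. Since $\bar X$ is bounded in $[0,L]$, after rescaling by $L$ one may feed it into the multiplicative Chernoff bounds of Theorem~\ref{tprobCMU} (or the convenient form of Corollary~\ref{corprob2hoch}): writing the overshoot as a multiplicative deviation $(1+\delta')\bar\mu$ with $1+\delta' = \frac{\mu+\delta}{\bar\mu}$ yields $\Pr[\bar X \ge \mu+\delta] \le \bigl(\tfrac{e\bar\mu}{\mu+\delta}\bigr)^{(\mu+\delta)/L}$. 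One then distinguishes according to how much mean the truncation destroyed -- i.e.\ how small $\bar\mu$ is relative to $\mu$ -- since the overflow $\mu - \bar\mu = E\bigl[\sum_i (X_i-L)^+\bigr]$ controls, again by a first-moment argument on the mass sitting above $L$, the complementary contribution. Combining these estimates and optimizing the free parameter(s) is what is meant to produce the absolute constant $\tfrac{1}{13}$.

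The main difficulty -- and the reason the constant comes out as the inelegant $\tfrac{1}{13}$ -- is that the naive truncation just described is far too lossy: with $L = \Theta(\mu)$ the exponent $(\mu+\delta)/L$ is only $\Theta(1)$, so the Chernoff factor is a constant bounded away from $0$, and balanced against $\frac{\mu}{L}$ it never drops below $1$. The fix is to be cleverer about the heavy summands -- truncate at a lower, carefully chosen level and bound $\Pr[X_i > L]$ through $E[X_i\mathbf{1}_{X_i>L}]/L$ rather than $\mu_i/L$, exploiting that a summand which overshoots $L$ tends to overshoot it by a lot, so such events are rarer than $\mu_i/L$ suggests -- while simultaneously tracking that truncation pushes $\bar\mu$ below $\mu+\delta$ by an amount ranging from $\delta$ (nothing lost) up to $\Omega(\mu)$ (much lost), the bound having to degrade gracefully across this whole range. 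Because $\delta$ may be arbitrarily small we must still beat $\frac{\delta}{1+\delta}$, so there is essentially no slack anywhere; the real labour is in stitching the sub-cases together and checking the arithmetic near the crossover point $\delta \approx \tfrac{1}{12}$, $\mu \approx 1$, where none of the individual estimates has room to spare. (This is, in essence, Feige's argument~\cite{Feige06}.)
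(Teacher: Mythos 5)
Your two reductions are fine: restricting to $\delta\le 1$ by monotonicity is legitimate, and the case $E[X]\le 1$ really does follow from Markov's inequality (Lemma~\ref{lprobmarkov}) alone, since $\Pr[X\ge \mu+\delta]\le\frac{\mu}{\mu+\delta}\le\frac{1}{1+\delta}$ exactly when $\mu\le 1$. But that is the easy part of the statement; the entire content of Feige's theorem is the regime $\mu>1$, and for that regime your write-up contains no proof. You sketch a truncation-plus-Chernoff argument and then correctly observe yourself that it fails: with $L=\Theta(\mu)$ the union-bound term $\mu/L$ and the Chernoff factor $\bigl(\tfrac{e\bar\mu}{\mu+\delta}\bigr)^{(\mu+\delta)/L}$ are both constants that never combine to anything below $1$, let alone below $1-\tfrac{1}{13}$ uniformly in $\delta$. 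The announced repair (a lower truncation level, bounding $\Pr[X_i>L]$ by $E[X_i\mathbf{1}_{X_i>L}]/L$, ``stitching the sub-cases together'') is stated as an intention, not carried out; note in particular that $E[X_i\mathbf{1}_{X_i>L}]/L$ is again just Markov applied to the tail mass and does not by itself create new leverage, and that the truncated variables, while bounded, come with no useful variance control, so a bare multiplicative Chernoff bound in the style of Theorem~\ref{tprobCMU} is not the right tool. Without a concrete choice of parameters and the arithmetic that actually produces the constant $\tfrac{1}{13}$, the claim is not established. This is a genuine gap: the hard case is missing entirely.

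For calibration, the chapter does not prove this lemma either; it is quoted from~\cite{Feige06}, so there is no in-paper proof to compare against. Feige's own argument is organized differently from your sketch: instead of truncating the sum at a global level, he first shows by an extremal argument that the worst case is attained when each $X_i$ is two-valued with one value equal to $0$, and then runs a case analysis according to how many of the nonzero values exceed a threshold of order $1+\delta$, treating the bounded variables with second-moment (Chebyshev-type) estimates and the large ones via the probability that none of them fires; the constant $\tfrac{1}{13}$ emerges from balancing these cases. If you want a complete proof, either reproduce that reduction and case analysis or simply cite the result as the chapter does.
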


For binomial distributions, we have stronger guarantees. Besides bounds comparing the binomial distribution with its normal approximation~\cite{Slud77}, the following specific bounds are known.

\begin{samepage}
\begin{lemma}\label{lprobfeigebin}
  Let $n \in \N$, $p \in [0,1]$, and $k = \lfloor np \rfloor$. Let $X \sim \Bin(n,p)$. 
  \begin{enumerate}
  \item\label{itprobGM} If $\frac 1n < p$, then $\Pr[X \ge E[X]] > \tfrac 14$.
  \item\label{itprobich1}  If $0.29/n \le p < 1$, then $\Pr[X > E[X]] \ge \tfrac 14$.
  \item\label{itprobPR} If $\frac 1n \le p \le 1-\frac 1n$, then $\Pr[X \ge E[X]] \ge \frac{1}{2\sqrt 2} \frac{\sqrt{np(1-p)}}{\sqrt{np(1-p)+1}+1}$.
  \item\label{itprobich} If $\frac 1n \le p < 1$, then $\Pr[X > E[X]] > \frac 12 - \sqrt{\frac{n}{2\pi k (n - k)}}$.
  \item\label{itprobich2}  If $\frac1n \le p < 1-\frac1n$, then $\Pr[X > E[X]+1] \ge 0.037$.
  \end{enumerate}
\end{lemma}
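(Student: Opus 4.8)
The plan is to treat the five items separately. Items~\ref{itprobGM} and~\ref{itprobPR} are the two genuinely non-elementary ingredients; I would import them verbatim from the literature (item~\ref{itprobGM} is the theorem of Greenberg and Mohri, item~\ref{itprobPR} is due to Pelekis and Ramon) and reduce everything else to them together with the estimates already in this chapter, chiefly Lemma~\ref{lprobmaxbinom} and the classical fact that $\Pr[X\ge\lfloor np\rfloor]\ge\tfrac12$ (a median of $\Bin(n,p)$ always lies in $\{\lfloor np\rfloor,\lceil np\rceil\}$).

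I would do item~\ref{itprobich} first, as it is the cleanest. Write $k:=\lfloor np\rfloor$ and note that $1/n\le p<1$ forces $k\in[1..n-1]$. The integers strictly larger than $E[X]=np$ are exactly $k+1,k+2,\dots$, so
\[\Pr[X>E[X]]=\Pr[X\ge k+1]=\Pr[X\ge k]-\Pr[X=k]\ge\tfrac12-\Pr[X=k]\]
by the median fact. The second part of Lemma~\ref{lprobmaxbinom} bounds $\Pr[X=k]$, and in fact does so \emph{strictly}, since the factor $R_{nk}$ in Corollary~\ref{corprobstirling} satisfies $R_{nk}<1$; hence $\Pr[X=k]<\frac1{\sqrt{2\pi}}\sqrt{\frac{n}{k(n-k)}}$, and combining with the display gives exactly the bound in~\ref{itprobich}.

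For item~\ref{itprobich1} there is an easy reduction when $np$ is not a positive integer: if $np>1$ and $np\notin\Z$, then $k+1$ is the smallest integer exceeding $np$, so $\Pr[X>E[X]]=\Pr[X\ge E[X]]>\tfrac14$ by~\ref{itprobGM}; and if $0.29\le np<1$, then $\Pr[X>E[X]]=\Pr[X\ge1]=1-(1-p)^n\ge1-e^{-np}\ge1-e^{-0.29}>\tfrac14$ by Lemma~\ref{lprobelower}. For item~\ref{itprobich2}, and for item~\ref{itprobich1} in the remaining case $np=m\in\N$, the relevant quantity is $\Pr[X\ge m+2]$ (resp.\ $\Pr[X\ge m+1]$); as long as $m$ and $n-m$ are both bounded away from the ends (say both at least $6$), the ``median minus the one or two central point masses'' estimate from~\ref{itprobich}, applying the second part of Lemma~\ref{lprobmaxbinom} once or twice, already yields the constants $\tfrac14$ and $0.037$.

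The remaining and hardest regime is when $np$---or, by the symmetry $p\leftrightarrow1-p$, when $n-np$---equals a \emph{small} positive integer $m$, i.e.\ $X\sim\Bin(n,m/n)$ with $m$ in a bounded range; there the central binomial probability can exceed $\tfrac14$ (for $m=2$ it is about $2e^{-2}$), so the median bound is useless. Here I would use that $\Pr[\Bin(n,m/n)\ge j]$ is monotone increasing in $n$ for $j>m$ (the binomial spreads out as $n$ grows, converging to $\Pr[\mathrm{Poisson}(m)\ge j]$ from below), which reduces item~\ref{itprobich1} to $n=m+1$, where the probability is $(m/(m+1))^{m+1}\ge(1/2)^2=\tfrac14$, and item~\ref{itprobich2} to $n=m+2$, where it is $(m/(m+2))^{m+2}\ge(1/3)^3>0.037$. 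Establishing this monotonicity cleanly---or else combining a Poisson tail estimate with a direct check of the smallest $n$ for each of the few relevant $m$---is the step I expect to cost the most work; a pragmatic fallback is to quote all of~\ref{itprobich1} and~\ref{itprobich2} from the same source as~\ref{itprobich}.
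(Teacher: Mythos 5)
The first thing to say is that the chapter does not prove this lemma: all five estimates are quoted from the literature (part~\ref{itprobGM} from~\cite{GreenbergM14}, part~\ref{itprobPR} from~\cite{PelekisR16}, parts~\ref{itprobich1}, \ref{itprobich} and~\ref{itprobich2} from~\cite{Doerr18exceedexp}), so there is no in-text argument to compare against. Judged on its own, your proof of part~\ref{itprobich} is complete and correct, and it is essentially the argument of the cited source: $\Pr[X>np]=\Pr[X\ge k]-\Pr[X=k]$, the median of $\Bin(n,p)$ lying in $\{\lfloor np\rfloor,\lceil np\rceil\}$ gives $\Pr[X\ge k]\ge\tfrac12$, and the second part of Lemma~\ref{lprobmaxbinom} (strictly, since $R_{nk}<1$ in Corollary~\ref{corprobstirling}) bounds $\Pr[X=k]$. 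Be aware, though, that the median theorem (Kaas--Buhrman) is a genuine external ingredient appearing nowhere in this chapter and must be cited. The two easy cases of part~\ref{itprobich1} ($np>1$ non-integral via~\ref{itprobGM}, and $0.29\le np<1$ via $1-(1-p)^n\ge 1-e^{-0.29}>\tfrac14$) are also fine.

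The gaps are where you suspected, plus one you did not flag. First, the monotonicity of $\Pr[\Bin(n,m/n)\ge j]$ in $n$ for $j>m$ is a real theorem of Anderson--Samuels type, not a routine computation, and the integer-mean case of~\ref{itprobich1} as well as all of~\ref{itprobich2} hinge on it; as written this is an appeal to an unproved and uncited statement. Second, the ``symmetry $p\leftrightarrow 1-p$'' does not do what you want: the map $X\mapsto n-X$ turns the upper tail $\Pr[X>np]$ into $1-\Pr[\Bin(n,1-p)\ge n(1-p)]$, a lower bound on which is a different (and here harder) question, so small $n-np$ is not reduced to small $np$ this way (it is, however, subsumed by the monotonicity reduction to $n=m+1$ if you have that). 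Third, and most concretely, part~\ref{itprobich2} must hold for all $p\in[\tfrac1n,1-\tfrac1n)$, not only $p=m/n$: for non-integral $np$ with $k=\lfloor np\rfloor$ small (say $np=1.5$, $n$ large) the median argument fails because $\Pr[X=k]$ alone is close to $\tfrac1{\sqrt{2\pi}}$, and your integer-mean reduction does not apply. This last gap is repairable with a tool already in the chapter: since $p\ge k/n$, the first part of Lemma~\ref{lprobdomdistr} gives $\Bin(n,k/n)\preceq\Bin(n,p)$, hence $\Pr[X\ge k+2]\ge\Pr[\Bin(n,k/n)\ge k+2]$, and $p<1-\tfrac1n$ guarantees $n\ge k+2$, so the monotonicity claim would finish with $\bigl(\tfrac{k}{k+2}\bigr)^{k+2}\ge\tfrac1{27}>0.037$. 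But everything still funnels through the unproved monotonicity statement, so your fallback of citing \ref{itprobich1} and~\ref{itprobich2} from~\cite{Doerr18exceedexp} is, as the chapter itself does, the honest resolution.
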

\end{samepage}

Surprisingly, all these results are quite recent. Bound~\ref{itprobGM} from~\cite{GreenbergM14} appears to be the first general result of this type at all.\footnote{For $p \in [\frac 1n,\frac12]$, this result follows from the proof of Lemma~6.4 in~\cite{RigolletT11}. The lemma itself only states the bound $\Pr[X \ge E[X]] \le \min\{p,\frac 14\}$ for $p \le \frac12$. The assumption $p \le \frac 12$ appears to be crucial for the proof.} It was followed up by estimate~\ref{itprobPR} from~\cite{PelekisR16}, which gives stronger estimates when $np(1-p) > 8$. Result~\ref{itprobich} from~\cite{Doerr18exceedexp} is the only one to give a bound tending to $\frac 12$ for both $np$ and $n(1-p)$ tending to infinity. Estimates~\ref{itprobich1} and~\ref{itprobich2} are also from~\cite{Doerr18exceedexp}. A lower bound for the probability of exceeding the expectation by more than one, like~\ref{itprobich2}, was needed in the analysis of an evolutionary algorithm with self-adjusting mutation rate (proof of Lemma~3 of the extended version of~\cite{DoerrGWY17}).

\subsubsection{Proofs of the Chernoff Bounds}\label{secprobcproof}

As discussed in Section~\ref{secprobrelation}, all Chernoff bounds stated so far can be derived from the strongest bounds of Theorem~\ref{tprobchernoffadditive} or~\ref{tprobcvar} via elementary estimates that have nothing to do with probability theory. We shall not detail these estimates---the reader can find them all in the literature, e.g., in~\cite{Hoeffding63}. We shall, however, sketch how to prove these two central inequalities~\eqref{eqprobCAU} and~\eqref{eqprobcvarstrongest}. One reason for this is that we can then argue in Section~\ref{secprobnegcor} that these proofs (and thus also all Chernoff bounds presented so far) not only hold for independent random variables, but also negatively correlated ones. 

A second reason is that occasionally, it can be profitable to have this central argument ready to prove Chernoff bounds for particular distributions, for which the classic bounds are not applicable or do not give sufficient results. This has been done, e.g., in~\cite{DoerrFW11,OlivetoW11,OlivetoW12,DoerrJWZ13,LehreW14,Witt14,BadkobehLS14}.

The central step in almost all proofs of Chernoff bounds, going back to Bernstein~\cite{Bernstein24}, is the following one-line argument. Let $h > 0$. Then
\begin{align}
  \Pr[X \ge t] & = \Pr[e^{hX} \ge e^{ht}] \le \frac{E[e^{hX}]}{e^{ht}} = e^{-ht} \prod_{i=1}^n E[e^{hX_i}].
\end{align}
Here the first equality simply stems from the fact that the function $x \mapsto e^{hx}$ is monotonically increasing. The inequality is  Markov's inequality (Lemma~\ref{lprobmarkov}) applied to the (non-negative) random variable $e^{hX}$. The last equality exploits the independence of the $X_i$, which carries over to the $e^{hX_i}$. 

It now remains to estimate $E[e^{hX_i}]$ and choose $h$ as to minimize the resulting expression. We do this exemplarily for the case that all $X_i$ take values in $[0,1]$ and that $E[X] < t < n$. Since the exponential function is convex, $E[e^{hX_i}]$ is maximized (which is the worst-case for our estimate) when $X_i$ is concentrated on the values $0$ and $1$, that is, we have $\Pr[X_i = 1] = E[X_i]$ and $\Pr[X_i=0] = 1 - E[X_i]$. In this case, $E[e^{hX_i}] = (1-E[X_i])e^0 + E[X_i] e^h$. By the inequality of arithmetic and geometric means, we compute 
\begin{align*}
\prod_{i=1}^n E[e^{hX_i}] &\le \prod_{i=1}^n (1 - E[X_i] + E[X_i] e^h) \\
&\le \bigg(\frac 1n \sum_{i=1}^n (1 - E[X_i] + E[X_i] e^h)\bigg)^n \\
&\le \bigg(\frac 1n (n  - E[X] + E[X] e^h)\bigg)^n.
\end{align*}
This gives the tail estimate $\Pr[X \ge t] \le e^{-ht} (\frac 1n (n  - E[X] + E[X] e^h))^n$, which is minimized by taking \[h = \ln\bigg(\frac{(n-E[X])t}{(n-t) E[X]}\bigg),\] which then gives the strongest multiplicative Chernoff bound~\eqref{eqprobCMUstrongest} by rewriting $t = (1+\delta) E[X]$.

Since it may help reading the literature, we add that $E[e^X]$ is called the \emph{exponential moment} of $X$ and $h \mapsto E[e^{hX}]$ is called the \emph{moment-generating function} of~$X$.

From the above proof sketch together with the remark on the tightness of Markov's inequality following Lemma~\ref{lprobmarkov}, we see that in almost all cases, our Chernoff bounds are not absolutely tight, that is, hold with ``$<$'' instead of ``$\le$''. The sole exceptions are (i) that $X$ takes only two values with positive probability, (ii)~that the tail event consists of a single point, e.g., $X \ge n$ or $X \le 0$ when $X$ is a sum of $n$ binary random variables, or (iii)~the tail event is empty, e.g., $X \ge n+1$ when $X$ is a sum of $n$ binary random variables. Having a ``$<$'' in a Chernoff bound will not drastically change things, but can occasionally be nice for cosmetic reasons.

\subsubsection{Chernoff Bounds with Estimates for the Expectation}\label{secprobEE}

Often we do not know the precise value of the expectation or it is tedious to compute it. In such cases, we can exploit the fact that \emph{all Chernoff bounds discussed in this work are valid also when the expectation is replaced by an upper or lower bound for it}. This is obvious for many bounds, e.g., from equations~\eqref{eqprobchernoffadditive01U},~\eqref{eqprobchernoffadditive01L}, and~\eqref{eqprobCMLeasy} we immediately derive the estimates 
\begin{align*}
  &\Pr[X \ge \mu^+ + \lambda] \le \Pr[X \ge \mu + \lambda] \le \exp\bigg(-\frac{2\lambda^2}{n}\bigg),\\
  &\Pr[X \le \mu^- - \lambda] \le \Pr[X \le \mu - \lambda] \le \exp\bigg(-\frac{2\lambda^2}{n}\bigg),\\
  &\Pr[X \le (1-\delta) \mu^-] \le \Pr[X \le (1-\delta) \mu] \le \exp\bigg(-\frac{\delta^2 \mu}{2}\bigg) \le \exp\bigg(-\frac{\delta^2 \mu^-}{2}\bigg) 
\end{align*}
for all $\mu^+ \ge E[X] =: \mu$ and $\mu^- \le E[X]$.

This is less obvious for a bound like $\Pr[X \ge (1+\delta) \mu^+] \le \exp(-\tfrac 13 \delta^2 \mu^+)$, since now also the probability of the tail event decreases for increasing $\mu^+$. However, also for such bounds we can replace $E[X]$ by an estimate as the following argument shows.

\begin{theorem}\label{tprobEE}
\begin{enumerate}
	\item Upper tail: Let $X_1, \dots, X_n$ be independent random variables taking values in $[0,1]$. Let $X = \sum_{i=1}^n X_i$. Let $\mu^+ \ge E[X]$. Then for all $\delta \ge 0$, 
  \begin{equation}
    \Pr[X \ge (1+\delta) \mu^+] \le \bigg(\frac{1}{1+\delta}\bigg)^{(1+\delta)\mu^+} \bigg(\frac{n-\mu^+}{n-(1+\delta)\mu^+}\bigg)^{n-(1+\delta)\mu^+},\label{eqprobCMUstrongestplus}
  \end{equation}
  where this bound is read as $0$ for $\delta > \frac{n-\mu^+}{\mu^+}$ and as $(\frac{\mu^+}{n})^n$ for $\delta = \frac{n-\mu^+}{\mu^+}$. Consequently, all Chernoff bounds of Theorem~\ref{tprobCMU} (including equations~\eqref{eqprobCMUstrongA} and~\eqref{eqprobCMUstrongB} and Corollary~\ref{corprob2hoch}) as well as those of Theorem~\ref{tprobCMUA} are valid when all occurrences of $E[X]$ are replaced by $\mu^+$. The additive bounds~\eqref{eqprobchernoffadditive01U} and \eqref{eqprobCAU} as well as those of Theorem~\ref{tprobcvar} and~\ref{tprobcvara} are trivially valid with the expectation replaced by an upper bound for it.  
  \item Lower tail: All Chernoff bounds of Theorem~\ref{tprobCML} and~\ref{tprobCMLA}, the ones in equations~\eqref{eqprobchernoffadditive01L} and \eqref{eqprobCAL}, as well as those of Corollary~\ref{corprobcvar} are valid when all occurrences of $E[X]$ are replaced by $\mu^- \le E[X]$.
  \end{enumerate}
\end{theorem}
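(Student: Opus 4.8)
The plan is to prove the single master inequality~\eqref{eqprobCMUstrongestplus} first; every other upper-tail assertion then follows from it by exactly the elementary, non-probabilistic estimates already used in the excerpt to pass from~\eqref{eqprobCMUstrongest} to the weaker bounds of Theorems~\ref{tprobCMU} and~\ref{tprobCMUA}. The one genuine insight is that one should \emph{not} try to manipulate the closed-form bound (whose dependence on the expectation is not monotone), but instead re-enter the exponential-moment argument sketched in Section~\ref{secprobcproof} and substitute $E[X]\leftarrow\mu^+$ \emph{before} optimizing the free parameter~$h$.

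Concretely, I would first dispose of the boundary cases. We may assume $\mu^+\le n$: otherwise $(1+\delta)\mu^+\ge\mu^+>n\ge X$ almost surely, so the left side is $0$ and the right side is $0$ by its stated reading (since $\frac{n-\mu^+}{\mu^+}<0\le\delta$). If $(1+\delta)\mu^+>n$ the same emptiness argument applies; if $(1+\delta)\mu^+=n$, then $\{X\ge n\}$ forces $X_i=1$ for all $i$, so by independence and $\Pr[X_i=1]\le E[X_i]=:p_i$ together with AM--GM, $\Pr[X\ge n]\le\prod_i p_i\le(\mu/n)^n\le(\mu^+/n)^n$, the stated reading. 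There remains $(1+\delta)\mu^+<n$ (with $\delta=0$, $\mu^+=E[X]$ giving the trivial $\Pr[X\ge E[X]]\le 1$). Here, following Section~\ref{secprobcproof}, for every $h>0$ we have $\Pr[X\ge t]\le e^{-ht}\prod_{i=1}^n E[e^{hX_i}]$, and by convexity of $x\mapsto e^{hx}$ on $[0,1]$, $E[e^{hX_i}]\le 1+p_i(e^h-1)$; AM--GM then gives $\Pr[X\ge t]\le e^{-ht}\bigl(1+\tfrac\mu n(e^h-1)\bigr)^n=:g_h(\mu)$ with $\mu=E[X]=\sum_i p_i$. Since $\frac{\partial}{\partial\mu}\log g_h(\mu)=\frac{e^h-1}{1+\frac\mu n(e^h-1)}>0$ for $h>0$, the function $g_h$ is increasing, so $g_h(\mu)\le g_h(\mu^+)$ because $\mu\le\mu^+$. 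Choosing $h$ to be the minimizer of $g_h(\mu^+)$, namely $h=\ln\frac{(1+\delta)(n-\mu^+)}{n-(1+\delta)\mu^+}>0$, and setting $t=(1+\delta)\mu^+$, the expression simplifies exactly to the right-hand side of~\eqref{eqprobCMUstrongestplus}, just as in the derivation of~\eqref{eqprobCMUstrongest}.

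With~\eqref{eqprobCMUstrongestplus} established, the remaining upper-tail claims are immediate: the chain of analytic estimates deriving~\eqref{eqprobCMUstrong}--\eqref{eqprobCMUeasy}, \eqref{eqprobCMUstrongA}, Corollary~\ref{corprob2hoch} and the reformulation Theorem~\ref{tprobCMUA} from~\eqref{eqprobCMUstrongest} never uses that the quantity in the exponent is the true expectation, so it applies verbatim with $\mu^+$ in place of $E[X]$. The additive bounds~\eqref{eqprobchernoffadditive01U},~\eqref{eqprobCAU} and the variance bounds of Theorems~\ref{tprobcvar} and~\ref{tprobcvara} are even easier: replacing $E[X]$ by $\mu^+\ge E[X]$ only shrinks the event $\{X\ge E[X]+\lambda\}$, while leaving the right-hand side (which depends on $n$, resp.\ on $\sigma^2$, not on $E[X]$) untouched. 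For the lower tail I would use the standard symmetry $Y_i:=1-X_i$ already exploited around Theorem~\ref{tprobCMLA}: then $Y:=\sum_i Y_i=n-X$ has $E[Y]=n-E[X]\le n-\mu^-$, so part~(1) applied to the $Y_i$ with the upper estimate $n-\mu^-$ yields the analogue of~\eqref{eqprobCMLstrongest} with $E[X]$ replaced by $\mu^-$, whence all bounds of Theorems~\ref{tprobCML} and~\ref{tprobCMLA} follow by the usual simplifications, and~\eqref{eqprobchernoffadditive01L},~\eqref{eqprobCAL} and Corollary~\ref{corprobcvar} are trivial as above.

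The main obstacle is not deep: it is (i)~handling the boundary cases $(1+\delta)\mu^+\ge n$ cleanly, and (ii)~substituting $\mu^+$ for $E[X]$ at the right moment — on the monotone intermediate bound $g_h(\mu)$, before fixing $h$, rather than on the final optimized formula, which is not monotone in the expectation. One also has to record (by inspection of the cited derivations) that the passage from the strongest Chernoff bound to all the weaker ones is purely analytic in the exponent parameter and hence transfers unchanged; this is routine but worth stating explicitly.
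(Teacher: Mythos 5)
Your proof is correct, but it establishes the key inequality~\eqref{eqprobCMUstrongestplus} by a genuinely different route than the paper. The paper does not re-enter the exponential-moment machinery at all: it sets $\gamma=\frac{\mu^+-E[X]}{n-E[X]}$ and replaces each $X_i$ by $Y_i:=X_i+\gamma(1-X_i)$, so that the $Y_i$ are still independent $[0,1]$-valued, $Y:=\sum_i Y_i\ge X$ pointwise, and $E[Y]=\mu^+$ exactly; then $\Pr[X\ge(1+\delta)\mu^+]\le\Pr[Y\ge(1+\delta)E[Y]]$ and~\eqref{eqprobCMUstrongestplus} follows by quoting~\eqref{eqprobCMUstrongest} verbatim for $Y$. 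You instead observe that the pre-optimization bound $g_h(\mu)=e^{-ht}\bigl(1+\tfrac{\mu}{n}(e^h-1)\bigr)^n$ is increasing in $\mu$ for each fixed $h>0$, substitute $\mu^+$ there, and only then optimize $h$; your algebra (the choice $h=\ln\frac{(1+\delta)(n-\mu^+)}{n-(1+\delta)\mu^+}$ and the resulting simplification) checks out, as does your treatment of the boundary cases $(1+\delta)\mu^+\ge n$. The trade-off: your argument is self-contained and makes explicit the one point where monotonicity in the expectation genuinely holds (before fixing $h$, not in the optimized closed form), whereas the paper's inflation trick is a two-line reduction to the already-proven Theorem~\ref{tprobCMU} and has the bonus, noted in the remark following the paper's proof, that $\Var[Y]\le\Var[X]$, so the same construction also transfers the multiplicative variance-based bounds such as~\eqref{eqprobcvarmult}. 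For the lower tail the paper simply observes that the bounds of Theorem~\ref{tprobCML} are monotonically decreasing in $E[X]$, so replacing $E[X]$ by $\mu^-$ both shrinks the event and enlarges the right-hand side; your symmetry argument via $Y_i=1-X_i$ and part~(1) applied with the upper estimate $n-\mu^-$ reaches the same conclusion with slightly more computation but without having to verify that monotonicity by inspection. Both treatments of the additive and variance-based bounds (right-hand side independent of $E[X]$, event only shrinks) agree.
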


\begin{proof}
  We first show~\eqref{eqprobCMUstrongestplus}. There is nothing to do when $(1+\delta)\mu^+ > n$, so let us assume $(1+\delta)\mu^+ \le n$. Let $\gamma = \frac{\mu^+ - E[X]}{n-E[X]}$. For all $i \in [1..n]$, define $Y_i$ by $Y_i = X_i + \gamma (1-X_i)$. Since $\gamma \le 1$, $Y_i \le 1$. By definition, $Y_i \ge X_i$, and thus also $Y \ge X$ for $Y := \sum_{i=1}^n Y_i$. Also, $\mu^+ = E[Y]$. Hence 
  \[\Pr[X \ge (1+\delta) \mu^+] \le \Pr[Y \ge (1+\delta) \mu^+] = \Pr[Y \ge (1+\delta) E[Y]].\]
  Now \eqref{eqprobCMUstrongestplus} follows immediately from Theorem~\ref{tprobCMU}, equation~\eqref{eqprobCMUstrongest}. Since~\eqref{eqprobCMUstrongestplus} implies all other Chernoff bounds of Theorem~\ref{tprobCMU} (including equations~\eqref{eqprobCMUstrongA} and~\eqref{eqprobCMUstrongB} and Corollary~\ref{corprob2hoch}) via elementary estimates, all these bounds are valid with $E[X]$ replaced by $\mu^+$ as well. This extends to Theorem~\ref{tprobCMUA}, since it is just a reformulation of Theorem~\ref{tprobCMU}. For the remaining (additive) bounds, replacing $E[X]$ by an upper bound only decreases the probability of the tail event, so clearly these remain valid.
  
  To prove our claim on lower tail bounds, it suffices to note that all bounds in Theorem~\ref{tprobCML} are monotonically decreasing in $E[X]$. So replacing $E[X]$ by some $\mu^- < E[X]$ makes the tail event less likely and increases the probability in the statement. Similarly, the additive bounds are not affected when replacing $E[X]$ with $\mu^-$.
\end{proof}

We note without proof that the variance of the random variable $Y$ constructed above is at most the one of $X$. Since the tail bound in Theorem~\ref{tprobcvar} is increasing in $\sigma^2$, the same argument as above also shows that multiplicative versions of Theorem~\ref{tprobcvar} such as equation~\eqref{eqprobcvarmult} remain valid when all occurrences of $E[X]$ are replaced by an upper bound $\mu^+ \ge E[X]$.

\subsection{Chernoff Bounds for Sums of Dependent Random Variables}

In the previous subsection, we discussed large deviation bounds for the classic setting of sums of independent random variables. In the analysis of algorithms, often we cannot fully satisfy the assumption of independence. The dependencies may appear minor, maybe even in our favor in some sense, so we could hope for some good large deviations bounds. 

In this section, we discuss three such situations which all lead to (essentially) the known Chernoff bounds being applicable despite perfect independence missing. The first of these was already discussed in Section~\ref{secprobmoderate}, so we just note here how it also implies the usual Chernoff bounds.

\subsubsection{Unconditional Sequential Domination}

In the analysis of sequential random processes such as iterative randomized algorithms, we rarely encounter that the events in different iterations are independent, simply because the actions of our algorithm depend on the results of the previous iterations. However, due to the independent randomness used in each iteration, we can often say that, independent of what happened in iterations $1, \dots, t-1$, in iteration $t$ we have a particular event with at least some probability $p$. 

This property was made precise in the definition of \emph{unconditional sequential domination} before Lemma~\ref{lprobmoderate}. The lemma then showed that unconditional sequential domination leads to domination by a sum of independent random variables. Any upper tail bounds for these naturally are valid also for the original random variables. We make this elementary insight precise in the following lemma. This type of argument was used, among others, in works on shortest path problems~\cite{DoerrJ10,DoerrHK11,DoerrHK12}. There one can show that, in each iteration, independent of the past, with at least a certain probability an extra edge of a desired path is found. This type of argument was also used in~\cite{DoerrJSWZ13} to construct a monotonic function that is difficult to optimize.

\begin{lemma}
Let $(X_1, \dots, X_n)$ and $(X_1^*, \dots, X_n^*)$ be finite sequences of discrete random variables. Assume that $X_1^*, \dots, X_n^*$ are independent. 
\begin{enumerate}
	\item If $(X_1^*, \dots, X_n^*)$ unconditionally sequentially dominates $(X_1, \dots, X_n)$, then for all $\lambda \in \R$, we have $\Pr[\sum_{i=1}^n X_i \ge \lambda] \le \Pr[\sum_{i=1}^n X_i^* \ge \lambda]$ and the latter expression can be bounded by Chernoff bounds for the upper tail of independent random variables.
	\item If $(X_1^*, \dots, X_n^*)$ unconditionally sequentially subdominates $(X_1, \dots, X_n)$, then for all $\lambda \in \R$, we have $\Pr[\sum_{i=1}^n X_i \le \lambda] \le \Pr[\sum_{i=1}^n X_i^* \le \lambda]$ and the latter expression can be bounded by Chernoff bounds for the lower tail of independent random variables.
\end{enumerate}
\end{lemma}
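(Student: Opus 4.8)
The plan is to obtain the lemma as an essentially immediate corollary of Lemma~\ref{lprobmoderate} together with the definition of stochastic domination. For part~(a), first I would observe that the stated hypotheses---$(X_1^*,\dots,X_n^*)$ unconditionally sequentially dominates $(X_1,\dots,X_n)$, and $X_1^*,\dots,X_n^*$ are independent discrete random variables---are precisely the hypotheses of Lemma~\ref{lprobmoderate}(1). Applying that lemma yields $\sum_{i=1}^n X_i \preceq \sum_{i=1}^n X_i^*$. Then I would invoke the definition of stochastic domination in the equivalent survival-function form noted right after the definition, namely that $X \preceq Y$ is the same as $\Pr[X \ge \lambda] \le \Pr[Y \ge \lambda]$ for all $\lambda \in \R$. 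This directly gives $\Pr[\sum_{i=1}^n X_i \ge \lambda] \le \Pr[\sum_{i=1}^n X_i^* \ge \lambda]$ for every $\lambda$.

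For part~(b) I would argue symmetrically: unconditional sequential subdomination plus independence of the $X_i^*$ are the hypotheses of Lemma~\ref{lprobmoderate}(2), so $\sum_{i=1}^n X_i^* \preceq \sum_{i=1}^n X_i$. Unwinding the definition of domination (here in the cumulative-distribution-function form, $Y \preceq X$ meaning $\Pr[Y \le \lambda] \ge \Pr[X \le \lambda]$), this reads $\Pr[\sum_{i=1}^n X_i \le \lambda] \le \Pr[\sum_{i=1}^n X_i^* \le \lambda]$ for all $\lambda$, as claimed. Finally, in both parts, since $\sum_{i=1}^n X_i^*$ is a sum of \emph{independent} random variables, whichever Chernoff bound of Section~\ref{secprobchernoffindependent} fits the common range of the $X_i^*$ applies to bound $\Pr[\sum_{i=1}^n X_i^* \ge \lambda]$ (respectively $\Pr[\sum_{i=1}^n X_i^* \le \lambda]$), and the displayed inequality transports that bound back to the original, dependent sum.

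There is essentially no hard step here; all the substantive work is already carried out in the proof of Lemma~\ref{lprobmoderate} (in particular, that proof already handles the possibility that the $X_i$ and $X_i^*$ are defined on different probability spaces, since it explicitly couples them to be independent, and the conclusion of Lemma~\ref{lprobmoderate} depends only on the marginals). The only point worth stating carefully is the translation between the two equivalent formulations of stochastic domination---$\Pr[\cdot \le \lambda]$ versus $\Pr[\cdot \ge \lambda]$---so that the upper tail of $\sum X_i$ is controlled in part~(a) and the lower tail in part~(b); I would spell this out in one sentence to avoid any sign confusion, after which the statement follows.
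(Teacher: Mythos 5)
Your proposal is correct and is exactly the argument the paper intends: the lemma is stated as an immediate consequence of Lemma~\ref{lprobmoderate} (which already yields $\sum_{i=1}^n X_i \preceq \sum_{i=1}^n X_i^*$, respectively the reverse domination), combined with unwinding the definition of stochastic domination in its upper-tail and lower-tail forms. The directions of both inequalities are handled correctly, so nothing is missing.
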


\subsubsection{Negative Correlation}\label{secprobnegcor}

Occasionally, we encounter random variables that are not independent, but that display an intuitively even better negative correlation behavior. 
Take as example the situation that we do not flip bits independently with probability $\frac kn$, but that we flip a set of exactly $k$ bits chosen uniformly at random among all sets of $k$ out of $n$ bits. Let $X_1, \dots, X_n$ be the indicator random variables for the events that bit $1, \dots, n$ flips. Clearly, the $X_i$ are not independent. If $X_1 = 1$, then $\Pr[X_2 = 1] = \frac{k-1}{n-1}$, which is different from the unconditional probability $\frac kn$. However, things feel even better than independent: Knowing that $X_1 = 1$ actually reduces the probability that $X_2 =1$. This intuition is made precise in the following notion of \emph{negative correlation}\index{negative correlation}. 

Let $X_1, \ldots, X_n$ binary random variables. 
We say that $X_1, \ldots, X_n$ are  \emph{\mbox{$1$-negatively} correlated} if for all $I \subseteq [1..n]$ we have 
\begin{equation*}
	\Pr[\forall i \in I : X_i = 1] \le \prod_{i \in I} \Pr[X_i = 1].
\end{equation*}
We say that $X_1, \ldots, X_n$ are  \emph{$0$-negatively correlated} if for all $I \subseteq [1..n]$ we have 
\begin{equation*}
	\Pr[\forall i \in I : X_i = 0] \le \prod_{i \in I} \Pr[X_i = 0].
\end{equation*}
Finally, we call $X_1, \ldots, X_n$ \emph{negatively correlated} if they are both $0$-negatively correlated and $1$-negatively correlated.

In simple words, these conditions require that the event that a set of variables is all zero or all one, is at most as likely as in the case of independent random variables. It seems natural that sums of such random variables are at least as strongly concentrated as independent random variable, and in fact, Panconesi and Srinivasan~\cite{PanconesiS97} were able to prove that negatively correlated random variables admit Chernoff bounds. To be precise, they only proved that $1$-negative correlation implies Chernoff bounds for the upper tail, but it is not too difficult to show (see below) that their main argument works for all bounds proven via Bernstein's exponential moments method. In particular, for sums of $1$-negatively correlated random variable we obtain all Chernoff bounds for the upper tail that were presented in this work for independent random variables (as far as they can be applied to binary random variables). We prove a slightly more general result as this helps arguing that we can also work with upper bounds for the expectation instead of the precise expectation. Further below, we then use a symmetry argument to argue that $0$-negative correlation implies all lower tail bounds presented so far. 

\begin{theorem}[$1$-negative correlation implies upper tail bounds]\label{tprobnegcor1}
  Let $X_1, \ldots, X_n$ be $1$-negatively correlated binary random variables. Let $a_1, \ldots, a_n, b_1, \dots, b_n \in \R$ with $a_i \le b_i$ for all $i \in [1..n]$. Let $Y_1, \dots, Y_n$ be random variables with $\Pr[Y_i = a_i] = \Pr[X_i = 0]$ and $\Pr[Y_i = b_i] = \Pr[X_i = 1]$. Let $Y = \sum_{i=1}^n Y_i$.  
  \begin{enumerate}
  \item If $a_1, \ldots, a_n, b_1, \dots, b_n \in [0,1]$, then $Y$ satisfies the Chernoff bounds given in equations~\eqref{eqprobCMUstrongest} to~\eqref{eqprobCMUeasy}, \eqref{eqprobCMUstrongA} to~\eqref{eqprob2hoch}, \eqref{eqprobchernoffadditive01U}, \eqref{eqprobCMUAstrongest} to~\eqref{eqprobCMUAeasy}, and \eqref{eqprobcvarstrongest} to \eqref{eqprobcvarmultlin}, where in the latter we use $\sigma^2 := \sum_{i=1}^n \Var[Y_i]$.
  \item Without the restriction to $[0,1]$ made in the previous paragraph, $Y$ satisfies the Chernoff bound of~\eqref{eqprobCAU} with $c_i := b_i - a_i$ and the bounds of~\eqref{eqprobcvarstrongesta} to~\eqref{eqprobcvarlin2a} with $\sigma^2 := \sum_{i=1}^n \Var[Y_i]$.
  \end{enumerate}
  Each of these results also holds when all occurrences of $E[Y]$ are replaced by $\mu^+$ for some $\mu^+ \ge E[Y]$.
\end{theorem}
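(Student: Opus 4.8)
The plan is to revisit the exponential-moments argument sketched in Section~\ref{secprobcproof} and to observe that $1$-negative correlation is precisely what makes its one crucial step --- the factorization of $E[e^{hX}]$ into a product --- survive as an inequality in the favorable direction; everything else in the proofs of the listed bounds uses only the marginals of the summands. This is the strategy of Panconesi and Srinivasan~\cite{PanconesiS97}, extended from the upper multiplicative Chernoff bound to all bounds provable by Bernstein's method.

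\textbf{Step 1 (exponential-moment factorization).} Couple the $Y_i$ with the $X_i$ by writing $Y_i = a_i + (b_i-a_i)X_i$, which is legitimate since $Y_i$ attains $a_i$ (resp.\ $b_i$) with probability $\Pr[X_i=0]$ (resp.\ $\Pr[X_i=1]$). Fix any $h>0$ (the only case relevant for upper tails). Using $X_i\in\{0,1\}$,
\[
e^{hY} \;=\; \Big(\prod_{i=1}^n e^{ha_i}\Big)\prod_{i=1}^n\big(1+c_iX_i\big),\qquad c_i:=e^{h(b_i-a_i)}-1\ge 0,
\]
where $c_i\ge 0$ because $h>0$ and $b_i\ge a_i$. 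Expanding the product over subsets $I\subseteq[1..n]$ and using $\prod_{i\in I}X_i=\mathbf{1}[\forall i\in I:X_i=1]$,
\[
E\Big[\prod_{i=1}^n(1+c_iX_i)\Big]=\sum_{I\subseteq[1..n]}\Big(\prod_{i\in I}c_i\Big)\Pr[\forall i\in I:X_i=1]\le\sum_{I\subseteq[1..n]}\Big(\prod_{i\in I}c_i\Big)\prod_{i\in I}\Pr[X_i=1]=\prod_{i=1}^nE[1+c_iX_i],
\]
where the inequality uses $1$-negative correlation together with $\prod_{i\in I}c_i\ge 0$. Multiplying back $\prod_i e^{ha_i}$ yields the key estimate $E[e^{hY}]\le\prod_{i=1}^nE[e^{hY_i}]$.

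\textbf{Step 2 (plug into Markov and replay the classical proofs).} For every $h>0$, Markov's inequality applied to $e^{hY}$ gives $\Pr[Y\ge t]\le e^{-ht}E[e^{hY}]\le e^{-ht}\prod_{i=1}^nE[e^{hY_i}]$, which is exactly the inequality from which the independent-case proof of every bound in the theorem starts. From here on the remaining steps --- bounding each $E[e^{hY_i}]$ by convexity of $\exp$ (the worst case being a two-point distribution, which the $Y_i$ already are) in terms only of the marginal probabilities $\Pr[Y_i=a_i],\Pr[Y_i=b_i]$, hence the marginal mean $E[Y_i]$ and, for the variance versions, $\Var[Y_i]$; then optimizing over $h$ and doing the elementary post-processing recorded in the literature --- are verbatim those of the independent case and make no further appeal to independence. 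Hence $Y$ satisfies each of the bounds quoted in parts~(a) and~(b); the quantity $\sigma^2$ must be read as $\sum_{i=1}^n\Var[Y_i]$ precisely because that (and not $\Var[Y]$) is what enters the independent-case proof, and the range parameter in the variance bounds of part~(b) may be taken to be any $b\ge\max_i(b_i-a_i)\ge\max_i\big(b_i-E[Y_i]\big)$.

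\textbf{Step 3 ($\mu^+$ versions).} Mimic the construction of Theorem~\ref{tprobEE}. In the $[0,1]$ case (the additive variance bounds do not mention $E[Y]$, so nothing is needed there), set $\gamma:=(\mu^+-E[Y])/(n-E[Y])\in[0,1]$ and $Y_i':=\gamma+(1-\gamma)Y_i$; this is again a two-point variable, with values $a_i':=\gamma+(1-\gamma)a_i\le b_i':=\gamma+(1-\gamma)b_i$ in $[0,1]$, attained with the same probabilities as $X_i=0,1$. By the part already proven, $Y':=\sum_iY_i'$ satisfies the listed bounds with $E[Y']=\mu^+$ and $\sum_i\Var[Y_i']=(1-\gamma)^2\sum_i\Var[Y_i]\le\sigma^2$; since $Y_i'\ge Y_i$ pointwise we have $\Pr[Y\ge s]\le\Pr[Y'\ge s]$, and since every bound in question is monotone non-increasing in both $E[\cdot]$ (when it appears only additively with $\lambda$) and in $\sigma^2$, it remains valid for $Y$ with $E[Y]$ replaced by $\mu^+$.

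The work here is almost entirely bookkeeping; the one place to be careful is Step~1 --- checking that all the coefficients $c_i$, and hence all partial products $\prod_{i\in I}c_i$, are non-negative (this is where $h>0$ and $b_i\ge a_i$ are essential), and confirming that the \emph{only} use of independence in the classical proofs of the bounds in question is the factorization of the exponential moment, so that weakening it to ``$\le$'' leaves every subsequent estimate untouched.
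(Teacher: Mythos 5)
Your proof is correct, and it establishes the same key fact as the paper --- that $1$-negative correlation yields $E[e^{hY}] \le \prod_{i=1}^n E[e^{hY_i}]$ for $h>0$, after which every listed bound follows by replaying the independent-case computation, which uses only the marginals --- but you reach that fact by a different route. The paper introduces an independent copy $(X'_i)$ with the same marginals, sets $Y'_i = a_i + c_iX'_i$, proves the moment comparison $E[Y^\ell] \le E[(Y')^\ell]$ for all $\ell$ by expanding $(\sum_i c_iX_i)^k$ into monomials and applying negative correlation term by term, and then sums the exponential series via Fubini. You instead exploit that $X_i$ is binary to write $e^{h(b_i-a_i)X_i} = 1+c_iX_i$ with $c_i = e^{h(b_i-a_i)}-1 \ge 0$ and expand the product over subsets directly; this is the original Panconesi--Srinivasan argument and is somewhat more economical, since it needs no auxiliary independent copy and no interchange of expectation and infinite sum. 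It also quietly avoids a delicacy in the paper's moment expansion for part~(b): there $Y^\ell = \sum_k \binom{\ell}{k}(\sum_i a_i)^{\ell-k}(\sum_i c_iX_i)^k$, and when the $a_i$ are not restricted to $[0,1]$ the factor $(\sum_i a_i)^{\ell-k}$ may be negative, so the term-by-term comparison of $E[Y^\ell]$ with $E[(Y')^\ell]$ requires first factoring out $e^{h\sum_i a_i}$ --- exactly what your formulation does from the start. Your Step~3 is the same device as the paper's (the affine shift $\gamma + (1-\gamma)Y_i$ from Theorem~\ref{tprobEE}, which preserves the two-point structure over the same correlated indicators and couples $Y' \ge Y$ pointwise); only your side remark that the bounds are ``non-increasing in $\sigma^2$'' has the monotonicity backwards --- the tail bounds are non-decreasing in $\sigma^2$, which is precisely why replacing $\sum_i\Var[Y'_i] \le \sigma^2$ by $\sigma^2$ is harmless --- but this does not affect the argument.
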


\begin{proof}
  Let $X'_1, \dots, X'_n$ be independent binary random variables such that for each $i \in [1..n]$, the random variables $X_i$ and $X'_i$ are identically distributed. Let $c_i := b_i - a_i$ for all $i \in [1..n]$. Note that $Y_i = a_i + c_i X_i$ for all $i \in [1..n]$. Let $Y'_i = a_i + c_i X'_i$. Let $Y' = \sum_{i=1}^n Y'_i$. 
  
  We first show that the $1$-negative correlation of the $X_i$ implies $E[Y^\ell] \le E[(Y')^\ell]$ for all $\ell \in \N_0$. There is nothing to show for $\ell = 0$ and $\ell = 1$, so let $\ell \ge 2$. Since $Y = (\sum_{i=1}^n a_i) + (\sum_{i=1}^n c_i X_i)$, we have $Y^\ell = \sum_{k=0}^\ell \binom{\ell}{k} (\sum_{i=1}^n a_i)^{\ell-k} (\sum_{i=1}^n c_i X_i)^{k}$. By linearity of expectation, it suffices to show $E[(\sum_{i=1}^n c_i X_i)^k] \le E[(\sum_{i=1}^n c_i X'_i)^k]$. We have $(\sum_{i=1}^n c_i X_i)^k = \sum_{(i_1,\dots,i_k) \in [1..n]^k} \prod_{j=1}^k c_{i_j} X_{i_j}$. Applying the definition of $1$-negative correlation to the set $I = \{i_1, \dots, i_k\}$, we compute
\begin{align*}
E\bigg[\prod_{j=1}^k c_{i_j} X_{i_j}\bigg] 
&= \bigg(\prod_{j=1}^k c_{i_j}\bigg) \Pr[\forall j \in [1..k] : X_{i_j} = 1] \\
&\le \bigg(\prod_{j=1}^k c_{i_j}\bigg) \bigg(\prod_{i \in I} \Pr[X_i = 1]\bigg) \\
&= \bigg(\prod_{j=1}^k c_{i_j}\bigg) \bigg(\prod_{i \in I} \Pr[X'_i = 1]\bigg) = E\bigg[\prod_{j=1}^k c_{i_j} X'_{i_j}\bigg].
\end{align*}
Consequently, by linearity of expectation, $E[(\sum_{i=1}^n c_i X_i)^k] \le E[(\sum_{i=1}^n c_i X'_i)^k]$ for all $k \in \N$ and thus $E[Y^\ell] \le E[(Y')^\ell]$.
  
  We recall from Section~\ref{secprobcproof} that essentially all large deviation bounds are proven via upper bounds on the exponential moment $E[e^{hY}]$ of the random variable~$hY$, where $h > 0$ is suitably chosen. Since the random variable $Y$ is bounded, by Fubini's theorem we have 
\begin{equation}
  E[e^{hY}] = E\bigg[\sum_{\ell=0}^\infty \frac{h^\ell Y^\ell}{\ell!}\bigg] = \sum_{\ell=0}^\infty \frac{h^\ell E[Y^\ell]}{\ell!}.\label{eqprobfubini}
\end{equation}
Since $E[Y^\ell] \le E[(Y')^\ell]$, we have $E[e^{hY}] \le E[e^{hY'}]$. Consequently, we obtain for $Y$ all Chernoff bounds which we could prove with the classic methods for $Y'$. 

It remains to show that we can also work with an upper bound $\mu^+ \ge E[Y]$. For this, note that when we apply the construction of Theorem~\ref{tprobEE} to our random variables $Y_i$, that is, we define $Z_i = Y_i + \gamma(1-Y_i)$ for a suitable $\gamma \in [0,1]$, then the resulting random variables $Z_i$ satisfy the same properties as the $Y_i$, that is, there are $a'_i$ and $c'_i$ such that $Z_i = a'_i + c'_i X_i$. Consequently, we have $\Pr[Y \ge (1+\delta) \mu^+] \le \Pr[Z \ge (1+\delta)\mu^+] = \Pr[Z \ge (1+\delta) E[Z]]$ for the sum $Z=\sum_{i=1}^n Z_i$ and the last expression can bounded via the results we just proved. 
\end{proof}

\begin{theorem}[$0$-negative correlation implies lower tail bounds]\label{tprobnegcor0}
  Let $X_1, \ldots, X_n$ be $0$-negatively correlated binary random variables. Let $a_1, \ldots, a_n, b_1, \dots, b_n \in \R$ with $a_i \le b_i$ for all $i \in [1..n]$. Let $Y_1, \dots, Y_n$ be random variables with $\Pr[Y_i = a_i] = \Pr[X_i = 0]$ and $\Pr[Y_i = b_i] = \Pr[X_i = 1]$. Let $Y = \sum_{i=1}^n Y_i$.  
  \begin{enumerate}
  \item If $a_1, \ldots, a_n, b_1, \dots, b_n \in [0,1]$, then $Y$ satisfies the Chernoff bounds given in equations~\eqref{eqprobCMLstrongest} to~\eqref{eqprobCMLeasy}, \eqref{eqprobchernoffadditive01L}, \eqref{eqprobCMLAstrongest} to~\eqref{eqprobCMLAeasy}, and those in Corollary~\ref{corprobcvar} with $\sigma^2 := \sum_{i=1}^n \Var[Y_i]$.
  \item Without the restriction to $[0,1]$ made in the previous paragraph, $Y$ satisfies the Chernoff bound of~\eqref{eqprobCAL} with $c_i := b_i - a_i$ and those of the last paragraph of Theorem~\ref{tprobcvara} with $\sigma^2 := \sum_{i=1}^n \Var[Y_i]$.
  \end{enumerate}
  Each of these results also holds when all occurrences of $E[Y]$ are replaced by $\mu^-$ for some $\mu^- \le E[Y]$.
\end{theorem}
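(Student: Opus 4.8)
The plan is to derive everything from Theorem~\ref{tprobnegcor1} by combining the complementation $X_i \mapsto 1-X_i$ with the reflection $Y_i \mapsto 1-Y_i$, exactly as the lower-tail bounds of Theorems~\ref{tprobCML} and~\ref{tprobCMLA} were obtained from the upper-tail bounds of Theorems~\ref{tprobCMU} and~\ref{tprobCMUA} in the independent case.

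First I would set $W_i := 1-X_i$. Reading the two definitions side by side shows that ``the $X_i$ are $0$-negatively correlated'' is literally the statement ``the $W_i$ are $1$-negatively correlated'': for every $I\subseteq[1..n]$, $\Pr[\forall i\in I: W_i=1]=\Pr[\forall i\in I: X_i=0]\le\prod_{i\in I}\Pr[X_i=0]=\prod_{i\in I}\Pr[W_i=1]$. Next I would set $V_i := 1-Y_i$ and $V := \sum_{i=1}^n V_i = n-Y$. Writing $c_i := b_i-a_i\ge 0$, one checks $V_i = (1-b_i)+c_i W_i$, so $V_i$ takes the value $1-b_i$ with probability $\Pr[W_i=0]=\Pr[X_i=1]$ and the value $1-a_i$ with probability $\Pr[W_i=1]=\Pr[X_i=0]$; here $1-b_i\le 1-a_i$, and in the bounded case $a_i,b_i\in[0,1]$ also $1-b_i,1-a_i\in[0,1]$. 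Thus the hypotheses of Theorem~\ref{tprobnegcor1} hold with $(W_i,V_i,1-b_i,1-a_i)$ in the roles of $(X_i,Y_i,a_i,b_i)$ there, and its conclusion gives all the listed upper-tail Chernoff bounds for $V$: in the bounded case the bounds~\eqref{eqprobCMUstrongest}--\eqref{eqprobCMUeasy}, \eqref{eqprobCMUstrongA}--\eqref{eqprob2hoch}, \eqref{eqprobchernoffadditive01U}, \eqref{eqprobCMUAstrongest}--\eqref{eqprobCMUAeasy}, \eqref{eqprobcvarstrongest}--\eqref{eqprobcvarmultlin}; in general the bound~\eqref{eqprobCAU} with the $c_i$ and the bounds~\eqref{eqprobcvarstrongesta}--\eqref{eqprobcvarlin2a}. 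In all cases the relevant variance is $\sum_{i=1}^n\Var[V_i]=\sum_{i=1}^n\Var[Y_i]=\sigma^2$, since $\Var[V_i]=c_i^2\Var[W_i]=\Var[Y_i]$.

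It then remains to translate these upper-tail statements about $V=n-Y$ into the claimed lower-tail statements about $Y$. For additive deviations this is immediate from the identity $\{Y\le E[Y]-\lambda\}=\{V\ge E[V]+\lambda\}$, which turns the upper-tail bound~\eqref{eqprobchernoffadditive01U} for $V$ into~\eqref{eqprobchernoffadditive01L} for $Y$, \eqref{eqprobCAU} into~\eqref{eqprobCAL}, and the variance-based upper-tail bounds for $V$ into those of Corollary~\ref{corprobcvar} for $Y$ (in the bounded case) and those of the last paragraph of Theorem~\ref{tprobcvara} for $Y$ (in general); note that the side condition ``$Y_i\ge E[Y_i]-b$'' occurring there is equivalent to ``$V_i\le E[V_i]+b$'', which is exactly what is needed to apply~\eqref{eqprobcvarstrongesta}--\eqref{eqprobcvarlin2a} to $V$. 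For multiplicative deviations I would use $\{Y\le(1-\delta)E[Y]\}=\{V\ge(1+\delta')E[V]\}$ with $\delta':=\delta E[Y]/E[V]$; feeding this into the upper-tail bounds for $V$ and simplifying via $E[V]=n-E[Y]$ (so $(1+\delta')E[V]=n-(1-\delta)E[Y]$ and $n-(1+\delta')E[V]=(1-\delta)E[Y]$) reproduces, term for term, the computation already carried out right after Theorem~\ref{tprobCML} and yields~\eqref{eqprobCMLstrongest}--\eqref{eqprobCMLeasy}; the bounds~\eqref{eqprobCMLAstrongest}--\eqref{eqprobCMLAeasy} follow since they merely restate Theorem~\ref{tprobCML}. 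Finally, for the ``$\mu^-$'' version, given $\mu^-\le E[Y]$ I would put $\mu^+:=n-\mu^-\ge n-E[Y]=E[V]$ and invoke the ``$\mu^+$'' extension of Theorem~\ref{tprobnegcor1} for $V$, translating through $\{Y\le\mu^--\lambda\}=\{V\ge\mu^++\lambda\}$ and $\{Y\le(1-\delta)\mu^-\}=\{V\ge(1+\delta'')\mu^+\}$ with $\delta'':=\delta\mu^-/\mu^+$.

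I expect no genuine obstacle: once Theorem~\ref{tprobnegcor1} is available, the argument is just a dictionary between the $Y$-picture and the $V=n-Y$-picture, with Theorem~\ref{tprobnegcor1} doing all the probabilistic work. The only point requiring care is the bookkeeping for the sharpest forms --- verifying that, e.g., the strongest bounds~\eqref{eqprobCMLstrongest} and~\eqref{eqprobCMLAstrongest} with their $n-E[\cdot]$ factors, and the variance bounds with the correct $b$ and $\sigma^2$, really do emerge from the corresponding upper-tail bounds for $V$ after the affine substitution. Since these translations are identical to the ones already spelled out for the independent case (Theorem~\ref{tprobCML} from Theorem~\ref{tprobCMU}, Theorem~\ref{tprobCMLA} from Theorem~\ref{tprobCMUA}, Corollary~\ref{corprobcvar} from Theorem~\ref{tprobcvar}), I would point to those derivations rather than repeat them.
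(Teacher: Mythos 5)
Your proposal is correct and follows essentially the same route as the paper: reflect via $\tilde X_i = 1-X_i$ and $\tilde Y_i = 1-Y_i$ (so that $0$-negative correlation becomes $1$-negative correlation), apply Theorem~\ref{tprobnegcor1}, and translate the resulting upper-tail bounds for $n-Y$ back into the stated lower-tail bounds for $Y$ exactly as in the derivation of Theorem~\ref{tprobCML} from Theorem~\ref{tprobCMU}. Your handling of the $\mu^-$ claim via the $\mu^+$ extension applied to $V=n-Y$ is a valid (and slightly more explicit) variant of the paper's monotonicity remark.
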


\begin{proof}
  Let $\tilde Y_i := 1 - Y_i$. Then the $\tilde Y_i$ satisfy the assumptions of Theorem~\ref{tprobnegcor1} (with $\tilde a_i = 1- b_i$, $\tilde b_i = 1 - a_i$, and $\tilde X_i = 1-X_i$; note that the latter are $1$-negatively correlated since the $X_i$ are $0$-negatively correlated, note further that $\tilde a_i, \tilde b_i \in [0,1]$ if $a_i, b_i \in [0,1]$). Hence Theorem~\ref{tprobnegcor1} gives the usual Chernoff bounds for the~$\tilde Y_i$. As in Section~\ref{secprobCML}, these translate into the estimates~\eqref{eqprobCMLstrongest} to~\eqref{eqprobCMLeasy} and these imply~\eqref{eqprobCMLAstrongest} to~\eqref{eqprobCMLAeasy}. Bound~\eqref{eqprobchernoffadditive01U} for the $\tilde Y_i$ immediately translates to~\eqref{eqprobchernoffadditive01L} for the~$Y_i$. Finally, the results of Theorem~\ref{tprobcvar} imply those of Corollary~\ref{corprobcvar}. All these results are obviously weaker when $E[Y]$ is replaced by some $\mu^- \le E[Y]$.
\end{proof}

\subsubsection{Hypergeometric Distribution}

It remains to point out some situations where we encounter negatively correlated random variables. One typical situations (but by far not the only one) is sampling without replacement, which leads to the \emph{hypergeometric distribution}\index{random variable!hypergeometric}. 

Say we choose randomly $n$ elements from a given $N$-element set $S$ \emph{without replacement}. For a given $m$-element subset $T$ of $S$, we wonder how many of its elements we have chosen. This random variable is called hypergeometrically distributed with parameters $N$, $n$ and $m$. 

More formally, let $S$ be any $N$-element set. Let $T \subseteq S$ have exactly $m$ elements. Let $U$ be a subset of $S$ chosen uniformly among all $n$-element subsets of $S$. Then $X = |U \cap T|$ is a random variable with hypergeometric distribution (with parameters $N$, $n$ and $m$). By definition, \[\Pr[X = k] = \frac{\binom mk \binom {N-m}{n-k}}{\binom Nn}\] for all $k \in [\max\{0,n+m-N\}..\min\{n,m\}]$.

It is easy to see that $E[X] = \frac{|U||T|}{|S|} = \frac{mn}{N}$: Enumerate $T = \{t_1, \ldots, t_m\}$ in an arbitrary manner (before choosing $U$). For $i = 1, \ldots, m$, let $X_i$ be the indicator random variable for the event $t_i \in U$. Clearly, $\Pr[X_i = 1] = \frac{|U|}{|S|} = \frac nN$. Since $X = \sum_{i = 1}^m X_i$, we have $E[X] = \frac{mn}{N}$ by linearity of expectation (Lemma~\ref{lproblinearity}). 

It is also obvious that the $X_i$ are not independent. If $n < m$ and $X_1 = \ldots = X_n = 1$, then necessarily we have $X_i = 0$ for $i > n$. Fortunately, however, these dependencies are of the negative correlation type. This is intuitively clear, but also straight-forward to prove. 

Let $I \subseteq [1..m]$, $W = \{t_i \mid i \in I\}$, and $w = |W| = |I|$. Then $\Pr[\forall i \in I : X_i = 1] = \Pr[W \subseteq U]$. Since $U$ is uniformly chosen, it suffices to count the number of $U$ that contain $W$, these are $\binom{|S \setminus W|}{|U \setminus W|}$, and to compare them with the total number of possible $U$. Hence 
\[\Pr[W \subseteq U] = \binom{N-w}{n-w} \Big/ \binom{N}{n} = \frac{n \cdot \ldots \cdot (n-w+1)}{N \cdot \ldots \cdot (N - w +1)} < \left(\frac nN\right)^w = \prod_{i \in I} \Pr[X_i = 1].\] 
In a similar fashion, we have
\begin{align*}
\Pr[\forall i \in I : X_i = 0] 
&= \Pr[U \cap W = \emptyset]\\
&= \binom{N-w}{n} \Big/ \binom Nn \\
&= \frac{(N-n) \dots (N-n-w+1)}{N \dots (N-w+1)}\\
&\le \left(\frac{N-n}{N}\right)^w = \prod_{i \in I} \Pr[X_i = 0],
\end{align*}
where we read $\binom{N-w}{n} = 0$ when $n > N-w$. 

Together with Theorems~\ref{tprobnegcor1} and~\ref{tprobnegcor0}, we obtain the following theorem.

\begin{theorem}\label{tprobhypergeometric}
  Let $N \in \N$. Let $S$ be some set of cardinality $N$, for convenience, $S = [1..N]$. Let $n \le N$ and let $U$ be a subset of $S$ having cardinality $n$ uniformly chosen among all such subsets. For $i \in [1..N]$ let $X_i$ be the indicator random variable for the event $i \in U$. Then $X_1, \dots, X_N$ are negatively correlated. 
  
  Consequently, if $X$ is a random variable having a hypergeometric distribution with parameters $N$, $n$, and $m$, then the usual Chernoff bounds for sums of $n$ independent binary random variables (listed in Theorems~\ref{tprobnegcor1} and~\ref{tprobnegcor0}) hold.
\end{theorem}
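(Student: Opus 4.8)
The plan is to simply assemble the pieces that have already been laid out immediately before the statement. First I would observe that the two displayed computations preceding the theorem --- the one showing $\Pr[W \subseteq U] = \binom{N-w}{n-w}\big/\binom Nn < (n/N)^w$ and the one showing $\Pr[U \cap W = \emptyset] = \binom{N-w}{n}\big/\binom Nn \le ((N-n)/N)^w$ --- are, once the notation is unwound, exactly the two defining inequalities of negative correlation for the indicator variables $X_i$ of the event $i \in U$. Indeed, for any $I \subseteq [1..N]$, setting $W = \{i : i \in I\}$ and $w = |I|$, the event $\{\forall i \in I : X_i = 1\}$ is $\{W \subseteq U\}$ and the event $\{\forall i \in I : X_i = 0\}$ is $\{U \cap W = \emptyset\}$, while $\Pr[X_i = 1] = n/N$ and $\Pr[X_i = 0] = (N-n)/N$ for every $i$. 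Hence $\Pr[\forall i \in I : X_i = 1] \le \prod_{i \in I}\Pr[X_i=1]$ and $\Pr[\forall i \in I : X_i = 0] \le \prod_{i \in I}\Pr[X_i=0]$, which is precisely $1$-negative and $0$-negative correlation; together these give that $X_1, \dots, X_N$ are negatively correlated.

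For the second assertion, I would recall that a hypergeometrically distributed $X$ with parameters $N$, $n$, $m$ is by definition $|U \cap T|$ for a fixed $m$-element set $T \subseteq S$, so $X = \sum_{i \in T} X_i$ is a sum of $|T| = m$ of the variables $X_i$. Since the conditions defining negative correlation quantify over \emph{all} subsets $I$, any subfamily of a negatively correlated family is again negatively correlated; in particular $(X_i)_{i \in T}$ is negatively correlated (and by the symmetry of the hypergeometric distribution in $n$ and $m$, $X$ may equally be regarded as a sum of $n$ such variables, which reconciles this with the phrasing ``sums of $n$ independent binary random variables''). I would then invoke Theorem~\ref{tprobnegcor1} and Theorem~\ref{tprobnegcor0} with $a_i = 0$, $b_i = 1$, so that $Y_i = X_i$, $Y = X$, and $E[X] = mn/N$; this immediately yields all upper-tail bounds of Theorem~\ref{tprobnegcor1} and all lower-tail bounds of Theorem~\ref{tprobnegcor0} for $X$, i.e.\ the usual Chernoff bounds for a sum of independent binary random variables.

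I do not expect a genuine obstacle: every nontrivial ingredient --- the two binomial-coefficient identities, the closure of negative correlation under passing to subfamilies, and the implication from negative correlation to Chernoff bounds via the exponential-moment argument --- is already established in the excerpt. The only point requiring a moment's care is the bookkeeping of the number of summands ($m$ versus $n$) and matching it with the statement's wording, which the $n \leftrightarrow m$ symmetry of the hypergeometric distribution resolves cleanly.
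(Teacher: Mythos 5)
Your proposal is correct and follows essentially the same route as the paper, whose proof of this theorem consists precisely of the two binomial-coefficient computations displayed immediately before the statement (establishing $1$- and $0$-negative correlation of the indicators) combined with Theorems~\ref{tprobnegcor1} and~\ref{tprobnegcor0}; your additional remarks on restricting to the subfamily indexed by $T$ and on the $n \leftrightarrow m$ symmetry match the paper's own comments following the theorem.
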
\index{Chernoff inequality!moderate independence}

Note that for hypergeometric distributions we have symmetry in $n$ and $m$, that is, the hypergeometric distribution with parameters $N$, $n$, and $m$ is the same as the hypergeometric distribution with parameters $N$, $m$, and $n$. Hence for Chernoff bounds depending on the number of random variables, e.g., Theorem~\ref{tprobchernoffadditive01}, we can make this number being $\min\{n,m\}$ by interpreting the random experiment in the right fashion.

That the hypergeometric distribution satisfies the Chernoff bounds of Theorem~\ref{tprobchernoffadditive01} has recently in some works 
been attributed to Chv\'atal~\cite{Chvatal79}, but this is not correct. As Chv\'atal writes, the aim of his note is solely to give an elementary proof of the fact that the hypergeometric distribution satisfies the strongest Chernoff bound of Theorem~\ref{tprobCMU} (which implies the bounds of Theorem~\ref{tprobchernoffadditive01}), whereas the result itself is from Hoeffding~\cite{Hoeffding63}. 

For a hypergeometric random variable $X$ with parameters $N$, $n$, and~$m$, \cite[Lemma~2]{BadkobehLS14} shows that if $m < \frac{N}{2e}$ and $z \ge \frac n2$, then $\Pr[X = z] \le (\frac{2em}{N})^z$. With Theorem~\ref{tprobhypergeometric}, we can use the usual Chernoff bound of equation~\eqref{eqprobCMUAstrong} and obtain the stronger bound 
\begin{equation}
\Pr[X \ge z] \le e^{z - E[X]} \left(\frac{E[X]}{z}\right)^z \le \left(\frac{e E[X]}{z}\right)^z = \left(\frac{enm}{z N}\right)^z,
\end{equation} 
which is at most $(\frac{2em}{N})^z$ for $z \ge n/2$.

Theorem~\ref{tprobhypergeometric} can be extended to point-wise maxima of several families like~$(X_i)$ in Theorem~\ref{tprobhypergeometric} if these are independent. This result was used in the analysis of a population-based genetic algorithm in~\cite{DoerrD18}, but might be useful also in other areas of discrete algorithmics.

\begin{lemma}\label{lprobmaxhyper}
  Let $k, N \in \N$. For all $j \in [1..k]$, let $n_j \in [1..N]$. Let $S$ be some set of cardinality $N$, for convenience, $S = [1..N]$. For all $j \in [1..k]$, let $U_j$ be a subset of $S$ having cardinality $n_j$ uniformly chosen among all such subsets. Let the $U_j$ be stochastically independent. For all $i \in S$, let $X_i$ be the indicator random variable for the event that $i \in U_j$ for some $j \in [1..k]$. Then the random variables $X_1, \dots, X_N$ are negatively correlated. 
\end{lemma}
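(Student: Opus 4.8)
The plan is to reduce everything to the single-family case, Theorem~\ref{tprobhypergeometric}, by induction on $k$. Write $A_i := \mathbf{1}[\exists j \in [1..k-1] : i \in U_j]$ and $B_i := \mathbf{1}[i \in U_k]$, so that $X_i = A_i \vee B_i = 1 - (1-A_i)(1-B_i)$ for all $i \in S$. The family $(A_i)_{i \in S}$ is a function of $(U_1, \dots, U_{k-1})$ and the family $(B_i)_{i \in S}$ is a function of $U_k$, hence the two families are independent; by the induction hypothesis the $A_i$ are negatively correlated, and by Theorem~\ref{tprobhypergeometric} the $B_i$ are negatively correlated. Thus it suffices to prove the following composition fact: if $(A_i)_{i\in S}$ and $(B_i)_{i\in S}$ are independent families of binary random variables, each negatively correlated, then the variables $X_i := A_i \vee B_i$ are negatively correlated. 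The base case $k = 1$ of the induction is exactly Theorem~\ref{tprobhypergeometric}.

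For $0$-negative correlation this is immediate: for any $I \subseteq S$ the event $\{\forall i \in I : X_i = 0\}$ is the intersection of the independent events $\{\forall i \in I : A_i = 0\}$ and $\{\forall i \in I : B_i = 0\}$, so using $0$-negative correlation of each family and then independence of $A_i$ and $B_i$ for each fixed $i$,
\begin{equation*}
\Pr[\forall i \in I : X_i = 0] \le \Big(\prod_{i\in I}\Pr[A_i=0]\Big)\Big(\prod_{i\in I}\Pr[B_i=0]\Big) = \prod_{i\in I}\Pr[X_i=0].
\end{equation*}

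The $1$-negative correlation is the substantive step, and here the idea is to condition on the $B$-family. Fix $I \subseteq S$, write $a_i := \Pr[A_i=1]$, $b_i := \Pr[B_i=1]$, and set $J := \{i\in I : B_i = 0\}$, a random subset of $I$ measurable with respect to the $B$-family. On $\{J=J_0\}$ the event $\{\forall i\in I : X_i = 1\}$ coincides with the $A$-measurable event $\{\forall i\in J_0 : A_i = 1\}$, so by independence of the two families and $1$-negative correlation of the $A_i$,
\begin{align*}
\Pr[\forall i\in I : X_i = 1]
&= \sum_{J_0 \subseteq I}\Pr[J = J_0]\,\Pr[\forall i\in J_0 : A_i = 1]\\
&\le \sum_{J_0\subseteq I}\Pr[J=J_0]\prod_{i\in J_0}a_i
= E\Big[\prod_{i\in I}\big(a_i + (1-a_i)B_i\big)\Big],
\end{align*}
where the last equality uses that $\prod_{i\in J}a_i = \prod_{i\in I}(a_i + (1-a_i)B_i)$ pointwise (the $i$-th factor equals $a_i$ if $B_i=0$ and equals $1$ if $B_i=1$). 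Expanding the product over subsets $J_0\subseteq I$ turns the right-hand side into a sum of terms $\big(\prod_{i\in I\setminus J_0}a_i\big)\big(\prod_{i\in J_0}(1-a_i)\big)\,\Pr[\forall i\in J_0 : B_i = 1]$ with nonnegative prefactors, so applying $1$-negative correlation of the $B_i$ term by term and collapsing the sum again yields
\begin{equation*}
\Pr[\forall i\in I : X_i = 1] \le \prod_{i\in I}\big(a_i + (1-a_i)b_i\big) = \prod_{i\in I}\big(1 - \Pr[A_i=0]\Pr[B_i=0]\big) = \prod_{i\in I}\Pr[X_i=1].
\end{equation*}
The only real obstacle is this last step: one must notice that after conditioning on one family the surviving quantity is a polynomial in the other family's indicators whose coefficients are all nonnegative, which is exactly the shape needed to invoke $1$-negative correlation coordinate by coordinate. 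Everything else — the $0$-side, the independence bookkeeping, and the base case — is routine, and the induction on $k$ then finishes the proof.
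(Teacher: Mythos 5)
Your proof is correct. Note that the paper states Lemma~\ref{lprobmaxhyper} without proof (deferring to the reference where it was first used), so there is no in-text argument to compare against; your write-up supplies a complete and self-contained one. The structure is sound: the base case $k=1$ is exactly Theorem~\ref{tprobhypergeometric}, the induction splits $X_i$ into $A_i \vee B_i$ with the two families independent, and the composition lemma does the real work. The $0$-correlation half is indeed immediate from the product structure of the all-zero event. For the $1$-correlation half, the two checkpoints both hold: conditioning on the $B$-family and using independence gives $\Pr[\forall i\in I: X_i=1]=\sum_{J_0}\Pr[J=J_0]\Pr[\forall i\in J_0: A_i=1]$, and after applying $1$-negative correlation of the $A_i$ the resulting quantity $E\bigl[\prod_{i\in I}(a_i+(1-a_i)B_i)\bigr]$ is a multilinear polynomial in the $B_i$ with nonnegative coefficients $\bigl(\prod_{i\in I\setminus J_0}a_i\bigr)\bigl(\prod_{i\in J_0}(1-a_i)\bigr)$, which is precisely the condition under which $1$-negative correlation of the $B_i$ can be applied monomial by monomial. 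The final identity $a_i+(1-a_i)b_i = 1-\Pr[A_i=0]\Pr[B_i=0]=\Pr[X_i=1]$ again uses independence of $A_i$ and $B_i$ for each fixed $i$. Your composition lemma (independent families, each negatively correlated, have negatively correlated coordinatewise disjunctions) is in fact more general than what the lemma needs and could be of independent use; the only mild caveat is that it genuinely requires independence \emph{between} the two families, which your setup guarantees since the $U_j$ are independent.
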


Note that the situation in the lemma above can be seen as sampling with partial replacement. We sample a total of $\sum_j n_j$ elements, but we replace the elements chosen only after round $n_1$, $n_1+n_2$, ... We expect that other partial replacement scenarios also lead to negatively correlated random variables, and thus to the usual Chernoff bounds.

We recall that negative correlation can be useful also without Chernoff bounds. For example, in Section~\ref{secprobcoupon} we used the lemma above to prove a lower bound on the coupon collector time (equivalently, on the runtime of the randomized local search heuristic on monotonic functions).

\subsection{Chernoff Bounds for Functions of Independent Variables, Martingales, and Bounds for Maxima}

So far we discussed tail bounds for random variables which can be written as sum of (more or less) independent random variables. Sometimes, the random variable we are interested in is determined by the outcomes of many independent random variables, however, not simply as a sum of these. Nevertheless, if each of the independent random variables has only a limited influence on the result, then bounds similar to those of Theorem~\ref{tprobchernoffadditive} can be proven. Such bounds can be found under the names \emph{Azuma's inequality}\index{Azuma inequality}, \emph{Martingale inequalities}, or \emph{method of bounded differences}. 

To the extent possible, we shall try to avoid the use of martingales. The following two bounds due to McDiarmid~\cite{McDiarmid98} need martingales in their proof, but not in their statement.

\begin{theorem}[Method of bounded differences]\label{tprobboundeddiff}
  Let $X_1, \ldots, X_n$ be independent random variables taking values in the sets $\Omega_1, \ldots, \Omega_n$, respectively. Let $\Omega := \Omega_1 \times \ldots \times \Omega_n$. Let $f : \Omega \to \R$. For all $i \in [1..n]$ let $c_i > 0$ be such that for all $\omega, \bar\omega \in \Omega$ we have that if for all $j \neq i$, $\omega_j = \bar\omega_j$, then $|f(\omega) - f(\bar\omega)| \le c_i$. 
  
  Let $X = f(X_1, \ldots, X_n)$. Then for all $\lambda \ge 0$,  
  \begin{align*}
    \Pr[X \ge E[X] + \lambda] & \le \exp\bigg(-\frac{2\lambda^2}{\sum_{i=1}^n c_i^2}\bigg),\\
    \Pr[X \le E[X] - \lambda] & \le \exp\bigg(-\frac{2\lambda^2}{\sum_{i=1}^n c_i^2}\bigg).
  \end{align*}
\end{theorem}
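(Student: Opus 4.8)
The plan is to use the classical Doob martingale construction together with the Azuma--Hoeffding inequality; this is the "martingales in the proof" alluded to above, kept entirely behind the scenes of the statement. First I would define, for $i \in [0..n]$, the random variable
\[Y_i := E[f(X_1,\dots,X_n) \mid X_1,\dots,X_i],\]
where $Y_0 = E[X]$ (conditioning on the empty family of variables) and $Y_n = X$, since $X = f(X_1,\dots,X_n)$ is fully determined by $X_1,\dots,X_n$. By the tower property, $(Y_i)_{i=0}^n$ is a martingale with respect to $(X_1,\dots,X_i)$, that is, $E[Y_i \mid X_1,\dots,X_{i-1}] = Y_{i-1}$, and $X - E[X] = \sum_{i=1}^n D_i$ with $D_i := Y_i - Y_{i-1}$.

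The crucial step is to show that each martingale difference $D_i$, conditioned on any outcome $X_1 = x_1,\dots,X_{i-1} = x_{i-1}$, takes values in a real interval of length at most $c_i$. Using independence of the $X_j$, write $g_i(x_1,\dots,x_i) := E[f(x_1,\dots,x_i,X_{i+1},\dots,X_n)]$, so that $Y_i = g_i(X_1,\dots,X_i)$ and, again by independence, $Y_{i-1} = E_{X_i}[\,g_i(X_1,\dots,X_{i-1},X_i)\,]$. Hence, conditioned on $X_1=x_1,\dots,X_{i-1}=x_{i-1}$, the quantity $D_i$ equals $g_i(x_1,\dots,x_{i-1},X_i) - E_{X_i}[\,g_i(x_1,\dots,x_{i-1},X_i)\,]$, a centred version of the single-variable function $x_i \mapsto g_i(x_1,\dots,x_{i-1},x_i)$. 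For any two values $a,b$ of $X_i$, applying the bounded-differences hypothesis pointwise (with $\omega,\bar\omega$ differing only in coordinate $i$) together with linearity of expectation gives $|g_i(x_1,\dots,x_{i-1},a) - g_i(x_1,\dots,x_{i-1},b)| \le c_i$; thus the range of $x_i \mapsto g_i(\dots,x_i)$, and hence the conditional support of $D_i$, lies in an interval of length at most $c_i$.

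With this in hand, I would invoke (or reprove via the exponential-moments method of Section~\ref{secprobcproof}) Hoeffding's lemma: a mean-zero random variable supported on an interval of length $c$ has moment generating function at most $e^{h^2 c^2 / 8}$ for every $h$. Applied to the conditional law of $D_i$, this yields $E[e^{h D_i} \mid X_1,\dots,X_{i-1}] \le e^{h^2 c_i^2 / 8}$ almost surely, and then, peeling off one factor at a time as in the proof sketch of Section~\ref{secprobcproof}, $E[e^{h(X - E[X])}] \le \exp\!\big(\tfrac{h^2}{8}\sum_{i=1}^n c_i^2\big)$. Markov's inequality applied to $e^{h(X - E[X])}$ and the choice $h = 4\lambda / \sum_{i=1}^n c_i^2$ give $\Pr[X \ge E[X] + \lambda] \le \exp\!\big(-2\lambda^2 / \sum_{i=1}^n c_i^2\big)$. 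The lower tail follows by applying the same bound to $-f$, which has the same bounded-differences constants $c_i$.

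I expect the main obstacle to be the increment bound, i.e.\ correctly handling the two-stage conditioning that exhibits $D_i$ as a centred function of $X_i$ alone and then reducing its oscillation to $c_i$ via the hypothesis; the remainder (Hoeffding's lemma and the telescoping exponential-moment computation) is the standard Azuma machinery and amounts to bookkeeping. This is in essence McDiarmid's proof~\cite{McDiarmid98}, and it would be reasonable in the final text either to present it in this compressed form or to simply refer to~\cite{McDiarmid98}.
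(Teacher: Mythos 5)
Your proof is correct and is precisely the martingale argument that the paper deliberately omits and defers to McDiarmid~\cite{McDiarmid98}: the Doob martingale $Y_i = E[X \mid X_1,\dots,X_i]$, the observation that independence lets one bound the conditional increment $D_i$ as a centred function of $X_i$ alone with oscillation at most $c_i$, Hoeffding's lemma, and the telescoping exponential-moment computation. You also correctly capture the one point that matters for the sharper constant $2$ in the exponent (as opposed to the weaker $\tfrac 12$ mentioned after the theorem), namely that the conditional support of $D_i$ lies in an interval of \emph{length} $c_i$ rather than merely satisfying $|D_i| \le c_i$, so Hoeffding's lemma yields $e^{h^2 c_i^2/8}$ per factor.
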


The version of Azuma's inequality given above is due to McDiarmid~\cite{McDiarmid98} and is stronger than the bound $\exp(-\lambda^2 / 2\sum_{i=1}^n c_i^2)$ given by several other authors.

Theorem~\ref{tprobboundeddiff} found numerous applications in discrete mathematics and computer science, however, only few in the analysis of randomized search heuristics (the only one we are aware of is~\cite{BuzdalovD17}). All other analyses of randomized search heuristics that needed Chernoff-type bounds for random variables that are determined by independent random variables, but in a way other than as simple sum, resorted to the use of martingales.

One reason for this might be that the bounded differences assumption is easily proven in discrete mathematics problems like the analysis of random graphs, whereas in algorithms the sequential nature of the use of randomness makes it hard to argue that a particular random variable sampled now has a bounded influence on the final result regardless of how we condition on all future random variables. A more natural condition might be that the outcome of the current random variable has only a limited influence on the expected result determined by the future random variables. For this reason, we are optimistic that the following result might become useful in the analysis of randomized search heuristics. This result is a weak version of Theorem~3.7 in~\cite{McDiarmid98}.

\begin{theorem}[Method of bounded conditional expectations]\label{tprobboundedexp}
  Let $X_1, \ldots, X_n$ be independent random variables taking values in the sets $\Omega_1, \ldots, \Omega_n$, respectively. Let $\Omega := \Omega_1 \times \ldots \times \Omega_n$. Let $f : \Omega \to \R$. For all $i \in [1..n]$ let $c_i > 0$ be such that for all $\omega_1 \in \Omega_1, \dots, \omega_{i-1} \in \Omega_{i-1}$ and all $\omega_i, \bar\omega_i \in \Omega_i$ we have 
  \[|E[f(\omega_1,\dots,\omega_{i-1},\omega_i,X_{i+1}, \dots, X_n)] - E[f(\omega_1,\dots,\omega_{i-1},\bar\omega_i,X_{i+1}, \dots, X_n)]| \le c_i.\]
  Let $X = f(X_1, \ldots, X_n)$. Then for all $\lambda \ge 0$,  
  \begin{align*}
    \Pr[X \ge E[X] + \lambda] & \le \exp\bigg(-\frac{2 \lambda^2}{\sum_{i=1}^n c_i^2}\bigg),\\
    \Pr[X \le E[X] - \lambda] & \le \exp\bigg(-\frac{2 \lambda^2}{\sum_{i=1}^n c_i^2}\bigg).
  \end{align*}
\end{theorem}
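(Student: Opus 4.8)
The plan is to reduce the statement to the classical Azuma--Hoeffding martingale inequality by working with the Doob martingale of $X$ with respect to the filtration generated by $X_1,\dots,X_n$. For $i \in [0..n]$ define $Y_i := E[X \mid X_1,\dots,X_i]$, so that $Y_0 = E[X]$ and $Y_n = X$, and set $D_i := Y_i - Y_{i-1}$ for $i \in [1..n]$, which gives $X - E[X] = \sum_{i=1}^n D_i$. As in all proofs of Chernoff-type bounds (Section~\ref{secprobcproof}), it then suffices to bound the exponential moment $E[e^{h(X-E[X])}]$ for $h>0$ and afterwards optimise $h$ in the resulting Markov bound. Since all random variables here are discrete, the conditional expectations are just finite or countable sums and no measurability issues arise.

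First I would make the role of independence explicit. Because the $X_i$ are independent, for each $i$ we have $Y_i = g_i(X_1,\dots,X_i)$, where $g_i(x_1,\dots,x_i) := E[f(x_1,\dots,x_i,X_{i+1},\dots,X_n)]$, and $g_{i-1}(x_1,\dots,x_{i-1})$ equals the average of $g_i(x_1,\dots,x_{i-1},X_i)$ over the distribution of $X_i$. The crucial observation is that the hypothesis of the theorem says precisely that, for any fixed $\omega_1,\dots,\omega_{i-1}$, the one-variable function $t \mapsto h_i(t) := g_i(\omega_1,\dots,\omega_{i-1},t)$ takes all its values in an interval of length at most $c_i$. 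Consequently, on the event $\{X_1=\omega_1,\dots,X_{i-1}=\omega_{i-1}\}$ we have $Y_i = h_i(X_i)$ and $Y_{i-1} = E[h_i(X_i)]$, so $D_i = h_i(X_i) - E[h_i(X_i)]$ is a mean-zero random variable supported in an interval of length at most $c_i$, conditionally on $X_1,\dots,X_{i-1}$.

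Having established this, I would invoke Hoeffding's lemma (a mean-zero random variable $Z$ supported in an interval of length $c$ satisfies $E[e^{hZ}] \le \exp(h^2c^2/8)$) to get $E[e^{hD_i} \mid X_1,\dots,X_{i-1}] \le \exp(h^2 c_i^2/8)$. A standard telescoping over the filtration — conditioning first on $X_1,\dots,X_{n-1}$, pulling the innermost conditional exponential moment out, and iterating — then yields
\[E[e^{h(X-E[X])}] \le \exp\bigg(\frac{h^2}{8}\sum_{i=1}^n c_i^2\bigg).\]
Markov's inequality applied to $e^{h(X-E[X])}$ gives $\Pr[X \ge E[X]+\lambda] \le \exp(-h\lambda + \tfrac{h^2}{8}\sum_i c_i^2)$, and the choice $h = 4\lambda/\sum_i c_i^2$ yields the claimed bound $\exp(-2\lambda^2/\sum_i c_i^2)$. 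The lower-tail bound follows by applying the upper-tail bound to $-f$, whose conditional-expectation oscillations are the same $c_i$.

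I expect the main obstacle to be the second step: carefully justifying that the hypothesis on $f$ transfers to the bounded-oscillation property of the $g_i$ and hence to the conditional boundedness of the martingale differences $D_i$ — this is exactly where the independence of the $X_i$ is used essentially, and where the present (weaker) assumption differs from the bounded-differences condition of Theorem~\ref{tprobboundeddiff}. Once this is in place the rest is the routine martingale/exponential-moment machinery, and, as noted in the statement, the result is in any case a special case of Theorem~3.7 of McDiarmid~\cite{McDiarmid98}.
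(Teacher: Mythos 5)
Your proof is correct and follows essentially the route the paper relies on: the paper does not prove this theorem itself but presents it as a weak version of Theorem~3.7 of McDiarmid~\cite{McDiarmid98}, whose proof is exactly the Doob-martingale / Azuma--Hoeffding argument you give. Your key step --- that independence makes $Y_i = g_i(X_1,\dots,X_i)$ with $Y_{i-1}$ the average of $g_i(X_1,\dots,X_{i-1},\cdot)$ over the law of $X_i$, so that the hypothesis bounds the conditional oscillation of each martingale difference by $c_i$ and Hoeffding's lemma yields $E[e^{hD_i}\mid X_1,\dots,X_{i-1}] \le \exp(h^2c_i^2/8)$ and hence the constant $2$ in the exponent --- is exactly right.
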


Here is an example of how the new theorem can be helpful. The \emph{compact genetic algorithms (cGA)} without frequency boundaries maximizes a function $f: \{0,1\}^n \to \R$ as follows. There is a (hypothetical) population size $K \in \N$, which we assume to be an even integer. The cGA sets the initial \emph{frequency vector} $\tau^{(0)} \in [0,1]^n$ to $\tau^{(0)} = (\frac 12, \dots, \frac 12)$. Then, in each iteration $t = 1, 2, \dots$ it generates two search points $x^{(t,1)}, x^{(t,2)} \in \{0,1\}^n$ randomly such that, independently for all $j \in \{1, 2\}$ and $i \in [1..n]$, we have $\Pr[x_i^{(t,j)} = 1] = \tau_i^{(t)}$. If $f(x^{(t,1)}) < f(x^{(t,2)})$, then we swap the two variables, that is, we set $(x^{(t,1)},x^{(t,2)}) \assign (x^{(t,2)},x^{(t,1)})$. Finally, in this iteration, we update the frequency vector by setting $\tau^{(t+1)} \assign \tau^{(t)} + \frac 1K (x^{(t,1)}-x^{(t,2)})$. 

Let us analyze the behavior of the frequency $\tau_i^{(t)}$ of a \emph{neutral bit} $i\in [1..n]$, that is, one that has property that $f(x) = f(y)$ for all $x$ and $y$ which differ only in the $i$-th bit. 

\begin{lemma}
  Let $K$ be an even integer. Consider a run of the cGA with hypothetical population size $K$ on an objective function $f : \{0,1\}^n \to \R$ having a neutral bit~$i$. For all $T \in \N$, the probability that within the first $T$ iterations the frequency of the $i$-th bit has converged to one of the absorbing states $0$ or $1$ is at most $2 \exp\left(-\frac{K^2}{32 T}\right)$.
\end{lemma}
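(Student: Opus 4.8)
The plan is to show that the frequency of a neutral bit performs a bounded \emph{martingale} and then to bound the probability that this martingale has drifted all the way to $0$ or $1$ by the method of bounded conditional expectations (Theorem~\ref{tprobboundedexp}).

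\textbf{Step 1: the neutral-bit frequency is a martingale with small steps.} Fix the neutral bit $i$ and look at one iteration $t$. The two offspring $x^{(t,1)},x^{(t,2)}$ are sampled bit-wise, so the pair $(x_i^{(t,1)},x_i^{(t,2)})$ consists of two independent $\mathrm{Bernoulli}(\tau_i^{(t)})$ variables and is independent of the other bits. Since $i$ is neutral, the comparison of $f(x^{(t,1)})$ and $f(x^{(t,2)})$, hence the decision whether the two individuals get swapped, is a function of the bits other than $i$ only, so it is independent of $(x_i^{(t,1)},x_i^{(t,2)})$. As an i.i.d.\ pair is exchangeable, relabelling the pair (whether we swap or not) leaves the conditional law of $x_i^{(t,1)}-x_i^{(t,2)}$ unchanged, so given the whole history up to iteration $t$ it still has mean $0$. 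Therefore $E[\tau_i^{(t+1)}\mid\text{history}] = \tau_i^{(t)} + \tfrac1K E[x_i^{(t,1)}-x_i^{(t,2)}\mid\text{history}] = \tau_i^{(t)}$ and $|\tau_i^{(t+1)}-\tau_i^{(t)}| = \tfrac1K|x_i^{(t,1)}-x_i^{(t,2)}| \le \tfrac1K$. Thus $(\tau_i^{(t)})_t$ is a martingale started at $\tfrac12$ with steps of absolute value at most $\tfrac1K$; in particular its mean stays $\tfrac12$. Moreover, once $\tau_i^{(t)}\in\{0,1\}$ the offspring bit becomes deterministic, the step is $0$, and $\tau_i$ is absorbed; hence the event ``the frequency has converged to $0$ or $1$ within the first $T$ iterations'' is exactly the event ``$\tau_i\in\{0,1\}$ after these $T$ steps''.

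\textbf{Step 2: applying the bounded conditional expectations method.} Let $\omega_t$ collect all the (raw, mutually independent) randomness used in iteration $t$; then $\omega_1,\dots,\omega_T$ are independent and the frequency vector $\tau^{(s+1)}$ is a deterministic function of $\omega_1,\dots,\omega_s$, so the final value $\tau_i$ of the $i$-th frequency is $f(\omega_1,\dots,\omega_T)$ for a suitable $f$. I claim $f$ meets the hypothesis of Theorem~\ref{tprobboundedexp} with $c_t=\tfrac2K$ for every $t$: fixing $\omega_1,\dots,\omega_{t-1}$, the iterated martingale property from iteration $t$ on gives $E[f(\omega_1,\dots,\omega_{t-1},\omega_t,X_{t+1},\dots)] = \tau_i^{(t+1)}(\omega_1,\dots,\omega_t)$ for every choice of $\omega_t$, and since both $\tau_i^{(t+1)}(\dots,\omega_t)$ and $\tau_i^{(t+1)}(\dots,\bar\omega_t)$ differ from the fixed value $\tau_i^{(t)}(\omega_1,\dots,\omega_{t-1})$ by at most $\tfrac1K$, they differ from each other by at most $\tfrac2K$. (Note that the stronger bounded-differences hypothesis of Theorem~\ref{tprobboundeddiff} genuinely fails here: perturbing $\omega_t$ shifts $\tau_i^{(t+1)}$ by only $O(1/K)$, but this perturbation can cascade through all later iterations, so only the \emph{expected} influence is under control---precisely what Theorem~\ref{tprobboundedexp} requires.) With $\sum_{t=1}^{T}c_t^2 = 4T/K^2$ and $\lambda=\tfrac12$, Theorem~\ref{tprobboundedexp} yields $\Pr[\tau_i\ge 1]\le\exp(-\tfrac{2\lambda^2}{4T/K^2}) = \exp(-\tfrac{K^2}{8T})$ and, symmetrically, $\Pr[\tau_i\le 0]\le\exp(-\tfrac{K^2}{8T})$. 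The union bound (Lemma~\ref{lprobunionbound}) gives $\Pr[\tau_i\in\{0,1\}]\le 2\exp(-\tfrac{K^2}{8T})\le 2\exp(-\tfrac{K^2}{32T})$, which by Step 1 is the assertion.

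\textbf{Main obstacle.} The only genuinely non-mechanical point is the drift computation in Step 1: one must recognise, and justify cleanly, that neutrality forces the per-iteration drift of $\tau_i$ to be \emph{exactly} zero---the swap is allowed to depend on everything \emph{except} bit $i$, and it is the exchangeability of the i.i.d.\ pair $(x_i^{(t,1)},x_i^{(t,2)})$ together with this independence that annihilates the drift. Everything afterwards---identifying the correct ``bounded influence'' ($\tfrac2K$ on the conditional expectation, not pointwise) and feeding it into Theorem~\ref{tprobboundedexp}---is routine. (The argument in fact delivers $8T$ in the exponent; the stated $32T$ leaves ample slack, which also absorbs any off-by-one in how one indexes ``after $T$ iterations''.)
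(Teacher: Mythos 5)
Your proof is correct and follows essentially the same route as the paper's: establish that the neutral-bit frequency is a martingale with increments bounded by $\frac 1K$, write the final frequency as a function of the independent per-iteration randomness, verify the bounded-conditional-expectation hypothesis with $c_t = \frac 2K$, and apply Theorem~\ref{tprobboundedexp} together with a union bound over the two absorbing states. Your constant $\exp(-\frac{K^2}{8T})$ is in fact what Theorem~\ref{tprobboundedexp} as stated delivers (the paper's computation uses the weaker Azuma-type constant), so your bound implies the claimed $2\exp(-\frac{K^2}{32T})$ with room to spare.
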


\begin{proof}
To ease reading, let $X_t := \tau_i^{(t)}$. We have $X_0 = \frac 12$ with probability one. Once $X_t$ is determined, we have
\begin{align*}
  \Pr[X_{t+1} = X_t + \tfrac 1K] &= X_t (1-X_t),\\
  \Pr[X_{t+1} = X_t - \tfrac 1K] &= X_t (1-X_t),\\
  \Pr[X_{t+1} = X_t] &= 1 - 2 X_t (1-X_t).
\end{align*}
In particular, we have $E[X_{t+1} \mid X_0, \dots, X_t] = E[X_{t+1} \mid X_t] = X_t$. By induction, we have $E[X_T \mid X_t] = X_t$ for all $T > t$. 

Our aim is to show that with probability at least $1 -2 \exp\left(-\frac{K^2}{32 T}\right)$, $X_T$ has not yet converged to one of the absorbing states $0$ and $1$. We first write the frequencies as results of independent random variables. For convenience, these will be continuous random variables, but it is easy to see that instead we could have used discrete ones as well. For all $t = 1, 2, \dots$ let $R_t$ be a random number uniformly distributed in the interval $[0,1]$. Define $Y_0, Y_1, \dots$ as follows. We have $Y_0 = \frac 12$ with probability one. For $t \in \N_0$, we set
\begin{align*}
  \Pr[Y_{t+1} = Y_t + \tfrac 1K] &\mbox{ if $R_t \ge 1 - Y_t (1-Y_t)$},\\
  \Pr[Y_{t+1} = Y_t - \tfrac 1K] &\mbox{ if $R_t \le  Y_t (1-Y_t)$},\\
  \Pr[Y_{t+1} = Y_t] &\mbox{ else.}
\end{align*}
It is easy to see that $(X_0, X_1, \dots)$ and $(Y_0, Y_1, \dots)$ are identically distributed. Note that $Y_T$ is a function $g$ of $(R_1, \dots, R_T)$. For concrete values $r_1, \dots, r_t \in [0,1]$, we have $E[g(r_1, \dots, r_t, R_{t+1}, \dots, R_T)] = E[Y_T \mid Y_t] = Y_t$. Consequently, for all $\overline r_t \in [0,1]$,  the two expectations $E[g(r_1, \dots, r_{t-1}, r_t, R_{t+1}, \dots, R_T)]$ and $E[g(r_1, \dots, r_{t-1}, \overline r_t, R_{t+1}, \dots, R_T)]$ are two possible outcomes of $Y_t$ given a common value for $Y_{t-1}$ (which is determined by $r_1, \dots, r_{t-1}$), and hence differ by at most $c_t = \frac 2K$. We can thus apply Theorem~\ref{tprobboundedexp} as follows.
\begin{align*}
  \Pr[Y_T \in \{0,1\}] &= \Pr\big[|Y_T - \tfrac 12| \ge \tfrac 12\big]\\
  & = \Pr\big[|g(R_1,\dots,R_T) - E[g(R_1,\dots,R_T)]| \ge \tfrac 12\big]\\
  & \le 2 \exp\left(-\frac{(\frac 12)^2}{2 T (\frac 2K)^2}\right) = 2 \exp\left(-\frac{K^2}{32 T}\right).
\end{align*} 
\end{proof}

Note that it is not obvious how to obtain this result with the classic method of bounded differences (Theorem~\ref{tprobboundeddiff}). In particular, the above construction does not satisfy the bounded differences condition, that is, there are values $r_1, \dots, r_T$ and $\overline r_t$ such that $g(r_1, \dots, r_T)$ and $g(r_1, \dots, r_{t-1}, \overline r_t, r_{t+1}, \dots, r_T)$ differ by significantly more than $\frac 2K$. To see this, consider the following example. Let $r_i = 1$ for even $i$ and $r_i = \frac 14$ for odd $i$. Then $g(r_1,\dots,r_T) = \frac 12$ for even $T$ and $g(r_1,\dots,r_T) = \frac 12 - \frac 1K$ for odd $T$. However, for all even $T$ we have $g(\frac 12,r_2,\dots,r_T) = g(\frac 12,r_2,\dots,r_{T+1}) = \min\{1, \frac 12 + \frac T2 \cdot \frac 1K\}$, showing that a change in the first variable leads to a drastic change in the $g$-values for larger $T$.

This example shows that our stochastic modeling of the process cannot be analyzed via the method of bounded differences. We cannot rule out that a different modeling admits an analysis via the method of bounded differences, but nevertheless this example suggests that Theorem~\ref{tprobboundedexp} might be a useful tool in the theory of randomized search heuristics.

Without going into details (and in particular, without defining the notion of a martingale), we note that both Theorem~\ref{tprobboundeddiff} and~\ref{tprobboundedexp} are special cases of the following martingale result, which is often attributed to Azuma~\cite{Azuma67} despite the fact that it was proposed already in Hoeffding~\cite{Hoeffding63}. Readers familiar with martingales may find it more natural to use this result rather than the previous two theorems in their work, however, it has to be said that not all researchers in algorithms theory are familiar with martingales. 
\begin{theorem}[Azuma-Hoeffding inequality]
  Let $X_0, X_1, \dots, X_n$ be a martingale. Let $c_1, \dots, c_n > 0$ with $|X_i - X_{i-1}| \le c_i$ for all $i \in [1..n]$. Then for any $\lambda \ge 0$, \[\Pr[X_n - X_0 \ge \lambda] \le \exp\bigg(-\frac{\lambda^2}{2\sum_{i=1}^n c_i^2}\bigg).\]
\end{theorem}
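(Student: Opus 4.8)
The plan is to run the classical exponential-moments (Bernstein) recipe sketched in Section~\ref{secprobcproof}, applied to the \emph{martingale difference sequence}. Write $D_i := X_i - X_{i-1}$ for $i \in [1..n]$, so that $X_n - X_0 = \sum_{i=1}^n D_i$. The defining property of a martingale gives $E[D_i \mid X_0, \dots, X_{i-1}] = 0$, and by hypothesis $|D_i| \le c_i$, so conditional on $X_0, \dots, X_{i-1}$ the variable $D_i$ is a mean-zero random variable supported in an interval of length at most $2c_i$. The first step is to establish the conditional exponential-moment estimate, i.e.\ Hoeffding's lemma: for any random variable $D$ with $E[D] = 0$ taking values in an interval $[a,b]$ and any $h > 0$, one has $E[e^{hD}] \le \exp(h^2 (b-a)^2 / 8)$. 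Applied with $b - a \le 2 c_i$, this yields
\[E[e^{h D_i} \mid X_0, \dots, X_{i-1}] \le \exp\left(\tfrac{1}{2} h^2 c_i^2\right)\]
for every $h > 0$ and every $i \in [1..n]$.

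The second step is to telescope the exponential moment of the whole sum. Fix $h > 0$. Conditioning on $X_0, \dots, X_{n-1}$, the factor $e^{h(X_{n-1} - X_0)}$ is determined, so
\[E\!\left[e^{h(X_n - X_0)} \mid X_0, \dots, X_{n-1}\right] = e^{h(X_{n-1} - X_0)} \, E\!\left[e^{h D_n} \mid X_0, \dots, X_{n-1}\right] \le e^{h(X_{n-1} - X_0)} \exp\!\left(\tfrac{1}{2} h^2 c_n^2\right).\]
Taking expectations (law of total expectation) and iterating the same argument for $i = n-1, n-2, \dots, 1$ gives $E[e^{h(X_n - X_0)}] \le \exp\!\left(\tfrac{1}{2} h^2 \sum_{i=1}^n c_i^2\right)$.

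The third step is the Chernoff trick. If $\lambda = 0$ the claim is trivial since the right-hand side is $1$; assume $\lambda > 0$. By Markov's inequality (Lemma~\ref{lprobmarkov}) applied to the non-negative random variable $e^{h(X_n - X_0)}$,
\[\Pr[X_n - X_0 \ge \lambda] = \Pr\!\left[e^{h(X_n - X_0)} \ge e^{h\lambda}\right] \le e^{-h\lambda} E\!\left[e^{h(X_n - X_0)}\right] \le \exp\!\left(-h\lambda + \tfrac{1}{2} h^2 \sum_{i=1}^n c_i^2\right).\]
The exponent is minimized at $h = \lambda / \sum_{i=1}^n c_i^2 > 0$, where it equals $-\lambda^2 / (2 \sum_{i=1}^n c_i^2)$, which is exactly the claimed bound.

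I expect the only genuine obstacle to be the proof of Hoeffding's lemma in the first step; everything else is routine telescoping and the standard Chernoff optimization. The usual argument bounds $e^{hx}$ on $[a,b]$ from above by its chord $\frac{b-x}{b-a} e^{ha} + \frac{x-a}{b-a} e^{hb}$ (convexity of $\exp$), takes expectations to eliminate all dependence on the law of $D$ beyond $E[D] = 0$, and then shows that $\varphi(h) := \ln\!\left(\frac{b}{b-a} e^{ha} - \frac{a}{b-a} e^{hb}\right)$ satisfies $\varphi(0) = \varphi'(0) = 0$ and $\varphi''(h) \le (b-a)^2/4$ for all $h$ — the last bound by recognizing $\varphi''(h)$ as the variance of a two-point random variable with range $b-a$ and using that such a variance is at most a quarter of the squared range. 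A second-order Taylor expansion then gives $\varphi(h) \le h^2 (b-a)^2 / 8$. This is elementary calculus, but the bookkeeping for $\varphi''$ is the one place where a little care is needed.
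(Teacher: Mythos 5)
Your proof is correct. The paper states this theorem without proof (deferring to Azuma, Hoeffding, and McDiarmid), so there is no in-paper argument to compare against; what you give is the standard martingale version of the exponential-moments method that the paper sketches for independent summands in Section~\ref{secprobcproof}: conditional Hoeffding lemma for each increment (with range $2c_i$, which correctly yields the factor $\exp(h^2c_i^2/2)$ and hence the denominator $2\sum_{i=1}^n c_i^2$, matching the paper's remark that the $c_i$ of Theorem~\ref{tprobchernoffadditive} are twice as large as here), telescoping via the tower property, and the Chernoff optimization $h=\lambda/\sum_{i=1}^n c_i^2$. Your sketch of Hoeffding's lemma via the chord bound and the variance interpretation of $\varphi''$ is also the standard complete argument, so there is no gap.
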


This result has found several applications in the theory of randomized search heuristics, e.g., in~\cite{DoerrK15,Kotzing16,DoerrDY16gecco}. 

We observe that the theorem above is a direct extension of Theorem~\ref{tprobchernoffadditive} to martingales (note that the $c_i$ there are twice as large as here, which explains the different location of the $2$ in the bounds). In a similar vein, there are martingale versions of most other Chernoff bounds presented in this work. We refer to McDiarmid~\cite{McDiarmid98} for more details.

\subsubsection{Tail Bounds for Maxima and Minima of Partial Sums}

We end this section with a gem already contained in Hoeffding. It builds on the following elementary observation: If $X_0, X_1, \dots, X_n$ is a martingale, then $Y_0, Y_1, \dots, Y_n$ defined as follows also forms a martingale. Let $\lambda \in \R$. Let $i \in [0..n]$ minimal with $X_i \ge \lambda$, if such an $X_i$ exists, and $i = n+1$ otherwise. Let $Y_j = X_j$ for $j \le i$ and $Y_j = X_i$ for $j > i$. Then $Y_n \ge \lambda$ if and only if $\max_{i \in [1..n]} X_i \ge \lambda$. Since $Y_0, \dots, Y_n$ is a martingale with martingales differences bounded as least as well as for $X_0,\dots,X_n$ (and also all other variation measures at least as good as for $X_0,\dots, X_n$), all large deviation bounds provable for the martingale $X_0,\dots,X_n$ via the Bernstein method are valid also for $Y_0, \dots, Y_n$, that is, for $\max_{i \in [1..n]} X_i$. Since we did not introduce martingales here, we omit the details and only state some implications of this observation. The reader finds more details in~\cite[end of Section~2]{Hoeffding63} and~\cite[end of Section~3.5]{McDiarmid98}. It seems that both authors do not see this extension as very important (see also the comment at the end of Section~2 in~\cite{McDiarmid98}). We feel that this might be different for randomized search heuristics. For example, to prove that a randomized search heuristic has at least some optimization time $T$, we need to show that the distance of each of the first $T-1$ solutions from the optimum is positive, that is, that the minimum of these differences is positive.

\begin{theorem}[Tail bounds for maxima and minima]
Let $X_1, \dots, X_n$ be independent random variables. For all $i \in [1..n]$, let $S_i = \sum_{j=1}^i X_j$. Assume that one of the results in Section~\ref{secprobchernoffindependent} yields the tail bound $\Pr[S_n \ge E[S_n] + \lambda] \le p$. Then we also have 
\begin{equation}
\Pr[\exists i \in [1..n] : S_i \ge E[S_i] + \lambda] \le p. \label{eqprobmax}
\end{equation}
In an analogous manner, each tail bound $\Pr[S_n \le E[S_n] - \lambda] \le p$ derivable from Section~\ref{secprobchernoffindependent} can be strengthened to $\Pr[\exists i \in [1..n] : S_i \le E[S_i] - \lambda] \le p$.
\end{theorem}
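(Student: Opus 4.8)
The plan is to carry out the ``freezing'' argument sketched just before the statement, but directly on exponential moments, so that no martingale vocabulary is needed. First I would recenter: set $M_i := S_i - E[S_i] = \sum_{j=1}^i (X_j - E[X_j])$ for $i \in [0..n]$ (so $M_0 = 0$) and let $\mathcal F_i$ denote the information contained in $X_1, \dots, X_i$. Fix $\lambda > 0$ (the case $\lambda = 0$ being trivial) and let $\tau := \min\big(\{i \in [1..n] : M_i \ge \lambda\} \cup \{n\}\big)$ be the first time a partial sum exceeds its mean by $\lambda$, or $n$ if this never happens. Define the frozen process $N_i := M_{\min\{i,\tau\}}$. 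Two structural facts drive everything: (i) $\{\exists i \in [1..n] : S_i \ge E[S_i] + \lambda\} = \{N_n \ge \lambda\}$, since $N_n = M_\tau$ and $M_\tau \ge \lambda$ exactly when some $M_i$ reaches $\lambda$; and (ii) the increment $N_i - N_{i-1}$ equals $X_i - E[X_i]$ on the $\mathcal F_{i-1}$-measurable event $\{\tau \ge i\}$ and equals $0$ otherwise, so $N$ inherits every pointwise feature (range $[a_i,b_i]$, conditional variance) that $X_i - E[X_i]$ had, the zero increment being harmless.

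Next I would replay the Bernstein exponential-moment computation of Section~\ref{secprobcproof} on $N_n$ in place of $S_n - E[S_n]$. For $h > 0$, conditioning on $\mathcal F_{i-1}$ gives $E[e^{h N_i}] = E\big[e^{h N_{i-1}} \, E[e^{h(N_i - N_{i-1})} \mid \mathcal F_{i-1}]\big]$, and since $X_i$ is independent of $\mathcal F_{i-1}$ and $\{\tau \ge i\}$ is $\mathcal F_{i-1}$-measurable,
\[ E[e^{h(N_i - N_{i-1})} \mid \mathcal F_{i-1}] = \mathbf{1}_{\{\tau \ge i\}} \, E[e^{h(X_i - E[X_i])}] + \mathbf{1}_{\{\tau < i\}} \le E[e^{h(X_i - E[X_i])}] =: g_i(h), \]
where $g_i(h) \ge 1$ by Jensen's inequality. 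Iterating, $E[e^{h N_n}] \le \prod_{i=1}^n g_i(h)$ --- which is precisely the product that the proof of the corresponding Chernoff bound in Section~\ref{secprobchernoffindependent} estimates and then optimizes over $h$. Hence $\Pr[N_n \ge \lambda] \le e^{-h\lambda} \prod_{i=1}^n g_i(h)$ is bounded by the very same quantity $p$, and by fact (i) this is~\eqref{eqprobmax}. The lower-tail claim follows by running the identical argument with $X_i$ replaced by $-X_i$ (equivalently, stopping at $\min\{i : M_i \le -\lambda\}$).

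The only genuine work is the bookkeeping behind ``any result in Section~\ref{secprobchernoffindependent} derivable via the Bernstein method'': one must verify, case by case, that each such bound is actually produced by bounding each factor $g_i(h)$ --- via Hoeffding's lemma for the $[0,1]$ and $[a_i,b_i]$ versions (Theorems~\ref{tprobCMU}, \ref{tprobCML}, \ref{tprobchernoffadditive01}, \ref{tprobchernoffadditive}) and via Bennett's per-variable estimate for the variance versions (Theorems~\ref{tprobcvar}, \ref{tprobcvara}) --- and that the substitution of $E[X]$ by an over- or under-estimate (Section~\ref{secprobEE}) is compatible, since the auxiliary variables $Y_i = a'_i + c'_i X_i$ built there again have the ``$0$-or-$X_i$'' increment structure after freezing, so their conditional variances only shrink. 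I expect this verification, rather than any new idea, to be the main obstacle, but it is entirely routine because every bound in that section is proved through exactly this factorized exponential moment; so I would state the argument once at the level of $g_i(h)$ and remark that it specializes to each listed inequality.
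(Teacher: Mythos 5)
Your proof is correct and follows essentially the same route as the paper: the paper's argument is exactly this freezing of the process at the first time a partial sum crosses $\lambda$, observing that the stopped process satisfies the same increment bounds and hence the same exponential-moment estimates. You have merely unpacked the martingale phrasing into an explicit conditioning on $X_1,\dots,X_{i-1}$ (using that $\{\tau\ge i\}$ is determined by these and that $g_i(h)\ge 1$), which makes the sketch in the paper fully rigorous without adding a new idea.
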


Note that if the $X_i$ in the theorem are non-negative, then trivially equation~\eqref{eqprobmax} implies the uniform bound
\begin{equation}
\Pr[\exists i \in [1..n] : S_i \ge E[S_n] + \lambda] \le p. \label{eqprobmax2}
\end{equation}
Note also that the deviation parameter $\lambda$ does not scale with $i$. In particular, a bound like
$\Pr[\exists i \in [1..n] : S_i \ge (1+\delta) E[S_i]] \le p$ cannot be derived.

\subsection{Chernoff Bounds for Geometric Random Variables}\index{Chernoff inequality!geometric random variables}\label{secprobcgeom}

As visible from Lemma~\ref{lprobdomexamples} below, sums of independent geometric random variables occur frequently in the analysis of randomized search heuristics. Surprisingly, it was only in 2007 that a Chernoff-type bound was used to analyze such sums in the theory of randomized search heuristics~\cite{DoerrHK07} (for subsequent uses see, e.g.,~\cite{BaswanaBDFKN09,DoerrHK11,ZhouLLH12,DoerrJWZ13,DoerrD18}). Even more surprisingly, only recently Witt~\cite{Witt14} proved good tail bounds for sums of geometric random variables having significantly different success probabilities. Note that geometric random variables are unbounded. Hence the Chernoff bounds presented so far cannot be applied directly. 

We start this subsection with simple Chernoff bounds for sums of identically distributed geometric random variables as these can be derived from Chernoff bounds for sums of independent $0,1$ random variables discussed so far. We remark that a sum $X$ of $n$ independent geometric distributions with success probability $p>0$ is closely related to the \emph{negative binomial distribution}\index{negative binomial distribution} $\text{NB}(n,1-p)$ with parameters $n$ and $1-p$: We have $X \sim \text{NB}(n,1-p) + n$.

\begin{theorem}
  Let $X_1, \ldots, X_n$ be independent geometric random variables with common success probability $p>0$. Let $X:=\sum_{i=1}^n X_i$ and $\mu := E[X] = \frac np$.
  \begin{enumerate}
  \item For all $\delta \ge 0$, 
  \begin{align}
  \Pr[X \ge (1+\delta)\mu]  \le \exp\left(-\frac{\delta^2}{2}\frac{n-1}{1+\delta}\right) \le \exp\left(-\frac 14 \min\{\delta^2,\delta\} (n-1)\right).\label{eqprobcgeomgleichU}
  \end{align}
  \item For all $0 \le \delta \le 1$, 
  \begin{align}
  \Pr[X \le (1-\delta) \mu] &\le (1-\delta)^n \left(\frac{(1-\delta)(\mu-n)}{(1-\delta) \mu - n }\right)^{(1-\delta) \mu - n} \label{eqprobduvel1}\\
  &\le (1-\delta)^n  \exp(\delta n)\label{eqprobduvel2}\\
  &\le \exp\left(- \frac{\delta^2 n}{2- \frac 43 \delta}\right),\label{eqprobduvel3}
  \end{align}
  where the first bound reads as $p^n$ for $(1-\delta)\mu = n$ and as $0$ for $(1-\delta)\mu < n$. For $0 \le \delta < 1$ and $\lambda \ge 0$, we also have
  \begin{align}
  \Pr[X \le (1-\delta) \mu] &\le \exp\left(-\frac {2 \delta^2 p n}{(1-\delta)}\right),\label{eqprobCGA1}\\
  \Pr[X \le \mu - \lambda] &\le \exp\left(-\frac{2p^3 \lambda^2}{n}\right).\label{eqprobCGA2}
  \end{align}
  \end{enumerate}
\end{theorem}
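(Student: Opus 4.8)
The plan is to reduce every one of the six inequalities to a tail bound for a binomial random variable, exploiting the elementary duality between ``the $n$-th success arrives late'' and ``few successes have occurred so far.'' Since $X=\sum_{i=1}^n X_i$ is exactly the round of the $n$-th success in a sequence of independent $\mathrm{Bernoulli}(p)$ trials, for every real $t$ we have $\Pr[X\ge t]=\Pr[\Bin(\lceil t\rceil-1,p)\le n-1]$ and $\Pr[X\le t]=\Pr[\Bin(\lfloor t\rfloor,p)\ge n]$, with the convention that a negative number of trials turns the right-hand side into $1$, respectively $0$. Thus all bounds follow from Chernoff bounds already proven for binomial sums once the parameters are tracked through the floors and ceilings; the degenerate regimes ($(1-\delta)\mu\le n$, where the lower-tail probability vanishes, and $p=1$, where $X\equiv n$) will be checked directly.

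For part~(a) I would set $m:=\lceil(1+\delta)\mu\rceil-1$, so that $(1+\delta)n/p-1\le m<(1+\delta)n/p$ and $Y:=\Bin(m,p)$ has mean $E[Y]=mp$ with $(1+\delta)n-p\le E[Y]<(1+\delta)n$. Then $\Pr[X\ge(1+\delta)\mu]=\Pr[Y\le n-1]$, and since $E[Y]-(n-1)\ge\delta n$, the multiplicative lower-tail bound~\eqref{eqprobCMLeasy} gives $\Pr[Y\le n-1]\le\exp\!\big(-(E[Y]-(n-1))^2/(2E[Y])\big)\le\exp\!\big(-\delta^2 n^2/(2(1+\delta)n)\big)=\exp\!\big(-\tfrac{\delta^2 n}{2(1+\delta)}\big)$, which is at least as strong as the claimed bound (the stated ``$n-1$'' is merely not tight). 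The second inequality in~\eqref{eqprobcgeomgleichU} is the elementary estimate $\tfrac{\delta^2}{2(1+\delta)}\ge\tfrac14\min\{\delta^2,\delta\}$, verified by splitting at $\delta=1$.

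For part~(b) the starting point is $\Pr[X\le(1-\delta)\mu]=\Pr[\Bin(m,p)\ge n]$ with $m:=\lfloor(1-\delta)\mu\rfloor\le(1-\delta)n/p$. To obtain the sharp form~\eqref{eqprobduvel1} I would run the exponential-moment argument of Section~\ref{secprobcproof} directly on the geometric sum: for $h>0$ one has $E[e^{-hX}]=\big(\tfrac{pe^{-h}}{1-(1-p)e^{-h}}\big)^n$, hence $\Pr[X\le(1-\delta)\mu]\le e^{h(1-\delta)\mu}E[e^{-hX}]$, and minimizing over $h$ (optimal $e^{-h}=\tfrac{t-n}{t(1-p)}$ with $t=(1-\delta)\mu$, which lies in $(0,1)$ exactly when $n<(1-\delta)\mu<\mu$) yields precisely $(1-\delta)^n\big(\tfrac{(1-\delta)(\mu-n)}{(1-\delta)\mu-n}\big)^{(1-\delta)\mu-n}$ after using $p(1-\delta)\mu=(1-\delta)n$; alternatively one applies~\eqref{eqprobCMUstrongest} to $\Bin(m,p)$ and notes that the optimized bound is monotone in the number of trials, so $m$ may be replaced by the real value $(1-\delta)\mu$. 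Then~\eqref{eqprobduvel2} follows by writing the second factor as $\big(1+\tfrac{\delta n}{(1-\delta)\mu-n}\big)^{(1-\delta)\mu-n}\le e^{\delta n}$ via Lemma~\ref{lprobelower}, and~\eqref{eqprobduvel3} follows from $(1-\delta)^ne^{\delta n}=\exp\!\big(n(\delta+\ln(1-\delta))\big)$ together with the power-series comparison $-\delta-\ln(1-\delta)\ge\tfrac{\delta^2}{2-\frac43\delta}$ on $[0,1)$. Finally~\eqref{eqprobCGA1} and~\eqref{eqprobCGA2} come from the additive Hoeffding bound~\eqref{eqprobchernoffadditive01U} applied to the same $\Bin(m,p)$: with $\lambda':=n-mp$ we get $\Pr[\Bin(m,p)\ge n]\le\exp(-2\lambda'^2/m)$, and inserting $m\le(1-\delta)n/p$ together with $\lambda'\ge\delta n$ gives~\eqref{eqprobCGA1}, while for $\Pr[X\le\mu-\lambda]$ we take $m=\lfloor\mu-\lambda\rfloor\le n/p$ and $\lambda'\ge\lambda p$ to get~\eqref{eqprobCGA2}.

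The routine-but-delicate part is the bookkeeping with the floors and ceilings — making sure each rounding moves the parameters in the harmless direction and that the several ``reads as $0$'' / ``reads as $p^n$'' boundary cases of~\eqref{eqprobduvel1} are consistent with what the binomial (equivalently exponential-moment) computation produces. I expect the only genuinely non-mechanical step to be the minimization over $h$ that produces the closed form in~\eqref{eqprobduvel1}, and even that becomes a short exercise once one spots the simplification $p(1-\delta)\mu=(1-\delta)n$.
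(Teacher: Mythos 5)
Your reduction of the geometric-sum tails to binomial tails via the equivalence between ``the $n$-th success arrives after round $K$'' and ``at most $n-1$ successes occur in the first $K-1$ rounds'' is exactly the argument the paper uses for part~(b): the derivations of \eqref{eqprobduvel1}--\eqref{eqprobCGA2} from \eqref{eqprobCMUstrongest} and \eqref{eqprobchernoffadditive01U} match the paper's proof step for step (the paper even suppresses the floor/ceiling bookkeeping that you rightly insist on tracking). You go beyond the paper in two respects, both correct: the paper does not prove part~(a) at all but cites it to the literature, whereas your argument via the lower-tail bound \eqref{eqprobCMLAeasy} applied to $\Bin(\lceil(1+\delta)\mu\rceil-1,p)$ is valid and in fact yields the slightly stronger exponent $\tfrac{\delta^2}{2}\tfrac{n}{1+\delta}$ in place of $\tfrac{\delta^2}{2}\tfrac{n-1}{1+\delta}$; and your alternative derivation of \eqref{eqprobduvel1} by minimizing $e^{h(1-\delta)\mu}E[e^{-hX}]$ over $h>0$ (with the correct optimizer $e^{-h}=\tfrac{t-n}{t(1-p)}$ for $t=(1-\delta)\mu$) is a self-contained route that bypasses the binomial detour for that one bound and cleanly explains the two boundary conventions.
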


Bounds~\eqref{eqprobCGA1} and~\eqref{eqprobCGA2} are interesting only for relatively large values of~$p$. Since part~(a) has been proven in~\cite{BaswanaBDFKN09}, we only prove~(b). The main idea in both cases is exploiting the natural relation between a sum of independent identically distributed geometric random variables and a sequence of Bernoulli events.
   
\begin{proof}
  Let $Z_1, Z_2, \dots$ be independent binary random variables with $\Pr[Z_i = 1] = p$ for all $i \in \N$. Let $n \le K \le \frac np$. Let $Y_K = \sum_{i=1}^K Z_i$. Then $X \le K$ if and only if $Y_K \ge n$. Consequently, by Theorem~\ref{tprobCMU},
  \begin{align*}
  \Pr[X \le K] & = \Pr[Y_K \ge n]\\
  & = \Pr\left[Y_K \ge \left(1 + \left(\frac{n}{Kp}-1\right)\right) E[Y_K]\right]\\
  & \le \left(\frac{Kp}{n}\right)^n \left(\frac{K - Kp}{K - n}\right)^{K-n}\\
  & \le \left(\frac{Kp}{n}\right)^n \exp(n - Kp)\\
  &\le \exp\left(\frac{p \lambda^2}{2K + \frac 23 \lambda}\right),
  \end{align*}
  where we used the shorthand $\lambda := \mu - K$ for the absolute deviation. From Theorem~\ref{tprobchernoffadditive01}, we derive
  \begin{align*}
  \Pr[X \le K] & = \Pr[Y_K \ge n]\\
  & = \Pr[Y_K \ge E[Y_K] + (n-Kp)]\\
  & \le \exp\left(-\frac{2(n-Kp)^2}{K}\right)\\
  & = \exp\left(-\frac{2p^2\lambda^2}{\mu-\lambda}\right) \le \exp\left(-\frac{2p^2\lambda^2}{\mu}\right).
  \end{align*}
  Replacing $K$ by $(1-\delta) \mu$ and $\lambda$ by $\delta \mu$ in these equations gives the claim.
\end{proof}

When the geometric random variables have different success probabilities, the following bounds can be employed.

\begin{theorem}\label{tprobcgeomungleich}
 Let $X_1, \ldots, X_n$ be independent geometric random variables with success probabilities $p_1, \dots, p_n>0$. Let $p_{\min} := \min\{p_i \mid i \in [1..n]\}$. Let $X:=\sum_{i=1}^n X_i$ and $\mu = E[X] = \sum_{i=1}^n \frac 1 {p_i}$. 
  \begin{enumerate}
  \item For all $\delta \ge 0$, 
  \begin{align}
  \Pr[X \ge (1+\delta)\mu] &\le \frac{1}{1+\delta} \, (1-p_{\min})^{\mu(\delta-\ln(1+\delta))}\label{eqprobcgeomUjanson1}\\
  &\le \exp(-p_{\min} \mu (\delta - \ln(1+\delta)))\label{eqprobcgeomUjanson2}\\ 
  &\le \left(1 + \frac{\delta \mu p_{\min}}{n}\right)^n \exp(-\delta\mu p_{\min})\label{eqprobcgeomUscheideler}\\
  &\le \exp\left(-\, \frac{(\delta \mu p_{\min})^2}{2n (1+\frac{\delta \mu p_{\min}}{n})} \right)\label{eqprobcgeomUweak}.
  \end{align}
  \item For all $0 \le \delta \le 1$,  
  \begin{align}
  \Pr[X \le (1-\delta) \mu] &\le (1-\delta)^{p_{\min} \mu} \exp(-\delta p_{\min} \mu)\label{eqprobcgeomLjanson}\\
  &\le \exp\left(- \frac{\delta^2 \mu \pmin}{2 - \frac 43 \delta}\right)\label{eqprobcgeomLmiddle}\\
  &\le \exp(- \tfrac 12 \delta^2 \mu \pmin)\label{eqprobcgeomLscheideler}.
  \end{align}
  \end{enumerate}
\end{theorem}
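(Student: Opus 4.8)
The plan is to run the exponential‑moments (Bernstein) method of Section~\ref{secprobcproof} separately for the two tails, the only real difficulty being that geometric variables are unbounded, so the moment‑generating functions (MGFs) are finite only on a bounded interval and the ``radius of convergence'' $-\ln(1-p_i)$ degrades as $p_i\to 0$. The crucial device is to replace each geometric MGF by a simple resolvent‑type surrogate whose logarithm is proportional to $1/p_i$; since $\mu=\sum_{i=1}^n 1/p_i$, summing the per‑variable bounds then produces a bound of the shape $\pmin\mu\cdot(\text{function of the free parameter})$, which is exactly the form of the exponents in \eqref{eqprobcgeomUjanson2} and \eqref{eqprobcgeomLjanson}.

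\textbf{Upper tail.} First I would prove that for $0\le t<p_i$ one has $E[e^{tX_i}]=\frac{p_ie^t}{1-(1-p_i)e^t}\le\frac{1}{1-t/p_i}$; clearing denominators, this reduces to the one‑line fact $e^t(1-t)\le 1$ for $t\ge 0$ (equality at $t=0$, derivative $-te^t\le 0$). Restricting $t\in(0,\pmin)$ so this holds for every $i$, I take logarithms and use that $x\mapsto-\ln(1-tx)$ is convex with value $0$ at $x=0$, hence its chord slope through the origin is nondecreasing; evaluated at $x=1/p_i\le 1/\pmin$ this gives $-\ln(1-t/p_i)\le\frac{\pmin}{p_i}\bigl(-\ln(1-t/\pmin)\bigr)$, and summing over $i$ yields $\ln E[e^{tX}]=\sum_i\ln E[e^{tX_i}]\le \pmin\mu\bigl(-\ln(1-t/\pmin)\bigr)$. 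Then $\Pr[X\ge(1+\delta)\mu]\le e^{-t(1+\delta)\mu}E[e^{tX}]$, and writing $s=t/\pmin\in(0,1)$ and minimizing $-s(1+\delta)-\ln(1-s)$ at $s=\delta/(1+\delta)$ gives $\Pr[X\ge(1+\delta)\mu]\le\exp\bigl(-\pmin\mu(\delta-\ln(1+\delta))\bigr)$, i.e.\ \eqref{eqprobcgeomUjanson2}. The sharper \eqref{eqprobcgeomUjanson1}, with the prefactor $\tfrac1{1+\delta}$ and the base $1-\pmin$ instead of $e^{-\pmin}$, is out of reach of the resolvent surrogate (which is strictly lossy: it replaces the true decay rate $-\ln(1-p_i)$ by the smaller $p_i$) and of the plain Markov step; I would obtain it by exponential tilting, using that tilting a geometric variable produces a geometric variable, so that the residual factor $E_{\text{tilted}}[e^{-t(X-a)}\mathbf{1}_{X\ge a}]$ is controlled by a geometric‑series tail, following Witt~\cite{Witt14}. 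Finally \eqref{eqprobcgeomUscheideler} and \eqref{eqprobcgeomUweak} follow from \eqref{eqprobcgeomUjanson2} by the same elementary estimates used elsewhere in the chapter (e.g.\ to pass from \eqref{eqprobCMUstrong} to \eqref{eqprobCMUlin1}), together with the bound $\pmin\mu=\sum_i\pmin/p_i\le n$ and monotonicity of $x\mapsto x\ln(1+\delta\pmin\mu/x)$.

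\textbf{Lower tail.} The argument is parallel and in fact cleaner: for all $t\ge 0$ (no range restriction) one has $E[e^{-tX_i}]=\frac{p_ie^{-t}}{1-(1-p_i)e^{-t}}\le\frac{1}{1+t/p_i}$, which reduces to $e^{-t}(1+t)\le 1$ for $t\ge 0$. The same convexity/chord argument applied to $x\mapsto-\ln(1+tx)$ gives $\ln E[e^{-tX}]\le-\pmin\mu\ln(1+t/\pmin)$, and from $\Pr[X\le(1-\delta)\mu]=\Pr[-X\ge-(1-\delta)\mu]\le e^{t(1-\delta)\mu}E[e^{-tX}]$, substituting $s=t/\pmin$ and minimizing $s(1-\delta)-\ln(1+s)$ at $s=\delta/(1-\delta)$, one arrives at the first bound of part~(b). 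Then \eqref{eqprobcgeomLmiddle} and \eqref{eqprobcgeomLscheideler} come from elementary calculus on the resulting exponent, exactly as \eqref{eqprobCMLeasy} is deduced from \eqref{eqprobCMLstrong}.

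\textbf{Main obstacle.} The heart of the proof is the combination ``clean MGF surrogate $+$ convexity reduction''; the one genuinely delicate point is the upper‑tail refinement \eqref{eqprobcgeomUjanson1}, where both the correct base $1-\pmin$ (the true blow‑up point of the $\Geom(\pmin)$ MGF) and the polynomial prefactor have to be recovered by working with the exact geometric MGF and a sharper‑than‑Markov conclusion. A secondary, routine point is to check that the reductions $-\ln(1-t/p_i)\le\frac{\pmin}{p_i}(-\ln(1-t/\pmin))$ and its lower‑tail analogue do not concede more than the stated bounds allow — this is immediate because at $p_i=\pmin$ each inequality is an equality, so the reduction is tight for the smallest success probability, which is precisely the variable that governs the exponent.
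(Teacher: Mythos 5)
The paper does not actually prove this theorem: it attributes \eqref{eqprobcgeomUjanson1}, \eqref{eqprobcgeomUjanson2} and \eqref{eqprobcgeomLjanson} to~\cite{Janson17}, \eqref{eqprobcgeomUscheideler} and \eqref{eqprobcgeomLscheideler} to~\cite{Scheideler00}, and derives the rest by elementary estimates. So you are supplying a proof where the text only supplies citations, and your core argument checks out. The surrogate MGF bounds $E[e^{tX_i}]=\frac{p_ie^t}{1-(1-p_i)e^t}\le(1-t/p_i)^{-1}$ for $0\le t<p_i$ and $E[e^{-tX_i}]\le(1+t/p_i)^{-1}$ for $t\ge 0$ do reduce to $e^t(1-t)\le 1$ and $e^{-t}(1+t)\le 1$ as you say; the chord-slope reduction to $\pmin$ is valid in both directions; and the optimizations at $s=\delta/(1+\delta)$ resp.\ $s=\delta/(1-\delta)$ give exponents $-\pmin\mu(\delta-\ln(1+\delta))$ and $\pmin\mu(\delta+\ln(1-\delta))$. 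The passage to \eqref{eqprobcgeomUscheideler} via $\pmin\mu\le n$ and the monotonicity of $x\mapsto x\ln(1+c/x)$, and then to \eqref{eqprobcgeomUweak} via $u-\ln(1+u)\ge\frac{u^2}{2(1+u)}$, is also correct, as is the lower-tail chain to \eqref{eqprobcgeomLmiddle} and \eqref{eqprobcgeomLscheideler}. This is essentially Janson's route, reorganized so that only the two one-line inequalities and a convexity argument are needed; what it buys is a self-contained proof of everything except \eqref{eqprobcgeomUjanson1}.

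Two caveats. First, \eqref{eqprobcgeomUjanson1} is not proven: you correctly diagnose that the surrogate loses both the base $1-\pmin$ and the prefactor $\frac 1{1+\delta}$, but ``exponential tilting following Witt'' is a named strategy, not an argument, and recovering the prefactor in particular requires a genuinely sharper step than Markov applied to $e^{tX}$. Since you derive \eqref{eqprobcgeomUjanson2} independently, the remaining bounds do not depend on this gap, but part~(a) as stated is not fully established. Second, your lower-tail computation lands on $(1-\delta)^{\pmin\mu}\exp(+\delta\pmin\mu)$, whereas \eqref{eqprobcgeomLjanson} displays $\exp(-\delta\pmin\mu)$. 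The displayed version is in fact false (take a single $\Geom(\tfrac12)$ and $\delta=\tfrac12$: then $\Pr[X\le 1]=\tfrac12$ but $(1-\delta)e^{-\delta}\approx 0.303$); the intended bound must be the one with $+\delta\pmin\mu$, which is Janson's result and is the only version consistent with the paper's own remark that \eqref{eqprobcgeomLjanson} reduces to \eqref{eqprobduvel2} in the i.i.d.\ case. You should have flagged this discrepancy instead of asserting that the optimization ``arrives at the first bound of part~(b)'' — what you arrive at is the correct bound, which differs in sign from the printed one.
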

Estimates~\eqref{eqprobcgeomUjanson1} and~\eqref{eqprobcgeomUjanson2} are from~\cite{Janson17}, bound~\eqref{eqprobcgeomUscheideler} is from~\cite{Scheideler00}, and~\eqref{eqprobcgeomUweak} follows from the previous by standard estimates. This last bound, when applied to identically distributed random variables, is essentially the same as~\eqref{eqprobcgeomgleichU}.

For the lower tail bounds, \eqref{eqprobcgeomLjanson} from~\cite{Janson17} is identical to~\eqref{eqprobduvel2} for identically distributed variables. Hence~\eqref{eqprobduvel1} is the strongest estimate for identically distributed geometric random variables. Equation~\eqref{eqprobcgeomLjanson} gives~\eqref{eqprobcgeomLmiddle} via the same estimate that gives~\eqref{eqprobduvel3} from~\eqref{eqprobduvel2}. Estimate~\eqref{eqprobcgeomLscheideler} appeared already in~\cite{Scheideler00}.

Overall, it remains surprising that such useful bounds were proven only relatively late and have not yet appeared in a scientific journal.

The bounds of Theorem~\ref{tprobcgeomungleich} allow the geometric random variables to have different success probabilities, however, the tail probability depends only on the smallest of them. This is partially justified by the fact that the corresponding geometric random variable has the largest variance, and thus might be most detrimental to the desired strong concentration. If the success probabilities vary significantly, however, then this result gives overly pessimistic tail bounds, and the following result of Witt~\cite{Witt14} can lead to stronger estimates.

\begin{theorem}\label{tprobchernoffgeomwitt}
  Let $X_1, \dots, X_n$ be independent geometric random variables with success probabilities $p_1, \dots, p_n>0$. Let $X = \sum_{i=1}^n X_i$, $s = \sum_{i=1}^n (\frac 1 {p_i})^2$, and $p_{\min} := \min\{p_i \mid i \in [1..n]\}$. Then for all $\lambda \ge 0$, 
\begin{align}
  &\Pr[X \ge E[X] + \lambda] \le \exp\left(-\frac 14 \min\left\{\frac{\lambda^2}{s}, \lambda p_{\min}\right\}\right),\\
  &\Pr[X \le E[X] - \lambda] \le \exp\left(-\frac{\lambda^2}{2s}\right).
\end{align}  
\end{theorem}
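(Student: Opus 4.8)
**Proof proposal for Theorem~\ref{tprobchernoffgeomwitt} (Witt's Chernoff bounds for sums of geometric random variables with distinct success probabilities).**

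The plan is to use the exponential-moments (Bernstein) method described in Section~\ref{secprobcproof}, applied directly to the unbounded geometric variables, and then to choose the free parameter $h$ carefully so that the minimum of the two regimes $\lambda^2/s$ and $\lambda p_{\min}$ appears in the exponent. First I would compute the moment-generating function of a single geometric variable: for $X_i \sim \Geom(p_i)$ and $0 < h < -\ln(1-p_i)$ one has $E[e^{hX_i}] = \frac{p_i e^h}{1-(1-p_i)e^h}$. The key analytic step is to bound this, after recentering, by something of Bernstein type. Writing $q_i := 1/p_i = E[X_i]$, a convenient route is to show that for $h$ in a suitable range (roughly $h \le c\, p_{\min}$ for a small absolute constant $c$) we have
\[
E\bigl[e^{h(X_i - q_i)}\bigr] \le \exp\!\left(\frac{h^2 q_i^2}{2(1 - \gamma h q_i)}\right)
\]
for an appropriate constant $\gamma < 1$, i.e.\ a Bernstein-type bound with "variance proxy" $q_i^2$ and "range proxy" proportional to $q_i$ (this is where the factor $q_i^2$ feeding into $s = \sum q_i^2$ originates, and why the second, range-like term in the minimum scales with $p_{\min} = 1/\max_i q_i$). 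Establishing this single-variable estimate is the technical heart of the argument; it amounts to showing $\ln\!\bigl(\frac{p_i e^h}{1-(1-p_i)e^h}\bigr) - h q_i \le \frac{h^2 q_i^2}{2(1-\gamma h q_i)}$, which one verifies by substituting $e^h = 1 + t$ and doing a careful power-series / monotonicity comparison, using $1-p_i < 1$ to control the tail of the series.

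Next I would multiply over $i$: by independence and the above,
\[
E[e^{h(X - E[X])}] \le \exp\!\left(\frac{h^2 s}{2(1-\gamma h/p_{\min})}\right),
\]
valid for $0 \le h < p_{\min}/\gamma$. Then Markov's inequality applied to $e^{h(X-E[X])}$ gives $\Pr[X \ge E[X]+\lambda] \le \exp\bigl(-h\lambda + \frac{h^2 s}{2(1-\gamma h/p_{\min})}\bigr)$. Now I optimize over $h$ in a two-case fashion exactly as one does for the standard Bernstein inequality: if $\lambda$ is "small" relative to $s p_{\min}$, take $h$ proportional to $\lambda/s$ so that the bound becomes $\exp(-\Theta(\lambda^2/s))$; if $\lambda$ is "large", push $h$ toward its upper limit $\propto p_{\min}$, which yields $\exp(-\Theta(\lambda p_{\min}))$. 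Combining the two cases and tracking constants carefully gives the claimed $\exp\bigl(-\tfrac14\min\{\lambda^2/s,\ \lambda p_{\min}\}\bigr)$; the constant $\tfrac14$ is exactly what falls out of the $\tfrac12$ in the variance term after the worst-case split between the two regimes.

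For the lower tail the situation is easier, because for $h < 0$ (equivalently, bounding $E[e^{-h'X_i}]$ with $h'>0$) the factor $(1-p_i)e^{-h'} < 1$ automatically, so there is no upper restriction on $h'$ and no "range" term survives. I would show $E[e^{-h'(X_i - q_i)}] \le \exp(h'^2 q_i^2/2)$ for all $h' \ge 0$ (again a short power-series estimate on $\ln\frac{p_i e^{-h'}}{1-(1-p_i)e^{-h'}} + h' q_i$), multiply to get $E[e^{-h'(X-E[X])}] \le \exp(h'^2 s/2)$, apply Markov, and optimize $h' = \lambda/s$ to obtain $\Pr[X \le E[X]-\lambda] \le \exp(-\lambda^2/(2s))$ with no case distinction needed. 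The main obstacle throughout is purely the single-variable MGF estimate with the correct constants — getting a clean Bernstein-type bound for the geometric MGF that is simultaneously tight enough to give $s$ (not, say, $2s$) in the subexponential regime and that has the right range parameter $p_{\min}$; everything after that is the routine Bernstein optimization. (An alternative, if the direct MGF estimate proves awkward, is to compare $X_i$ with a scaled sum of i.i.d.\ $\{0,1\}$ variables via the $\Geom$–Bernoulli relation used in the proof of the previous theorem, but bookkeeping the different $p_i$ then seems harder than the direct computation.)
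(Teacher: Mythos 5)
The paper does not prove this theorem; it only quotes it from Witt~\cite{Witt14}, and Witt's argument is precisely the exponential-moments route you describe, so you are reconstructing the original proof rather than finding an alternative. The one step you flag as the technical heart does in fact hold, with $\gamma=1$ and on the full range $0\le h<p_i$: setting $q:=1/p_i$ and $g(h):=\ln E[e^{h(X_i-q)}]=\ln\frac{p_i e^h}{1-(1-p_i)e^h}-hq$, one checks $g(0)=g'(0)=0$ and $g''(u)=\frac{v}{(1-v)^2}$ with $v=(1-p_i)e^u$; since this is increasing in $v$, the estimate $e^u\le\frac{1}{1-u}$ of Lemma~\ref{lprobeupper} gives $g''(u)\le\frac{(1-p_i)(1-u)}{(p_i-u)^2}\le\frac{q^2}{(1-uq)^2}\le\frac{q^2}{(1-uq)^3}$, and integrating twice yields $g(h)\le\frac{h^2q^2}{2(1-hq)}$. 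Summing over $i$ (with $q_i\le 1/p_{\min}$ in the denominator) and choosing $h=\lambda/(s+\lambda/p_{\min})$ gives the Bernstein-form bound $\exp\bigl(-\frac{\lambda^2}{2(s+\lambda/p_{\min})}\bigr)$, which implies the stated bound because $s+\lambda/p_{\min}\le 2\max\{s,\lambda/p_{\min}\}$; note that a cruder uniform bound on $g''$ over $[0,p_i/2]$ would lose a factor of~$2$ and miss the constant $\frac14$, so the $h$-dependent denominator you propose is genuinely needed. Your lower-tail computation is already complete: there the second derivative is $\frac{u}{(1-u)^2}$ with $u=(1-p_i)e^{-h'}\le 1-p_i$, hence at most $(1-p_i)/p_i^2\le q^2$ for all $h'\ge 0$, so no restriction on $h'$ and no case distinction are needed, and $h'=\lambda/s$ gives $\exp(-\lambda^2/(2s))$ exactly.
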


In the analysis of randomized search heuristics, apparently we often encounter sums of independent geometrically distributed random variables $X_1, \dots, X_n$ with success probabilities $p_i$ proportional to $i$. For this case, the following result from~\cite[Lemma~4]{DoerrD18} gives stronger tail bounds than the previous result. See Section~\ref{secprobharmonic} for the definition of the harmonic number $H_n$.\index{harmonic number}
\begin{theorem}\label{tprobgeomharmonic}
  Let $X_1, \ldots, X_n$ be independent geometric random variables with success probabilities $p_1, \dots, p_n$. Assume that there is a number $C \le 1$ such that $p_i \ge C \frac in$ for all $i \in [1..n]$. Let $X = \sum_{i=1}^n X_i$. Then 
  \begin{align}
  &E[X] \le \tfrac 1C n H_n \le \tfrac 1C n (1 + \ln n),\\
  &\Pr[X \ge (1+\delta) \tfrac 1C n \ln n] \le n^{-\delta} \text{ for all } \delta \ge 0.
  \end{align}
\end{theorem}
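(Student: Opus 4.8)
The plan is to handle the two claims separately: the expectation bound follows directly from linearity of expectation, and the tail bound follows by reducing to a slowed-down coupon collector process and then applying a union bound, exactly in the spirit of the proof of Theorem~\ref{tprobcouponU}. For the expectation, linearity of expectation (Lemma~\ref{lproblinearity}) together with $E[\Geom(p)] = 1/p$ (Lemma~\ref{lprobwaitingtime}) gives $E[X] = \sum_{i=1}^n \frac{1}{p_i} \le \sum_{i=1}^n \frac{n}{Ci} = \frac1C n H_n$, and then $H_n \le 1 + \ln n$ by~\eqref{eqprobhn}.

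For the tail bound I would first reduce to the extremal case $p_i = Ci/n$. Since $p_i \ge Ci/n$ and the geometric distribution is monotone in its success probability (a smaller parameter means a stochastically larger waiting time, Lemma~\ref{lprobdomdistr}), we have $\Geom(p_i) \preceq \Geom(Ci/n)$, and hence $X \preceq T := \sum_{i=1}^n \Geom(Ci/n)$ by the domination-of-sums lemma (Lemma~\ref{lprobdomsum}). So it suffices to bound $\Pr[T \ge (1+\delta)\tfrac1C n\ln n]$. The key observation is that $T$ is distributed as the running time of the following ``lazy'' coupon collector: there are $n$ coupon types, and in each round, independently, with probability $C$ a uniformly random coupon is drawn and with probability $1-C$ nothing happens. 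Indeed, when exactly $k$ distinct coupons have been seen, the waiting time for the next new coupon is geometric with success probability $C\cdot\tfrac{n-k}{n}$, and these waiting times for $k=0,1,\dots,n-1$ are independent, so the total time equals $\sum_{k=0}^{n-1}\Geom(C\tfrac{n-k}{n}) = T$ in distribution (after the reindexing $i = n-k$).

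Now I would run this process for $t$ rounds and argue as in Theorem~\ref{tprobcouponU}: a fixed coupon $j$ is obtained in any given round with probability $C/n$, independently across rounds, so the probability that $j$ is still missing after $t$ rounds is $(1-\tfrac Cn)^t \le \exp(-Ct/n)$ by Lemma~\ref{lprobelower}. By the union bound (Lemma~\ref{lprobunionbound}), $\Pr[T > t] \le n\exp(-Ct/n)$, and the passage from $\Pr[T>t]$ to $\Pr[T \ge t]$ is the same routine step as in Theorem~\ref{tprobcouponU}. Choosing $t = (1+\delta)\tfrac1C n\ln n$ gives $n\exp(-(1+\delta)\ln n) = n^{-\delta}$, and since $X \preceq T$ the same bound holds for $X$.

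There is no serious obstacle here; the two points that need care are getting the direction of the stochastic domination right in the reduction to $p_i = Ci/n$, and checking that $\sum_{i=1}^n \Geom(Ci/n)$ really is the time of the lazy coupon collector — i.e. the reindexing together with the independence of the successive waiting times. An alternative, purely computational route would apply the exponential-moments method directly to $\sum_i \Geom(p_i)$, but estimating the product of the geometric moment-generating functions with $p_i$ proportional to $i$ and then optimizing over the free parameter $h$ is considerably messier, so the coupon-collector reduction is the route I would take.
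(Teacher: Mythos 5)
Your proof is correct. The chapter does not reproduce a proof of this theorem (it only cites~\cite[Lemma~4]{DoerrD18}), and your argument --- linearity of expectation for the expectation bound, then domination by $\sum_{i=1}^n \Geom(C\tfrac in)$, identification of that sum with the hitting time of a coupon collector that is active only with probability $C$ in each round, and the per-coupon union bound exactly as in Theorem~\ref{tprobcouponU} --- is the intended elementary route and fits the toolkit developed in the text. One caution on the step you yourself flag as delicate: do not lean on Lemma~\ref{lprobdomdistr} for the monotonicity of the geometric distribution, since as printed that lemma asserts $\Geom(p) \preceq \Geom(q)$ for $p \le q$, which is the reverse of what you need; the fact you actually use, namely that $\Pr[\Geom(p) \ge k] = (1-p)^{k-1}$ is decreasing in $p$, so that a larger success probability yields the stochastically \emph{smaller} variable, is correct and is best verified in one line from the definition of domination. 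The remaining integrality issue in passing from $\Pr[T>t]$ to $\Pr[T\ge t]$ is the same harmless rounding step already implicit in Theorem~\ref{tprobcouponU} and does not affect the claimed bound.
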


As announced in Section~\ref{secprobdomruntime}, we now present a few examples where the existing literature only gives an upper bound for the expected runtime, but where a closer look at the proofs easily gives more details about the distribution, which in particular allows to obtain tail bounds for the runtime. We note that similar results previously (and before~\cite{Doerr18evocop}) have only been presented for the \oea optimizing the \leadingones test function~\cite{DoerrJWZ13} and for RLS optimizing the \onemax test function~\cite{Witt14}. Zhou et al.~\cite{ZhouLLH12} implicitly give several results of this type, however, the resulting runtime guarantees are not optimal due to the use of an inferior Chernoff bound for geometric random variables.

\begin{lemma}\label{lprobdomexamples}
\begin{enumerate}
	\item \label{itprobdomexom} The runtime $T$ of the \oea on the \onemax function is dominated by the independent sum $\sum_{i=1}^n \Geom(\frac{i}{en})$~\cite{DrosteJW02}. Hence $E[T] \le enH_n$ and $\Pr[T \ge (1+\delta) en \ln n] \le n^{-\delta}$ for all $\delta \ge 0$.
	\item \label{itprobdomexff} For any function $f: \{0,1\}^n \to \R$, the runtime $T$ of the \oea is dominated by $\Geom(n^{-n})$~\cite{DrosteJW02}. Hence $E[T] \le n^n$ and $\Pr[T \ge  \gamma n^n] \le (1-n^{-n})^{\gamma n^n} \le e^{-\gamma}$ for all $\gamma \ge 0$.
	\item \label{itprobdomexec} The runtime $T$ of the \oea finding Eulerian cycles in undirected graphs using perfect matchings in the adjacency lists as genotype and using the edge-based mutation operator is dominated by the independent sum $\sum_{i=1}^{m/3} \Geom(\frac{i}{2em})$~\cite{DoerrJ07}. Hence $E[T] \le 2emH_{m/3}$ and $\Pr[T \ge 2(1+\delta)em\ln \frac m3] \le (\frac m3)^{-\delta}$ for all $\delta \ge 0$.
	\item \label{itprobdomexso} The runtime of the \oea sorting an array of length $n$ by minimizing the number of inversions is dominated by the independent sum $\sum_{i=1}^{\binom{n}{2}} \Geom(\frac{3i}{4e\binom{n}{2}})$~\cite{ScharnowTW04}. Hence $E[T] \le \frac{4e}{3} \binom{n}{2} H_{\binom{n}{2}} \le \frac{2e}{3} n^2 (1+2\ln n)$ and $\Pr[T \ge (1+\delta) \frac{4e}{3} n^2 \ln n] \le \binom{n}{2}^{-\delta}$. Similarly, the runtime of the \oea using a tree-based representation for the sorting problem~\cite{DoerrH08} has a runtime satisfying $T \preceq \sum_{i=1}^{\binom{n}{2}} \Geom(\frac 1 {2e})$. Hence expected optimization time is $E[T] = 2e \binom{n}{2}$ and we have the tail bound $\Pr[T \ge (1+\delta) E[T]] \le \exp(- \delta^2 n / (2+2\delta))$. This example shows that a superior representation can not only improve the expected runtime, but also lead to significantly lighter tails (negative-exponential vs.\ inverse-polynomial).
	\item \label{itprobdomexss} The runtime of the multi-criteria \oea for the single-source shortest path problem in a graph $G$ can be described as follows. Let $\ell$ be such that there is a shortest path from the source to any vertex having at most $\ell$ edges. Then there are random variables $G_{ij}$, $i \in [1..\ell]$, $j\in [1..n-1]$, such that (i)~$G_{ij} \sim \Geom(\frac{1}{en^2})$ for all $i \in [1..\ell]$ and $j \in [1..n-1]$, (ii)~for all $j \in [1..n-1]$ the variables $G_{1j},\dots,G_{\ell j}$ are independent, and (iii)~$T$ is dominated by $\max\{\sum_{i=1}^{\ell} G_{ij} \mid j \in [1..n-1]\}$~\cite{ScharnowTW04,DoerrHK11}. Consequently, for $\delta = \max\{\frac{4 \ln(n-1)}{\ell-1},\sqrt{\frac{4 \ln(n-1)}{\ell-1}}\}$ and $T_0 := (1+\delta) \frac \ell p$, we have $E[T] \le (1+\frac{1}{\ln(n-1)}) T_0$ and $\Pr[T \ge (1+\eps) T_0] \le (n-1)^{-\eps}$ for all $\eps \ge 0$.
\end{enumerate}
\end{lemma}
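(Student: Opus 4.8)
The plan is to handle each item by combining a \emph{structural} statement — that the runtime is dominated by a specific independent sum, or maximum of independent sums, of geometric random variables, a fact implicit in the cited reference — with the tail bounds for geometric sums collected in this subsection, together with Corollary~\ref{corprobdomexp} (to transport the expectation bound through the domination) and, where needed, Corollary~\ref{corprobtaile} (to read an expectation off a tail bound).

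For items~\ref{itprobdomexom}, \ref{itprobdomexec}, and the first variant of item~\ref{itprobdomexso}, the geometric success probabilities have the form $p_i \ge C\,\tfrac iN$, where $N$ is the number of summands ($N=n$, $m/3$, $\binom n2$) and $C=\tfrac1e$, $\tfrac1{6e}$, $\tfrac3{4e}$, respectively. I would first recall the domination from the fitness-level structure of the algorithm — e.g.\ for \onemax, from a search point with $i$ missing bits a strict improvement occurs in one step with probability at least $\tfrac in(1-\tfrac1n)^{n-1}\ge\tfrac i{en}$ by Corollary~\ref{corprobesbm}, so Theorem~\ref{tproblevel} (equivalently Lemma~\ref{lprobmoderate}) gives $T\preceq\sum_{i=1}^n\Geom(\tfrac i{en})$; the Eulerian-cycle and inversion-sorting cases are the analogous fitness-level arguments from~\cite{DoerrJ07,ScharnowTW04} — and then invoke Theorem~\ref{tprobgeomharmonic}, simplifying the argument of the logarithm (via $\binom n2<\tfrac{n^2}2$, etc.) to reach the stated forms. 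Item~\ref{itprobdomexff} is immediate: from any search point, standard-bit mutation yields the unique optimum with probability at least $n^{-n}$ (flip exactly the differing bits, the worst case being all $n$ bits wrong), so $T\preceq\Geom(n^{-n})$, and then $E[T]\le n^n$ by Lemma~\ref{lprobwaitingtime} and $\Pr[T\ge\gamma n^n]\le(1-n^{-n})^{\gamma n^n}\le e^{-\gamma}$ by the geometric tail and Lemma~\ref{lprobelower}. For the remaining i.i.d.\ cases — the tree-based sorting variant and, columnwise, item~\ref{itprobdomexss} — a sum of $N$ independent $\Geom(p)$ has mean $N/p$ by linearity of expectation and Lemma~\ref{lprobwaitingtime}, and the upper tail is \eqref{eqprobcgeomgleichU}; for the tree-based variant one further simplifies $\exp(-\tfrac{\delta^2}2\tfrac{\binom n2-1}{1+\delta})$ to $\exp(-\tfrac{\delta^2n}{2+2\delta})$ using $\binom n2-1\ge n$.

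The delicate case is item~\ref{itprobdomexss}, where $T\preceq\max\{S_j\mid j\in[1..n-1]\}$ with $S_j:=\sum_{i=1}^\ell G_{ij}$ a sum of $\ell$ independent $\Geom(p)$, $p=\tfrac1{en^2}$, so $E[S_j]=\tfrac\ell p=:\mu_0$ and $T_0=(1+\delta)\mu_0$. I would apply \eqref{eqprobcgeomgleichU} to each $S_j$ with deviation parameter $\gamma:=(1+\eps)(1+\delta)-1$, chosen so that $(1+\gamma)\mu_0=(1+\eps)T_0$, obtaining $\Pr[S_j\ge(1+\eps)T_0]\le\exp(-\tfrac14\min\{\gamma^2,\gamma\}(\ell-1))$, and then union-bound over the $n-1$ columns. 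The crux — and the only real obstacle — is to verify that the specific choice $\delta=\max\{\tfrac{4\ln(n-1)}{\ell-1},\sqrt{\tfrac{4\ln(n-1)}{\ell-1}}\}$ forces $\tfrac14\min\{\gamma^2,\gamma\}(\ell-1)\ge(1+\eps)\ln(n-1)$ for \emph{every} $\eps\ge0$, whence the union bound yields $\Pr[T\ge(1+\eps)T_0]\le(n-1)\,(n-1)^{-(1+\eps)}=(n-1)^{-\eps}$. This uses $\gamma=\eps+\delta+\eps\delta\ge\max\{\eps,(1+\eps)\delta\}$ together with the defining inequality for $\delta$, in a three-way case distinction (the regime $\ell-1<4\ln(n-1)$, where $\delta>1$ forces $\gamma>1$; and the regime $\ell-1\ge4\ln(n-1)$ split according to $\gamma\le1$ or $\gamma>1$); it is entirely elementary but needs care. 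Finally $\Pr[T\ge(1+\eps)T_0]\le\exp(-\eps\ln(n-1))$ is turned into $E[T]\le(1+\tfrac1{\ln(n-1)})T_0$ by Corollary~\ref{corprobtaile}(c) with $\alpha=1$ and $\beta=1/\ln(n-1)$. One should note that \eqref{eqprobcgeomgleichU} only requires independence within a single column (hypothesis~(ii)), so no independence across columns is used anywhere.
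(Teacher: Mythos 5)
Your proposal is correct and follows essentially the same route as the paper: the domination statements are taken from the cited analyses (or re-derived via the fitness-level theorem), parts (a), (c) and the first half of (d) are Theorem~\ref{tprobgeomharmonic}, part (b) is the geometric law plus Lemma~\ref{lprobelower}, the i.i.d.\ cases use~\eqref{eqprobcgeomgleichU}, and part (e) is a per-column tail bound followed by a union bound over the $n-1$ columns (correctly noting that no cross-column independence is needed) and Corollary~\ref{corprobtaile}. The only cosmetic difference is in (e): the paper avoids your case distinction on $\gamma=\delta+\eps+\delta\eps$ by keeping the first form of~\eqref{eqprobcgeomgleichU} and factoring $\frac{\gamma^2}{1+\gamma}\ge(1+\eps)\frac{\delta^2}{1+\delta}$, so that the bound for general $\eps$ is just the $(1+\eps)$-th power of the base case $\Pr[X\ge(1+\delta)E[X]]\le\frac{1}{n-1}$; your version with $\min\{\gamma^2,\gamma\}$ also goes through.
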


\begin{proof}
We shall not show the domination statements as these can be easily derived from the original analyses cited in the theorem. Given the domination result, parts~\ref{itprobdomexom}, \ref{itprobdomexec}, and~\ref{itprobdomexso} follow immediately from Theorem~\ref{tprobgeomharmonic}. Part~\ref{itprobdomexff} follows directly from the law of the geometric distribution.
 
To prove part~\ref{itprobdomexss}, let $X_1, \dots, X_\ell$ be independent geometrically distributed random variables with parameter $p = \frac{1}{en^2}$. Let $X = \sum_{i=1}^\ell X_i$. Let $\delta = \max\{\frac{4 \ln(n-1)}{\ell-1},\sqrt{\frac{4 \ln(n-1)}{\ell-1}}\}$. Then, by~\eqref{eqprobcgeomgleichU}, $\Pr[X \ge (1+\delta) E[X]] \le \exp(-\frac 12 \frac{\delta^2}{1+\delta} (\ell-1)) \le \exp(-\frac 14 \min\{\delta^2,\delta\} (\ell-1)) \le \exp(-\frac 14 \frac{4 \ln(n-1)}{\ell-1} (\ell-1)) = \frac 1 {n-1}$. For all $\eps > 0$, again by~\eqref{eqprobcgeomgleichU}, we compute 
\begin{align*}
\Pr[X \ge (1+\eps)(1+\delta) E[X]] &\le \exp\left(-\frac 12 \frac{(\delta+\eps+\delta\eps)^2}{(1+\delta)(1+\eps)} (\ell-1)\right) \\
&\le \exp\left(-\frac 12 \frac{\delta^2(1+\eps)^2}{(1+\delta)(1+\eps)} (\ell-1)\right) \\
&\le \exp\left(-\frac 12 \frac{\delta^2}{1+\delta} (\ell-1)\right)^{1+\eps}  \le (n-1)^{-(1+\eps)}.
\end{align*}
Let $Y_1, \dots, Y_{n-1}$ be random variables with distribution equal to the one of $X$. We do not make any assumption on the correlation of the $Y_i$, in particular, they do not need to be independent. Let $Y = \max\{Y_i \mid i \in [1..n-1]\}$ and recall that the runtime $T$ is dominated by $Y$. Let $T_0 = (1+\delta)E[X] = (1+\delta)\frac{\ell}{p}$. Then $\Pr[Y \ge (1+\eps)T_0] \le (n-1) \Pr[X \ge (1+\eps)T_0] \le (n-1)^{-\eps}$ by the union bound (Lemma~\ref{lprobunionbound}). By Corollary~\ref{corprobtaile}, 
\begin{align*}
E[Y] \le \left(1 + \frac 1 {\ln(n-1)}\right) T_0.
\end{align*}

\end{proof}

We note that not all classic proofs reveal details on the distribution. For results obtained via random walk arguments, e.g., the optimization of the short path function SPC$_n$~\cite{JansenW01}, monotone polynomials~\cite{WegenerW05}, or vertex covers on paths-like graphs~\cite{OlivetoHY09}, as well as for results proven via additive drift~\cite{HeY01}, the proofs often give little information about the runtime distribution (an exception is the analysis of the needle and the \onemax function in~\cite{GarnierKS99}).

For results obtained via the average weight decrease method~\cite{NeumannW07} or multiplicative drift analysis~\cite{DoerrG13algo}, the proof also does not give information on the runtime distribution. However, the probabilistic runtime bound of type $\Pr[T \ge T_0 + \lambda] \le (1-\delta)^\lambda$ obtained from these methods implies that the runtime is dominated by $T \preceq T_0 -1 + \Geom(1-\delta)$.

\subsection{Tail Bounds for the Binomial Distribution} 

For binomially distributed random variables, tail bounds exist which are slightly stronger than the bounds for general sums of independent $0,1$ random variables. The difference are small, but since they have been used in the analysis of randomized search heuristics, we briefly describe them here. 

In this section, let $X$ always be a binomially distributed random variable with parameters $n$ and $p$, that is, $X = \sum_{i=1}^n X_i$ with independent $X_i$ satisfying $\Pr[X_i = 1] = p$ and $\Pr[X_i = 0] = 1-p$. The following estimate seems well-known (e.g., it was used in~\cite{JansenJW05} without proof or reference). Gie{\ss}en and Witt~\cite[Lemma~3]{GiessenW17} give an elementary proof via estimates of binomial coefficients and the binomial identity. We find the proof below more intuitive.

\begin{lemma}\label{lprobbino}
  Let $X \sim \Bin(n,p)$. Let $k \in [0..n]$. Then \[\Pr[X \ge k] \le \binom{n}{k} p^k.\]
\end{lemma}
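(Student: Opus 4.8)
The plan is to bound $\Pr[X \ge k]$ by comparing the event $\{X \ge k\}$ to the event that some specific set of $k$ of the underlying Bernoulli trials all succeed, and then to apply the union bound over all $\binom{n}{k}$ such sets. Concretely, write $X = \sum_{i=1}^n X_i$ with independent $X_i \sim \mathrm{Bin}(1,p)$. For each $k$-element subset $I \subseteq [1..n]$, let $E_I$ be the event $\{\forall i \in I : X_i = 1\}$, so $\Pr[E_I] = p^k$ by independence. The key combinatorial observation is that whenever $X \ge k$, at least $k$ of the $X_i$ equal $1$, hence there exists a $k$-subset $I$ of indices on which all $X_i = 1$; that is, $\{X \ge k\} \subseteq \bigcup_{I} E_I$, the union running over all $k$-subsets $I$ of $[1..n]$.

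From this inclusion the union bound (Lemma~\ref{lprobunionbound}) gives
\[
\Pr[X \ge k] \le \Pr\Big[\bigcup_I E_I\Big] \le \sum_I \Pr[E_I] = \binom{n}{k} p^k,
\]
since there are exactly $\binom{n}{k}$ subsets $I$ and each contributes $p^k$. This is precisely the claimed bound. The boundary cases are automatic: for $k = 0$ the right-hand side is $1$ (recall $0^0 := 1$ and $\binom{n}{0}=1$) and the inequality is trivial; for $k = n$ it reduces to the exact identity $\Pr[X = n] = p^n$.

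The only point requiring a little care — and the one I would flag as the main (if modest) obstacle — is justifying the set inclusion cleanly: it is an existential statement ("there exists a $k$-subset of all-ones indices"), and one should note that the $E_I$ for different $I$ are highly overlapping, so the union bound is genuinely lossy here. That is exactly why the resulting estimate $\binom{n}{k}p^k$ can exceed $1$ and is only useful when it does not; this mirrors the remark in the text that a related bound (namely $\binom{n}{a} \le 2^n \exp(-2(a-n/2)^2/n)$) is good only away from the center. No independence of the $X_i$ beyond computing $\Pr[E_I] = p^k$ is needed, and indeed the argument works verbatim for any binary random variables that are $1$-negatively correlated, giving a slight strengthening in the spirit of Section~\ref{secprobnegcor} — though I would not belabor this in the proof itself.
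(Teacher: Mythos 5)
Your proof is correct and follows exactly the paper's own argument: take the events that a fixed $k$-subset of trials all succeed, observe that $\{X \ge k\}$ is contained in (indeed equals) their union, and apply the union bound. Nothing to add.
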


\begin{proof}
  For all $T \subseteq [1..n]$ with $|T|=k$ let $A_T$ be the event that $X_i = 1$ for all $i \in T$. Clearly, $\Pr[A_T] = p^k$. The event ``$X \ge k$'' is the union of the events $A_T$ with $T$ as above. Hence $\Pr[X \ge k] \le \sum_{T} \Pr[A_T] = \binom{n}{k} p^k$ by the union bound (Lemma~\ref{lprobunionbound}).
\end{proof}

When estimating the binomial coefficient by $\binom{n}{k} \le (\frac{en}{k})^k$, which often is an appropriate way to derive more understandable expressions, the above bound reverts to equation~\eqref{eqprobCMUstrongA}, a slightly weaker version of the classic multiplicative bound~\eqref{eqprobCMUstrong}. Since we are not aware of an application of Lemma~\ref{lprobbino} that does not estimate the binomial coefficient in this way, its main value might be its simplicity.

%
%
%

The following tail bound for the binomial distribution was shown by Klar~\cite{Klar00}, again with elementary arguments. In many cases, it is significantly stronger than Lemma~\ref{lprobbino}. However, again we do not see an example where this tail bound would have improved an existing analysis of a randomized search heuristics. 

\begin{lemma}\label{lprobklar}
  Let $X \sim \Bin(n,p)$ and $k \in [np..n]$. Then \[\Pr[X \ge k] \le \frac{(k+1)(1-p)}{k+1-(n+1)p} \Pr[X=k].\]
\end{lemma}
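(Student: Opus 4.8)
The plan is to bound the tail $\Pr[X \ge k]$ by comparing consecutive terms of the binomial probability mass function to a geometric series. For a binomially distributed $X \sim \Bin(n,p)$, recall the standard ratio identity
\[
\frac{\Pr[X = j+1]}{\Pr[X = j]} = \frac{(n-j)p}{(j+1)(1-p)}.
\]
For $j \ge k \ge np$, this ratio is non-increasing in $j$, so it is maximized at $j = k$, where it equals $r := \frac{(n-k)p}{(k+1)(1-p)}$. First I would verify that $r < 1$ precisely when $k > np - (1-p)$, which holds under the hypothesis $k \ge np$ (and one should track the edge case $k = np$ separately, or simply note $r \le 1$ suffices with a limiting argument; in fact $k \ge np$ gives $r \le \frac{(n-np)p}{(np+1)(1-p)} = \frac{np(1-p)}{(np+1)(1-p)} = \frac{np}{np+1} < 1$).

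Next, the key step: since the ratio $\Pr[X=j+1]/\Pr[X=j] \le r$ for all $j \ge k$, induction gives $\Pr[X = k+i] \le r^i \Pr[X=k]$ for all $i \ge 0$. Summing the geometric series yields
\[
\Pr[X \ge k] = \sum_{i=0}^{n-k} \Pr[X = k+i] \le \Pr[X=k] \sum_{i=0}^\infty r^i = \frac{\Pr[X=k]}{1-r}.
\]
Then I would simplify $\frac{1}{1-r}$: with $r = \frac{(n-k)p}{(k+1)(1-p)}$, we get
\[
1 - r = \frac{(k+1)(1-p) - (n-k)p}{(k+1)(1-p)} = \frac{k+1 - (k+1)p - np + kp}{(k+1)(1-p)} = \frac{k+1 - (n+1)p}{(k+1)(1-p)},
\]
so $\frac{1}{1-r} = \frac{(k+1)(1-p)}{k+1-(n+1)p}$, which is exactly the claimed factor.

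The main obstacle — really the only subtle point — is handling the boundary behavior: one must confirm that $k+1 - (n+1)p > 0$ so that the bound is meaningful and the geometric series converges, and this needs the hypothesis $k \ge np$ to be used carefully (as shown above, $k \ge np$ forces $r \le \frac{np}{np+1} < 1$, hence $k+1 - (n+1)p > 0$). A second minor point is the monotonicity of the ratio $\frac{(n-j)p}{(j+1)(1-p)}$ in $j$, which is immediate since the numerator decreases and the denominator increases in $j$; this justifies bounding all ratios by the value at $j=k$. Everything else is the routine geometric-series summation and the algebraic simplification of $1-r$ carried out above. This mirrors the intuitive-but-clean flavor of the other proofs in this chapter and avoids the binomial-coefficient juggling used in Klar's original argument.
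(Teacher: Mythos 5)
Your proof is correct. The paper itself gives no proof of this lemma --- it only cites Klar~\cite{Klar00} and remarks that his argument is elementary --- so there is nothing in the text to compare against line by line. Your route (bounding the ratio $\Pr[X=j+1]/\Pr[X=j]=\frac{(n-j)p}{(j+1)(1-p)}$ by its value $r$ at $j=k$, summing the resulting geometric series, and simplifying $\frac{1}{1-r}$) is the standard elementary derivation, and the algebra checks out: the numerator of $1-r$ is indeed $k+1-(n+1)p$, which is positive under $k\ge np$ (indeed $k+1-(n+1)p\ge 1-p>0$ for $p<1$), and using the ratio at $j=k$ rather than the cruder bound $\frac{(n-k)p}{k(1-p)}$ is exactly what sharpens Feller's constant $\frac{k(1-p)}{k-np}$ to Klar's $\frac{(k+1)(1-p)}{k+1-(n+1)p}$. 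The only cosmetic caveat is the degenerate case $p=1$ (where the ratio is undefined and the stated bound reads $0/0$), which is implicitly excluded; for $k=n$ your bound holds with equality. No gaps.
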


Note that, trivially, $\Pr[X=k] \le \Pr[X \ge k]$, so it is immediately clear that this estimate is quite tight (the gap is at most the factor $\frac{(k+1)(1-p)}{k+1-(n+1)p}$). With elementary arguments, Lemma~\ref{lprobklar} gives the slightly weaker estimate 
\begin{equation}
  \Pr[X \ge k] \le \frac{k-kp}{k-np} \,\Pr[X=k],
\end{equation}
which appeared also in~\cite[equation~(VI.3.4)]{Feller68}. For $p = \frac 1n$, the typical mutation rate in standard-bit mutation, Lemma~\ref{lprobklar} gives   
\begin{equation}
  \Pr[X \ge k] \le \frac{k+1}{k} \,\Pr[X=k].
\end{equation}
Writing Lemma~\ref{lprobbino} in the equivalent form $\Pr[X \ge k] \le (\frac{1}{1-p})^{n-k} \Pr[X=k]$ and noting that $(\frac{1}{1-p})^{n-k} \ge \exp(p(n-k))$, we see that in many cases Lemma~\ref{lprobklar} gives substantially better estimates.

Finally, we mention the following estimates for the probability function of the binomial distribution stemming from~\cite{Bollobas01}. By summing over all values for $k' \ge k$, upper and lower bounds for tail probabilities can be be derived.
\begin{theorem}\label{tprobpbino}
  Let $X \sim \Bin(n,p)$ with $np \ge 1$. Let $h > 0$ such that $k = np+h \in \N$. Let $q = 1-p$.
  \begin{enumerate}
  \item If $hqn/3 \ge 1$, then\\ $\Pr[X = k] < \frac{1}{\sqrt{2\pi p q n}} \exp(-\frac{h^2}{2pqn} + \frac{h}{qn} + \frac{h^3}{p^2 n^2})$.
  \item If $k < n$, then\\ $\Pr[X = k] > \frac{1}{\sqrt{2\pi p q n}} \exp(-\frac{h^2}{2pqn} - \frac{h^3}{2q^2 n^2} - \frac{h^4}{3 p^3 n^3} - \frac{h}{2pn} - \frac{1}{12k} - \frac{1}{12(n-k)})$.
  \end{enumerate}
\end{theorem}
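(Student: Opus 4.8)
The plan is to obtain the estimates by combining the refined form of Stirling's formula with one-sided Taylor expansions of the logarithm. Starting from the exact identity $\Pr[X=k]=\binom{n}{k}p^{k}q^{n-k}$ and inserting Corollary~\ref{corprobstirling} (which rests on Theorem~\ref{tprobstirling}), one gets, assuming $1\le k\le n-1$ so the corollary applies,
\[
\Pr[X=k]=\frac{1}{\sqrt{2\pi}}\,\sqrt{\frac{n}{k(n-k)}}\,\Big(\frac{np}{k}\Big)^{k}\Big(\frac{nq}{n-k}\Big)^{n-k} R_{nk},\qquad \exp\!\Big(-\tfrac{1}{12k}-\tfrac{1}{12(n-k)}\Big)<R_{nk}<1 .
\]
Writing $k=np+h$, so $n-k=nq-h$, $\tfrac{np}{k}=(1+\tfrac{h}{np})^{-1}$, $\tfrac{nq}{n-k}=(1-\tfrac{h}{nq})^{-1}$, and pulling a factor $\tfrac{1}{\sqrt{npq}}$ out of $\sqrt{n/(k(n-k))}$, taking logarithms reduces everything to controlling three pieces: the prefactor contribution $-\tfrac12\ln\!\big[(1+\tfrac{h}{np})(1-\tfrac{h}{nq})\big]$, the ``main term'' $M:=-k\ln(1+\tfrac{h}{np})-(n-k)\ln(1-\tfrac{h}{nq})$, and $\ln R_{nk}$; for the upper bound it is convenient to merge the first two into $\phi:=-(k+\tfrac12)\ln(1+\tfrac{h}{np})-(n-k+\tfrac12)\ln(1-\tfrac{h}{nq})$.

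For the lower bound (part~(b)) the estimates are clean. With $x=\tfrac{h}{np}$, $y=\tfrac{h}{nq}$, using $\ln(1+x)\le x-\tfrac{x^{2}}{2}+\tfrac{x^{3}}{3}$ and $-\ln(1-y)\ge y+\tfrac{y^{2}}{2}$ (valid for $x\ge0$, $0\le y<1$), together with the identity $\tfrac{1}{2np}+\tfrac{1}{2nq}=\tfrac{1}{2npq}$, one computes $M\ge -\tfrac{h^{2}}{2pqn}-\tfrac{h^{3}}{2q^{2}n^{2}}-\tfrac{h^{4}}{3p^{3}n^{3}}$ after discarding a nonnegative $\tfrac{h^{3}}{6p^{2}n^{2}}$ term. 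From $\ln(1+x)\le x$ and $\ln(1-y)\le -y$ one gets $-\tfrac12\ln[(1+x)(1-y)]\ge -\tfrac12(x-y)\ge -\tfrac{h}{2pn}$, and $R_{nk}>\exp(-\tfrac{1}{12k}-\tfrac{1}{12(n-k)})$ supplies exactly the last two exponent terms. Adding these and exponentiating yields precisely the claimed inequality; the only hypotheses used are $k<n$ (so $n-k\ge1$ and $y<1$) and $np\ge1$ (so $k\ge1$), and strictness comes from the strict bound on $R_{nk}$.

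For the upper bound (part~(a)) the structure is identical but the hypothesis $hqn/3\ge1$ is genuinely needed. Here one bounds $\ln(1+x)\ge x-\tfrac{x^{2}}{2}$ from below so that $-(k+\tfrac12)\ln(1+x)$ is bounded above, bounds $(n-k+\tfrac12)(-\ln(1-y))$ from above through a truncation such as $-\ln(1-y)\le y+\tfrac{y^{2}}{2(1-y)}$, and uses $R_{nk}<1$; this leads to $\phi<-\tfrac{h^{2}}{2pqn}+\tfrac{h}{qn}+\tfrac{h^{3}}{p^{2}n^{2}}$. The main obstacle is the bookkeeping for $\phi$ in the regime where $n-k$ is small: the $\tfrac{1}{1-y}$-type residuals blow up there, but then $h$ is close to $nq$ and the dominant term $-\tfrac{h^{2}}{2pqn}$ is correspondingly large, and one must check that $hqn/3\ge1$ (together with $n-k\ge1$ and $np\ge1$) is exactly strong enough to make every leftover error dominated by the stated slack $\tfrac{h}{qn}+\tfrac{h^{3}}{p^{2}n^{2}}$. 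Each individual inequality is routine calculus of the same flavour used for the simplified Chernoff bounds earlier in this section; the real work is choosing the truncation orders and dispatching this boundary case cleanly rather than any deep new idea.
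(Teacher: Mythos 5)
The paper does not prove this theorem; it only cites it from Bollob\'as's \emph{Random Graphs}, so there is no in-paper proof to compare against. Your overall strategy (insert Corollary~\ref{corprobstirling} into $\Pr[X=k]=\binom nk p^kq^{n-k}$, substitute $k=np+h$, $n-k=nq-h$, and control the logarithm by one-sided Taylor truncations) is the standard and correct route, and your treatment of part~(b) is complete and checks out: with $x=h/(np)$, $y=h/(nq)$ the bounds $-np(1+x)\ln(1+x)\ge -h-\tfrac{h^2}{2np}-\tfrac{h^4}{3p^3n^3}$ and $nq(1-y)\bigl(-\ln(1-y)\bigr)\ge h-\tfrac{h^2}{2nq}-\tfrac{h^3}{2q^2n^2}$ combine with $\tfrac{1}{2np}+\tfrac{1}{2nq}=\tfrac{1}{2npq}$ and $R_{nk}>\exp(-\tfrac{1}{12k}-\tfrac{1}{12(n-k)})$ exactly as you say.

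Part~(a), however, is left with a genuine gap, and it is precisely the part you flag as ``bookkeeping.'' The truncation you propose, $-\ln(1-y)\le y+\tfrac{y^2}{2(1-y)}$, produces (after multiplying by $n-k+\tfrac12=nq(1-y)+\tfrac12$) a residual of size $\tfrac{h^2}{4nq(n-k)}$, and the natural way to absorb it, via $\tfrac{h^2}{4nq(n-k)}\le\tfrac{h}{2qn}$, requires $h\le 2(n-k)$ --- which fails whenever $k$ is close to $n$ (e.g.\ $n-k=1$ and $nq\ge 3$), a regime fully allowed by the hypotheses. So the estimate as written does not close. A working version must keep the \emph{negative} higher-order terms in the expansion $nq(1-y)\bigl(-\ln(1-y)\bigr)=nq\bigl(y-\tfrac{y^2}{2}-\sum_{j\ge3}\tfrac{y^j}{j(j-1)}\bigr)$, because it is exactly the discarded sum $nq\sum_{j\ge3}\tfrac{y^j}{j(j-1)}$ (of order $nq/2$ when $y$ is near $1$) that must absorb the prefactor contribution $-\tfrac12\ln(1-y)=\tfrac12\ln\tfrac{nq}{n-k}$; the hypothesis $hqn\ge 3$ enters in making this absorption and the remaining $\tfrac{h^2}{4p^2n^2}$-type residuals go through. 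You should carry out this case explicitly rather than asserting that the hypothesis is ``exactly strong enough.'' Separately, part~(a) does not assume $k<n$, so the case $k=n$ (where $\binom nk=1$ and Corollary~\ref{corprobstirling} does not apply) needs a short direct verification that $p^n$ is below the stated bound.
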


\newcommand{\etalchar}[1]{$^{#1}$}


}
\end{document}